\newtheorem{theorem}{Theorem}[section]
\newtheorem{lemma}[theorem]{Lemma}
\newtheorem{fact}[theorem]{Fact}
\newtheorem{definition}[theorem]{Definition}
\newtheorem{corollary}[theorem]{Corollary}
\newtheorem{remark}[theorem]{Remark}
\def\DEBUG{true}
  \def\rem#1{{\marginpar{\raggedright\scriptsize #1}}}
  \newcommand{\adar}[1]{\rem{\textcolor{blue}{$\bullet$ #1}}}
  \newcommand{\nikr}[1]{\rem{\textcolor{red}{$\bullet$ #1}}}
  \newcommand{\pinr}[1]{\rem{\textcolor{green}{$\bullet$ #1}}}
  \newcommand{\adar}[1]{}
  \newcommand{\nikr}[1]{}
  \newcommand{\pinr}[1]{}
\newcommand{\Ot}{\ensuremath{\widetilde{O}}}
\newcommand{\nnote}[1]{\todo{Nikos: #1}\xspace}
\newcommand{\anote}[1]{\todo{Adam: #1}\xspace}
\newcommand{\rev}[1]{\ensuremath{{#1}^{\text{R}}}}
\newcommand{\enter}[1]{\ensuremath{enter({#1})}}
\newcommand{\leave}[1]{\ensuremath{leave({#1})}}
\newcommand{\mnv}[1]{\ensuremath{min({#1})}}
\newcommand{\mxv}[1]{\ensuremath{max({#1})}}
\newcommand{\pord}{ord}
\newcommand{\pordr}{ord^R}
\newcommand{\tsz}{size}
\newcommand{\tszr}{size^R}
\def\polylog{\operatorname{polylog}}
\author[1]{Giuseppe F. Italiano\thanks{Giuseppe F. Italiano is partially supported by MUR, the Italian Ministry for University and Research, under PRIN Project AHeAD (Efficient Algorithms for HArnessing Networked Data).}}
\author[2]{Adam Karczmarz\thanks{Supported by ERC Consolidator
Grant 772346 TUgbOAT, the Polish National Science Centre 2018/29/N/ST6/00757 grant, and by the Foundation for Polish Science (FNP) via the START programme.}}
\author[3]{Nikos Parotsidis\thanks{This work was partially done while the author was employed at the University of Copenhagen supported by the Grant Number 16582, Basic Algorithms Research Copenhagen (BARC), from the VILLUM Foundation.}}
\affil[1]{LUISS University, Rome, Italy}
\affil[ ]{\texttt{gitaliano@luiss.it}\medskip}
\affil[2]{Institute of Informatics, University of Warsaw, Poland}
\affil[ ]{\texttt{a.karczmarz@mimuw.edu.pl}\medskip}
\affil[3]{Google Research}
\affil[ ]{\texttt{nikosp@google.com}}
\begin{document}

\setitemize{itemsep=-1pt}

\begin{titlepage}
\date{}
  \title{Planar Reachability Under Single Vertex or Edge Failures}
  \maketitle
  \begin{abstract}
    In this paper we present an efficient reachability oracle under single-edge or single-vertex failures for planar directed graphs.
Specifically, we show that a planar digraph $G$ can be preprocessed
in $O(n\log^2{n}/\log\log{n})$ time, producing an $O(n\log{n})$-space
data structure that can answer in $O(\log{n})$ time whether $u$ can reach $v$ in $G$ if the vertex $x$ (the edge~$f$) is removed from $G$, for any query vertices $u,v$ and failed vertex $x$ (failed edge $f$).
To the best of our knowledge, this 
is
the first data
structure for planar directed graphs with nearly optimal preprocessing
time that answers
all-pairs queries under any kind
of failures in polylogarithmic time.

We also consider 2-reachability problems, where we are given a planar  digraph $G$ and we wish to determine if there are two vertex-disjoint (edge-disjoint)
paths from $u$ to $v$, for query vertices $u,v$. In this setting we provide a nearly optimal 2-reachability oracle, which is the existential variant of the reachability oracle under single failures, with the following bounds.
We can construct in $O(n\polylog{n})$ time an $O(n\log^{3+o(1)}{n})$-space
data structure that can check in $O(\log^{2+o(1)}{n})$ time for any query vertices $u,v$ whether $v$ is 2-reachable from $u$, 
  or otherwise find some separating vertex (edge) $x$ lying on all paths from $u$ to $v$ in $G$. 

To obtain our results, we follow the general recursive approach
of Thorup for reachability
in planar graphs [J.~ACM~'04] and we present new
data structures which generalize 
dominator trees and previous  data structures for strong-connectivity under failures
[Georgiadis et al., SODA~'17]. Our new data structures work also for general digraphs and may be of independent interest.

  \end{abstract}
  \thispagestyle{empty}
\end{titlepage}

\section{Introduction}
Computing reachability is perhaps one of the most fundamental problems in directed graphs. 
Let $G=(V,E)$ be a directed graph with $n$ vertices and $m$ edges.
The transitive closure (i.e., all-pairs reachability) problem consists of computing  whether there is a directed path from $u$ to $v$, for all pairs of vertices $u,v\in V$. 
The single-source reachability variant asks for each $v\in V$ whether there exists a path from $s$ to $v$, where $s\in V$ is fixed.
While single-source reachability can be solved in 
optimal $O(m)$ time, the fastest algorithm for
computing transitive closure runs in
$\Ot(\min(n^\omega,nm))$ time, where $\omega<2.38$ is the matrix multiplication exponent.
Notice that for solving the all-pairs reachability problem one needs 
$O(n^2)$ space to store the information for all pairs of vertices.

In the oracle variant of all-pairs reachability, we wish to preprocess the input graph and build a data structure that can answer reachability queries between any pair of vertices, while trying to minimize the query time, the preprocessing time, as well as
the size of the data structure.
Henzinger et al. \cite{henzinger2017conditionalLowerBounds} gave conditional lower bounds for ``combinatorial''\footnote{That is, not relying on fast matrix multiplication algorithms, which are often considered impractical.} constructions for this problem. 
Specifically, they showed that there is no all-pairs reachability oracle that simultaneously requires $O(n^{3-\epsilon})$ time preprocessing and supports queries in $O(n^{2-\epsilon})$ time (for all $m$), unless there is a truly ``combinatorial'' algorithm that can multiply two $n\times n$ boolean matrices in $O(n^{3-\epsilon})$ time.
%

However, non-trivial reachability oracles  are known for a few important graph classes. 
Most notably, for planar digraphs, for which $m=O(n)$,
the first reachability oracle with near-linear preprocessing
and polylogarithmic query time was obtained by Thorup~\cite{Thorup04},
whereas a decade later
Holm et al. \cite{HolmRT15} presented an asymptotically optimal oracle, with $O(n)$ space and preprocessing time and constant query time.
For graph classes admitting balanced separators of size $s(n)$ (which include graphs with treewidth $O(s(n))$, and minor-free graphs
for $s(n)=O(\sqrt{n})$), an $\Ot(n\cdot s(n))$-space reachability oracle
with query time $\Ot(s(n))$ exists\footnote{To obtain such an oracle it is enough to precompute single-source reachability from/to all the $O(s(n))$ vertices
of the separator and recurse on the components of $G$ after removing the separator.}.

Real-world networks undoubtedly experience link or node failures. 
This has motivated the research community to develop 
graph algorithms and data structures that can efficiently deal with failures.
A notable example is the notion of dominators in digraphs with respect to a source vertex~$s$.  We say that a vertex $x$ dominates a vertex $v$ if all paths from $s$ to $v$ contain $x$. The dominance relation from $s$ is transitive and can be represented via a tree called the \emph{dominator tree} from $s$. The dominator tree from $s$ allows one to answer several reachability under failure queries, such as ``are there two edge- (or vertex-) disjoint paths from $s$ to $v$?" and ``is there a path from $s$ to $v$ avoiding a vertex $x$ (or an edge~$e$)?", in asymptotically optimal time.
The notion of dominators has been widely used in domains like circuit testing \cite{amyeen:01:vlsitest}, theoretical biology \cite{foodwebs:ab04}, memory profiling~\cite{memory-leaks:mgr2010}, constraint programming \cite{QVDR:PADL:2006}, connectivity \cite{2ECC:GILP:TALG}, just to state some.
Due to their numerous applications, dominators have been extensively studied for over four decades \cite{cf:ac, purdom1972immediate,domin:lt,domin:tarjan} and several linear-time algorithms for computing dominator trees are known \cite{domin:ahlt,dominators:bgkrtw,domin:bkrw,dom:gt04}.
While extremely useful, dominator trees are restricted to answering queries only from a single source $s$.

Oracles that answer queries in the presence of failures are often called \emph{$f$-sensitivity
oracles}\footnote{We adopt the use of the term from \cite{henzinger2017conditionalLowerBounds}. We note that other terms have also been used in the literature, such as ``fault-tolerant" oracles or oracles "for failure prone graphs".},
where $f$ refers to the upper bound on the number of failures allowed. 
If not explicitly mentioned, in this paper when we refer to sensitivity oracles we refer to $1$-sensitivity oracles that allow failures of either one edge or one vertex. In what follows we denote by $G-F$ the graph obtained from $G$ after deleting the set of vertices or edges $F$. If $F$ is a single vertex or edge $x$, we write $G-x$.

In this paper, we study 1-sensitivity oracles for the all-pairs reachability problem in planar digraphs.
Specifically, we wish to preprocess a planar graph efficiently and build a possibly small (in terms of space)
data structure  that can efficiently answer queries of the form
``is there a path from $u$ to $v$ avoiding $x$?", for query vertices $u,v$ and vertex (or edge) $x$.
Moreover, we study \emph{2-reachability problems}, where, given a directed graph $G$, we wish to determine if there are two vertex-disjoint (resp., edge-disjoint) paths from $u$ to $v$, for query vertices $u,v$. In particular, we consider \emph{2-reachability oracles}, which are 
 the existential variant of  1-sensitivity reachability oracles. Here, the desired data structure should,
for an arbitrary pair of query vertices $(u,v)$, efficiently find a vertex $x\notin\{u,v\}$ (resp., an edge $e$)
whose failure destroys all $u\to v$ paths in the graph,
or declare there is none. Note that in the latter case,  vertex $v$ is  2-reachable from vertex $u$\footnote{The name 2-reachability
comes from the fact that, by Menger's theorem, there exist two internally vertex-disjoint
$u\to v$ paths if and only if no single failing vertex can make $v$ unreachable from $u$}.

Our data structures support the aforementioned queries answered with dominator trees,
but we allow a source $s$ to be a query parameter as well, as opposed to a dominator tree which assumes a fixed source.
We focus on planar graphs not only because they are one of the most studied non-general classes of graphs, but also because dominator trees have been used in the past for solving problems on planar graphs (i.e., circuit testing \cite{amyeen:01:vlsitest}), and hence our result could potentially motivate further similar applications as an efficient tool that can answer all-pairs dominance queries.

Notice that a simple-minded solution to both 1-sensitivity reachability oracle and 2-reachability oracle problems with $O(n^2)$ space and preprocessing time and $O(1)$ query time is to compute the dominator tree from each source vertex $s$.
In general directed graphs, the all-pairs version of a dominator tree cannot be computed faster than matrix-multiplication or be stored in subquadratic space \cite{georgiadis2017AllPairsFTReachability}, which can be prohibitive in applications that require the processing of data of even moderate size.
In this paper we show how to achieve significantly better bounds for both these
problems when the input digraph is planar.

\paragraph{Related work.}  
There has been an extensive study of sensitivity oracles in directed graphs, with the initial studies dating several decades back.
Sensitivity oracles for single-source reachability have been studied widely under the name \emph{dominator trees} since the seventies (see e.g., \cite{domin:lt}).
Choudhary~\cite{choudhary2016DualFaultTolerant} considered the problem of computing $2$-sensitivity oracles for single-source reachability. 
In particular, she showed how to construct a data structure of size $O(n)$ that can answer in constant time 
reachability queries from a source $s$ to any vertex $v$ in $G-\{x,y\}$, for query vertices $v,x,y$. While the preprocessing time is not specified, a simple-minded initialization of her data structure requires $O(mn^2)$ time.
Baswana et al. \cite{Baswana2018FT}, considered the version of this problem with multiple failures. Specifically, they presented an $f$-sensitivity oracle, with size $O(2^fn)$ and preprocessing time $O(mn)$, that can compute the set or reachable vertices from the source vertex~$s$ in $G-F$ in $O(2^fn)$ time, where $F$ is the set of failed vertices or edges, with $|F|\leq f$.


King and Sagert \cite{king2002fully} were the first to study 1-sensitivity oracles for all-pairs reachability under single edge failures. In particular, they gave an algorithm that can answer queries in constant time in directed acyclic graphs (DAGs), after $O(n^3)$ time preprocessing. 
For general directed graphs Georgiadis et al.~\cite{georgiadis2017AllPairsFTReachability} showed a near-optimal 1-sensitivity oracle for all-pairs reachability, with $O(n^2)$ space and $O(\min\{mn, n^\omega \log n\})$ preprocessing time, that can answer in constant time queries of the form ``is there a path from $u$ to $v$ in $G-x$", for query vertices $u,v$ and failing vertex or edge $x$.
Their approach first produces dominator trees from all sources, which are then used to answer the queries.
Sensitivity oracles for reachability problems admit a trivial lower bound: they cannot be built faster than the time it takes to compute the corresponding (single-source or all-pairs) reachability problem in the static case (i.e., without failures).
Very recently, van den Brand and Saranurak \cite{brand2019sensitivity} presented an $f$-sensitivity oracle for all-pairs reachability with $O(n^2 \log n)$ size and $O(f^{\omega})$ query time, and $O(n^{\omega})$ time preprocessing. 
Their $f$-sensitivity oracle is nearly optimal for $f \in O(1)$ and is obtained by adopting an improved $f$-sensitivity oracle for the All-Pairs Shortest Paths problem. 
Therefore, the problem of constructing an $f$-sensitivity all-pairs reachability oracle in general digraphs is well understood for small values of $f$. 

As already mentioned, the $2$-reachability problem asks to build a data structure that can efficiently report for a pair of query vertices $(u,v)$ a single vertex (resp., edge) that appears in all paths from $u$ to $v$, or determine that there is no such vertex (resp., edge).  
Georgiadis et al.~\cite{georgiadis2017AllPairsFTReachability} show how to precompute the answers to all possible $2$-reachability queries in $O(\min \{n^\omega \log n, mn\})$ time.
The notion of $2$-reachability naturally generalizes to $k$-reachability where the query asks for a set of at most $(k-1)$ vertices (resp., edges) whose removal leaves $v$ unreachable from $u$, or determine that there is no such set of vertices (resp., edges).
For the case of $k$-reachability with respect to edge-disjoint paths, Abboud et al.~\cite{abboud2019faster} show how to precompute the answer for all pairs of vertices in $O(\min\{n^\omega, mn\})$ when $k=O(1)$, but only in the case of DAGs.
For the case of (non-necessarily acyclic) planar graphs, Łącki et al.~\cite{LackiNSW12} showed
an algorithm with $\Ot(n^{5/2}+n^2k)$ running time.
Hence, there are no non-trivial results on the $k$-reachability problem in general directed graphs.

Abboud et al. \cite{abboud2019faster} also considered a weaker version of $k$-reachability in which they only distinguish whether there are $k$ disjoint paths, or less (without reporting a set of at most $k-1$ vertices/edges that destroy all paths from $u$ to $v$, if such a set exists).
They show how to precompute all such answers, in the case of vertex-disjoint paths, in $O((nk)^\omega)$ time.
This weaker version of the problem can also be solved with respect to edge-disjoint paths by computing the value of all-pairs min-cut in  $O(m^\omega)$ time \cite{cheung2013graph} for general graphs and in $\Ot(n^2)$ time for planar graphs~\cite{LackiNSW12}.

The related problem of sensitivity oracles for strongly connected components (SCCs) was considered by Georgiadis et al. \cite{GeorgiadisIP17}. Specifically they presented a 1-sensitivity oracle with $O(m)$ size and preprocessing time, that can answer various SCC queries under the presence of single edge or vertex failures in asymptotically optimal time. For instance they can test whether two vertices $u,v$ are in the same SCC in $G-x$ in constant time, for query vertices $u,v$ and failed vertex or edge $x$, or they can report the SCCs of $G-x$ in $O(n)$ time.
Baswana et al. \cite{Baswana2019}, showed an $f$-sensitivity oracle with $O(2^fn^2)$ space, and $O(mn^2)$ preprocessing time, that can report the SCCs of $G-F$ in $O(2^fn \polylog n)$ time, where $F$ is a set of failed vertices or edges, for $|F|\leq f$. 

Reachability queries under edge or vertex failures can be also answered using more powerful sensitivity oracles for shortest paths or approximate shortest paths.
For planar directed graphs there is no known $o(n^2)$ space all-pairs distance sensitivity oracle with $O(\polylog n)$ query time. 
Baswana et al. \cite{BaswanaLM12} presented a single-source reachability oracle under single edge or vertex failures with $O(n \polylog n)$ space and construction time, that can report the length of the shortest path from $s$ to $v$ in $G-x$ in $O(\log n)$ time, for query vertex $v$ and a failed vertex or edge $x$.
They extend their construction to work for the all-pairs variant of the problem in $O(n^{3/2} \polylog n)$ preprocessing time and size of the oracle, and answer queries in $O(\sqrt{n} \polylog n)$ time.
Later on Charalampopoulos et al. \cite{Charalampopoulos19} presented improved sensitivity oracles for all-pairs shortest paths on planar graphs. Their sensitivity oracle also handles multiple failures at the expense of a worse trade-off between size and query time.
For the all-pairs version of the problem, their oracles have significantly worse bounds compared to the best known exact distance oracles (without failures) for planar graphs. 
The best known exact distance oracle for planar graphs with $O(\polylog n)$ query time was presented recently by Charalampopoulos et al. \cite{Charalampopoulos2019AlmostOptimalDistanceOraclesPlanar}; it uses  $O(n^{1+\epsilon})$ space and has $O(n^{(3+\epsilon)/2})$ construction time. 
However, we note that sensitivity distance oracles  cannot answer $2$-reachability queries.

Finally, note that we could in principle handle $f$-sensitivity queries with a fully dynamic reachability
or shortest-paths oracle with good worst-case update and query bounds~\cite{DiksS07, Sankowski04}.
However, not very surprisingly, this approach rarely yields better bounds than the $f$-sensitivity
solutions tailored to handle only batches of failures.

In summary, there exist efficient fault-tolerant reachability oracles for general digraphs with preprocessing time, size, and query time comparable to the fastest known static reachability oracles (without failures). This is the case, e.g., for the fault-tolerant reachability oracle and $2$-reachability oracle of \cite{georgiadis2017AllPairsFTReachability}, which almost matches the $O(\min\{mn,n^{\omega}\})$ bound for computing reachability without 
failures, and for the data structure for SCCs under failures of \cite{GeorgiadisIP17}, which has linear construction time and space, and it is capable of answering queries in asymptotically optimal time. 
However, and somehow surprisingly, such efficient fault-tolerant reachability oracles and $2$-reachability oracles, i.e., oracles with preprocessing time, size, and query time comparable to the fastest known static oracles (without failures) are not known in the case of any basic problem on \emph{planar directed graphs}.
Such oracles would implement some of the most important functionalities
of dominator trees, but for all possible sources at once.
Since dominator trees have several applications, including applications in planar digraphs,
 it seems quite natural  to ask whether such oracles exist.

An additional motivating factor for studying our problem is the large gap in known and possible bounds between undirected and directed graphs for related problems.
For the case of undirected graphs, there exist nearly optimal $f$-sensitivity oracles for answering connectivity queries under edge and vertex failures.
Duan and Pettie \cite{Duan2017ConnectivityOraclesVertexFailures} presented a near-optimal preprocessing $O(n\log{n})$-space $f$-sensitivity oracle that, for any set $F$ of up to $f$ edge-failures their oracle, spends $O(f \log f \log \log n)$ time to process the failed edges and then can answer connectivity queries in $G\setminus F$ in time $O(\log \log n)$ per query. This result is nearly optimal also for the case of planar undirected graphs. In the same paper, the authors also present near-optimal bounds for the case of vertex-failures. 
For general directed graphs, it is clear that  sensitivity oracles for all-pairs reachability  cannot achieve bounds anywhere close to the known bounds for sensitivity oracles for connectivity in undirected graphs.
While answering connectivity queries in undirected graphs is a much simpler task than answering reachability queries in digraphs, an intriguing question is whether there exists a general family of directed graphs that admits fault-tolerant reachability oracles with bounds close to the known results for sensitivity connectivity oracles in undirected graphs, even for the case of a single failure.

\paragraph{Our results.}
We answer 
the questions posed above
affirmatively by presenting the first  \emph{near-optimal} -- in terms of both time and space -- oracle handling \emph{all-pair-type} queries for \emph{directed} planar graphs
and supporting any \emph{single vertex or single edge failure}.
Specifically, we prove the following.
\begin{theorem}\label{t:mainresult}
  Let $G$ be a planar digraph. There exists
  an $O(n\log{n})$-space data structure answering queries of the
  form ``is there a $u\to v$ path in $G-x$'', where $u,v,x\in V$,
  in $O(\log{n})$ time.
  The data structure can be constructed in $O(n\log^2{n}/\log\log{n})$ time.
\end{theorem}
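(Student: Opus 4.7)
The plan is to follow Thorup's recursive separator framework~\cite{Thorup04} for planar reachability and to augment each piece with new fault-tolerant data structures generalizing the dominator-tree machinery of Georgiadis et al.~\cite{GeorgiadisIP17} from the strong-connectivity setting to a source-to-target setting in which both endpoints of the query are arbitrary vertices of the piece.

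First, I would build Thorup's recursive decomposition of $G$. Each piece $H$ is split by a separator $S_H$ consisting of $O(1)$ monotone (shortest) paths with $O(\sqrt{|H|})$ vertices in total, and we recurse on the connected subpieces. The decomposition has $O(\log n)$ levels and the sum of piece sizes across all levels is $O(n\log n)$, which underlies the $O(n\log n)$ space bound. For each piece $H$ and each separator vertex $s\in S_H$, I would precompute the dominator tree of $H$ rooted at $s$ and the dominator tree of the reverse graph $H^R$ rooted at $s$. Equipped with $O(1)$-time ancestor queries, these dominator trees decide in $O(1)$ whether a given $v\in H$ can still be reached from $s$ (and symmetrically whether it still reaches $s$) in $H-x$: reachability in $H-x$ fails iff $x$ is a proper ancestor of $v$ (resp.\ of $s$) in the appropriate dominator tree.

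To answer a query $(u,v,x)$, I would examine the $O(\log n)$ pieces on the path in the decomposition tree from the smallest piece containing both $u$ and $v$ up to the root. Inside any such piece $H$, a $u\to v$ path in $G-x$ either avoids $S_H$ (and is handled recursively at a lower level) or it crosses $S_H$ at some separator vertex $s$. Exploiting the monotonicity of the separator paths, for each separator path $P$ the set of $s\in P$ that $u$ can reach in $H$ forms a contiguous sub-interval of $P$, and removing $x$ modifies this interval in a controlled way that the dominator trees from/to the endpoints of $P$ can describe in $O(1)$. Intersecting the two resulting intervals (``reachable from $u$'' and ``reaches $v$'', both in $H-x$) tells us whether a usable $s$ exists, so each level costs $O(1)$ after a single $O(\log n)$ piece-location and point-location step, giving the $O(\log n)$ query bound.

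The main obstacle is keeping preprocessing within $O(n\log^2 n/\log\log n)$. Since the total piece size is $O(n\log n)$, each piece $H$ must be augmented in $\Ot(|H|)$ time, with roughly a $\log n/\log\log n$ factor of slack. Building all $O(\sqrt{|H|})$ dominator trees from separator vertices with a linear-time algorithm is already near-optimal, but I also need, at each non-separator vertex of $H$, a compact description of its reachability intervals into each separator path \emph{and} of how those intervals deform under a single failure; this is where~\cite{GeorgiadisIP17} is genuinely generalized. My plan is to sweep each separator path once and, via the joint structure of its dominator trees (from both endpoints, in both $H$ and $H^R$), compute the interval-deformation tables by batched ancestor queries, so that the per-piece cost stays proportional to $|H|$ up to the $\log n/\log\log n$ slack absorbed by the best-known dominator-tree construction. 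The resulting structure must then be glued across the $O(\log n)$ levels so that the per-level $O(1)$ query work composes into the claimed $O(\log n)$ total query time.
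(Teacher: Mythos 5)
There is a genuine gap, and it starts with a misremembering of Thorup's decomposition. In Thorup's framework (which the paper does adopt) the separator at each node of the recursion is a fundamental cycle with respect to a 2-layered spanning tree, decomposed into $O(1)$ directed paths. These paths are not short: a single separator path $P$ can contain $\Theta(n)$ vertices. Your proposal assumes the separator has $O(\sqrt{|H|})$ vertices in total and proposes to build a dominator tree (and a reverse dominator tree) rooted at \emph{each} separator vertex $s\in S_H$. Even under your assumption this is $\Theta(|H|^{3/2})$ work per piece, i.e., $\Theta(n^{3/2})$ overall, far above the claimed $O(n\log^2 n/\log\log n)$; under the actual separator size it becomes $\Theta(n^2)$. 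The ``batched ancestor queries'' step is asserted to repair this, but no mechanism is given for how $\Theta(|P|)$ dominator trees over $H$ could possibly be constructed in $\Ot(|H|)$ time, and this is in fact precisely the hard part.

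The paper avoids this blow-up by a structurally different idea: it partitions $V\setminus V(P)$ into $\ell=|P|$ \emph{disjoint} layers $L_\ell,\ldots,L_1$ (the vertices whose latest satellite-path entry into $P$ is $p_i$), and builds a single dominator tree per layer, rooted at $p_i$, inside the small graph $G[L_i\cup\{p_i\}]$. Because the layers partition $V\setminus V(P)$ and the induced subgraphs are edge-disjoint, all $\ell$ dominator trees together cost only linear time. Handling a failure $x$ then requires computing for every vertex its ``rescuer'' (the $\und(\cdot)$ values), which is done with a sweep plus a dynamic 2D orthogonal range reporting structure; and converting ``reachable via a satellite path'' into ``reachable via any path'' needs the SCC-under-failure structure of Theorem~\ref{thm:find-maxID-in-SCC}. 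None of these ingredients appear in your outline, and the ``interval deforms under a single failure in a controlled way'' claim is not substantiated.

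Finally, your outline does not distinguish the case $x\in V(P)$, which is qualitatively different: removing $x$ cuts $P$ into two subpaths, destroying the single contiguous interval, and the paper needs the cycle-separator (rather than root-path) variant of the decomposition, the fact that the endpoints of $P$ share a face, the notion of minimal detours (of which there are at most two by planarity), and an auxiliary interval-tree overlay over $P$ to support earliest/latest-on-a-subpath queries. Lumping this into ``intersecting two intervals'' skips an essential part of the argument.
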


We remark that previous data structures handling failures in $O(\polylog n)$ time either
work~only for the single-source version of the problem (see dominator trees, or \cite{choudhary2016DualFaultTolerant} for two failures),
or work only on undirected graphs (see, e.g., \cite{Duan2010ConnectivityOracles, Duan2017ConnectivityOraclesVertexFailures} for oracles for general graphs, and~\cite{AbrahamCG12, BorradailePW12} for planar graphs),
or achieve nearly linear space only for dense graphs~\cite{brand2019sensitivity}.
%
It is worth noting that for planar digraphs vertex failures are generally more challenging than edge failures, 
since, whereas one can easily reduce edge failures to vertex failures,
the standard opposite reduction of splitting a vertex into an in- and an out-vertex
does not preserve planarity.

In order to achieve our 1-sensitivity oracle, we develop the following two new data structures that also work on general digraphs and can be of independent interest:
\begin{itemize}
\item Given a digraph $G$ and a directed path $P$ of $G$, we present a linear-space data structure that, after preprocessing $G$ in $O(n+m \log m/ \log \log m)$ time, can answer whether there exists a path from $u$ to $v$ in $G-x$ passing through a vertex of $P$, for any query vertices $u,v\in V$ and $x\in V\setminus V(P)$. In a sense, this result generalizes the dominator tree (a tree $T$ rooted at a source-vertex $s$ such that a vertex $t$ is reachable from $s$ in $G-x$ if and only if it is reachable in $T-x$) in
the following way.
Note that a pair of dominator trees from and to $s$ can be used to support queries of the form
``is $u$ reachable from $v$ through a ``hub'' $s$ in $G-x$?'', where $u,v,x\neq s$ are all query vertices and $s$ is fixed.
Our data structure allows to replace the single hub $s$
with any number of hubs that form a directed path, provided that these hubs cannot fail.
Since dominator trees have numerous applications, as discussed before,
we believe that our generalization can find other applications as well.

\item We show that given a digraph $G$ and an assignment of real-valued labels to the vertices of $G$,
in $O(m+n (\log n \log \log n)^{2/3})$ time one can construct a linear-space data structure
    that supports $O(1)$-time queries of the form
``what is the largest/smallest label in the strongly connected component of $u$ in $G-x$?", for any pair of query vertices $u,x\in V$.

\end{itemize}
%

By suitably extending our 1-sensitivity oracle, we obtain a nearly optimal $2$-reachability oracle for planar digraphs, summarized as follows.

\begin{restatable}{theorem}{thmtworeach}\label{t:2reach}
  In $O(n\log^{6+o(1)}{n})$ time one can construct an $O(n\log^{3+o(1)}{n})$-space
  data structure supporting the following queries in $O(\log^{2+o(1)}{n})$ time.
  For $u,v\in V(G)$, either~find some separating vertex $x\notin\{u,v\}$ lying on all $u\to v$ paths in $G$,
  or declare $v$ 2-reachable from~$u$.
\end{restatable}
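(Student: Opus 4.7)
The plan is to extend the 1-sensitivity oracle of Theorem~\ref{t:mainresult} so that it \emph{finds} a dominator of $v$ from $u$ rather than merely tests whether a given candidate $x$ is one. I follow the same Thorup-style recursive decomposition that underlies Theorem~\ref{t:mainresult}: at each recursion node the current piece $H$ is split by a directed separator path $P$, we recurse on the connected components of $H - V(P)$, and the resulting recursion tree has depth $O(\log n)$. At every node I additionally store the hub-path oracle and the strong-connectivity label oracle promised by the two new tools announced in the introduction; this is what I expect to be responsible for the extra $\log^{2+o(1)} n$ space factor and the $\log^{4+o(1)} n$ preprocessing factor over Theorem~\ref{t:mainresult}.

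For a query $(u,v)$, first verify via a standard planar reachability oracle that $v$ is reachable from $u$ (otherwise the output is vacuous). Then walk the decomposition tree from the root downward. At each piece $H$ containing both $u,v$ with separator path $P$, one of two situations arises. If some $u\to v$ path avoids $V(P)$ entirely---a condition that can be tested in $O(\log n)$ time with the piece's own reachability oracle---then removing $V(P)$ does not disconnect $u$ from $v$, so any dominator of $v$ from $u$ inside $H$ must lie in the unique child piece of $H$ containing both $u$ and $v$; recurse there. Otherwise every $u\to v$ path crosses $P$, and we attempt to locate a dominator on $P$; if none is found, the dominator (if any) again lies strictly inside a single child piece, so we recurse.

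The technical heart is the search for a dominator on the directed separator path $P = p_1 \to \cdots \to p_k$. Since $P$ is directed, the vertices of $P$ reachable from $u$ form a suffix $[p_a,p_k]$ and those that can reach $v$ form a prefix $[p_1,p_b]$, so all candidates lie in the contiguous interval $[p_a,p_b]$. The set of dominators inside this interval is not a priori contiguous, so a naive binary search on $P$ fails. Instead, I plan to combine the chain structure of dominators of $v$ from $u$ (they form a path in the dominator tree from $u$) with the hub-path oracle: at a midpoint $p_m$ of the current candidate interval, apply the hub-path oracle to sub-paths of $P$ to decide whether \emph{every} $u\to v$ path in $G$ must already meet $P$ before $p_m$, which identifies the half of $[p_a,p_b]$ in which any dominator on $P$ must lie. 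Once the interval shrinks to a single candidate, one 1-sensitivity query of Theorem~\ref{t:mainresult} decides whether it is a dominator. Each halving step uses $O(1)$ queries of cost $O(\log n)$, so the search costs $O(\log^{1+o(1)} n)$ per recursion level and $O(\log^{2+o(1)} n)$ overall across the $O(\log n)$ levels.

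The main obstacle I expect is formalizing the correctness of this halving step: the claim that the hub-path oracle can be used to decide which half of $[p_a,p_b]$ contains a dominator on $P$ relies on a structural coupling between the planar embedding of $P$, how $u\to v$ paths may leave and re-enter $P$, and the chain property of dominators from a fixed source. Establishing this coupling---likely via a Menger-type argument on the two sides of $H - V(P)$ together with a careful analysis of $P$-re-entries---is the principal new ingredient. The claimed bounds of $O(n\log^{6+o(1)} n)$ preprocessing, $O(n\log^{3+o(1)} n)$ space, and $O(\log^{2+o(1)} n)$ query time then follow by instantiating the 1-sensitivity, hub-path, and SCC-label oracles at each of the $O(n\log n)$ piece-vertex incidences of the decomposition.
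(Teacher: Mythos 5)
Your recursion structure is not sound. In the paper, once the recursion reaches a node where some $u\to v$ path crosses the separator path $P$, the entire query is resolved \emph{at that node} and no further recursion takes place. You instead propose ``attempt to locate a dominator on $P$; if none is found, the dominator (if any) again lies strictly inside a single child piece, so we recurse.'' This last step is false: if every $u\to v$ path crosses $V(P)$, a separating vertex $x$ may well lie \emph{off} $P$ and \emph{off} the child side containing $u,v$, because the witness $u\to v$ paths genuinely traverse both sides of the separator. Such an $x$ is invisible to every descendant recursion node, so the search misses it. The paper devotes all of Section~\ref{s:2reach-not-on-path} (Lemmas~\ref{l:2reach-not-on-path} and~\ref{l:2reach-scc}--\ref{l:2reach-source}) to precisely this ``$x\notin V(P)$ but $u\to v$ paths use $P$'' case, reducing it to $O(1)$ calls to the 1-sensitivity oracle via a collection of per-SCC dominator trees of total linear size. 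Your proposal has no analogue of this.

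For the ``$x\in V(P)$'' case, your plan is a binary search on a candidate interval of $P$, driven by queries of the form ``must every $u\to v$ path meet $P$ before $p_m$?'' You already flag the correctness of this halving as the main obstacle, and indeed it is where the approach breaks. The set of separating vertices on $P$ is generally \emph{not} an interval (paths can re-enter $P$ via detours on either side, and $\vfs_G(u)$, $\vls_G(v)$ need not be strongly connected), so there is no monotone predicate that a binary search can exploit. The paper explicitly notes that for $x\in V(P)$ the reduction to a few 1-sensitivity queries ``does not work,'' and instead characterizes separating vertices via the conditions $F_x^1\lor F_x^2\lor F_x^3$ involving minimal-detour pairs and strong-connectivity in $G-x$, encodes them as 3-dimensional labels on the trees of Lemma~\ref{l:2reach-separate-paths}, and searches with a 4-dimensional orthogonal range-reporting structure through heavy-path decomposition (Lemma~\ref{l:2reach-tree-range}). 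This machinery is what accounts for the extra $\log^{2+o(1)}n$ space and $\log^{4+o(1)}n$ preprocessing overhead; in your proposal the bounds are asserted but the structures producing them are not there. In short, the two genuine gaps are (a) the missing off-path case after the recursion has found a crossing path, and (b) the unestablished --- and in fact unsalvageable as a plain binary search --- on-path search, for which the paper needs a qualitatively different range-reporting argument.
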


Our 1-sensitivity and $2$-reachability oracles, combined, extend several supported operations of dominator trees to the all-pairs version of the problem. For example, for any $u,v$, our data structures can identify the set $X$ of all vertices (or edges) that appear in all paths from $u$ to $v$ in $O(|X|\log^{2+o(1)} n)$ time by executing $O(|X|)$ $2$-reachability queries. That is, a $2$-reachability~query returns a vertex $x$ that appears in all paths from $u$ to $v$, and all other vertices in $X\setminus x$ (if any) appear either in all paths from $u$ to $x$ or in all paths from $x$ to $v$, which we can identify by recursively executing $2$-reachability queries from $u$ to $x$ and from $x$ to $v$, and eventually reconstruct the set $X$. 

Whereas we achieve $\Ot(n)$ preprocessing time for both our oracles, the main conceptual challenge
lies in obtaining near-optimal space. However, efficient construction of the used data~structures, especially
our generalization of the dominator tree, proved to be a highly non-trivial task as well.

\paragraph{Overview of our 1-sensitivity oracle.}
As already discussed, it is sufficient to build a 1-sensitivity oracle for single vertex failures as the case of edge failures reduces to the case of vertex failures by applying edge-splitting on the edges of the graph (i.e., replacing each edge $zw$ by a new vertex $c$, and two edges $zc$ and $cw$). 
Since $m=O(n)$ in planar graphs, this process does not increase the number of vertices or edges significantly.
The initial step of our approach is the use of the basic\footnote{Thorup~\cite{Thorup04} also presented a more involved data structure that allowed him to reduce query time
to $O(1)$ while maintaining the preprocessing time $O(n\log{n})$. However, this data structure significantly differs from his basic $O(\log{n})$-query data structure
in the fact that one needs to represent reachability through separating directed paths~$P$ ``globally'' in the entire graph $G$, as opposed to
only representing reachability through $P$ ``locally'' in the subgraph $H\subseteq G$ we recurse on. It is not clear if this more sophisticated approach can be extended to handle vertex failures.}
hierarchical decomposition approach introduced by Thorup \cite{Thorup04}. 
For the problem of constructing a reachability oracle (with no failures) this initial phase allows one to focus on the following problem, at the expense of an increase by a factor $O(\log n)$ in the preprocessing time, the size, and the query time of the constructed oracle. 
Given a graph $G$ and a directed path $P$ of $G$, construct a data structure that answers efficiently whether there is a path from $u$ to $v$ 
containing any vertex of $P$, for any two vertices $u$ and $v$. 
It is rather easy to obtain such a data structure with linear space and preprocessing time that can answer the required queries in constant time.

Although we use the decomposition phase of~\cite{Thorup04} as an initial step in our approach, the main difficulty in our problem is to build a data structure that can efficiently answer whether there exists a path from $u$ to $v$ in $G-x$ that uses a vertex of a path $P$. 
In the presence of failed vertices this becomes much more challenging, compared to reachability queries with no failures, as the set of vertices of~$P$ that are reachable from (or can reached by) a vertex $w$, might be different under failures of different vertices.
Additionally, the case where the failed vertex appears on $P$ disconnects the path into two subpaths which we need to query. 
We cannot afford to simply preprocess all such subpaths, as there can be as many as $|P|=\Theta(n)$ 
of those
for all possible failures of vertices on the path~$P$.
We overcome these problems by using new insights, developing new supporting data structures
and further exploiting planarity.

We distinguish two cases depending on whether the failed vertex appears on $P$ or not. For each path $P$, we preprocess the graph to handle each case separately. 

To deal with the case where the failed vertex lies outside of $P$, we 
identify, for the query vertices $u,v$ and failed vertex $x$, the earliest (resp., latest) vertex on $P$ that $u$ can reach (resp., that can reach $v$) in $G-x$. 
Call this vertex $first^P_{G-x}(u)$ (resp., $last^P_{G-x}(v)$).
Given these vertices it suffices to test whether $first^P_{G-x}(u)$ appears no later than $last^P_{G-x}(v)$ on $P$.
Recall from our previous discussion that both these vertices depend on the failed vertex $x$.
A useful notion throughout the paper is the following. A path $Q$ between any two vertices $w,z$ is called a \emph{satellite} path (with respect to $P$) if no vertex of $Q$ other than $w,z$ is a vertex of $P$, i.e., if $V(Q)\cap V(P) \subseteq \{w,z\}$.
On a very high level, we first develop a near-optimal data structure that can identify
in constant time for each vertex $v$ the latest vertex $v'\in V(P)$ that has a satellite path to $v$ in $G-x$.
The performance of the data structure relies on the efficient constructions of
dominator trees~\cite{domin:ahlt,dominators:bgkrtw,dominators:Fraczak2013,dominators:poset} and their properties, as well as dynamic orthogonal range-searching data structures~\cite{ChanT17}.
Given $v'$, we then show that
$last^P_{G-x}(u)$ is the latest vertex on $P$ that is in the same SCC as $v'$ in $G-x$.
We generalize the problem of computing such a vertex to the mentioned problem
of efficiently finding a maximum-labeled vertex
in the SCC of $v$ in $G-x$, where $v,x\in V$ are query parameters.
For this problem we develop a near-optimal data structure with $O(1)$ query time.
Finally, we proceed with computing $first^P_{G-x}(u)$ analogously.
The query time in this case is $O(1)$.

In order to handle the case when the failed vertex $x$ is on $P$ we further exploit
planarity. We observe that by modifying 
the basic recursive decomposition
of Thorup to use fundamental cycle separators instead of root path separators (this modification was
previously used in e.g. \cite{AbrahamCG12, MozesS18}), we can
assume that the endpoints of $P$ in fact lie on a single face of $G$.
This additional assumption enables us to achieve two important things.
First we show that after linear preprocessing, in $O(\log{n})$ time we
can in fact compute the earliest (latest) vertex of \emph{any subpath of $P$} reachable from (that can reach)
a query vertex $v\in V$ by a satellite path. Here the subpath of interest
is also a query parameter.
Moreover, we introduce a concept of a detour of $x$ to be a path that starts earlier and ends later than $x$ on $P$.
A minimal detour of $x$ is a detour that does not simultaneously start earlier (on $P$) and end later (on $P$) that any other detour of $x$.
We use planarity to show that there can be at most two significantly different
minimal detours of any vertex $x\in V(P)$.
Consequently, we show a linear time algorithm for finding the two minimal detours
for each vertex $x\in V(P)$. 
Finally, we consider several possible scenarios of how the requested
$u\to v$ path in $G-x$ can interact with $P$ and $x$.
In all of these cases we show that there exists a certain canonical
path consisting of $O(1)$ subpaths that are either satellite paths,
minimal detours, or subpaths of $P$.
This allows us to test for existence of a $u\to v$ path in $G-x$
with only $O(1)$ queries to the obtained data structures.

Even though we are not able to reduce the query time in the case when $x$ lies on the path $P$
to constant, this turns out not to be a problem.
This is because in order to answer a ``global'' query, we need only one such query and $O(\log{n})$ constant-time queries
to the data structures when $x$ is out of the path.
It follows that the query time of the whole reachability data structure is $O(\log{n})$.

\paragraph{2-reachability oracle.} The 2-reachability oracle is obtained by both extending and reusing the 1-sensitivity oracle.
It is known that if the graph is strongly connected, then checking whether vertex $v$ is $2$-reachable
from $u$ can be reduced to testing whether $v$ is reachable from $u$
under only $O(1)$ single-vertex failures,
which are easily computable from the dominator tree from an arbitrary vertex of the graph~\cite{georgiadis2017AllPairsFTReachability}.
This observation alone would imply a 2-reachability oracle
for strongly connected planar digraphs within the time/space
bounds of our 1-sensitivity oracle.

However, for graphs with $k$ strongly connected components, a generalization of this
seems to require information from as many as $\Theta(k)$ dominator trees to cover
all possible $(u,v)$ query pairs.

Nevertheless, we manage to overcome this problem by using
the same recursive approach, and carefully
developing the ``existential'' analog of the 1-sensitivity
data structures handling failures either outside the separating path $P$,
or on the separating path $P$ when $P$'s endpoints 
lie on a single face of the graph.
There is a subtle difference though; in the 2-reachability
oracle the recursive call is made only when $P$ contains no
vertex that lies on a $u\to v$ path; once we find a path $P$ containing a vertex
that lies on any $u\to v$ path we make no further recursive calls.

In the case of failures outside the path $P$, we prove that $O(1)$ single-failure
queries are sufficient to decide whether there exists
$x\notin V(P)$ that destroys all $u\to v$ paths in $G$.
Even though we use $\Theta(n)$ dominator trees to encode the information about which single-failure queries we
should issue to the $1$-sensitivity oracle (for any pair $u,v$ of query vertices), our construction guarantees that the total
size of these dominator trees is $O(n)$.

When searching for vertices $x\in V(P)$ that lie on all $u\to v$ paths
in $G$, the reduction to asking few 1-sensitivity queries does not to work. 
Instead, we take a substantially different approach.
Roughly speaking, we simulate the single-failure query
procedure developed in the 1-sensitivity oracle
for all failing $x\in V(P)$ at once.
We prove that deciding if for any such $x$ the query to the 1-sensitivity oracle
would return false can be reduced to a
generalization of a 4-dimensional orthogonal range reporting problem,
where the topology of one of the dimensions is a tree as opposed to a line.
A simple application of heavy-path decomposition~\cite{SleatorT83}
allows us to reduce this problem to the standard 4-d
orthogonal range reporting problem~\cite{KarpinskiN09} at the cost
of $O(\log{n})$-factor slowdown in the query time compared to the standard case.
This turns out to be the decisive factor in the $O(\log^{2+o(1)} n)$ query time and
$O(n\log^{3+o(1)}{n})$ space usage of our 2-reachability oracle.
\paragraph{Organization of the paper.}
In Section~\ref{s:preliminaries} we fix the notation and recall some important properties of planar graphs.
In Section~\ref{s:thorup} we give a quite detailed overview of Thorup's construction
and explain how we modify it to suit our needs.
In Section~\ref{s:failures-overview} we show how to make the reachability data structure
from Section~\ref{s:thorup} to support vertex failures. Apart from that,
in Section~\ref{s:failures-overview} we also state and explain the usage of our main technical contributions -- Theorems~\ref{t:ds1}, \ref{t:ds2}~and~\ref{thm:find-maxID-in-SCC} --
and give a more detailed overview of how they are achieved.
The detailed proofs of these theorems can be found in Sections~\ref{s:ds1}, \ref{s:ds2}~and~\ref{s:max-scc-failure}, respectively.
In Section~\ref{s:dominators} we review some useful properties of dominator trees.
The 2-reachability data structure is covered in Section~\ref{s:2reach}.

\smallskip

\section{Preliminaries}\label{s:preliminaries}
In this paper we deal with \emph{directed} simple graphs (\emph{digraphs}).
We often deal with multiple different graphs at once.
For a graph $G$, we let $V(G)$ and $E(G)$ denote the vertex and edge set of $G$, respectively.
If $G_1,G_2$ are two graphs, then $G_1\cup G_2=(V(G_1)\cup V(G_2),E(G_1)\cup E(G_2))$
and $G_1\cap G_2=(V(G_1)\cap V(G_2),E(G_1)\cap E(G_2))$.
Even though we work with digraphs, some notions~that we
use, such as connected components, or spanning trees, are only
defined for undirected graphs.
Whenever we use these notions with respect to a digraph,
we ignore the directions of the edges.

Let $G=(V,E)$ be a digraph.
We denote by $uv\in E$ the edge from $u$ to $v$ in $G$.
A graph $G'$ is called a subgraph of $G$ if $V(G')\subseteq V(G)$ and $E(G')\subseteq E(G)$.
For $S\subseteq V(G)$, we denote by $G[S]$ the \emph{induced subgraph} $(S,\{uv: uv\in E(G), \{u,v\}\subseteq S\})$.
Given a digraph $G$, we denote by $G^R$ the digraph with the same set of vertices as $G$ and with all of the edges reversed compared to the orientation of the corresponding edges in $G$. That is, if $G$ contains an edge $uv$, $G^R$  contains an edge $vu$ and vice versa. We say that $G^R$ is the \emph{reverse graph} of $G$.
For any $X\subseteq V$ we define $G-X=G[V\setminus X]$.
For $x\in V$, we write $G-x$ instead of $G-\{x\}$.

A \emph{path} $P\subseteq G$ is a subgraph whose edges $E(P)$ can be
ordered $e_1,\ldots,e_k$ such that if $e_i=u_iv_i$, then
for $i=2,\ldots,k$ we have $u_i=v_{i-1}$.
Such $P$ is also called a $u_1\to v_k$ path.
We also sometimes write $P=u_1u_2\ldots u_k$.
A path $P$ is \emph{simple} if $u_i\neq u_j$ for $i\neq j$.
For a simple path $P=u_1\ldots u_k$ we define an order
$\prec_P$ on the vertices of $P$.
We write $v\prec_P w$ for $v=u_i$ and $w=u_j$ if $i<j$.
If $u\prec_P v$, we also say that $u$ is \emph{earlier} on $P$
than $v$, whereas $v$ is \emph{later} than $u$ on $P$.
We denote by $P[u_i,u_j]$ the unique subpath of $P$ from $u_i$ to $u_j$.
Similarly, let $P(u_i,u_j)$ be the subpath of $P$ from
the vertex following $u_i$ to the vertex preceding $u_j$ on $P$.
For $u,v\in V$, we say that $v$ is \emph{reachable} from $u$
if there exists a $u\to v$ path in $G$.
We call $v$ \emph{2-reachable} from $u$
if there exist two internally vertex-disjoint $u\to v$ paths in $G$.
$u$ and $v$ are \emph{strongly connected} if there exist
both paths $u\to v$ and $v\to u$ in $G$.
A \emph{non-oriented path} $P'\subseteq G$ is a subgraph that would
become a path if we changed the directions of some of its edges.
If $P_1=u\to v$ is a (potentially non-oriented) path and $P_2=v\to w$
is a (non-oriented) path, their concatenation $P_1P_2=u\to w$ is also a (non-oriented)~path.

We sometimes use trees, which can be rooted or unrooted.
If a tree $T$ is rooted, we denote by $T[v]$ the \emph{subtree} of $T$
rooted in one of its vertices $v$.
We denote by $T[u,v]$  the path between $u$ and $v$ on $T$, by $T(u,v]$ (resp., $T[u,v)$) the path between $u$ and $v$ on $T$, excluding~$u$ (resp., excluding~$v$).
Analogously, we use $T(u,v)$ to denote the path between $u$ and $v$ on $T$, excluding $u$ and $v$.
For any tree $T$, we use the notation $t(v)$ to refer to the parent of node $v$ in $T$. If $v$ is the root of the tree, then $t(v)=v$. 
To avoid cumbersome notation we sometimes write $w\in T[v]$ when
we formally mean $w\in V(T[v])$, $w\in T[u,v]$ when $w\in V(T[u,v])$ and so on.
If $T\subseteq G$ is a spanning tree of a connected graph $G$, then for any $uv\in E(G)\setminus E(T)$ the \emph{fundamental cycle} of $uv$
wrt. $T$
is a subgraph of $G$ that consists of the unique non-oriented simple $u\to v$
path in $T$ and the edge~$uv$.
\paragraph{Plane graphs.}
A \emph{plane embedding} of a graph is a mapping of its vertices to distinct points
and of its edges to non-crossing curves in the plane.
We say that $G$ is \emph{plane}
if some embedding of $G$ is assumed.
A \emph{face} of a connected plane $G$ is a maximal open connected set of points that are not
in the image of any vertex or edge in the embedding of $G$.
There is exactly one \emph{unbounded} face.
The \emph{bounding cycle} of a bounded (unbounded, respectively)
face $f$ is a sequence of edges bounding~$f$
in clockwise (counterclockwise, respectively) order. Here, we ignore the directions of edges.
An edge can appear in a bounding cycle at most twice.
An embedding of a planar graph (along with the bounding cycles
of all faces) can be found in linear time~\cite{HopcroftT74}.
A plane graph $G$ is \emph{triangulated} if all its faces' bounding cycles
consist of $3$ edges.
Given the bounding cycles of all faces, a plane graph can be triangulated by adding edges
inside its faces in linear time.

A graph $G'$ is called a \emph{minor} of $G$ if it can be obtained
from $G$ by performing 
a sequence of edge deletions, edge contractions, and vertex deletions.
If $G$ is planar then $G'$ is planar as well.
By the Jordan Curve Theorem, a simple closed curve $C$ partitions
$\mathbb{R}^2\setminus C$ into two connected
regions, a bounded one $B$ and an unbounded one $U$.
We say that a set~of points $P$ is \emph{strictly inside} (\emph{strictly outside}) $C$
if and only if $P\subseteq B$ ($P\subseteq U$, respectively).
$P$ is \emph{weakly inside} (\emph{weakly outside}) iff $P\subseteq B\cup C$
($P\subseteq U\cup C$, respectively).
If $G$ is plane then the fundamental cycle of $uv$ corresponds
to a simple closed curve in the plane.
We often identify the fundamental cycle with this curve.

\begin{lemma}[e.g., \cite{Kleinbook}]\label{l:cycle-separator}
  Let $G=(V,E)$ be a connected triangulated plane graph with $n$ vertices. Let $T$ be a spanning tree of $G$.
  Let $w:V\to \mathbb{R}_{\geq 0}$ be some assignment of weights to
  the vertices of~$G$. Set $W:=\sum_{v\in V} w(v)$.
  Suppose that for each $v\in V$ we have $w(v)\leq \frac{1}{4}W$.

  There exists such $uv\in E\setminus E(T)$ that the total weight
  of vertices of $G$ lying strictly on one side of the fundamental cycle of $uv$ wrt. $T$
  is at most $\frac{3}{4}W$.
  The edge $uv$ can be found in linear time.
\end{lemma}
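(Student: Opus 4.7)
I will use planar duality and a centroid-style descent on the dual spanning tree. I actually prove the slightly stronger (and usually intended) statement that the vertex weights on \emph{both} strict sides of the fundamental cycle are at most $\tfrac{3}{4}W$. Let $G^*$ denote the plane dual of $G$; since $G$ is triangulated, $G^*$ is $3$-regular, and a classical duality fact states that the set of dual edges $T^* := \{e^* : e\in E(G)\setminus E(T)\}$ forms a spanning tree of $G^*$. For each non-tree primal edge $e$, the fundamental cycle $C_e$ is a simple closed curve in the plane, and removing $e^*$ from $T^*$ splits the dual tree into two subtrees whose faces lie strictly inside and strictly outside $C_e$, respectively. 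Denote by $I(e),E(e)$ the strictly-inside/outside vertex sets; the goal reduces to producing a non-tree $e$ with $w(I(e)),w(E(e))\le \tfrac34 W$.

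\textbf{Face-weight redistribution and descent.} For every face $f$ I set $A(f) := \sum_{v \in V(f)} w(v)/\deg(v)$, so $\sum_f A(f)=W$. The key quantitative bound is $A(f)\le W/4$: for $n\ge 4$ every vertex of a simple plane triangulation has degree at least three (small cases are trivial), and each face has exactly three vertices, so $A(f)\le \tfrac13 \sum_{v\in V(f)} w(v)\le \tfrac13 \cdot 3 \cdot \tfrac{W}{4} = \tfrac{W}{4}$. Root $T^*$ at an arbitrary face $r^*$, compute the subtree sums $S(f^*) := \sum_{g^*\in T^*[f^*]} A(g^*)$ in one post-order pass (so $S(r^*)=W$), and then perform a descent: starting at $r^*$, repeatedly move to the child of maximum $S$-value as long as the current node has $S>\tfrac34 W$, stopping at the first face $f^*$ with $S(f^*)\le \tfrac34 W$.

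\textbf{Balance bound and finish.} At the stopping node I also have $S(f^*)\ge W/4$. Let $p$ be the parent of $f^*$, so $S(p)>\tfrac34 W$; since $T^*\subseteq G^*$ is $3$-regular, $p$ has at most $3$ children when $p=r^*$ and at most $2$ otherwise. Since $f^*$ is $p$'s heaviest child and $S(p) = A(p) + \sum_{c} S(c)$, we obtain $S(f^*) \ge (S(p) - A(p))/3 \ge (W - W/4)/3 = W/4$ in the root case, and $S(f^*)\ge (3W/4 - W/4)/2 = W/4$ otherwise. Now let $e$ be the primal edge dual to the tree edge $f^*p$. Expanding the definitions, $S(f^*)=\sum_v w(v)\cdot k_v/\deg(v)$ where $k_v$ counts incident faces of $v$ in $T^*[f^*]$, so vertices in $I(e)$ contribute their full $w(v)$, vertices in $E(e)$ contribute $0$, and vertices on $C_e$ contribute fractionally. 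Hence $w(I(e))\le S(f^*)\le \tfrac34 W$ and $w(E(e))\le W - S(f^*)\le \tfrac34 W$, as desired. All steps (constructing $G^*$ and $T^*$, computing $A$ and $S$, descending) take linear time. The main delicate point is establishing $A(f)\le W/4$: this is exactly where both the triangulation hypothesis (three vertices per face, each of degree at least three) and the vertex-weight cap $w(v)\le W/4$ are crucially used.
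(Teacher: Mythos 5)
The paper does not prove this lemma; it cites it from Klein's book. Your argument is correct and is exactly the standard proof from that source: pass to the interdigitating cotree $T^*$, redistribute each $w(v)$ equally among the $\deg(v)$ incident faces (so each triangular face gets weight at most $W/4$, using $\deg(v)\ge 3$ for $n\ge 4$), descend $T^*$ toward the heaviest child, and take the primal non-tree edge dual to the cotree edge where the subtree weight first drops to at most $\tfrac34 W$.
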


\section{The Reachability Oracle by Thorup}\label{s:thorup}
 In this section we describe the basic reachability oracle of Thorup~\cite{Thorup04}
 that we will subsequently extend to support single vertex failures.
 His result can be summarized as follows.
\begin{theorem}[\cite{Thorup04}]\label{t:reach}
   Let $G$ be a directed planar graph. One can preprocess $G$ in $O(n\log{n})$
   time so that arbitrary reachability queries are supported in $O(\log{n})$ time.
\end{theorem}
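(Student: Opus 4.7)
The plan is to follow Thorup's two-stage reduction. The first stage transforms $G$ into a sequence of plane digraphs $G_1,\dots,G_k$ with $\sum_i|V(G_i)|=O(n)$ and the property that every $u\to v$ path of $G$ is entirely contained in $G_{i-1}\cup G_i$ for some $i$, reducing a reachability query in $G$ to $O(1)$ reachability queries over single $G_i$'s. The crucial further property, built into this layering, is that each $G_i$ admits a \emph{balanced cycle separator} consisting of $O(1)$ \emph{directed} paths whose removal splits $G_i$ into pieces of at most a constant fraction of the vertices. Producing such separators, as opposed to the merely undirected fundamental cycles given by Lemma~\ref{l:cycle-separator}, is the principal technical obstacle; it is achieved by combining a BFS-layering of an auxiliary graph with the fundamental-cycle separator, and is the step where both planarity and the directional structure of $G$ are essential.

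The second stage is a recursive separator decomposition of each $G_i$: apply the directed-path cycle separator, recurse into each piece, and repeat. The recursion tree has depth $O(\log n)$ and the total size of subgraphs at any fixed level is $O(|V(G_i)|)$, so summed across all levels and all $G_i$'s it is $O(n\log n)$. For every separator directed path $P$ appearing at some subgraph $H$ of the recursion, precompute for each $w\in V(H)$ the earliest vertex $first^P_H(w)$ of $P$ reachable from $w$ inside $H$ and the latest vertex $last^P_H(w)$ of $P$ that reaches $w$ inside $H$. Both arrays are computed in $O(|V(H)|)$ time by enumerating $P=p_1\dots p_\ell$ and running, for $i=1,\dots,\ell$, a DFS from $p_i$ in $H$ that stamps every previously unstamped vertex with the index $i$; the symmetric DFS in $H^R$, starting from $p_\ell$ and decreasing, yields $last^P_H$. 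Total preprocessing time and space are therefore $O(n\log n)$.

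A query $(u,v)$ is answered as follows. For each of the at most two indices $i$ such that $u,v\in V(G_{i-1}\cup G_i)$, descend from the root of the recursion tree of $G_i$ (respectively $G_{i-1}$) to the deepest node whose subgraph $H$ still contains both $u$ and $v$. At every visited node, and for each of its $O(1)$ separator paths $P$, return YES whenever $first^P_H(u)$ is no later than $last^P_H(v)$ on $P$; in that case the concatenation of an $H$-internal $u\to first^P_H(u)$ path, the subpath of $P$ between $first^P_H(u)$ and $last^P_H(v)$, and an $H$-internal $last^P_H(v)\to v$ path is a $u\to v$ walk in $G$. If the descent terminates without producing a YES, return NO. The descent visits $O(\log n)$ nodes and performs $O(1)$ work per visit, giving $O(\log n)$ query time.

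Correctness in the YES direction is immediate from the witnessing concatenation. For NO, observe that any $u\to v$ path in $G$ lies in some $G_{i-1}\cup G_i$ and, within the recursion on $G_i$, must cross the separator of the deepest recursion subgraph $H$ containing both endpoints (since removing that separator disconnects $u$ from $v$ in $H$); the crossing vertex $p\in V(P)$ satisfies, by definition of the labels, that $first^P_H(u)$ is no later than $p$ on $P$ and $p$ is no later than $last^P_H(v)$ on $P$, so the test succeeds at that node of the descent. The bulk of the novelty and difficulty lies entirely in Stage~1; once the directed-path cycle separators are in hand, the label precomputation and the descent are routine.
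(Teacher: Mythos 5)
Your proposal follows the same two-stage plan as the paper (Thorup's layered reduction, then a recursive cycle-separator decomposition, with first/last labels on the separator paths and an $O(\log n)$ descent at query time), but it leaves a real gap in the recursion. You assert that each $G_i$ admits a balanced cycle separator consisting of $O(1)$ directed paths and that one can ``apply the directed-path cycle separator, recurse into each piece, and repeat,'' without explaining why the \emph{pieces} still admit such separators. That is precisely what the paper's Definition~\ref{t:reach}-vicinity machinery is for: Lemma~\ref{l:layered} equips each $G_i$ with a 2-layered spanning tree $T_i$, Lemma~\ref{l:cycle-separator} gives a balanced fundamental cycle with respect to $T_i$ whose tree part $S_{ab}$ therefore decomposes into $O(1)$ directed paths, and Fact~\ref{f:contract} shows that after contracting $S_{ab}$ into a single new root $r'$ each side inherits a 2-layered spanning tree rooted at $r'$. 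The contracted vertex is then \emph{suppressed}, and keeping the number of suppressed vertices bounded requires alternating the separator's balancing criterion (balance $A$ once $|A|$ gets large), which is what makes the depth bound of Lemma~\ref{l:levels} go through. Without the contraction and suppression, ``recurse into each piece'' is not justified: an arbitrary induced subgraph has no reason to possess a spanning tree whose fundamental cycles split into $O(1)$ directed paths. You also locate the ``principal technical obstacle'' in producing the initial separators via BFS-layering; in fact the separator itself is just Lemma~\ref{l:cycle-separator}, and the real content is in choosing the spanning tree (2-layered) and preserving that choice under recursion.

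A second, smaller slip: your DFS sweeps compute the wrong extremes. A forward DFS from $p_1,\dots,p_\ell$ that stamps previously unstamped vertices yields $\min\{i : p_i\to w\}$, which is neither $first^P_H(w)=\min\{i: w\to p_i\}$ nor $last^P_H(w)=\max\{i: p_i\to w\}$; and a DFS in $H^R$ from $p_\ell$ downward yields $\max\{i: w\to p_i\}$, again the wrong quantity. You want a forward DFS in $H$ sweeping $p_\ell$ down to $p_1$ to get $last^P_H$, and a DFS in $H^R$ sweeping $p_1$ up to $p_\ell$ to get $first^P_H$. This is easily corrected, but as written the test $first^P_H(u)\preceq_P last^P_H(v)$ would be applied to incorrect values.
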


\begin{definition}[\cite{Thorup04}]
  A $2$-layered spanning tree $T$ of a digraph $H$
  is a rooted spanning tree such that
  any non-oriented path in $T$ from the root is a concatenation
  of at most two directed paths in~$H$.
\end{definition}

We will operate on graphs with some \emph{suppressed vertices}. Those suppressed vertices
will guarantee certain useful topological properties of our
plane graphs, but will otherwise be forbidden to be used from the point
of view of reachability.
In other words, when answering reachability queries we will only care about directed paths that do not go
through suppressed vertices.

\begin{remark}
Our description differs from that of Thorup in the fact that we allow
$O(1)$ suppressed vertices (where Thorup needed only one that was additionally
always the root of the spanning tree, and no balancing of suppressed
  vertices was needed),
and we insist on using simple cycle separators (whereas Thorup's
separators consisted of two root paths).
Whereas using cycle separators does not make any difference
for reachability, we rely on them later when we handle single vertex failures.
\end{remark}

\newcommand{\idx}{\imath}
\newcommand{\supr}{A}

\begin{lemma}[\cite{Thorup04}]\label{l:layered}
  Let $G=(V,E)$ be a connected digraph.
  In linear time we can construct digraphs
  $G_0,\ldots,G_{k-1}$, where $G_i=(V_i,E_i)$, and $\supr_i\subseteq V_i$ is a set of suppressed vertices, $|\supr_i|\leq 1$, and their respective
  spanning trees $T_i$, and a function $\idx:V\to \{0,\ldots,k-1\}$ such that:
  \begin{enumerate}[itemsep=-1pt]
    \item The total number of edges and vertices in all $G_i$ is linear
      in $|V|+|E|$.
    \item Each $G_i$ is a minor of $G$, and $G_i-\supr_i$ is a subgraph of $G$.
    \item For any $u,v\in V$ and any directed path $P=u\to v$, $P\subseteq G$ if and only
      if $P\subseteq G_{\idx(u)}-\supr_{\idx(u)}$ or $P\subseteq G_{\idx(u)-1}-\supr_{\idx(u)-1}$.
    \item Each spanning tree $T_i$ is $2$-layered.
  \end{enumerate}
\end{lemma}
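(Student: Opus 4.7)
The plan is to prove Lemma~\ref{l:layered} by induction on $|V(G)|$, peeling off one $2$-layered piece per iteration. For the inductive step, I would pick a pivot vertex $r\in V(G)$ and let $R^+(r)$ and $R^-(r)$ denote, respectively, the vertices reachable from $r$ and the vertices that reach $r$ in $G$. Setting $S=R^+(r)\cup R^-(r)$, I would compute an out-BFS tree $T^+$ rooted at $r$ spanning $R^+(r)$ and an in-BFS tree $T^-$ co-rooted at $r$ spanning $R^-(r)$. Their union is a rooted spanning tree of $G[S]$ in which every non-oriented root-to-vertex path is already a single directed path (either leaving $r$ or entering $r$), so it is $1$-layered and in particular $2$-layered. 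This produces one graph in the output sequence, after augmenting $G[S]$ with a single suppressed vertex that replaces the rest of $G$ and inherits the edges of $G$ between $S$ and $V(G)\setminus S$.

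Next, I would contract the peel $S$ into a single vertex of $G$ and recurse on the resulting graph; splitting it into its connected components and applying the inductive hypothesis to each produces a sequence of graphs and indices for the recursive portion. Concatenating these with the current peel's graph placed last yields $G_0,\ldots,G_{k-1}$, and $\idx(v)$ is set to be the index of the unique graph in which $v$ appears as a non-suppressed vertex. Properties~(2) and~(4) are immediate from this construction: each $G_i$ is a minor of $G$ with $G_i-\supr_i$ a subgraph of $G$, and each $T_i$ is built explicitly as a $1$-layered tree. For property~(1), each vertex (resp.\ edge) of $G$ contributes real content to exactly one $G_i$ and participates in at most a bounded number of neighboring graphs as part of a suppressed context; an amortized count then yields the linear total size.

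The main obstacle is property~(3): every directed path $P=u\to v$ in $G$ must lie in $G_{\idx(u)}-\supr_{\idx(u)}$ or $G_{\idx(u)-1}-\supr_{\idx(u)-1}$. The key to forcing $P$ into two adjacent graphs is aligning the peel boundary with the reachability structure of $G$: for instance, selecting $r$ from a sink SCC of the current subgraph's condensation ensures that once $P$ enters $S$ it cannot leave. Consequently, if $u\in S$ then $P\subseteq G[S]$ is captured by the current peel's graph, while if $u\notin S$ then $P$ either stays entirely in the recursively treated remainder (captured by some earlier $G_i$ by induction) or crosses into $S$ exactly once, in which case it is captured in the adjacent graph through the suppressed vertex. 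Arranging the indexing so that the two relevant peels are exactly $\idx(u)$ and $\idx(u)-1$, and verifying the invariant across the recursion when the remainder itself fragments into multiple components (each reusing a suppressed vertex to encode the already-peeled part of $G$), is where the bulk of the technical effort lies; this is also what forces the pivot-selection rule and the order in which the graphs are concatenated to be coupled.
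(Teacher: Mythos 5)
There is a genuine gap, and it sits precisely where you flag ``the bulk of the technical effort'': property~(3) fails for the construction you sketch. The core claim ``selecting $r$ from a sink SCC of the condensation ensures that once $P$ enters $S = R^+(r)\cup R^-(r)$ it cannot leave'' is false, because while $R^+(r)$ is forward-closed under directed edges, $R^-(r)$ is not. Concretely, take $G$ on $\{a,r,b\}$ with edges $a\to r$ and $a\to b$: then $\{r\}$ is a sink SCC, $R^+(r)=\{r\}$, $R^-(r)=\{a,r\}$, so $S=\{a,r\}$, and the path $a\to b$ starts at $a\in S$ and immediately leaves $S$. Since $b$ is absorbed into $\supr_{\idx(a)}$ in the peel graph and $a$ is absorbed into $\supr_j$ in every recursively built $G_j$, the path $a\to b$ is a subgraph of neither $G_{\idx(a)}-\supr_{\idx(a)}$ nor $G_{\idx(a)-1}-\supr_{\idx(a)-1}$. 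A symmetric failure occurs for a path that starts outside $S$ and ends inside it (take $v_1\to v_3$, $v_2\to v_3$, with $r=v_1$ a source): the endpoint is suppressed in every graph in which the start is non-suppressed. More fundamentally, if consecutive peels are vertex-disjoint and each graph carries a single suppressed vertex for ``everything else,'' then no graph in the list can contain, uncontracted, both endpoints of a path that straddles the boundary of a peel.

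What Thorup's construction actually does is a \emph{global}, alternating-direction layering rather than a recursive peel: fix $r$, take the layer reachable from $r$, then the layer of vertices that can reach the enlarged core, then the layer reachable from the further-enlarged core, and so on, flipping direction each step; $G_i$ is built from two \emph{consecutive, overlapping} layers, with everything earlier contracted to a single root and everything later deleted. The alternation is what bounds how far a directed path starting in layer $i$ can wander (later layers are by construction unreachable from $L_{\le i}$ without passing through the already-contracted core), and the overlap between consecutive windows is what lets a boundary-straddling path land entirely inside one of $G_{\idx(u)}$ or $G_{\idx(u)-1}$. Taking $R^+(r)\cup R^-(r)$ in one shot and contracting it in a single step destroys exactly those two invariants. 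A small secondary issue worth noting: $T^+\cup T^-$ is not a tree whenever the SCC of $r$ is nontrivial (both trees span that SCC, so the union has cycles); this is repairable, but the fact that you end up with $1$-layered rather than genuinely $2$-layered trees is itself a signal that your construction diverges from Thorup's in an essential way.
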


The basic reachability data structure of Thorup~\cite{Thorup04} can be described
as follows.
Observe that by Lemma~\ref{l:layered} 
$v$ is reachable from $u$ in $G$ (where $u,v\in V$)
if and only if it is reachable from $u$ in either $G_{\idx(u)}-\supr_{\idx(u)}$ or $G_{\idx(u)-1}-\supr_{\idx(u)-1}$.
Moreover, all the graphs $G_i$ in Lemma~\ref{l:layered}, being minors of $G$, are planar as well.
Therefore, by applying Lemma~\ref{l:layered}, the problem
is reduced to the case when (1) a planar graph $G=(V,E)$ has a 2-layered spanning tree $T$,
(2) we are only interested in reachability without going through
some set of $O(1)$ (in fact, at most $5$, as we will see) suppressed vertices $\supr$ of $G$.
Under these assumptions, the problem is solved recursively as follows.

If $G$ has constant size, or has only suppressed vertices,
we compute its transitive closure 
so that queries are answered in $O(1)$ time.
Otherwise, we apply recursion.
We first temporarily triangulate $G$ by adding edges and obtain graph $G^\Delta$.
Note that $T$ is a 2-layered spanning tree of $G^\Delta$ as well.

\begin{fact}\label{f:contract}
  Let $G=(V,E)$ be a connected plane digraph. Let $T$ be a 2-layered spanning tree of $G$
  rooted at $r$. Let $uv \in E\setminus E(T)$. 
  Let $V_1$ (resp., $V_2$) be the subset of $V$
  strictly inside (resp., strictly outside) the fundamental cycle $C_{uv}$ of $uv$ wrt. $T$.
  Let $S_{uv}$ be the non-oriented path $C_{uv}-uv$.

  Let $G'$ ($T'$) be obtained from $G$ ($T$) by contracting (the edges of) $S_{uv}$
  into a single vertex $r'$.
  Then for $i=1,2$, $T'[V_i\cup \{r'\}]$ is a 2-layered spanning tree of $G'[V_i\cup \{r'\}]$.

\end{fact}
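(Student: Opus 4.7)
The plan is to combine planarity of $G$ with a structural analysis of $T - E(S_{uv})$ and a direction-flip accounting under contraction.

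First, I would apply the Jordan curve theorem to the simple closed curve $C_{uv}$ in the plane embedding of $G$: any edge of $G$ not on $C_{uv}$ has both endpoints weakly on the same side of $C_{uv}$, as otherwise its drawing would cross $C_{uv}$ at a non-vertex point, violating the planar embedding. Consequently, for each $x \in V(S_{uv})$, the connected component $T_x$ of the forest $T - E(S_{uv})$ containing $x$ decomposes, after removing $x$, into sub-trees each lying entirely in $V_1$ or entirely in $V_2$. This already yields the spanning claim: every $w \in V_i$ admits a unique attachment $x_w \in V(S_{uv})$ such that $T[x_w, w]$ stays in $V_i \cup \{x_w\}$, so after contraction $w$ is joined to $r'$ by a path in $T'$ lying entirely inside $V_i \cup \{r'\}$, making $T'[V_i \cup \{r'\}]$ a connected spanning subtree of its vertex set. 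Each edge of $T'[V_i \cup \{r'\}]$ either has both endpoints in $V_i$ (and therefore survives unchanged in $G'$) or comes from a tree edge between $V_i$ and $V(S_{uv})$ (and becomes incident to $r'$ in $G'$); in either case it lies in $G'[V_i \cup \{r'\}]$.

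Next, to verify the 2-layered property, I would root $T'[V_i \cup \{r'\}]$ at $r$ when $r \in V_i$ and at $r'$ otherwise. For any non-root vertex $v$, the induced tree path from the chosen root to $v$ corresponds to $T[r, v]$ itself (when the root is $r$ and $v$ shares $r$'s sub-component of $T - E(S_{uv})$), to a suffix $T[x_v, v]$ of $T[r, v]$ (when the root is $r'$), or to $T[r, v]$ with its $V(S_{uv})$-segment collapsed into $r'$ (when the root is $r$ but $v$ lies in a different sub-component). A parity argument on the direction flips along $T[r, v]$ shows that in every case the resulting path has at most as many direction transitions as $T[r, v]$, hence at most one, and is therefore a concatenation of at most two directed paths; the edges of these directed paths lie in $G'[V_i \cup \{r'\}]$ with directions inherited from $G$.

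I expect the main obstacle to be the direction-flip accounting in the ``collapsing'' case: one must show that removing the $V(S_{uv})$-segment of $T[r, v]$ (which hosts at most one internal direction transition) and replacing it by a single passage through $r'$ introduces at most one new transition at $r'$. This reduces to the parity statement that the directions of the segment's entry and exit edges agree if and only if the number of internal transitions along the segment is even, cleanly transferring the 2-layered bound of $T$ to the contracted tree.
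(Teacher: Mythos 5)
Your proof is correct. The paper states Fact~\ref{f:contract} without proof, so there is no internal argument to compare against, but your treatment is more careful than the paper's surrounding prose on the one substantive point: the text following the Fact asserts that $T'[V_i\cup\{r'\}]$ is 2-layered \emph{rooted at $r'$}, which can actually fail when $r\in V_i$. In that situation, for a vertex $z$ in $r$'s component of $T-E(S_{uv})$, the $r'$-to-$z$ walk in $T'[V_i\cup\{r'\}]$ is the image of $T[c,z]$, where $c$ (the $T$-lowest common ancestor of the two endpoints of the non-tree edge) is the unique $S_{uv}$-vertex in $r$'s component; since $c$ need not be an ancestor of $z$, this walk can go up from $c$ and back down to $z$, and a forward-then-backward pattern on each leg already produces three direction changes even though $T$ is 2-layered from $r$. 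Rooting at $r$ when $r\in V_i$ (and at $r'$ otherwise), as you do, is exactly the needed repair and is what the recursion silently requires. With that rooting, your three cases — the path is unchanged (up to renaming $c$ into $r'$), or is a suffix $T[x_v,v]$ of a root-to-vertex path, or is a root-to-vertex path with a contiguous $S_{uv}$-block collapsed into $r'$ — cover every root-to-vertex path of $T'[V_i\cup\{r'\}]$, and the Jordan-curve argument gives the spanning claim. Your parity accounting in the collapsing case is sound; one can also state it more bluntly: deleting a contiguous block of positions from a forward/backward direction sequence with at most one sign change leaves a sequence with at most one sign change, so the transition count never increases, which gives the bound without tracking the XOR of the internal flips.
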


Using Lemma~\ref{l:cycle-separator}, we compute in linear time a balanced fundamental cycle
separator $C_{ab}$, where $ab\in E(G^\Delta)\setminus E(T)$.
Let $A$ be the set of suppressed vertices of $G$.
The weights assigned to vertices depend on the size of $\supr$:
if $|\supr|\leq 4$ then we assign weights $1$ uniformly to all vertices of $G$,
and otherwise we assign unit weights to the vertices of
$\supr$ only (the remaining vertices $V$ get weight $0$).
Let the non-oriented path $S_{ab}=C_{ab}-ab$ be called \emph{the separator}.
Let $V_1,V_2,G',T'$ and $r'$ be defined as in Fact~\ref{f:contract}.

Note that for any $u,v\in V$, a $u\to v$ path in $G-\supr$
can either go through a vertex of $S_{ab}$, or is entirely contained
in exactly one $G[V_i]-\supr$, for which $\{u,v\}\subseteq V_i$ holds.
We deal with these two cases separately.
Since $G[V_i]\subseteq G'[V_i\cup\{r'\}]$, queries about a $u\to v$ path not going through $S_{ab}$ can be delegated to 
the data structures built recursively 
on each $G'[V_i\cup\{r'\}]$ with suppressed set $\supr_i=(\supr\cap V_i)\cup \{r'\}$.
By Fact~\ref{f:contract}, $G'[V_i\cup\{r'\}]$ has a 2-layered
spanning tree $T'[V_i\cup \{r'\}]$ rooted in $r'$.
Moreover, a path $u\to v$ exists in $G[V_i]-\supr$
if and only if a path $u\to v$ not going through a suppressed
set $\supr_i$ exists in $G'[V_i\cup \{r'\}]$.
Note that since $S_{ab}$ does not necessarily go through any
of the vertices of~$\supr$, $\supr_i$ might be larger than $\supr$
by a single element -- this is why balancing of suppressed vertices is needed.
Hence, indeed recursion can be applied in this case.

Paths $u\to v$ going through $V(S_{ab})$ in $G-\supr$ are in turn handled as follows.
Since $T$ is 2-layered, $S_{ab}$ can be decomposed
into at most 4 edge-disjoint directed paths in $G$.
Consequently, $S_{ab}-\supr$ can be split
into at most $4+|\supr|$ edge-disjoint directed
paths in $G-\supr$.
Next, we take advantage of the following lemma.

\begin{lemma}[\cite{Subramanian93, Thorup04}]\label{l:pathreach}
  Let $H$ be a directed graph and let $P\subseteq H$ be a simple directed path.
  Then, in linear time we can build a data structure that supports constant-time queries
  about the existence of a $u\to v$ path that necessarily goes through $V(P)$.
\end{lemma}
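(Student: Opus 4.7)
The plan is to reduce each query to the comparison of two precomputed integer labels, one depending only on $u$ and one only on $v$. Write $P=p_1p_2\ldots p_k$ in order. For every $w\in V(H)$ define
\[
f(w) \;=\; \min\{\,i : w \text{ reaches } p_i \text{ in } H\,\}, \qquad g(w) \;=\; \max\{\,j : p_j \text{ reaches } w \text{ in } H\,\},
\]
with $f(w)=+\infty$ and $g(w)=-\infty$ when the respective sets are empty. The key observation is that a $u\to v$ path in $H$ hitting some vertex of $P$ exists if and only if $f(u)\le g(v)$: if $i=f(u)\le j=g(v)$, concatenating a $u\to p_i$ path, the subpath $P[p_i,p_j]$, and a $p_j\to v$ path yields such a path in $H$; conversely, any $u\to v$ path going through some $p_r\in V(P)$ forces $f(u)\le r\le g(v)$. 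Thus storing $f$ and $g$ in two arrays indexed by $V(H)$ suffices to answer queries in $O(1)$ time.

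The remaining task is to compute $f$ (and, symmetrically, $g$) in linear time. I would work in the reverse graph $H^R$, where $f(w)$ equals the smallest $i$ such that $p_i$ reaches $w$. Process the sources $p_1,p_2,\ldots,p_k$ in this order via a single combined depth-first traversal of $H^R$: at iteration $i$, if $p_i$ is already labeled then skip it, otherwise launch a DFS from $p_i$ that descends only into currently unlabeled vertices, labeling every freshly visited vertex with the value $i$. The array $g$ is produced analogously by iterating the sources $p_k,p_{k-1},\ldots,p_1$ on $H$ itself (with $\max$ in place of $\min$).

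For correctness I would argue by induction on $i$ that after iteration $i$ the set of labeled vertices is exactly $\{w : f(w)\le i\}$ and each carries the correct value. The only nontrivial point is that truncating the DFS at already-labeled vertices is harmless: if some $p_i\to w$ path in $H^R$ were to pass through a vertex $x$ labeled with $k<i$, then $p_k$ would reach $w$ as well, forcing $f(w)\le k<i$, so $w$ would already be labeled when iteration $i$ starts. Hence every vertex with $f(w)=i$ remains reachable by a DFS confined to unlabeled vertices. The linear running time is immediate: each vertex's out-edges are scanned during the unique iteration in which that vertex gets labeled, so the total work across all iterations is $O(|V(H)|+|E(H)|)$. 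I do not expect a real obstacle; the only care needed lies in the bookkeeping that prevents the DFS from ever re-entering a labeled vertex.
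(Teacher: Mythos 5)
Your proposal is correct and matches the paper's approach. The characterization you prove (existence of a $u\to v$ path through $V(P)$ iff $f(u)\le g(v)$) is precisely the paper's Fact~\ref{f:pathreach}, with $f=\vf^P_H$ and $g=\vl^P_H$, and the linear-time computation of these labels by successive truncated graph searches from $p_1,\ldots,p_k$ (resp.\ $p_k,\ldots,p_1$) is the same device the paper uses, for instance when computing the layers $L_\ell,\ldots,L_1$ in Section~\ref{s:satellite}.
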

The above lemma is based on the following simple fact that will prove useful later on.

\newcommand{\vf}{\ensuremath{first}}
\newcommand{\vl}{\ensuremath{last}}
\newcommand{\vfs}{\ensuremath{first^*}}
\newcommand{\vls}{\ensuremath{last^*}}

\begin{fact}[\cite{Subramanian93, Thorup04}]\label{f:pathreach}
  Let $H$ be a digraph. Let $P\subseteq H$ be a simple directed path.
  Denote by $\vf_H^P(u)$ the earliest vertex of $P$ \emph{reachable from}
  $u$ in $H$. Denote by $\vl_H^P(v)$ the latest
  vertex of $P$ that \emph{can reach} $v$ in $H$.
  Then there exists a $u\to v$ path in $H$ going through $V(P)$
  iff $\vf_H^P(u)\preceq_P \vl_H^P(v)$.
\end{fact}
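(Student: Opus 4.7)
The plan is to prove the two implications separately, since the statement is a biconditional.

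For the forward direction, I would assume there exists a $u\to v$ path $Q\subseteq H$ that goes through some vertex $w\in V(P)$. Then the prefix of $Q$ up to $w$ witnesses that $w$ is reachable from $u$ in $H$, so by the definition of $\vf_H^P(u)$ as the \emph{earliest} vertex of $P$ reachable from $u$, we must have $\vf_H^P(u)\preceq_P w$. Symmetrically, the suffix of $Q$ from $w$ witnesses that $w$ can reach $v$ in $H$, so $w\preceq_P \vl_H^P(v)$ by definition of $\vl_H^P(v)$ as the \emph{latest} such vertex. Transitivity of $\preceq_P$ then yields $\vf_H^P(u)\preceq_P \vl_H^P(v)$.

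For the backward direction, I would assume $\vf_H^P(u)\preceq_P \vl_H^P(v)$ and exhibit a concrete path. Set $a=\vf_H^P(u)$ and $b=\vl_H^P(v)$. By the definitions, there is a $u\to a$ path $Q_1$ in $H$ and a $b\to v$ path $Q_2$ in $H$. Since $a\preceq_P b$, the subpath $P[a,b]$ exists as a directed $a\to b$ path inside $H$. Concatenating $Q_1$, $P[a,b]$, and $Q_2$ produces a $u\to v$ walk in $H$ that passes through every vertex of $P[a,b]$, in particular through $V(P)$; extracting a simple $u\to v$ path from this walk preserves the property of going through $V(P)$ in the sense required (if stronger simplicity is needed, one can argue directly that the walk's visit to $a$ already suffices as ``going through $V(P)$'').

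The argument is essentially bookkeeping around the definitions of ``earliest'' and ``latest'', so I do not anticipate any obstacle; the only subtlety worth stating carefully is that $\vf_H^P$ and $\vl_H^P$ are well-defined exactly when some vertex of $P$ is reachable from $u$ (respectively can reach $v$), which is implicitly required for the inequality $\vf_H^P(u)\preceq_P \vl_H^P(v)$ to make sense, and which is also a necessary condition for the existence of any $u\to v$ path through $V(P)$. This ensures the biconditional is vacuously or nontrivially established in every case.
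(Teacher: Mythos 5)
Your proof is correct, and it is the standard argument; the paper itself does not prove Fact~\ref{f:pathreach} but cites it to \cite{Subramanian93, Thorup04}, so there is no in-paper proof to compare against. One small caveat on your parenthetical: a simple $u\to v$ path extracted from the walk $Q_1\cdot P[a,b]\cdot Q_2$ need \emph{not} pass through $V(P)$ (e.g.\ if $Q_1$ already visits $v$ before reaching $a$), so the first half of that parenthetical would not survive scrutiny if ``path'' meant ``simple path''. Fortunately, under the paper's definitions a path is not required to be simple, and the paper explicitly adopts the convention that a concatenation of paths is again a path; so your fallback — that the concatenated path visits $a\in V(P)$ — is exactly the right justification, and no simplification step is needed at all.
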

Consequently, by building at most $4+|\supr|$ data structures of Lemma~\ref{l:pathreach},
we can check whether $v$ is reachable from $u$ through $V(S_{ab})$ in $G-\supr$
in $O(1+|\supr|)$ time.

\begin{restatable}{lemma}{suppressedset}
At each recursive call, the size of the suppressed set $\supr$ is at most $5$.
\end{restatable}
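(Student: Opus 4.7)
The plan is to prove this by induction on the depth of the recursion, tracking how $|\supr|$ evolves from one recursive call to the next.

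For the base case, Lemma~\ref{l:layered} guarantees that for each of the top-level graphs $G_i$, the suppressed set $\supr_i$ satisfies $|\supr_i|\leq 1\leq 5$, so the invariant holds at the outermost level.

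For the inductive step, assume $|\supr|\leq 5$ at the current recursive call on $G$. Observe that the new suppressed set passed into a child is $\supr_i=(\supr\cap V_i)\cup\{r'\}$, so $|\supr_i|\leq |\supr\cap V_i|+1$. The analysis then splits along the weighting choice used when invoking Lemma~\ref{l:cycle-separator}:

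\textbf{Case 1: $|\supr|\leq 4$.} Uniform weights $1$ are assigned, and the recursive subproblems satisfy only $|\supr\cap V_i|\leq |\supr|$ trivially; however, this immediately yields $|\supr_i|\leq |\supr|+1\leq 5$, as desired.

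\textbf{Case 2: $|\supr|=5$.} Now unit weights are placed only on the vertices of $\supr$, giving $W=5$. Each vertex weight is $1$, and since $1\leq W/4=5/4$, the precondition of Lemma~\ref{l:cycle-separator} is satisfied. Hence the total weight on each strict side of the fundamental cycle is at most $\tfrac{3}{4}W=\tfrac{15}{4}$, i.e., at most $3$ suppressed vertices lie strictly inside $V_i$ (integrality of weights). Therefore $|\supr_i|\leq 3+1=4\leq 5$.

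In both cases the invariant $|\supr_i|\leq 5$ is preserved, completing the induction. The slightly delicate point worth highlighting is the choice of the threshold $4$ versus $5$ for switching between uniform and suppressed-only weights: it is calibrated precisely so that, once $|\supr|$ reaches its maximum value of $5$, suppressed-only weighting drives the count back down to at most $4$ before the new vertex $r'$ is added, preventing any unbounded accumulation of suppressed vertices down the recursion.
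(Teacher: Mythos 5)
Your proof is correct and follows essentially the same inductive argument as the paper, splitting on $|\supr|\leq 4$ versus $|\supr|=5$ and using the $\tfrac{3}{4}W$ bound from Lemma~\ref{l:cycle-separator} to drive the count back down to $4$ in the second case. The only addition is your explicit check that $w(v)\leq\tfrac14 W$ holds when $W=5$, which the paper leaves implicit but which is a welcome detail.
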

\begin{proof}
  We apply induction on the level of the recursive call
  that is passed graph $G$ with suppressed set $\supr$.
  In the first recursive call (level $0$) after applying Lemma~\ref{l:layered},
  $\supr$ has size at most $1$.
  
  Let the level of the recursive call be at least $1$.
  In the parent call the suppressed set $\supr'$ had size at most $5$.
  If $|\supr'|=5$, then the balanced cycle separator $C_{ab}$ put at most $3$
  vertices of $\supr'$ strictly on one side of $C_{ab}$.
  Since the entire set $V(C_{ab})$ was subsequently contracted
  into a single vertex that got suppressed, $\supr$ has size at most $3+1=4$.
  
  On the other hand, if $|\supr'|\leq 4$, then clearly $|\supr|\leq |\supr'|+1\leq 5$.
\end{proof}

\begin{restatable}{lemma}{recursionlevels}\label{l:levels}
  The depth of the recursion is $O(\log{n})$.
\end{restatable}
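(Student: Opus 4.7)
The plan is to bound the depth by classifying each recursive call along a fixed root-to-leaf path in the recursion tree based on the size of its suppressed set. Call a recursive call \emph{type A} if $|\supr| \leq 4$, and \emph{type B} if $|\supr| = 5$. Recall that in a type A call, Lemma~\ref{l:cycle-separator} is invoked with uniform unit weights on all vertices of $G$, whereas in a type B call the weights are concentrated on the five suppressed vertices only; this dichotomy controls what each call buys us.

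In a type A call, the balanced fundamental cycle separator guarantees $|V_i| \leq \tfrac{3}{4}|V(G)|$, so each child graph $G'[V_i\cup\{r'\}]$ has at most $\tfrac{3}{4}|V(G)|+1$ vertices, a constant-factor shrink. In a type B call, however, the separator only balances the suppressed vertices, so there is no direct guarantee on the vertex count of the children; nevertheless, at most three of the five suppressed vertices can lie strictly on either side of the separator, and hence each child's suppressed set has size at most $3+1=4$. Therefore, the child of any type B call is necessarily of type A. This is the crucial structural observation: two type B calls never occur consecutively on a root-to-leaf path.

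To finish, I would let $n_j$ denote the vertex count of the graph at depth $j$ along the chosen path and write $n_{j+1}\leq \tfrac34 n_j+1$ for type A steps and $n_{j+1}\leq n_j+1$ for type B steps (since both children are obtained by contracting $V(S_{ab})$ to a single vertex $r'$ and restricting). Because any two consecutive steps contain at least one type A step, a short calculation shows $n_{j+2}+8 \leq \tfrac34(n_j+8)$ whenever $n_j\geq 8$. Iterating this inequality $\lfloor k/2\rfloor$ times yields $n_k+8\leq \left(\tfrac34\right)^{\lfloor k/2 \rfloor}(n+8)$, so after $k = O(\log n)$ levels the graph drops to constant size (or to a graph consisting only of suppressed vertices), at which point the recursion terminates. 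Thus the recursion depth is $O(\log n)$.

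The only genuine obstacle is the failure of type B calls to shrink the vertex count on their own; the alternation property derived above is precisely what sidesteps it, after which the geometric decay is routine.
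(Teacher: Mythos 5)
Your proof takes essentially the same approach as the paper's: a call with $|A|=5$ (your type B) always produces a child with $|A|\leq 4$ (type A), so over any two consecutive levels the balanced cycle separator is applied with uniform vertex weights at least once, yielding a constant-factor shrink of the vertex count every two levels and hence $O(\log n)$ recursion depth. The one slip is the sign in your shifted recurrence. With the per-step bounds $n_{j+1}\leq\tfrac34 n_j+1$ (type A) and $n_{j+1}\leq n_j+1$ (type B, though in fact contracting $S_{ab}$ gives $n_{j+1}\leq n_j-1$), the worst case over two consecutive steps is $n_{j+2}\leq\tfrac34 n_j+2$, whose fixed point is $8$; the correct contraction is therefore $n_{j+2}-8\leq\tfrac34(n_j-8)$, not $n_{j+2}+8\leq\tfrac34(n_j+8)$. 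Your stated inequality would need $n_{j+2}\leq\tfrac34 n_j-2$, which does not follow. With the sign corrected the geometric decay argument goes through and matches the paper's conclusion.
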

\begin{proof}
  If the cycle separator computed at the parent call balanced all the vertices
  of the graph $G'$ with suppressed set $A'$ and $n$ vertices, then $G$ has at most $\frac{3}{4}n+1$
  vertices.

  However, if $|A'|=5$, then $G$ might have nearly $n$ vertices.
  But in this case $|A|\leq 4$, so the in the children recursive
  calls the graphs will certainly have no more than $\frac{3}{4}n$ vertices.

  This proves that at each recursive call the number of vertices is
  reduced by a constant factor compared to the grandparent call,
  i.e., the depth of the recursion tree is $O(\log{n})$.
\end{proof}

At each recursive call of the data structure's construction procedure we use
only linear preprocessing time.
Observe that at each recursive level the total number of vertices
in all graphs of that level is linear: there are at most $n$ vertices
that are not suppressed and each of these vertices resides in a unique
graph of that level.
Since each graph has to have at least one non-suppressed vertex,
the number of graphs on that level is also at most $n$. 
Therefore, given that $|A|=O(1)$ in every recursive call,
the sum of sizes of the graphs on that level is $O(n)$.
By Lemma~\ref{l:levels}, we conclude that the total time spent
in preprocessing is $O(n\log{n})$.

In order to answer a query whether an $u\to v$ path exists, we
only need to query $O(\log{n})$ data structures of Lemma~\ref{l:pathreach} handling queries
about reachability through a directed path.

\section{Reachability Under Failures}\label{s:failures-overview}

In this section we explain how to modify
the data structure
discussed in Section~\ref{s:thorup} to support~queries
of the form ``is $v$ reachable from $u$ in $G-x$?'', where
$u,v,x\in V$ are distinct query parameters.

First recall that, by Lemma~\ref{l:layered}, each path $P=u\to v$
exists in $G$ if and only if it exists in either $G_{\idx(u)}-A_{\idx(u)}$
or $G_{\idx(u)-1}-A_{\idx(u)-1}$.
This, in particular, applies to paths $P$ avoiding $x$.
As a result, $v$ is reachable from $u$ in $G-x$ if and
only if $v$ is reachable from $u$ in either $G_{\idx(u)}-A_{\idx(u)}-x$
or $G_{\idx(u)-1}-A_{\idx(u)-1}-x$.
That being said, we can again concentrate on the case
when $G$ has a 2-layered spanning tree
and we only care about reachability in $G-A$, where $A\subseteq V$ has size $O(1)$.

We follow the recursive approach 
of Section~\ref{s:thorup}.
The only difference lies in handling paths in $G-A$ going
through the separator $S_{ab}$.
Ideally, we would like to generalize the data structure of
Lemma~\ref{l:pathreach} so that
single-vertex failures are supported.
However, it is not clear how to do it in full generality.
Instead, we show two separate data structures, which, when combined,
are powerful enough to handle paths going through a \emph{cycle} separator.

Recall that $S_{ab}-A$ can be decomposed into $O(1)$ simple directed
paths $P_1,\ldots,P_k$ in $G-A$ that can only share endpoints.
Denote by $D\subseteq V(S_{ab})$ the set of endpoints of these paths.
Suppose we want to compute whether
there exists a $u\to v$ path~$Q$ in $G-A-x$ that additionally
goes through some vertex of $S_{ab}$.
We distinguish several cases.
  
  \textbf{1.} Suppose that $x\in D$. Recall that there exist only
    $O(1)$ such vertices $x$.
    Moreover, $S_{ab}-A-x$ can be decomposed into $O(1)$ simple paths
    in $G-A-x$.
    Hence, for each such $x$, we build $O(1)$ data structures
    of Lemma~\ref{l:pathreach} to handle reachability queries
    through $S_{ab}$ in $G-A-x$ exactly as was done in Section~\ref{s:thorup}.
    The preprocessing is clearly linear and the query time
    is $O(1)$ in this case.

\textbf{2.} Now suppose $x\notin D$ and there exists such $P_i$
that $V(Q)\cap V(P_i)\neq \emptyset$ and $x\notin V(P_i)$.
Paths of this kind are handled using the following theorem (for $G:=G-A$, $P:=P_i$) proved in Section~\ref{s:ds1}.
\begin{restatable}{theorem}{thmdsone}\label{t:ds1}
  Let $G$ be a digraph and let $P\subseteq G$ be a simple directed path.
  In $O\left(m\frac{\log{n}}{\log\log{n}}\right)$ time one can build a linear-space data structure that can decide, for any $u,v\in V$, and $x\notin V(P)$, if there exists a $u\to v$ path going through
  $V(P)$ in $G-x$.
  Such queries are answered in $O(1)$ time.
\end{restatable}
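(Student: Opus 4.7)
The plan is to reduce every query to computing $\vf^P_{G-x}(u)$ and $\vl^P_{G-x}(v)$ in $O(1)$ time each; once both are known, one comparison in $\prec_P$ together with Fact~\ref{f:pathreach} settles the query. The two computations are entirely symmetric (apply the construction to $G^R$ with $P$ reversed), so I only describe how to obtain $\vl^P_{G-x}(v)$. Call a $v'\to v$ path $Q$ a \emph{satellite path} (with respect to $P$) if $V(Q)\cap V(P)\subseteq\{v',v\}$, and let $v'(v,x)$ denote the $\prec_P$-latest vertex of $V(P)$ admitting a satellite path to $v$ in $G-x$, when one exists. The key structural claim is that $\vl^P_{G-x}(v)$ coincides with the $\prec_P$-latest vertex of $V(P)$ that lies in the same strongly connected component of $G-x$ as $v'(v,x)$ (well-defined, since the SCCs of $G-x$ intersect $V(P)$ in consecutive $\preceq_P$-intervals). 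To prove this, set $y=\vl^P_{G-x}(v)$, take any $y\to v$ path $Q$ in $G-x$, and let $y''$ be its last vertex on $V(P)$. Then $Q[y'',v]$ is a satellite path, so $y''\preceq_P v'(v,x)$, and because $v'(v,x)$ itself reaches $v$, maximality of $y$ yields $v'(v,x)\preceq_P y$. Concatenating $Q[y,y'']$ with the $P$-subpaths $P[y'',v'(v,x)]$ and $P[v'(v,x),y]$ closes a cycle through $y$ and $v'(v,x)$ in $G-x$, placing them in one SCC; conversely every $P$-vertex in this SCC reaches $v$ through $v'(v,x)$, so $y$ is its $\prec_P$-latest element.

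Given the claim, I would assign label $i$ to the vertex $p_i$ of $P=p_1\ldots p_k$ and label $-\infty$ to every other vertex of $G$, then invoke Theorem~\ref{thm:find-maxID-in-SCC}. The resulting linear-space structure is built in $O(m+n(\log n\log\log n)^{2/3})$ time and returns, in $O(1)$, the maximum label in the SCC of any queried vertex of $G-x$; composing it with any $O(1)$-time oracle for $v'(v,x)$ produces $\vl^P_{G-x}(v)$ in $O(1)$. What remains is thus the design of an $O(1)$-query oracle for $v'(v,x)$.

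For this, let $H=G[V\setminus V(P)]$ and, for each $w\in V(H)$, put $\mu(w)=\max\{i:p_iw\in E(G)\}$, with $\mu(w)=-\infty$ if $w$ has no predecessor in $V(P)$. For $v\notin V(P)$, a satellite path from $V(P)$ to $v$ factors as an edge $p_iw\in E(G)$ with $w\in V(H)$ followed by a $w\to v$ path in $H$; therefore $v'(v,x)=p_{i^\ast}$ where
\[
i^\ast=\max\{\mu(w)\,:\,w\in V(H),\ w\neq x,\ w\text{ reaches }v\text{ in }H-x\},
\]
up to a constant number of corrections covering direct edges $p_iv\in E(G)$ and the case $v\in V(P)$. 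To evaluate $i^\ast$ in $O(1)$ time, I plan to combine a dominator-tree representation of reachability in $H$ under a single-vertex failure with an orthogonal range-search module. Specifically, adjoin to $H^R$ a super-source $s$ with edges $s\to w$ for every $\mu$-labelled $w$, compute a single dominator tree $D$ rooted at $s$, and show that the $\mu$-labelled predecessors of $v$ in $H$ that survive the removal of $x$ form the union of $O(1)$ subtree differences of $D$. Laying $D$ out in Euler-tour order turns each such region into a constant number of one-dimensional intervals, and the label-max over their union becomes a constant-dimensional orthogonal range-max query, answerable in $O(1)$ per query and linear space after $O(m\log n/\log\log n)$ preprocessing via the structure of Chan and Tsakalidis~\cite{ChanT17}.

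The main obstacle I foresee is precisely the last step: a per-$v$ dominator tree would use quadratic space, so a single fixed $D$ must encode---uniformly in $v$ and $x$---the surviving $\mu$-labelled predecessors of every $v$ as a bounded union of canonical subtree differences. The plan is to exploit Lengauer--Tarjan-style dominator properties asserting that the set of vertices disconnected from a target by a single failure is itself a union of dominator-tree subtrees, so that the surviving predecessors of $v$ after removing $x$ are described by the interaction between the $v$-ancestors in $D$ and the subtree of $x$. Combined with the structural claim and Theorem~\ref{thm:find-maxID-in-SCC}, this yields the $O(m\log n/\log\log n)$ preprocessing, linear space, and $O(1)$ query time stated in Theorem~\ref{t:ds1}.
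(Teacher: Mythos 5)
Your outer reduction is exactly the paper's: reduce to computing $\vf_{G-x}^P(u)$ and $\vl_{G-x}^P(v)$, observe that $\vl_{G-x}^P(v)$ is the $\prec_P$-latest vertex of $P$ in the SCC of $\vls_{G-x}(v)$ in $G-x$ (this is the paper's Lemma~\ref{lem:from-exit-to-latest}, and your proof of it is correct), and then invoke Theorem~\ref{thm:find-maxID-in-SCC} with label $i$ on $p_i$ and $-\infty$ elsewhere. The entire difficulty is therefore concentrated in producing $\vls_{G-x}(v)$ in $O(1)$ after near-linear preprocessing, and it is precisely there that your proposal has a genuine gap.

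Your plan is to add a super-source $s$ adjacent to every $\mu$-labelled $w$, compute one dominator tree $D$ rooted at $s$, and argue that $\{w: w\text{ reaches }v\text{ in }H-x,\ \mu(w)\text{ defined}\}$ is a union of $O(1)$ subtree differences of $D$, uniformly in $(v,x)$. This cannot work. A dominator tree rooted at $s$ encodes, for each target $t$, which vertices lie on \emph{every} $s\to t$ path; it carries no information about which of the individual $w$'s loses its connection to $v$ when $x$ fails. Concretely, take $H$ with vertices $w_1,w_2,w_3,a,b,c,v$ and edges $w_1\to a\to v$, $w_2\to b\to v$, $w_3\to c\to v$, with $\mu(w_j)=j$. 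Every $w_j$ becomes a child of $s$ in $D$, and $\operatorname{idom}(v)=s$; the three queries $x\in\{a,b,c\}$ must return the three subsets $\{w_2,w_3\}$, $\{w_1,w_3\}$, $\{w_1,w_2\}$, but all three $w_j$'s occupy structurally indistinguishable positions in $D$ (siblings under $s$), so there is no way to read these answers off a fixed family of $O(1)$ subtree differences in $D$ keyed by $(v,x)$. The Lengauer--Tarjan property you invoke --- ``vertices disconnected from a target by a single failure form a dominator subtree'' --- holds for a \emph{fixed} source; your question is an all-sources question and needs a fundamentally different mechanism. (There is also a direction slip: with $s\to w$ adjoined to $H^R$, the vertices $a,b,c,v$ are not even reachable from $s$, since $w$ has no outgoing $H^R$-edges; you presumably meant $H+s$, but the gap is the same either way.)

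The paper sidesteps exactly this obstruction via the layer decomposition. It partitions $V\setminus V(P)$ into layers $L_\ell,\dots,L_1$, where $L_i$ is what $p_i$ reaches by a satellite path minus later layers, and builds one dominator tree $D_i$ of $G[L_i\cup\{p_i\}]$ per layer (total size linear). The DAG-like structure between layers (edges only go to higher-indexed layers) yields a strong structural lemma (Lemma~\ref{l:undom-express}): a satellite path from $\und(w)$ to $w$ avoiding $d_{\ly(w)}(w)$ passes through exactly two layers and, within $L_{\ly(w)}$, stays in $D_{\ly(w)}[d_{\ly(w)}(w)]\setminus\{d_{\ly(w)}(w)\}$. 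This is what collapses the all-pairs query into: (a) an $O(1)$-time case distinction on $\ly(x)$ vs.\ $\ly(v)$ and ancestry in $D_{\ly(v)}$, and (b) a max-label-on-path query in $D_{\ly(v)}'$ against the precomputed values $\und(w)$. The $O(m\log n/\log\log n)$ bound then comes from the dynamic 2D range-reporting used to \emph{compute} the $\und$ values, not from the query-time structure. Without some analogue of the layer decomposition, your single-tree plan is not repairable.
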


\textbf{3.} The last remaining case is when none of the above cases apply.
This means that $x\notin D$ and for all $P_i$ either
$Q$ does not go through $V(P_i)$ or $x\in V(P_i)$.
Since $V(P_i)\cap V(P_j)\subseteq D$ for all $j\neq i$, there
can be at most one such $i$ that $x\in V(P_i)$.
For all $j\neq i$, $Q$ does not go through $V(P_j)$.
On the other hand, since $Q$ goes through $S_{ab}$, it in fact has to
go through $V(P_i)$.

\begin{restatable}{lemma}{singleface}\label{l:singleface}
  Let $\bar{V_i}=V\setminus V(S_{ab})\cup V(P_i)$.
  Then the endpoints of $P_i$ lie on a single face of the component
  of $G[\bar{V_i}]-A$ that contains $P_i$.
\end{restatable}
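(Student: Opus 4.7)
The plan is to exploit the planar embedding of $G$ inherited by $H := G[\bar{V_i}] - A$. First I would observe that since $P_1,\ldots,P_k$ are edge-disjoint directed paths in $G-A$ whose union is $S_{ab}-A$ and which can only share endpoints, they appear as consecutive sub-arcs along the simple path $S_{ab}$, with any gaps between them occupied by vertices of $A\cap V(S_{ab})$ or shared endpoints where directions change. Together with the edge $ab$, $S_{ab}$ forms the simple cycle $C_{ab}$, which in the planar embedding of $G$ is a simple closed Jordan curve. Hence $P_i$ appears as a contiguous sub-arc of $C_{ab}$ whose endpoints $s,t$ partition $C_{ab}$ into $P_i$ and a complementary arc $\alpha$ going the other way around.

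Next I would verify the key topological observation that every internal vertex and edge of $\alpha$ is absent from $H$. The internal vertices of $\alpha$ lie in $V(S_{ab})\setminus V(P_i)$ or in $A\cap V(S_{ab})$, both of which are removed when forming $H$. The edges of $\alpha$ are either tree-edges of $S_{ab}$ outside $E(P_i)$, each with at least one deleted endpoint, or the single non-tree edge $ab$, which loses at least one endpoint unless $P_i=S_{ab}$. The only edge case is $k=1$, where $\{s,t\}=\{a,b\}$ and the edge $ab$ itself sits in $H$ and puts $s,t$ on a common face.

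In the general case, $\alpha\setminus\{s,t\}$ is a Jordan arc in the plane disjoint from the drawing of $H$. I would take a sufficiently small tubular neighborhood $U$ of $\alpha$ such that $U$ contains no vertex of $H$ other than $s,t$, and intersects any edge of $H$ only in a short initial segment incident to $s$ or $t$; this is possible because $H$ has finitely many vertices and edges, and the only points of $H$ lying on $\alpha$ are $s$ and $t$. Then I would construct a Jordan arc $\beta$ from $s$ to $t$ whose interior lies in $U$ and avoids $H$, by tracing $\alpha$ while perturbing near $s$ and $t$ to circumvent the short edge-segments of $H$ emanating from them. Since the interior of $\beta$ lies in $\mathbb{R}^2\setminus H$ and is connected, it belongs to a single face of $H$, whose boundary therefore contains both $s$ and $t$.

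Finally, letting $C$ denote the component of $H$ containing $P_i$, every face of $H$ is contained in a unique face of $C$ because $C\subseteq H$; so the common face of $H$ with $s,t$ on its boundary lies inside a face of $C$ that also has $s,t$ on its boundary, as required. The main obstacle is the third paragraph: although the existence of the Jordan arc $\beta$ is intuitively clear from the picture, one must be a bit careful with the edges of $H$ incident to $s$ and $t$, since $\alpha$ meets the drawing of $H$ precisely at these two vertices; the tubular-neighborhood argument is the cleanest way to make this rigorous without appealing to ad hoc planarity manipulations.
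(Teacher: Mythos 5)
Your proof is correct and follows essentially the same approach as the paper: view $C_{ab}$ as a closed curve, observe that the complementary arc $\alpha$ to $P_i$ touches the remaining graph only at the endpoints of $P_i$, and conclude those endpoints lie on the face containing $\alpha$'s interior. The tubular-neighborhood perturbation you add near $s,t$ is not actually needed, since the interior of $\alpha$ is already disjoint from the drawing of $H$ (in a plane embedding distinct edges have disjoint interiors, and every edge of $\alpha$ either has a deleted endpoint or is the possibly-absent triangulation edge $ab$), but including it does no harm.
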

By Lemma~\ref{l:singleface}, the last case
can be handled using the following theorem proved in Section~\ref{s:ds2}.
\begin{proof}
Since $C_{ab}$ is a cycle separator, 
we can also view $C_{ab}$ as a closed curve
that does not cross any edges (but can follow the edges)
  of $G-A$ and has (the embedding of) each $P_i$ as a contiguous part.
  After removing vertices $V(S_{ab})\setminus V(P_i)$
  from $G-A$, the part of the curve that does
  not correspond to $P_i$ connects the endpoints
  of $P_i$ and does not intersect
  (the embedding of) any other vertices or edges of $G[\bar{V_i}]-A$.
  Hence, this part lies inside
  a single face of the component of $G[\bar{V_i}]-A$
  that contains $P_i$.
  We conclude that the endpoints of $P_i$
  lie on that single face.
\end{proof}

\begin{restatable}{theorem}{thmdstwo}\label{t:ds2}
  Let $G$ be a plane digraph and let $P\subseteq G$ be a simple path whose endpoints lie on a single face of $G$.
  In linear time one can build a data structure that can compute in $O(\log{n})$ time,
  whether there exists a $u\to v$
  path going through $V(P)$ in $G-x$, where $u,v\in V$ and $x\in V(P)$.
\end{restatable}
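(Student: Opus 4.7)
The plan is to handle a query $(u, v, x)$ with $x \in V(P)$ by decomposing any candidate $u \to v$ path $Q \subseteq G - x$ that uses $V(P)$ into three pieces: a \emph{satellite} prefix from $u$ to the first $P$-vertex $u'$ of $Q$, a satellite suffix from the last $P$-vertex $v'$ of $Q$ to $v$, and a middle piece $Q[u', v']$ connecting $u', v' \in V(P) \setminus \{x\}$. Writing $P_1, P_2$ for the two subpaths of $P$ obtained by removing $x$, the middle piece either stays within $P_1$ (and can be replaced by a subpath of $P_1$), stays within $P_2$, or crosses $x$ via a \emph{detour of $x$} -- a satellite path from $V(P_1)$ to $V(P_2)$, or a symmetric reverse variant from $V(P_2)$ to $V(P_1)$. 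Because satellite paths have no internal vertices in $V(P)$, the failure of $x \in V(P)$ does not affect satellite reachability to or from $V(P_1) \cup V(P_2)$, so the $x$-dependence is confined to the choice of detour.

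The first ingredient I would build is a linear-size, linear-preprocessing data structure that, given a query vertex $w \in V$ and a query subpath $P' \subseteq P$, returns in $O(\log n)$ time the earliest vertex of $P'$ satellite-reachable from $w$ in $G$ (and, symmetrically, the latest vertex of $P'$ that satellite-reaches $w$). The hypothesis that both endpoints of $P$ lie on a single face $f$ is essential here: closing $P$ with an arc through $f$ yields a Jordan curve that partitions the plane into two sides, and any satellite path from $w$ into $V(P)$ approaches $V(P)$ from one of these sides. For a fixed $w$, the set of $P$-vertices satellite-reachable from $w$ therefore decomposes into two sub-intervals of $P$ (one per side), so restriction to any query subpath $P'$ becomes an $O(\log n)$-time interval-minimum operation on a suitable tree.

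The second ingredient is the minimal-detour data. A detour $(s, e)$ of $x$ is \emph{minimal} if no other detour simultaneously starts no earlier and ends no later on $P$, so minimal detours form the Pareto frontier under ``late $s$, early $e$''. Exploiting the Jordan-curve picture once more, detours partition by which side of $P$ they travel along, and on each side the minimal detour is essentially unique -- yielding at most two minimal detours per $x$, and symmetrically at most two minimal reverse detours. Sweeping $P$ while maintaining per-side aggregate information then produces the minimal detours for every $x \in V(P)$ in $O(n)$ total time. This is the step I expect to be the main obstacle: rigorously capping the Pareto frontier at size two, and updating the per-side state in amortized constant time during the sweep, requires a careful geometric argument about how satellite paths from different sources $s \in V(P_1)$ can interact in the plane once $x$ is deleted.

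Given both ingredients, answering a query is straightforward. Compute $first^{P_i}_G(u)$ and $last^{P_i}_G(v)$ for $i \in \{1, 2\}$ via the first data structure, and answer YES iff one of the following holds: (i) $first^{P_1}_G(u) \preceq_P last^{P_1}_G(v)$; (ii) the analogous condition on $P_2$; (iii) some minimal detour $(s, e)$ of $x$ satisfies $first^{P_1}_G(u) \preceq_P s$ and $e \preceq_P last^{P_2}_G(v)$; or (iv) the symmetric condition for a minimal reverse detour. Soundness follows immediately from the decomposition $Q = Q[u, u'] \cdot Q[u', v'] \cdot Q[v', v]$, while completeness reduces to replacing the arbitrary middle piece $Q[u', v']$ by the canonical subpath-or-minimal-detour version guaranteed by the planarity analysis. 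The total query time is $O(\log n)$, dominated by the $O(1)$ satellite subpath queries.
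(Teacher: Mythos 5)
Your high-level plan—decompose $Q$ into a satellite prefix, a middle piece, and a satellite suffix; exploit the single-face hypothesis to bound minimal detours by two; then test $O(1)$ canonical configurations—is in the spirit of the paper's proof, and your first two ingredients (satellite reachability to/from query subpaths of $P$, and the two-minimal-detours bound) do appear there. However, your query conditions (i)--(iv) rely exclusively on \emph{satellite} reachability ($first^{P_i}_G$, $last^{P_i}_G$), and this misses an essential phenomenon: once a path enters $V(P)$ at a vertex $p_i$, it may travel to an \emph{earlier} vertex $p_j\prec p_i$ in $G-x$, provided $p_i$ and $p_j$ are strongly connected in $G-x$. Concretely, let $P=p_1\ldots p_5$, $x=p_3$, suppose $u$ satellite-reaches only $p_2$ (say $u\to p_2$ is the only edge out of $u$), $v$ is satellite-reached only from $p_5$, the unique minimal detour of $x$ is $(p_1,p_4)$, and there is a satellite path $p_2\to a\to p_1$. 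Then $u\to p_2\to a\to p_1\to\text{(detour)}\to p_4\to p_5\to v$ is a valid $u\to v$ path through $V(P)$ in $G-p_3$, but all four of your conditions return \textsc{no}: (i) and (iv) fail because no vertex of $P_1$ satellite-reaches $v$, (ii) fails because $u$ satellite-reaches no vertex of $P_2$, and (iii) fails because $first^{P_1}_G(u)=p_2\not\preceq p_1=s$. What is missing is the ability to slide the entry point backward from $\vfs_G(u)$ to $\vf_{G-x}(\vfs_G(u))$, i.e., to the earliest vertex of the relevant subpath strongly connected to $\vfs_G(u)$ in $G-x$; the paper computes this via binary search over the $O(1)$-time strong-connectivity-under-failures oracle of~\cite{GeorgiadisIP17} (Lemma~\ref{l:scc-interval}), a third ingredient absent from your proposal. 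The configuration $x\prec\enter{Q}$ and $\leave{Q}\prec x$ is likewise not captured by a single ``minimal reverse detour'': the paper first advances to $u'=\vf_{G-x}(\cdot)$ on $P_2$, then takes the globally best backward satellite jump (a precomputed earliest-jump array over suffixes of $P$) to some $t'\prec x$, and then \emph{recurses} into the same SCC-backtracking machinery as the $\enter{Q},\leave{Q}\prec x$ case.

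A secondary issue: your justification for the first ingredient---that the $P$-vertices satellite-reachable from a fixed $w$ form at most two intervals of $P$, one per side---is false. Even with $P$ on the outer face of one side, $w$ can have direct edges to $p_2$ and $p_5$ and no satellite path to $p_3$ or $p_4$, so the reachable set need not be an interval and the query is not an interval-minimum computation. The $O(\log n)$ bound is nevertheless achievable, but by a different mechanism: the paper (Lemma~\ref{l:reach-interval}) augments each side with a segment-tree of auxiliary ``interval'' vertices $p_{[e,f]}$ (preserving planarity since $P$ lies on a face), builds the optimal static planar reachability oracle of~\cite{HolmRT15} on the augmented graph, and runs a binary-search-like recursion whose non-leaf calls can be charged to $O(\log n)$.
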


In order to prove Theorem~\ref{t:mainresult}, first note that
the preprocessing time of our data structure is 
$O(n\log^2{n}/\log\log{n})$, since all the data structures of Lemma~\ref{l:pathreach}, and Theorems~\ref{t:ds1}~and~\ref{t:ds2}
can be constructed in $O(n\log{n}/\log\log{n})$ time for each
input graph of the recursion.
Since all these data structures use only linear space, the total space used is $O(n\log{n})$.
To analyze the query time, observe that similarly
to Section~\ref{s:thorup}, we only query $O(\log{n})$
``reachability through a path'' data structures.
All of them, except of the data structure of Theorem~\ref{t:ds2},
have constant query time.
However, the data structure of Theorem~\ref{t:ds2}
is only used to handle the case when $x\in V(S_{ab})$.
Observe that for any $w\in V$ there is no more than one
node in the recursive data structure such that
$w$ is a vertex of the respective separator $S_{ab}$.
As a result, we only need to perform a single
query to a data structure of Theorem~\ref{t:ds2}.

\newcommand{\vlsp}{\ensuremath{last^{*P}}}
\newcommand{\vfsp}{\ensuremath{first^{*P}}}
\paragraph{Overview of the ``failed vertex out of $P$'' data structure of Theorem~\ref{t:ds1}.}
To deal with this case, we identify, for the query vertices $u,v$ and failed vertex $x$,
the earliest (latest) vertex $\vf^P_{G-x}(u)$ ($\vl^P_{G-x}(v)$, resp.) on $P$ that $u$ can reach (that can reach $v$, resp.) in $G-x$. 
Given those vertices, by Fact~\ref{f:pathreach}, it suffices to test whether $\vf^P_{G-x}(u)\preceq_P \vl^P_{G-x}(v)$.
However, it comes at no surprise that $\vf^P_{G-x}(u)$ and $\vl^P_{G-x}(v)$ depend on the failed vertex $x$.
In the remaining part of this overview we only consider computing $\vl^P_{G-x}(v)$ (computing $\vf^P_{G-x}(u)$ is symmetrical).

A path $Q$ between any two $w,z\in V$ is called a \emph{satellite} path (with respect to $P$) if no vertex of $Q$ other than $w,z$ is a vertex of $P$. 
Let $\vlsp_{G-x}(v)$ denote the latest vertex on $P$ that can reach $v$
in $G-x$ using a satellite path.
The first step to find $\vl^P_{G-x}(v)$ is to efficiently compute $\vlsp_{G-x}(v)$.

Let $P=p_1\ldots p_\ell$.
For each $p_i$ we first identify all the vertices $w$ for which $p_i$ is the latest vertex on $P$ that can reach $w$ with a satellite path, 
which we refer to as the $i$-th layer $L_i$.
We also compute all the possible failed vertices $x$ that can fail all paths from
$p_i$ to $w$ in $G[L_i\cup \{p_i\}]$; let us call those vertices the $(p_i,w)$-cut-points. 
We use dominator trees for a compact tree representation of the $(p_i,w)$-cut-points for all vertices $w\in L_i$.
While the division of $V\setminus V(P)$ into layers alone
allows us to compute $\vlsp_{G-x}(v)$ in several possible query cases, the difficult case is when the failed vertex~$x$
is one of the $(p_i,v)$-cut-points. 
We prove that this case can be solved by (1) identifying, during the preprocessing,
for each possible $(p_i,w)$-cut-point $c$, the latest vertex $p_j$ of $P$, $p_j\prec p_i$,
that has a satellite path to $c$ avoiding all $(p_i,c)$-cut-points 
(we call $p_j$ the rescuer of $c$),
and (2) computing (during the query) the latest rescuer of all $(p_i,v)$-cut-points $c$
that are not $(p_i,x)$-cut-points.

The main challenge in (1) is efficient preprocessing. We cannot afford
to execute traversals for identifying the existence of such paths.
To overcome this difficulty, we leverage the properties of the dominator tree used for representing all $(p_i,w)$-cut-points,
combined with a use of a dynamic two-dimensional point reporting data structure~\cite{ChanT17}.
The combination of those data structures allows us to compute the rescuers for all vertices in $O(n \log n / \log \log n)$ total time.
To efficiently perform step (2) during the query, it is enough to
preprocess the dominator tree of $G[L_i\cup\{p_i\}]$ with vertices
labeled with their rescuers and find the maximum labeled vertices
on a $x$-to-$v$ path in that tree.
This can be done in $O(1)$-time after linear preprocessing~\cite{DemaineLW14}.

Finally, we show that $\vl^P_{G-x}(v)$ is the latest vertex of $P$ that is strongly connected to $\vlsp_{G-x}(v)$
in $G-x$.
Given $\vlsp_{G-x}(v)$, one could identify $\vl^P_{G-x}(v)$ in $O(\log |V(P)|)$ time with a binary search on $p_k\in P$ by answering efficiently queries
``are $p_k$ and $\vlsp_{G-x}(v)$ strongly connected in $G-x$?" using the data structure of \cite{GeorgiadisIP17}.
However, this results in $O(\log n)$ query time for each path $P$, while we are aiming at constant query time per path.
We improve this bound by showing in Section~\ref{s:max-scc-failure} that such queries can be answered in $O(1)$ time,
by means of the following non-trivial extension of the framework in \cite{GeorgiadisIP17},
which we believe might be of independent interest.

\begin{restatable}{theorem}{thmfindmax}\label{thm:find-maxID-in-SCC}
  Given a digraph $G$ and an assignment $f:V\to \mathbb{R}$ of labels to the vertices of $G$, we can preprocess $G$ in $O(m+n(\log n \log \log n)^{2/3})$ time, so that the following queries are supported in $O(1)$ time. 
  Given $x,v\in V(G)$, find a vertex with the maximum label in the SCC of $v$ in $G-x$.
\end{restatable}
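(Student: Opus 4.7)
The plan is to build upon the $1$-sensitivity SCC framework of Georgiadis, Italiano and Parotsidis~\cite{GeorgiadisIP17} and augment it so that, in addition to deciding SCC membership under a single vertex failure, it returns the maximum label of the SCC. I will first reduce to the strongly connected case: if $x$ lies in a different SCC of $G$ than $v$, the SCC of $v$ in $G-x$ coincides with its SCC in $G$ (paths between strongly connected vertices never leave their SCC), so a single precomputed maximum per SCC of $G$ answers such queries. Hence I may assume $G$ is strongly connected and fix an arbitrary root $r$; building the dominator tree $D$ of $G$ from $r$ and $D^R$ of $G^R$ from $r$ takes $O(m)$ time~\cite{dominators:bgkrtw}. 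For the base case this already characterizes the SCC of $r$ in $G-x$ as $V\setminus(D[x]\cup D^R[x])$.

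The structural fact I plan to lift from \cite{GeorgiadisIP17} is that the SCC of an \emph{arbitrary} vertex $v$ in $G-x$ admits a canonical description of constant combinatorial complexity in terms of the dominator trees of $O(1)$ flow subgraphs $H$ taken from an underlying hierarchy of total vertex-count $O(n)$: namely, as an intersection of $O(1)$ sets of the form $D_H[a]\setminus D_H[b]$ and $D_H^R[a']\setminus D_H^R[b']$, which \cite{GeorgiadisIP17} can recover in $O(1)$ time given $(v,x)$. Consequently, our max-label query reduces to the maximum of $O(1)$ max-label queries of the form ``given two trees $T_1,T_2$ on the same vertex set weighted by $f$, find the largest $f(w)$ over $w$ lying in a specified subtree-difference of $T_1$ intersected with a specified subtree-difference of $T_2$.''

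Each such query is then mapped to planar orthogonal range-max. Equip $T_1$ and $T_2$ with their Euler-tour intervals; each vertex $w$ becomes a point $p(w)=(\mathrm{in}_{T_1}(w),\mathrm{in}_{T_2}(w))$ weighted by $f(w)$, and each subtree-difference condition becomes an axis-aligned rectangle, so their intersection is again an axis-aligned rectangle. I plan to support these $2$-dimensional orthogonal range-max queries in $O(1)$ time after $O(n(\log n\log\log n)^{2/3})$ preprocessing using the static word-RAM machinery implicit in Chan and Tsakalidis~\cite{ChanT17}. Since the hierarchy has total size $O(n)$, the disjoint union of the per-flow-graph point sets (tagged by hierarchy node on, say, an extra dummy coordinate) has size $O(n)$, so a single global range-max instance serves every query.

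The main obstacle I anticipate is verifying that the canonical SCC descriptions arising at every level of the hierarchy admit the above uniform rectangular encoding over the 2-d point set of the corresponding flow subgraph, so that constant query time is preserved uniformly and the total preprocessing remains within $O(m+n(\log n\log\log n)^{2/3})$. I expect this to follow from the same accounting that yields the $O(n)$ total size of the hierarchy of \cite{GeorgiadisIP17}; the only genuinely new ingredient is the range-max data structure, which is applied once to the disjoint union of all point sets and contributes exactly the additive $n(\log n\log\log n)^{2/3}$ term in the preprocessing time.
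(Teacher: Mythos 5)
Your reduction to the strongly connected case and your handling of the ``SCC of the root'' as $V\setminus(D[x]\cup D^R[x])$ mapped to $O(1)$ two-sided rectangles in the $(\pord_D,\pord_{D^R})$ plane both match the paper. Beyond that, however, there are two genuine gaps.

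First, the structural characterization you rely on is not the one that actually holds. You claim that for an arbitrary $v$ the SCC of $v$ in $G-x$ ``admits a canonical description\ldots\ as an intersection of $O(1)$ sets of the form $D_H[a]\setminus D_H[b]$ and $D^R_H[a']\setminus D^R_H[b']$.'' What \cite{GeorgiadisIP17} (restated in the paper as Theorem~\ref{cor:scc}) actually gives is quite different: when $v$ is not in the SCC of the fixed root $s$, the SCC of $v$ in $G-x$ is exactly a \emph{subtree $H[w]$ of the loop-nesting tree} $H$ (or $H^R$), where $w$ is the unique ancestor of $v$ with $h(w)\notin D[x]\setminus\{x\}$. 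Theorem~\ref{cor:scc} only says the SCC is \emph{contained in} a set like $D[x]\setminus D^R[x]$; many SCCs of $G-x$ can live in that same set, and a dominator-tree rectangle would aggregate over all of them. Your proposal never invokes loop-nesting trees, so it cannot single out the particular SCC containing $v$, and the path-min-over-$H$ argument (the paper's claims (i) and (ii), using $\mathrm{nca}_H(x,v)$ and edge weights that are subtree minima) is exactly the missing ingredient.

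Second, the running-time accounting is based on a tool that does not do what you assert. You cite Chan--Tsakalidis~\cite{ChanT17} for a \emph{static} $2$-D orthogonal range-max structure with $O(1)$ query and $O(n(\log n\log\log n)^{2/3})$ preprocessing; but \cite{ChanT17} is a \emph{dynamic} $2$-D orthogonal range \emph{reporting} structure with $O(\log^{2/3+o(1)}n)$ update and $O(\log n/\log\log n + k)$ query, and I am not aware of any static $2$-D range-max structure with $O(1)$ query at near-linear preprocessing. The paper sidesteps this entirely: for the only case that needs $2$-D geometry (the SCC of $s$ under all $n-1$ possible failures $x$), it precomputes \emph{all} answers \emph{offline} by a lexicographic sweep of the $n$ points, maintaining a $1$-D array under decremental RMQ~\cite{Wilkinson2014Amortized}; the $n$ updates at $O((\log n\log\log n)^{2/3})$ each is exactly where the additive $n(\log n\log\log n)^{2/3}$ term in the theorem comes from. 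For the other case the paper uses $O(1)$-time static tree path-minimum~\cite{DemaineLW14} on $H$ after linear preprocessing, so no range-max structure is needed there at all.
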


\paragraph{Overview of the ``failed vertex on $P$'' data structure of Theorem~\ref{t:ds2}.}
In order to handle this case we exploit planarity extensively.
Let $P=p_1\ldots p_\ell$.
We define for each vertex $p_k=x\in V(P)$ a \emph{detour} of $x$ to be a $p_i\to p_j$ satellite
path, where $i<k<j$.
A \emph{minimal detour} of $x$ is such a detour $p_i\to p_j$ of $x$
that no $p_{i'}\to p_{j'}$ detour of $x$
such that $i\leq i'\leq j'\leq j$ and $j'-i'<j-i$ exists.
Since $p_1$ and $p_\ell$ lie on a single face of $G$
we show that there are at most two non-equivalent (we identify a detour with a pair of its endpoints)
minimal detours of any vertex $x\in V(P)$.
Moreover, we show a linear time algorithm for finding these minimal detours
for each vertex $x\in V(P)$.

Similarly as in the data structure of Theorem~\ref{t:ds1}, we would like
to construct the requested $u\to v$ path $Q$ in $G-x$ of ``pieces''
that are either satellite paths, subpaths of $P$, or paths between
strongly connected vertices.
Since $x$ necessarily lies on $P$, the sets of satellite paths in $G$ and $G-x$
are the same, i.e., the satellite paths do not depend on the failed $x$.
However,
removing any $x\in V(P)$ from the graph breaks $P$ into two parts $P[p_1,p_{k-1}],P[p_{k+1},p_\ell]$.
Thus, to be able to take advantage of Fact~\ref{f:pathreach}, it seems necessary that we efficiently
compute $\vfsp_G(u,p_1,p_{k-1})$ ($\vlsp_G(v,p_1,p_{k-1})$), defined as the
earliest (latest, resp.) vertex on $P[p_1,p_{k-1}]$ that
$u$ can reach (that can reach $v$, resp.) by a satellite path in $G$.
Analogously we need earliest (latest) vertices on the suffix subpaths of the form $P[p_{k+1},p_\ell]$
that can be reached (can reach, resp.) by query vertices.
Unfortunately, it is not clear how to preprocess a general digraph $G$ in nearly linear time
so that such queries are supported in polylogarithmic time.
To circumvent this problem, we again use the fact that $p_1$
and $p_\ell$ lie on a single face of $G$. Then cutting $G$ along
$P$ yields two plane graphs $G_1,G_2$ such that the path $P$ is a contiguous part
of a single face of both $G_1$ and $G_2$.
Since a satellite path is entirely in $G_1$ or $G_2$, we can deal with
them separately.
We show that by extending $G_i$~with~a~tree of auxiliary vertices 
and building the optimal reachability oracle~\cite{HolmRT15} on it,
we can find the vertices $\vfsp_G(u,p_a,p_b)$, $\vlsp_G(v,p_a,p_b)$ for any
$p_a,p_b$ in $O(\log{n})$ time using a variation of binary search.

Finally, given all the auxiliary data structures and tools, we show
how to answer ``is there a $u\to v$ path $Q$ in $G-x$'' queries in $O(\log{n})$ time.
To this end, we distinguish several scenarios depending on the possible interactions of $Q$ with $P$.
Let $\enter{Q}$ (resp., $\leave{Q}$) be the first (resp., last) vertex on $Q$ that appears on $P$.
We consider four cases depending on whether $\enter{Q}$ and $\leave{Q}$ appear earlier or later than $x$ on $P$.
In each case we concentrate on looking for~a~certain ``canonical'' path
that visits earliest/latest possible vertices on both sides of $x$ on $P$
and jumps between the sides of $x$ using minimal detours.
All in all, we can limit ourselves to paths consisting of $O(1)$
subpaths of a very special form that can be handled by our auxiliary data structures.

\paragraph{Finding the maximum label in an SCC under failures.}
In order to obtain the general data structure of Theorem~\ref{thm:find-maxID-in-SCC}, we exploit the framework developed in \cite{GeorgiadisIP17}. 
We extend this framework to identify the minimum/maximum label in any single SCC in $G-x$, except of the one SCC that contains an arbitrary but fixed vertex $s$. 
To deal with this case, we show how we can precompute the maximum value in the SCC of $s$ in $G-x$, for all possible failures $x$, via a reduction to a special case of offline two-dimensional range minimum queries, which can be further reduced to decremental one-dimensional range minimum queries~\cite{Wilkinson2014Amortized}.


\section{Dominators in Directed Graphs}\label{s:dominators}
In this section we review the dominator trees and their properties.

A \emph{flow graph} is a directed graph with a start vertex $s$, where all vertices are reachable from $s$.
Let $G_s$ be a flow graph 
with start vertex $s$.
A vertex $u$ is a \emph{dominator} of a vertex $v$ ($u$ \emph{dominates} $v$) if every path from $s$ to $v$ in $G_s$ contains $u$; $u$ is a \emph{proper dominator} of $v$ if $u$ dominates $v$ and $u \not= v$.
Let \emph{dom$(v)$} be the set of dominators of $v$. Clearly, \emph{dom}$(s)$ $=\{s\}$ and for any $v\neq s$ we have that $\{s,v\}\subseteq$ \emph{dom$(v)$}: we say that $s$ and $v$ are the \emph{trivial dominators} of $v$ in the flow graph $G_s$.
The dominator relation is reflexive and transitive. Its transitive reduction is a rooted tree~\cite{AU72,LM69}, known as the \emph{dominator tree} $D$: $u$ dominates $v$ if and only if $u$ is an ancestor of $v$ in $D$.
If $v \not= s$,  the parent of $v$ in $D$, denoted by $d(v)$, is the \emph{immediate dominator} of $v$: it is the unique proper dominator of $v$ that is dominated by all proper dominators of $v$. Similarly, we can define the dominator relation in the flow graph $G_s^R$, and let $D^R$ denote the dominator tree of $G_s^R$. We also denote the immediate dominator of $v$ in $G_s^R$ by $d^R(v)$.
Lengauer and Tarjan~\cite{domin:lt} presented an algorithm for computing dominators in  $O(m \alpha(m,n))$ time for a flow graph with $n$ vertices and $m$ edges, where $\alpha$ is a functional inverse of Ackermann's function~\cite{dsu:tarjan}.
Subsequently, several linear-time algorithms
were discovered~\cite{domin:ahlt,dominators:bgkrtw,dominators:Fraczak2013,dominators:poset}.
We apply the tree notation introduced earlier on for referring to
subtrees and paths to dominator trees as well.

\begin{lemma}[\cite{2VCB}]
	\label{lemma:paths-through-SAP}
	Let $G$ be a flow graph with start vertex $s$ and let $v\not=s$. 
 	Let $w$ be any vertex that is not a descendant of $v$ in $D$.
	All simple paths in $G$ from $w$ to any descendant of $v$ in $D$ must contain $v$.
\end{lemma}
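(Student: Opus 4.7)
The plan is a short \emph{concatenation argument}. Since $w$ is not a descendant of $v$ in the dominator tree $D$, by definition of domination there is a path $Q$ in $G$ from $s$ to $w$ that avoids $v$ (otherwise $v$ would dominate $w$, putting $w$ in $D[v]$). This existence of a $v$-avoiding $s \to w$ path is the only consequence of the hypothesis on $w$ that I intend to use.

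Next, fix an arbitrary simple path $P$ in $G$ from $w$ to a descendant $u$ of $v$ in $D$, and form the walk $W := Q \cdot P$ from $s$ to $u$. From $W$ one can extract a simple $s \to u$ path $P'$ whose vertex set satisfies $V(P') \subseteq V(Q) \cup V(P)$ (by the standard ``shortcut out the cycles'' procedure on a walk). Because $u$ is a descendant of $v$ in $D$, $v$ dominates $u$, so every simple path from $s$ to $u$ contains $v$; in particular $v \in V(P')$, hence $v \in V(Q) \cup V(P)$. Since $Q$ was chosen to avoid $v$, we conclude $v \in V(P)$, which is what we wanted.

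The only mildly delicate point is the transition from ``every \emph{path} from $s$ to $u$ contains $v$'' (the definitional statement) to ``every \emph{walk} from $s$ to $u$ contains $v$'', which is handled by the shortcut-to-simple-path step above; everything else is routine. Note also the boundary case $w = v$ (if one interprets ``descendant'' strictly): here $P$ starts at $v$ and the conclusion is immediate, so no separate argument is needed.
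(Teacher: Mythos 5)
Your proof is correct, and it is the standard argument for this fact: since $w$ is not in $D[v]$, $v$ does not dominate $w$, so some $s\to w$ path $Q$ avoids $v$; concatenating $Q$ with a hypothetical $v$-avoiding $w\to u$ path and shortcutting to a simple $s\to u$ path would contradict that $v$ dominates $u$. The paper itself states this lemma only by citation to~\cite{2VCB} and does not supply a proof, so there is nothing in the paper to compare approaches against; your handling of the walk-to-simple-path step and of the degenerate endpoint cases is appropriate and complete.
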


\begin{lemma}[\cite{Georgiadis2015dom}]\label{lemma:parent-property}
  For each $zw\in E(G_s)$, where $z\neq s$, $z$ is a descendant of $d(w)$ in $D$.
\end{lemma}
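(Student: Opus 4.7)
The plan is to proceed by contradiction using only the definitions of dominator, immediate dominator, and the correspondence between the dominator relation and the ancestor relation in $D$. Suppose, towards a contradiction, that $z$ is not a descendant of $d(w)$ in $D$. Since ancestry in $D$ is precisely the dominance relation in $G_s$, this means that $d(w)$ does not dominate $z$. By the definition of dominance, there must exist some path $P$ from $s$ to $z$ in $G_s$ that avoids $d(w)$.

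Next I would concatenate $P$ with the edge $zw$ to obtain a walk $P'$ from $s$ to $w$. The key claim is that $P'$ does not pass through $d(w)$. The walk $P'$ consists of the vertices of $P$ together with the added endpoint $w$. Since $P$ avoids $d(w)$ by choice, I only need to rule out $d(w)\in\{z,w\}$: we have $d(w)\neq w$ because $d(w)$ is by definition a \emph{proper} dominator of $w$, and we have $d(w)\neq z$ because $z$ is assumed not to be a descendant of $d(w)$ in $D$ (a vertex is trivially its own descendant).

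Thus $P'$ is an $s\to w$ walk in $G_s$ avoiding $d(w)$, contradicting the fact that $d(w)$ dominates $w$. This forces $z$ to be a descendant of $d(w)$ in $D$, completing the proof. I do not anticipate any real obstacle here; the argument is a direct unfolding of the definitions of dominator and immediate dominator, together with the standard fact (stated in the preceding text) that ancestry in $D$ coincides with the dominator relation. The only point that requires a brief remark is the convention that a vertex is counted as its own descendant, which legitimately covers the edge case $z=d(w)$ (an edge from the immediate dominator to~$w$).
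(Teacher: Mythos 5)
Your proof is correct. Note that the paper does not actually prove this lemma itself---it is stated as a cited result from the reference~\cite{Georgiadis2015dom}---so there is no in-paper proof to compare against; your argument is the standard elementary derivation of the ``parent property'' of dominator trees, proceeding exactly as one would expect from the definitions: if $d(w)$ failed to dominate $z$, then an $s\to z$ path avoiding $d(w)$ extended by the edge $zw$ would be an $s\to w$ path avoiding $d(w)$, contradicting $d(w)\in\mathit{dom}(w)$. Two small remarks: (i) the check ``$d(w)\neq z$'' is already implied by the fact that $z\in V(P)$ and $P$ avoids $d(w)$, so that step of the case analysis is redundant (though harmless); and (ii) the argument tacitly assumes $w\neq s$ when invoking that $d(w)$ is a \emph{proper} dominator of $w$. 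In the degenerate case $w=s$ one has $d(s)=s$ (the root of $D$), so every vertex---in particular $z$---is a descendant of $d(w)$ and the claim holds trivially; it would be worth a one-line remark to cover it.
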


\begin{lemma}
	\label{lemma:paths-ancestor-descendant-dom}
	Let $G$ be a flow graph with start vertex $s$ and let $v\not=s$ be a non-leaf in the dominator tree $D$ of $G$. 
	There exists a simple path in $G[D[v]]$ from $v$ to any $w \in D[v]$.
\end{lemma}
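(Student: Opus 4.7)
I would construct the required path by trimming a simple $s\to w$ path in $G$. Since $G$ is a flow graph rooted at $s$, $w$ is reachable from $s$, so pick any simple path $P=s\to w$ in $G$. Because $v$ dominates $w$ (equivalently, $v$ is an ancestor of $w$ in $D$), $v$ must lie on $P$, and by simplicity of $P$ it appears exactly once. Let $P'$ be the suffix of $P$ from $v$ to $w$; this is automatically a simple $v\to w$ path in $G$, and all that remains is to verify that every vertex on $P'$ belongs to $V(D[v])$, i.e., is dominated by $v$.

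The core step is to establish $V(P')\subseteq V(D[v])$, which I plan to argue by contradiction. Suppose some $u\in V(P')$ is not dominated by $v$; note $u\neq v$ since $v$ trivially dominates itself. Then by definition there is a walk $Q$ from $s$ to $u$ in $G$ avoiding $v$. Let $R$ denote the subpath of $P'$ from $u$ to $w$; because $v$ appears on $P'$ only at its first vertex and $u\neq v$, the path $R$ avoids $v$ as well. Then the concatenation $Q\cdot R$ is a walk from $s$ to $w$ in $G$ not visiting $v$, contradicting $v\in\mathrm{dom}(w)$. This shows $V(P')\subseteq V(D[v])$, so $P'\subseteq G[D[v]]$ is the desired simple $v\to w$ path.

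The case $w=v$ is trivial (take the empty path), and the ``non-leaf'' assumption merely ensures that the statement is non-vacuous. The only subtlety is that $P$ must be chosen \emph{simple} so that $v$ appears exactly once on $P'$ and the concatenated walk $Q\cdot R$ genuinely avoids $v$; beyond the definitional equivalence between $V(D[v])$ and the set of vertices dominated by $v$, no further structural property of the dominator tree is needed, and no appeal to Lemma~\ref{lemma:paths-through-SAP} or Lemma~\ref{lemma:parent-property} is required.
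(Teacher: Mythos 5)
Your proof is correct, and it takes a genuinely different route from the paper's. The paper argues by contradiction: it assumes no $v\to w$ path exists inside $G[D[v]]$, takes a simple $v\to w$ path in $G$ that necessarily exits $D[v]$ at some vertex $z$, and then invokes Lemma~\ref{lemma:paths-through-SAP} to conclude that the $z\to w$ suffix must re-enter through $v$, contradicting simplicity. Your proof instead \emph{constructs} the desired path directly: take any simple $s\to w$ path $P$, observe that $v$ must appear on it (by domination), let $P'$ be the $v\to w$ suffix, and verify $V(P')\subseteq V(D[v])$ by a one-step contradiction from the bare definition of domination. This is more elementary and self-contained --- it appeals only to the definition of a dominator and needs no auxiliary lemma --- whereas the paper's argument leans on the already-established Lemma~\ref{lemma:paths-through-SAP}, which is a stronger structural fact about the dominator tree. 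The two arguments are roughly equal in length, but yours is the one I would expect a reader to reconstruct without consulting the rest of the section. One small remark: the ``non-leaf'' hypothesis does not actually make the statement non-vacuous (a leaf $v$ has $D[v]=\{v\}$ and the claim holds trivially via the empty path, which your proof also handles); it merely ensures the lemma has content beyond the trivial case.
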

\begin{proof}
Assume there is no path from $v$ to some vertex $w\in D[v]$ in $G[D[v]]$.
First, notice that there is a path from $v$ to $w$ in $G$, since $w$ is reachable from $s$ in $G$, but not $G-v$.
Hence, all simple paths from $v$ to $w$ in $G$ contain vertices in $V\setminus D[v]$. 
Take such a path $Q$, and let $z$ be any vertex from $V\setminus D[v]$, and let $Q'$ be the subpath from $z$ to $w$.
Since $Q$ is simple and $v$ appears as the first vertex on $Q$, $v\notin V(Q')$.
By Lemma \ref{lemma:paths-through-SAP}, $Q'$ contains $v$ as $z \notin D[v]$. A contradiction. The lemma follows.
\end{proof}

\section{Proof of Theorem \ref{t:ds1}}\label{s:ds1}

In this section we provide the proof of the following theorem.
\thmdsone*
Here, we do not assume that the underlying graph $G$ is planar.
Let $n=|V(G)|$ and $m=|E(G)|$.
Wlog. assume $n\leq m$.
Let $P =  p_1 p_2\dots p_\ell$ be the directed path we consider.
To answer our queries at hand for a failing vertex $x\notin V(P)$, we 
use the same approach as in Lemma~\ref{l:pathreach}:
observe that since $x\notin V(P)$, by Fact~\ref{f:pathreach},
$u$ can reach $v$ through $P$ in $G-x$ if and only if
$\vf_{G- x}^P(u)\preceq_P \vl_{G - x}^P(v)$.
In what follows we only show how to compute $\vl_{G - x}^P(v)$ efficiently,
since $\vf_{G - x}^P(u)=\vl_{\rev{G}-x}^{\rev{P}}(u)$ and thus it
can be computed by proceeding identically on the reverse graph $G^R$.
For brevity, in the remaining part of this section we omit the superscript/subscript $P$
and write $\vl_{G-x}$ instead $\vl_{G-x}^P$, $\vls_{G}$ instead $\vlsp_G$, $\prec$ instead of $\prec_P$, etc.

\begin{definition}
	We call a simple directed path $Q=e\to f$ of $G$ \emph{satellite} 
	if it does
	not go through $V(P)$ as intermediate vertices, i.e.,
	$V(Q)\cap V(P)\subseteq \{e,f\}$.
\end{definition}

For any $v\in V\setminus V(P)$, we also denote by $\vls_{G-x}(v)$ the latest vertex 
of $P$ that can reach $v$ in $G-x$ by a satellite path, if such a vertex exists.
For $v\in V(P)$, set $\vls_{G-x}(v)=v$.
Having computed $\vls_{G-x}(v)$, we then compute $\vl_{G - x}^P(v)$ using the following lemma.

\begin{lemma}\label{lem:from-exit-to-latest}
$\vl_{G - x}(v)$ is the latest vertex of $P$ in the SCC of $\vls_{G-x}(v)$ in $G-x$. 
\end{lemma}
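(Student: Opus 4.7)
The plan is to prove this by a double inequality. Let $s = \vls_{G-x}(v)$ and $\ell = \vl_{G-x}(v)$, and let $L$ denote the latest vertex of $P$ in the SCC of $s$ in $G-x$. I will show $L \preceq \ell$ and $\ell \preceq L$, which forces $\ell = L$. (A preliminary remark: $\vls_{G-x}(v)$ is defined precisely when $\vl_{G-x}(v)$ is, since cutting any $w\to v$ path in $G-x$ from $P$ to $v$ at its last $P$-vertex yields a satellite witness.)

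For the direction $L \preceq \ell$, I will just observe that $L$ lies in the SCC of $s$, so $L$ reaches $s$, and $s$ in turn reaches $v$ by a satellite path in $G-x$. Concatenating, $L$ reaches $v$ in $G-x$, and since $\ell$ is by definition the \emph{latest} vertex of $P$ with that property, $L \preceq \ell$.

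For the direction $\ell \preceq L$, the key is to show that $\ell$ actually lies in the SCC of $s$ in $G-x$. First, $s \preceq \ell$ on $P$: indeed $s$ reaches $v$ in $G-x$ (via its satellite path), so $s$ is a candidate for $\vl_{G-x}(v)$, hence $s \preceq \ell$. Thus the subpath $P[s,\ell]$ is a directed $s\to \ell$ path in $G$; because $x\notin V(P)$, it lies in $G-x$, giving $s \to \ell$ in $G-x$. For the reverse direction, I will take any $\ell\to v$ path $Q$ in $G-x$, let $s''$ be the last vertex of $P$ appearing on $Q$ (which exists since $\ell \in V(P)$), and note that the sub-path of $Q$ from $s''$ to $v$ is then a satellite path in $G-x$; hence $s'' \preceq s$ by maximality of $s$. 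Then concatenating the prefix of $Q$ from $\ell$ to $s''$ with the subpath $P[s'',s]$ (which again avoids $x$ because $x\notin V(P)$) produces an $\ell \to s$ walk in $G-x$, showing $\ell$ reaches $s$.

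Combining both reachabilities, $\ell$ and $s$ share an SCC in $G-x$, so $\ell \preceq L$ by definition of $L$ as the \emph{latest} $P$-vertex in this SCC. Together with the earlier inequality, $\ell = L$, which is exactly the claim. There is no real obstacle here; the only thing to be careful about is using $x\notin V(P)$ twice (to keep subpaths of $P$ intact in $G-x$), which is precisely where the hypothesis on the location of the failure enters.
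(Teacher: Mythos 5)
Your proof is correct and follows essentially the same route as the paper's: both establish $\vls_{G-x}(v)\preceq \vl_{G-x}(v)$ (giving an $s\to\ell$ path along $P$), then show $\vl_{G-x}(v)$ reaches $\vls_{G-x}(v)$ in $G-x$ by cutting a $\vl_{G-x}(v)\to v$ path at its last $P$-vertex and walking back along $P$, and finally observe that every $P$-vertex in that SCC reaches $v$. Your packaging as a double inequality $L\preceq\ell$ and $\ell\preceq L$ is only a cosmetic reorganization of the same argument.
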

\begin{proof}
Since  $\vls_{G-x}(v)$ can reach $v$ in $G-x$, we have that $\vls_{G-x}(v)\preceq \vl_{G - x}(v)$.
  As $x\notin V(P)$, there exists a path $\vls_{G-x}(v)\to \vl_{G - x}(v)$ in $G-x$ following $P$.

  If $v\in V(P)$, then, by the definition, $\vl_{G-x}(v)$ can reach $v=\vls_{G-x}(v)$.
  Suppose $v\notin V(P)$ in $G-x$.
  Then take any simple path $Q=\vl_{G-x}(v)\to v$ in $G-x$. Let $r$ be the last
  vertex on $Q$ such that $r\in V(P)$.
  The subpath $r\to v$ of $Q$ is a satellite path, so $r\preceq \vls_{G-x}(v)$
  and thus there exists a $r\to \vls_{G-x}(v)$ path in $G-x$.
  Since $r$ is reachable from $\vl_{G-x}(v)$ in $G-x$, we
  conclude that there exists a $\vl_{G-x}(v)\to \vls_{G-x}(v)$ path in $G-x$.
Hence, $\vl_{G - x}(v)$ and $\vls_{G-x}(v)$ are strongly connected in $G-x$.

  Clearly, all vertices that are strongly connected with $\vls_{G-x}(v)$ can reach $v$ in $G-x$.
Therefore, $\vl_{G - x}(v)$ is the latest vertex of $P$ in the strongly connected component of $\vls_{G-x}(v)$ in $G-x$.
\end{proof}

Let us now note the following simple lemma on strong
connectivity between vertices of a path.

\begin{lemma}\label{l:path-scc}
  Let $H$ be a digraph and let $Q=q_1,\ldots,q_\ell$ be a directed path in $H$.
  Then for any $i=1,\ldots,\ell$, there exists
  two indices $a\in \{1,\ldots,\ell\}$ and $b\in \{i,\ldots,\ell\}$
such  that the only vertices of $P$ that are strongly connected
  to $q_i$ are $q_a,\ldots,q_b$.
\end{lemma}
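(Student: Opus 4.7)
The plan is to establish the lemma by showing that the set
\[
  S_i \;=\; \{\, j \in \{1,\ldots,\ell\} : q_j \text{ is strongly connected to } q_i \text{ in } H \,\}
\]
is a contiguous interval of indices containing $i$. Once this is done, the claim follows immediately by taking $a = \min S_i$ and $b = \max S_i$.

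The first step is to observe that the directed path $Q$ itself already provides a $q_j \to q_k$ path in $H$ for every pair of indices with $j \le k$. Consequently, for any two indices, the ``forward'' half of strong connectivity between $q_i$ and $q_j$ is automatic, and only a suitable ``backward'' path needs to be exhibited. In particular, $i \in S_i$ trivially, so $S_i$ is nonempty and contains $i$.

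The heart of the argument is to prove that $S_i$ is closed under in-betweenness: for arbitrary indices $j < m < k$ with $j, k \in S_i$, one must produce both a $q_i \to q_m$ path and a $q_m \to q_i$ path in $H$. I would handle this by a small case analysis on the position of $i$ relative to $j, m, k$. In every case, one of the two required paths is obtained simply by following a suitable forward subpath of $Q$, while the other is obtained by concatenating a forward subpath of $Q$ with one of the backward paths guaranteed by $j \in S_i$ or $k \in S_i$. For example, when $j \le m \le i \le k$, take $q_i \to q_m$ to be the concatenation of a backward path $q_i \to q_j$ (which exists because $j \in S_i$) with the forward subpath $q_j \to q_m$ of $Q$, and take $q_m \to q_i$ to be just the forward subpath of $Q$ from $q_m$ to $q_i$. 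The three remaining subcases ($j < m < k \le i$, $\; i \le j < m < k$, and $j \le i \le m \le k$) are handled symmetrically, always combining at most one ``backward'' witness with a monotone piece of $Q$.

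No serious obstacle is anticipated: the proof is essentially a bookkeeping exercise that leverages the monotone structure of $Q$ to cover the middle vertex $q_m$ using the strong-connectivity witnesses supplied at the flanking indices $q_j$ and $q_k$. The only mild subtlety is ensuring that in each subcase one picks the correct one of $j, k$ as the source of the backward witness so that the concatenation with a subpath of $Q$ lands in the right direction.
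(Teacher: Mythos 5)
Your proof is correct and takes essentially the same approach as the paper. The paper argues by contradiction that if $q_k$ lies on $Q$ between $q_i$ and some $q_j \in S_i$ with $q_k \notin S_i$, then composing the backward path $q_j \to q_i$ with forward subpaths of $Q$ yields a contradiction; this is exactly the in-betweenness closure you prove directly (and your four-case split is a touch more general than strictly needed, since $i \in S_i$ means it suffices to handle $m$ between $i$ and a single witness).
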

\begin{proof}
Assume by contradiction that there exist $q_j$ and $q_k$ such
that $q_k$ lies between $q_i$ and $q_j$ on $P$,
and $q_j$ is strongly connected to $q_i$, whereas
$q_k$ is not strongly connected to $q_i$.
Without loss of generality suppose $i<j$ (the case $i>j$ is symmetric).
We have $i<k<j$.
Since $q_i$ and $q_j$ are strongly connected,
there exists a path $q_j\to q_i$ in $G$.
However, since $q_k$ lies before $q_j$ on $P$,
there exists a path $q_k\to q_j\to q_i$ in $G$.
But $q_i$ lies before $q_k$ on $P$,
so there exists a $q_i\to q_k$ path in $G$.
We conclude that $q_i$ and $q_k$ are strongly connected, a contradiction.
\end{proof}

Georgiadis et al.~\cite{GeorgiadisIP17} showed the following theorem.
\begin{theorem}[\cite{GeorgiadisIP17}]\label{t:scc-fail-query}
  Let $G$ be a digraph. In linear time one can construct a data
  structure~supporting $O(1)$-time queries of the form ``are $u$ and $v$ strongly
  connected in $G-x$?'', where $u,v,x\in V(G)$.
\end{theorem}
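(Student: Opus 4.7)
The plan is to reduce the query to tests on dominator trees inside individual SCCs of $G$. First, in $O(n+m)$ time I would compute the SCCs of $G$ using Tarjan's algorithm. If $u$ and $v$ lie in distinct SCCs of $G$, they cannot be strongly connected in $G-x$ either, so we answer ``no'' in $O(1)$ time. Symmetrically, if the failing vertex $x$ does not lie in the common SCC of $u$ and $v$, then that SCC is preserved in $G-x$ and we answer ``yes''. Thus the remaining task is to handle, for each SCC $C$ of $G$ separately, queries where $u,v,x\in V(C)$.

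Within each such $C$ I would pick an arbitrary anchor $s\in V(C)$ and build in linear time the dominator tree $D$ of the flow graph $(G[V(C)],s)$ and the dominator tree $D^R$ of its reverse $(G[V(C)]^R,s)$. By Lemma~\ref{lemma:paths-through-SAP}, for any $x\in V(C)\setminus\{s\}$ the set of vertices that become unreachable from $s$ in $G-x$ is exactly $D[x]\setminus\{x\}$, and the set of vertices that can no longer reach $s$ in $G-x$ is exactly $D^R[x]\setminus\{x\}$. If neither $u$ nor $v$ lies in $D[x]\cup D^R[x]$, then in $G-x$ both still reach $s$ and are reached from $s$; hence they are strongly connected via $s$ and the answer is ``yes''. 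This ``through-$s$'' test reduces to a constant number of ancestor queries on $D$ and $D^R$, each answerable in $O(1)$ time after linear preprocessing via standard Euler-tour LCA machinery. The degenerate cases $x=s$, $x\in\{u,v\}$, or $s\in\{u,v\}$ can be dispatched as trivial base cases.

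The challenging case is when at least one of $u,v$ lies in $D[x]\cup D^R[x]$, so the routing through $s$ fails and $C$ may fragment into several strict sub-SCCs of $G-x$. The plan is to follow the strategy of \cite{GeorgiadisIP17}: recursively decompose the induced subgraphs $G[D[x]]$ and $G[D^R[x]]$ into their own SCCs, picking a new local anchor in each, and again build a pair of local dominator trees there. A charging argument based on dominator-tree parent pointers shows that the cumulative size of this nested family of local dominator trees is $O(n+m)$. The invariant the construction maintains is that every possible post-failure SCC can be identified by a short ``address'' -- essentially a tuple of ancestors in a bounded number of local dominator trees -- which is precomputed once for every vertex of $G$. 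Then ``are $u$ and $v$ strongly connected in $G-x$?'' reduces to a constant number of ancestor/descendant tests on these trees together with equality checks between the addresses of $u$ and $v$, each in $O(1)$ time.

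The main obstacle is the correctness and compactness of this recursive decomposition: one must show simultaneously that (i) the total size of all local dominator trees is $O(n+m)$, and (ii) for \emph{every} failing $x$, the precomputed addresses faithfully characterize which pairs remain in a common SCC of $G-x$. Once these two properties are established, assembling the full data structure and verifying the $O(1)$ query time is routine, yielding the claimed linear-preprocessing, constant-query oracle.
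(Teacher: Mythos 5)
Your reduction to SCCs, the choice of a single anchor $s$, and the observation that if neither $u$ nor $v$ lies in $D[x]\cup D^R[x]$ then both are still strongly connected to $s$ in $G-x$, are all correct and match the first part of the argument in~\cite{GeorgiadisIP17}. The problem is the hard case, and there you have a genuine gap. You propose to ``recursively decompose the induced subgraphs $G[D[x]]$ and $G[D^R[x]]$ into their own SCCs, picking a new local anchor in each, and again build a pair of local dominator trees there,'' and you assert that a charging argument bounds the total size of this nested family by $O(n+m)$. That bound is false as stated: a single vertex can lie in $D[x]$ for every ancestor $x$ of it in $D$, so the sets $D[x]$ (over all $x$) are nested intervals whose total size is $\Theta(n^2)$ in the worst case. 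A directed $n$-cycle with anchor $v_1$ is a concrete counterexample -- $D$ is the path $v_1,\ldots,v_n$, and every failure $x=v_i$ puts all other vertices into $D[x]\cup D^R[x]$, so there is no ``easy'' routing through $s$ for any $x$, and the sets $D[v_i]$ have total size $\Theta(n^2)$. A recursion over all failing $x$ therefore cannot be linear without a completely different organizing principle, and no parent-pointer charging rescues it.

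The key idea you are missing is the \emph{loop nesting tree}, which is precisely what~\cite{GeorgiadisIP17} adds to make the hard case work in $O(1)$ time after a single linear-time preprocessing. Besides $D$ and $D^R$, they compute a loop nesting tree $H$ of $G$ (and $H^R$ of $G^R$) from a DFS rooted at the same $s$; $H$ records, for each $v$, the nested family of maximal strongly connected subgraphs of $G$ induced by descendant sets of the DFS tree. Their structural theorem (restated in this paper as Theorem~\ref{cor:scc}) says that every SCC $C$ of $G-x$ is either the SCC containing $s$, or is an entire subtree $H[w]$ (resp.\ $H^R[w]$) where $w$ is the unique ancestor of any $v\in C$ in $H$ with $w\in D[x]\setminus\{x\}$ but $h(w)\notin D[x]\setminus\{x\}$; moreover $h(w)=nca_H(x,v)$. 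Hence for $u,v\in D[x]\setminus\{x\}$ the query reduces to computing $nca_H(x,u)$ and $nca_H(x,v)$ and comparing the children of these $nca$'s towards $u$ and $v$ -- $O(1)$ time with standard LCA and level-ancestor preprocessing on one fixed tree, not a per-$x$ family of recursive structures. Your sketch, by replacing this single global loop nesting tree with a per-failure recursion, both breaks the linear space bound and leaves the correctness of the ``address'' characterization unproven; neither (i) nor (ii) of your final paragraph actually holds for the structure you describe.
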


Hence, after linear preprocessing, by Lemma~\ref{l:path-scc} applied to $H=G$ and $Q=P$,
we could compute the vertex $\vl_{G-x}(v)$ (which, by Lemma~\ref{lem:from-exit-to-latest},
is the latest vertex of $P$ in the SCC of $\vls_{G-x}(v)\in V(P)$)
by using binary search.
Each step of binary search would take a single query to the data structure of~\cite{GeorgiadisIP17},
so computing $\vl_{G-x}(v)$ out of $\vls_{G-x}(v)$ would take $O(\log{n})$ time.

However, we can do better using Theorem~\ref{thm:find-maxID-in-SCC}.
In order to compute $\vl_{G-x}(v)$ out of $\vls_{G-x}(v)$
in constant time
using Theorem~\ref{thm:find-maxID-in-SCC}, we assign a label $f(v)$
to each vertex $v$ as follows: for each vertex $p_i\in V(P)$ we set $f(p_i)=i$.
For each vertex $w\notin V(P)$, we set $f(w)=0$.
The maximum labeled vertex in the SCC of $\vls_{G-x}(v)$
in $G-x$ is precisely $\vl_{G-x}(v)$.
Our final task is to show how to compute $\vls_{G-x}(v)$ efficiently.

\subsection{Computing $\vls_{G-x}(v)$.}\label{s:satellite}
For $i=\ell,\ldots,1$, define
the \emph{layer} $L_i$ to be the vertices of $V\setminus V(P)$
reachable from $p_i$ \emph{by a satellite path}, minus $\bigcup_{j=i+1}^{\ell} L_j$.
In other words, $L_\ell$ contains vertices not on the path $P$
that are reachable from $p_\ell$ by a satellite path,
and each subsequent layer $L_i$ contains vertices reachable from $p_i$
by a satellite path
that are not reachable from $p_{i+1},\ldots,p_\ell$ by a satellite path.
The layers $L_\ell,\ldots,L_1$ can be computed by performing
$\ell$ graph searches with starting points $p_\ell, \dots, p_1$.
The graph search never enters the vertices of $P$ (except the starting vertex)
or the vertices of previous layers.
This way, the total time needed to perform all $\ell$ graph searches is linear.

For each layer $L_i$ we also compute the dominator tree $D_i$ of $G[L_i\cup \{p_i\}]$
rooted at $p_i$. 
Denote by $d_i(w)$ the parent of $w\in L_i$ in $D_i$.
Since we have
$E(G[L_i\cup \{p_i\}]) \cap E(G[L_j\cup \{p_j\}]) = \emptyset$
for $i\not=j$, the dominator trees for all $i=1,\dots,\ell$ can be computed in linear time overall.

\newcommand{\ly}{layer}

Before proceeding with the computation, we need a few more definitions. 
Let $v\in V\setminus V(P)$.
We denote by $\ly(w)$ the index in $j\in [\ell]$ for which $w\in L_{j}$,
if it exists.
Recall that $\vls_{G}(v)$ is the latest vertex on $P$ with a satellite path to $v$ in $G$. 

Suppose $\vls_G(v)$ exists, since otherwise
$\vls_{G-x}(v)$ does not exist either.
We~distinguish two cases: (i) when $\ly(x)$ does not exist or $\ly(x)\not=\ly(v)$, and the more involved case (ii) when $\ly(x)=\ly(v)$.
We first show that case (i) is actually very easy to handle.

\begin{lemma}
  If $\ly(x)$ does not exist or $\ly(x) \not= \ly(v)$,  then $\vls_{G-x}(v)=\vls_{G}(v)$. 
\end{lemma}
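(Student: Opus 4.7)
The plan is to establish both inequalities $\vls_{G-x}(v) \preceq \vls_G(v)$ and $\vls_G(v) \preceq \vls_{G-x}(v)$. The first is immediate because any satellite path in $G-x$ is also one in $G$, so the second direction is the heart of the matter. I would first dispose of the trivial sub-case $v \in V(P)$, where by definition $\vls_{G-x}(v)=v=\vls_G(v)$, and from then on assume $v \notin V(P)$. Since $\vls_G(v)$ is assumed to exist, $v \in L_{\ly(v)}$, and in fact $\vls_G(v)=p_{\ly(v)}$: the definition of the layers makes $p_{\ly(v)}$ the latest index on $P$ from which $v$ is reachable by a satellite path. So I need to exhibit a single satellite path from $p_{\ly(v)}$ to $v$ in $G-x$.

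The key structural claim I would prove is: every intermediate vertex $u$ of any satellite path $Q$ from $p_{\ly(v)}$ to $v$ satisfies $u \in L_{\ly(v)}$. For the bound $\ly(u) \geq \ly(v)$, the prefix of $Q$ from $p_{\ly(v)}$ to $u$ is itself a satellite path (it avoids $V(P)$ as intermediate vertices since $Q$ does, and since $u \notin V(P)$), so $u$ is reachable from $p_{\ly(v)}$ by a satellite path and lies in some layer of index $\geq \ly(v)$. For the reverse inequality $\ly(u) \leq \ly(v)$, take any satellite path $R$ from $p_{\ly(u)}$ to $u$ and concatenate it with the suffix of $Q$ from $u$ to $v$; this walk uses no vertex of $V(P)$ except possibly $p_{\ly(u)}$, so extracting a simple subpath yields a satellite path from $p_{\ly(u)}$ to $v$, forcing $v$ to live in a layer of index $\geq \ly(u)$, i.e.\ $\ly(v) \geq \ly(u)$.

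Granted this structural claim, both cases follow cleanly. If $\ly(x)$ does not exist, then $x$ lies on no satellite path at all (otherwise its prefix would witness $x$ belonging to some layer), so in particular any satellite path from $p_{\ly(v)}$ to $v$ in $G$ survives the deletion of $x$. If $\ly(x) \neq \ly(v)$, then $x$ cannot be an intermediate vertex of such a path by the structural claim, cannot equal the startpoint $p_{\ly(v)} \in V(P)$ since $x \notin V(P)$, and cannot equal the endpoint $v$ since $v \in L_{\ly(v)}$ while $x$ lies either outside every layer or in a different one. In either case, any witnessing satellite path in $G$ persists in $G-x$, so $\vls_{G-x}(v) \succeq p_{\ly(v)} = \vls_G(v)$, completing the proof.

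The only step that requires any care is the structural claim, and specifically the direction $\ly(u) \leq \ly(v)$, which needs the small remark that a walk avoiding $V(P)$ internally can be shortened to a simple, hence satellite, path; the rest is a direct unwinding of the definitions of layers and satellite paths.
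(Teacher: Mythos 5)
Your proof is correct and follows essentially the same route as the paper: the nontrivial direction is showing that a satellite path from $p_{\ly(v)}$ to $v$ stays within $L_{\ly(v)}\cup\{p_{\ly(v)}\}$ and hence avoids $x$, which the paper asserts implicitly from the layer definition and you prove explicitly via your structural claim. The only cosmetic imprecision is the phrase ``$x$ lies on no satellite path at all'' when $\ly(x)$ does not exist — strictly, $x$ lies on no satellite path emanating from a vertex of $P$ — but your parenthetical justification shows you mean the right thing and the conclusion is unaffected.
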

\begin{proof}
  Let $l_v=\ly(v)$. By the definition of a layer, $\vls_G(v)=p_{l_v}$.
  There is a satellite path from $p_{l_v}$ to $v$ in $G[L_{l_v}\cup \{p_{l_v}\}]$, avoiding $x$, as $x\notin L_{l_v}$. 
  Consequently, $\vls_{G}(v)\preceq \vls_{G-x}(v)$.

  However, since $G-x$ is a strict subgraph of $G$, we also have $\vls_{G-x}(v)\preceq \vls_{G}(v)$.
  We conclude that indeed $\vls_{G-x}(v)=\vls_{G}(v)$.
\end{proof}

\newcommand{\und}{undom}

The most interesting case is when $\ly(x)=\ly(v)$, which we deal with as follows.
Set $l=\ly(x)=\ly(v)$.
We will exploit the dominator tree $D_l$. If $x$ does not dominate $v$ in $D_{l}$, then again one can
show that $\vls_{G-x}(v)=\vls_{G}(v)$.
Otherwise, we will show that it is enough to compute the latest vertex $p_q\in V(P)$,
for which there is a path from $p_q$ to some vertex of $D_{l}[x,v]$ in $G-x$.
In order to efficiently compute the appropriate vertices $p_q$,
we in turn show that it is sufficient to execute a precomputation phase during
which we store for each vertex $w\in V\setminus V(P)$ only the latest vertex
$\und(w)\in V(P)$ such that there
is a satellite path from $\und(w)$ to $w$ avoiding the parent
of $w$ in $D_{\ly(w)}$.
Finally, we explain
how to compute the values $\und(w)$ for all $w$ in $O(m \log m/ \log \log m)$ time.
Below we study the case $\ly(x)=\ly(v)$ in more detail.

\begin{lemma}
  If $\ly(x) = \ly(v)$ and $x$ is not an ancestor of $v$ in $D_{\ly(v)}$, then $\vls_{G-x}(v)=\vls_{G}(v)$.
\end{lemma}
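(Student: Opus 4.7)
The plan is to use the two-sided bound: since $G-x \subseteq G$, the inequality $\vls_{G-x}(v) \preceq \vls_G(v)$ is immediate (any satellite path to $v$ in $G-x$ is also one in $G$), so it remains to show $\vls_G(v) \preceq \vls_{G-x}(v)$.

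To establish the reverse inequality, I would set $l = \ly(v) = \ly(x)$ and recall that, by construction of the layers, $\vls_G(v) = p_l$: the vertex $v$ was placed in $L_l$ precisely because $p_l$ is the latest vertex of $P$ reaching $v$ by a satellite path. It therefore suffices to exhibit a satellite $p_l \to v$ path in $G - x$. Here I would appeal to the dominator tree $D_l$ of $G[L_l \cup \{p_l\}]$ rooted at $p_l$: by hypothesis, $x$ is not an ancestor of $v$ in $D_l$, i.e., $x$ does not dominate $v$ in this flow graph, so there exists some $p_l \to v$ path in $G[L_l \cup \{p_l\}] - x$. Any such path is entirely contained in $L_l \cup \{p_l\}$, and since $(L_l \cup \{p_l\}) \cap V(P) = \{p_l\}$, the path is a satellite path in $G$ with respect to $P$, and moreover it avoids $x$.

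Consequently $p_l \preceq \vls_{G-x}(v)$, which combined with the opposite inequality gives $\vls_{G-x}(v) = p_l = \vls_G(v)$, as claimed. The only subtlety worth emphasizing in the write-up is that a path living in $G[L_l \cup \{p_l\}]$ is automatically a satellite path with respect to $P$, because layer vertices are disjoint from $V(P)$ by definition; the rest is a direct application of the standard characterization of dominators via existence of avoiding paths, which is already recalled in Section~\ref{s:dominators}. I do not anticipate any real obstacle beyond this observation.
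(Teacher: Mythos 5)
Your proof is correct and matches the paper's argument: both establish $\vls_{G-x}(v)\preceq\vls_G(v)$ from $G-x\subseteq G$, then use that $x$ is not an ancestor of $v$ in $D_l$ to extract a $p_l\to v$ path in $G[L_l\cup\{p_l\}]$ avoiding $x$, which is automatically a satellite path since $L_l\cap V(P)=\emptyset$. The only difference is that you spell out the observation that such a path is satellite, which the paper leaves implicit.
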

\begin{proof}
We already argued that $\vls_{G-x}(v)\preceq\vls_{G}(v)$.
  Since $x$ is not an ancestor of $v$ in $D_{l}$, the dominator tree of $G[L_{l}\cup\{p_l\}]$,
  there is a path from $\vls_{G}(v)$ to $v$ in $G[L_l \cup \{p_l\}]$, avoiding $x$.
Hence, $\vls_{G-x}(v) = \vls_{G}(v)$. 
\end{proof}

Recall that we denote by $\und(w)$ the latest vertex on $P$ that has a satellite path to $w$ in $G-d_{\ly(w)}(w)$
(that is, avoiding the parent of $w$ in the dominator tree $D_{\ly(w)}$).
In other words,
$\und(w)=\vls_{G - d_{\ly(w)}(w)}(w)$. If no such vertex exists, $\und(w) = \perp$.

\begin{lemma}\label{l:undom-express}
  Let $Q$ be a satellite path from $p_k=\und(w)$ to $w$ in $G$ avoiding $d_{l}(w)$, where $l=\ly(w)$.
  Then $k<l$ and $Q$ can be expressed as $Q=Q_1 Q_2 Q_3$, such that $\und(w)\to y=Q_1 \subseteq G[L_{k}\cup \{p_k\}]$,
  $Q_2 =yz\in V(L_{k})\times V(L_{l})$ (i.e., $Q_2$ is a single-edge path), and $z\to w=Q_3 \subseteq G[D_l[d_l(w)]\setminus\{d_l(w)\}]$.
\end{lemma}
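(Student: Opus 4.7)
The plan is to prove four structural facts about $Q$, which together yield the decomposition: (i) $k<l$; (ii) every intermediate vertex of $Q$ lies in $L_k\cup L_l$; (iii) once $Q$ enters $L_l$ it cannot return to $L_k$; and (iv) the suffix of $Q$ lying in $L_l$ is contained in $G[D_l[d_l(w)] \setminus \{d_l(w)\}]$.

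The inequality $k \leq l$ is immediate because $w \in L_l$ implies $w$ is not reachable by a satellite path from $p_{l+1}, \ldots, p_\ell$. To rule out $k = l$ I will first show that any satellite path from $p_l$ to $w$ has all intermediate vertices in $L_l$: an intermediate $u \in L_j$ with $j > l$ gives, via concatenation of the layer-witnessing satellite path $p_j \to u$ with $Q[u,w]$ and extraction of a simple subpath, a satellite path from $p_j$ to $w$, contradicting $w \in L_l$; an intermediate $u \in L_j$ with $j < l$ contradicts the definition of $L_j$ since $Q[p_l, u]$ is itself a satellite path from $p_l$. Hence $Q \subseteq G[L_l \cup \{p_l\}]$, and then the dominator property of $d_l(w)$ in $D_l$ forces $Q$ through $d_l(w)$, contradicting the hypothesis. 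Therefore $k < l$.

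For part (ii), an analogous case analysis applies to every intermediate $u \in L_j$ of $Q$: $j > l$ and $j < k$ are ruled out just as above. The decisive case is $k < j < l$, where the maximality of $\und(w)$ comes in. By the same layer-consistency argument, every satellite path from $p_j$ to $u \in L_j$ stays inside $G[L_j \cup \{p_j\}]$; and because $d_l(w) \in L_l \cup \{p_l\}$ with $l \neq j$, this subgraph does not contain $d_l(w)$. Concatenating such a path with $Q[u,w]$ (which avoids $d_l(w)$ as a subpath of $Q$) and simple-path extracting yields a satellite path from $p_j$ to $w$ in $G - d_l(w)$, contradicting $\und(w)=p_k$. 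For part (iii), if $u \in L_l$ precedes $u' \in L_k$ on $Q$, then a satellite path $p_l \to u$ inside $G[L_l \cup \{p_l\}]$ concatenated with $Q[u, u']$ gives a satellite path from $p_l$ to $u'$, violating $u' \in L_k$ since $k < l$.

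Parts (i)--(iii) immediately produce the decomposition: $Q_1$ is the prefix of $Q$ inside $\{p_k\} \cup L_k$, $Q_2 = yz$ is the unique crossing edge with $z \in L_l$ the first such vertex encountered, and $Q_3 \subseteq G[L_l]$. For part (iv), if some vertex $u$ of $Q_3$ were not dominated by $d_l(w)$ in $D_l$, then a $p_l \to u$ path in $G[L_l \cup \{p_l\}] - d_l(w)$ would exist; concatenated with the $u \to w$ suffix of $Q_3 \subseteq G[L_l] - d_l(w)$, this would give a $p_l \to w$ path avoiding $d_l(w)$, contradicting that $d_l(w)$ is the immediate dominator of $w$. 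Since $Q$ avoids $d_l(w)$, $u \neq d_l(w)$, so $u \in D_l[d_l(w)] \setminus \{d_l(w)\}$. I expect the main obstacle to be the case $k < j < l$ in (ii): it is the only place where the maximality of $\und(w)$ is used critically, and it requires carefully exploiting both the layer-confinement of the witnessing satellite path from $p_j$ (so that the path lies inside $G[L_j \cup \{p_j\}]$) and the cross-layer disjointness $d_l(w) \notin L_j \cup \{p_j\}$ (so that ``avoids $d_l(w)$'' survives the concatenation).
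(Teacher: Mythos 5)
Your proof is correct and follows essentially the same route as the paper's: establish $k<l$, use layer monotonicity to confine intermediate vertices of $Q$ to $L_k\cup L_l$ and rule out $k<j<l$ via the maximality of $\und(w)$, then show the $L_l$ suffix lives in $D_l[d_l(w)]\setminus\{d_l(w)\}$. The only cosmetic difference is in the last step, where you argue from the definition of domination (constructing a $p_l\to u\to w$ path avoiding $d_l(w)$) whereas the paper invokes Lemma~\ref{lemma:paths-through-SAP} directly on the $t\to w$ subpath; both are equivalent one-liners.
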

\begin{proof}
  First observe that $k<l$ follows from the fact that $w$ is not reachable
  from $p_{l+1},\ldots,p_{\ell}$ at all in the corresponding layers (by the definition of $L_l$),
  and it is not reachable from $p_l$ in $G[L_l \cup p_l]-d_l(w)$ by the definition of a dominator tree.

  Moreover, by the definition of layers, there is no edge in $G$ 
  from a vertex of $L_i$ to a vertex of $L_j$, where $i>j$.
  In particular, there is no edge from $L_l$ to $L_k$.

  Since the edges in $G$ can only go from lower-numbered
  layers to the higher-numbered ones, and $Q$ is a satellite
  path, $Q$ cannot visit a lower-numbered layer after
  visiting a higher-numbered layer.
  Suppose that $Q$ goes through a vertex of $a\in L_j$, where $k<j<l$.
  Let $R=a\to w$ be a subpath of $Q$.
  By the definition of $Q$, $R$ does not go through $d_l(w)$.
  Moreover, $a$ is reachable from $p_j$ in $G-d_l(w)$ (since $a\in L_j$ and $d_l(w)\in L_l$)
  by a satellite path.
  Consequently, $w$ is reachable from $p_j$ in $G-d_l(w)$
  by a satellite path, which contradicts the fact that $p_k=\vls_{G-d_l(w)}(w)$.
  We conclude that indeed $V(Q)\subseteq L_k\cup \{p_k\}\cup L_l$,
  and the vertices of $L_l$ appear on $Q$ only after the vertices of $L_k$.
  Hence $Q$ can be expressed as $Q_1(yz)Q_3$, where $Q_1\subseteq G[L_k\cup\{p_k\}]$
  and $Q_3\subseteq G[L_l]$.
  
  It remains to prove that in fact we have $Q_3\subseteq  G[D_l[d_l(w)]\setminus\{d_l(w)\}]$.
  Clearly, $d_l(w)\notin V(Q_3)$ since $Q$ avoids $d_l(w)$.
  Suppose a vertex $t\in V(Q_3)\cap (L_l\setminus D_l[d_l(w)])$ exists.
  Then, by Lemma~\ref{lemma:paths-through-SAP} 
  the subpath $t\to w$ of $Q_3$
  has to go through $d_l(w)$, a contradiction.
\end{proof}

For convenience we identify $\perp$ with $p_0$ and extend the order $\prec$ so
that $\perp\prec p_i$ for all $i=1,\ldots,\ell$.

\begin{lemma}\label{l:pathmax}
  If $\ly(x)= \ly(v)$ and $x$ is an ancestor of $v$ in $D_{\ly(v)}$, then $\vls_{G-x}(v)=p_q$, where $ q=\max\{t: p_t =\und(w), w \in D_{\ly(v)}(x,v]\}$. 
\end{lemma}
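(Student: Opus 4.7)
The plan is to establish $\vls_{G-x}(v) = p_q$ by proving both inequalities $p_q \preceq \vls_{G-x}(v)$ and $\vls_{G-x}(v) \preceq p_q$. Throughout, let $l=\ly(x)=\ly(v)$ and write the tree path $D_l[x,v]$ as $x=a_0,a_1,\ldots,a_k=v$, where $a_j$ is the child of $a_{j-1}$ in $D_l$; so $D_l(x,v]=\{a_1,\ldots,a_k\}$ and $d_l(a_i)=a_{i-1}$ for $i\ge 1$. Note also that $x\in L_l$ implies $x\neq p_l$, so $x$ is a proper descendant of the root in $D_l$.

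For the direction $p_q\preceq\vls_{G-x}(v)$, I would pick a witness $w\in D_l(x,v]$ with $\und(w)=p_q$ and construct a satellite $p_q\to v$ path in $G-x$. By definition, there is a satellite path $Q=p_q\to w$ in $G-d_l(w)$; I would then apply Lemma~\ref{l:undom-express} to decompose $Q=Q_1Q_2Q_3$ and argue that $Q$ additionally avoids $x$. If $d_l(w)=x$ this is immediate. Otherwise, $x$ is a \emph{strict} ancestor of $d_l(w)$ in $D_l$, so $x\notin D_l[d_l(w)]$; combined with $\ly(Q_1)=\ly(p_q)<l$ (so $x\notin V(Q_1)$) and $V(Q_3)\cup\{z\}\subseteq D_l[d_l(w)]\setminus\{d_l(w)\}$, this forces $Q$ to avoid $x$. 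To extend from $w$ to $v$ I would use Lemma~\ref{lemma:paths-ancestor-descendant-dom} applied to the flow graph $G[L_l\cup\{p_l\}]$: if $w=v$ this is trivial, and if $w\neq v$ then $w$ is a non-leaf in $D_l$ and the lemma yields a path $w\to v$ in $G[D_l[w]]$, which avoids $x$ since $x\notin D_l[w]$ and lies in $L_l$, hence doesn't meet $V(P)$. Concatenating gives a satellite walk $p_q\to v$ in $G-x$.

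For the direction $\vls_{G-x}(v)\preceq p_q$, let $p_s=\vls_{G-x}(v)$ and take any satellite path $R$ from $p_s$ to $v$ in $G-x$. Let $i\ge 1$ be the smallest index with $a_i\in V(R)$ ($i$ exists because $a_k=v\in V(R)$), and set $w=a_i\in D_l(x,v]$. Let $R_1$ be the subpath $p_s\to w$ of $R$. Then $R_1$ is satellite (since $R$ is satellite and $w\notin V(P)$, we have $V(R_1)\cap V(P)\subseteq\{p_s\}$) and avoids $d_l(w)=a_{i-1}$: for $i=1$ this equals $x$, which $R$ avoids; for $i\ge 2$ this is by minimality of $i$. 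Hence $\und(w)\succeq p_s$, and since $w\in D_l(x,v]$, the definition of $q$ gives $p_q\succeq\und(w)\succeq p_s$.

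The main obstacle is the first direction: the definition of $\und(w)$ only guarantees avoidance of $d_l(w)$, but we need a path avoiding $x$, which may be a strict ancestor of $d_l(w)$ in $D_l$. This is precisely where the structural decomposition from Lemma~\ref{l:undom-express} is essential, since it pins down where each piece of the witness path may lie (a strictly lower layer for $Q_1$, and the subtree $D_l[d_l(w)]\setminus\{d_l(w)\}$ for $Q_3$), both of which miss $x$. The extension from $w$ to $v$ via Lemma~\ref{lemma:paths-ancestor-descendant-dom} is then a clean bookkeeping step.
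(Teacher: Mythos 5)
Your proof is correct and follows essentially the same route as the paper: the forward direction rests on Lemma~\ref{l:undom-express} to show that the satellite path witnessing $\und(w)$ already avoids $x$ (since its $L_l$-portion lies in $D_l[d_l(w)]\setminus\{d_l(w)\}\subseteq D_l[x]\setminus\{x\}$), and the reverse direction analyzes the first vertex of $D_l(x,v]$ met along a satellite path into $v$, concluding $\und(w)\succeq\vls_{G-x}(v)$. You are somewhat more careful than the paper in one place: the paper's forward direction only exhibits the satellite path $\und(w)\to w$ avoiding $x$ and leaves the continuation $w\to v$ unstated, whereas you close that step explicitly via Lemma~\ref{lemma:paths-ancestor-descendant-dom}, obtaining a $w\to v$ path confined to $G[D_l[w]]$ (hence disjoint from both $x$ and $V(P)$) and correctly observing that the resulting concatenation is a satellite \emph{walk}, which is all that reachability requires. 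Your choice of $w$ as $a_i$ with minimal index on the tree path, versus the paper's ``earliest vertex of $D_l(x,v]$ encountered along $Q$,'' is an interchangeable variant; both give immediately that the prefix avoids $d_l(w)$.
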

\begin{proof}
  Let $l=\ly(v)=\ly(x)$.
  We first show that for all $w\in D_{l}(x,v]$, if $\und(w)\neq\perp$ then the vertex $\und(w)$ has a satellite path to $v$ avoiding $x$ in $G$.
  By this, $p_q\preceq \vls_{G-x}(v)$  follows.
  Consider a satellite path $Q$ from $\und(w)$ to $w$, avoiding $d_{l}(w)$. 
  By Lemma~\ref{l:undom-express}, $V(Q)\cap L_l\subseteq D_l[d_l(w)]\setminus\{d_l(w)\}$.
  Since $D_l[d_l(w)]\setminus\{d_l(w)\}\subseteq D_l[x]\setminus \{x\}$, $Q$ avoids $x$.

  Now suppose that $p_q\prec \vls_{G-x}(v)$.
  Then, $\und(w)\prec \vls_{G-x}(v)$ for all $w\in D_{l}(x,v]$.
Take any simple satellite path $Q$ from $\vls_{G-x}(v)$ to $v$ in $G-x$, and consider the earliest vertex $w\in D_{l}(x,v]$ that it contains.
  Then the $\vls_{G-x}(v)\to w$ subpath of $Q$ avoids all vertices on $D_{l}(x,w)$ (including $d_{l}(w)$).
This contradicts that $\und(w)$ is the latest vertex on $P$ that has a satellite path to $w$ in $G-d_{l}(w)$.
  Hence, $\vls_{G-x}(v)\preceq p_q$ and we obtain $\vls_{G-x}(v)=p_q$.
\end{proof}

Suppose we have the vertices $\und(w)$ computed for all $w\in V\setminus V(P)$.
For any $i=1,\ldots,\ell$, let $D_i'$ be the tree $D_i$ with labels on its
edges added, such that the label of an edge $d_i(w)w$, where $w\in L_i$, is equal to 
$\und(w)$.
Then, by Lemma~\ref{l:pathmax}, computing $\vls_{G-x}(v)$ when $l=\ly(x)=\ly(v)$
and $x$ is an ancestor of $v$ in $D_l$
can be reduced to finding a maximum label on the $x'$ to $v$ path of $D_l'$,
where $x'$ is the child of $x$ in $D_{l}$ that $v\in D_l[x']$.
Such queries can be answered in constant time after linear preprocessing of $D_l'$:
we can use the data structure of~\cite{BenderF04} for finding $x'$,
and the data structure of~\cite{DemaineLW14} for finding the maximum label on a path.

\subsection{Computing $undom(w)$, for all $w$, in $O(n \log n/\log \log n)$ time.}

\newcommand{\undp}{ud}

Our final task is to compute the vertices $\und(w)$ for all $w\in V\setminus V(P)$.
Let initially $\undp(w):=\perp$ for all $w\in V\setminus V(P)$.
The goal is to eventually obtain $\undp(w)=\und(w)$ for each $w$.
The computation will be divided into $\ell$ phases numbered $\ell$ down to $1$.
During the phase $i$, we process all edges
$xy\in (V\times L_i)\cap E(G)$.
Recall that by the definition of the levels, there is no edge $e\in V(L_j) \times V(L_i)$ for $j>i$.

Each phase is subdivided into rounds, where each round processes a single edge
from $E(G)\cap ((V\setminus L_i)\times L_i)$ and some edges from $E(G)\cap (L_i\times L_i)$.
The edges originating in ``later'' layers are processed before the edges originating in ``earlier'' layers.
Consider the round processing an edge $xy\in L_j\times L_i$, where $j<i$.
We start by initializing a queue of edges $Q$, initially containing only $xy$.
While $Q$ is not empty, we extract an edge $zw$ from $Q$.
We test whether $\undp(w) \not= \perp$, and if so, we set $\undp(w) := p_j$, we remove from $G$ all edges from $E(G)\cap (D_i[w]\times (L_i\setminus D_i[w]))$
and push them to $Q$. 
The round ends when $Q$ becomes empty.
Afterwards, we proceed with the next round.
A phase ends when there are no unprocessed edges incident to a vertex of $L_i$.

We now prove the correctness of the above procedure.
\begin{lemma}\label{l:correct}
  Fix some phase $i$.
  After all rounds processing edges from $$E(G)\cap ((L_j\cup L_{j+1}\cup\ldots L_{i-1})\times L_i),$$
  we have $\undp(w)=\und(w)$ for all $w\in L_i$ such that
  $p_j\preceq \und(w)$ and $\undp(w)=\perp$ for all $w\in L_i$ such that
  $\und(w)\prec p_j$.
\end{lemma}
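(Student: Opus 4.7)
The plan is to induct on $j$ going from $j=i$ down to $j=1$. The base case $j=i$ is trivial: before any round of phase $i$ runs, $\undp(w)=\perp$ for every $w\in L_i$, and Lemma~\ref{l:undom-express} forces $\und(w)\prec p_i$ for every such $w$. For the inductive step I would assume the claim holds after processing all rounds for edges in $(L_{j+1}\cup\ldots\cup L_{i-1})\times L_i$ and consider the additional rounds for $L_j\times L_i$. Three sub-claims suffice: (a) no previously assigned value of $\undp$ is overwritten, which is immediate because a round only updates $\undp(w)$ when it is currently $\perp$; (b) \emph{soundness}: whenever a round for $L_j\times L_i$ sets $\undp(w):=p_j$, we have $\und(w)=p_j$; and (c) \emph{completeness}: every $w$ with $\und(w)=p_j$ is assigned $\undp(w):=p_j$ during the $L_j$ rounds.

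For (b) I would trace the BFS chain responsible for the assignment $\undp(w):=p_j$. This chain visits a sequence of subtree roots $y_0,y_1,\ldots,y_t=w$, where $y_0$ is the $L_i$-endpoint of the round's seed edge $xy_0\in L_j\times L_i$ and every later $y_{s+1}$ is reached via an exit edge $z_{s+1}y_{s+1}$ previously pushed from some $D_i[y_{s'}]$ with $s'\le s$. Concatenating a satellite prefix $p_j\to x$ in $G[L_j\cup\{p_j\}]$ with the edge $xy_0$ and, for each $r$, an internal path $y_r\to z_{r+1}$ in $G[D_i[y_r]]$ supplied by Lemma~\ref{lemma:paths-ancestor-descendant-dom}, produces a walk $p_j\to w$. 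Iterated applications of Lemma~\ref{lemma:parent-property} show that $d_i(w)$ is a strict ancestor of every $y_s$ in $D_i$ and hence lies outside each $D_i[y_r]$ segment, so the walk avoids $d_i(w)$; any simple sub-path gives a satellite path witnessing $\und(w)\succeq p_j$, and the inductive hypothesis (which forces $\und(w)\prec p_{j+1}$ since $\undp(w)$ was $\perp$ just before the $L_j$ rounds) pins $\und(w)=p_j$.

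For (c) I would fix $w$ with $\und(w)=p_j$ and let $Q=Q_1(yz)Q_3$ be a witness satellite path provided by Lemma~\ref{l:undom-express}, with $y\in L_j$, $z\in L_i\cap(D_i[d_i(w)]\setminus\{d_i(w)\})$, and $Q_3\subseteq G[D_i[d_i(w)]\setminus\{d_i(w)\}]$. At the start of the round for $yz$ either $\undp(z)=\perp$, in which case the BFS assigns $\undp(z):=p_j$ and pushes all exit edges of $D_i[z]$, or $\undp(z)\ne\perp$, in which case the inductive hypothesis guarantees the same exit edges were already pushed and processed in an earlier round of phase~$i$. A secondary induction on $|V(Q_3)|$ then shows the BFS eventually marks $w$: by Lemma~\ref{lemma:parent-property}, every edge of $Q_3$ that crosses between subtrees rooted at distinct children of $d_i(w)$ must target the root of the destination subtree, so the final crossing of $Q_3$ into $D_i[w]$ from outside must land exactly at $w$; if instead $Q_3$ eventually remains in $D_i[w']$ for some strict descendant $w'$ of $w$, then the final edge $vw$ of $Q_3$ is an exit edge of $D_i[w']$ and is pushed once the BFS marks $w'$.

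The main obstacle is completeness, where the witness path $Q_3$ may hop between many sibling subtrees of $d_i(w)$ in $D_i$, while the BFS only records entries at subtree roots. The structural fact that makes the induction on $|V(Q_3)|$ go through is the observation above: Lemma~\ref{lemma:parent-property} together with the fact that $Q_3$ avoids $d_i(w)$ forces every cross-subtree edge in $Q_3$ to land at the root of the new subtree — exactly the kind of vertex the BFS marks — which in turn forces the last entry of $Q_3$ into $D_i[w]$ to be at $w$ itself.
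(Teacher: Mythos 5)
Your soundness argument (sub-claim (b)) tracks the paper's proof closely and is essentially correct: you trace the BFS dependency chain and combine Lemma~\ref{lemma:parent-property} with Lemma~\ref{lemma:paths-ancestor-descendant-dom} (and implicitly Lemma~\ref{lemma:paths-through-SAP}) to obtain a satellite walk from $p_j$ to $w$ avoiding $d_i(w)$, then use the outer inductive hypothesis over $j$ to pin $\und(w)=p_j$. Minor imprecision aside, this matches the paper's induction over the order of queue extractions.

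The completeness argument (sub-claim (c)) has a real gap. Your ``secondary induction on $|V(Q_3)|$'' cannot be invoked on a shorter prefix $Q_3[z\to v]$ because doing so would require that prefix to witness $\und(v)=p_j$, i.e., to avoid $d_i(v)$. But $Q_3$ is only guaranteed to live in $G[D_i[d_i(w)]\setminus\{d_i(w)\}]$, and for a $v$ that lies strictly below $w$ the vertex $d_i(v)$ is a \emph{proper descendant} of $d_i(w)$, so the prefix may well pass through $d_i(v)$. Hence $\und(v)$ can be $\succ p_j$ or even $\perp$, and the inductive hypothesis gives you nothing about $v$. Your observation that edges of $Q_3$ crossing between sibling subtrees of $d_i(w)$ must land at the root of the destination subtree is correct (it is the same fact the paper records as ``$d_i(y_k)$ is a proper ancestor of $y_{k-1}$''), but it does not by itself explain why the BFS started from the seed edge $yz$ in round $p_j$ would ever reach $w$: the exit edges you would like to ride could already have been consumed in earlier rounds by intermediate vertices whose $\undp$ was set to a later $p_{j'}$, without propagating $p_j$ anywhere. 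The paper closes this hole differently: it picks the counterexample $(w,Q)$ extremal --- minimizing first the round of the seed edge $xy$ and then the number $g$ of non-parent-child edges in $Q_3$ --- and then it \emph{proves}, using Lemma~\ref{l:undom-express} on a hypothetical better witness for $y_k$ together with the nesting $D_i[d_i(y_g)]\supseteq D_i[d_i(y_{g-1})]\supseteq\cdots$, that every target $y_k$ ($k<g$) of a non-parent-child edge in fact satisfies $\und(y_k)=p_j$. Only with that fact in hand can extremality force $\undp(y_{g-1})=p_j$ to be set, which then triggers the push of $x_gy_g$ to $w$ and yields the contradiction. Your proposal is missing the step $\und(y_k)=p_j$, and without it neither the paper's extremal argument nor your path-length induction can close.
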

\begin{proof}
We proceed by induction on $j$. The base case is $j=i$ and then the claim
is clearly true.

Suppose $j< i$ and that the claim holds for all $j'>j$.
  Since edges $xy\in E(G)\cap (L_j\times L_i)$ are processed
  always after $E(G)\cap ((L_{j+1}\cup\ldots L_{i-1})\times L_i)$,
  and the procedure can only set $\undp(w):=j$ if $\undp(w)=\perp$
  when processing such an $xy$,
  we need to prove that for all $w\in L_i$:
  \begin{enumerate}[label={(\arabic*)}]
    \item if the procedure
  sets $\undp(w)=p_j$ then there exists a satellite $p_j\to w$ path
  avoiding $d_i(w)$, and
    \item if $\undp(w)=\perp$ before
  processing edges originating in $L_j$ and there is a satellite $p_j\to w$ path
  avoiding $d_i(w)$, then the procedure will set $\undp(w)=p_j$.
  \end{enumerate}

  Let us start with proving (1).
We prove this by induction on the edges extracted from the queue~$Q$, during the round for which we set $\undp(w) = p_j$. 
The base case is when the first edge gets extracted from $Q$, that is,
  the edge $xy\in E(G)\cap (L_j\times L_i)$ that initiates the round. 
In this case, there is a path from $p_j$ to $x$ in $L_j$ (which avoids $d_i(w)$ as $d_i(w)\notin L_j$) followed by the edge $xy$. 
Hence the claim holds for the base case.
Now suppose the claim holds for all edges that
have been extracted from $Q$ prior to some edge $zw$ in this round.
Consider the moment when $zw$ was inserted into $Q$; it was while another edge $e'=z'w'$
  was being processed, we correctly set $\undp(w') = p_j$ (by induction), and $zw$ was an edge from $E(G)\cap (D_i[w']\times (L_i\setminus D_i[w'])$. 
  Moreover, by Lemma~\ref{lemma:parent-property}, $d_i(w)$ is an ancestor of $z$ in $D_i$. 
  Since $w \notin D_i[w']$, $d_i(w)$ is also ancestor of $w'$ in $D_i$, as $w'$ is an ancestor of $z$. 
  Overall, we have a path from $p_j$ to $w'$ avoiding $d_i(w')$, and all ancestors of $d_i(w')$ in $D_i$ (as otherwise, by Lemma~\ref{lemma:paths-through-SAP},
  this $p_j\to w'$ path would go through $d_i(w')$, which would be a contradiction),
  in particular $d_i(w)$.
  By Lemma \ref{lemma:paths-ancestor-descendant-dom}, $w'$ has a path to $z$ avoiding all vertices from $V\setminus D_i[w']$ (including $d_i(w)$).
Hence, combined with edge $zw$, there is a path from $p_j$ to $w$ avoiding $d_i(w)$.
This concludes the first part of the proof.

  Now let us consider item (2). 
  For a contradiction, suppose there exists such $w\in L_i$
  that $\undp(w)=\perp$ before processing the edges originating in $L_j$,
  and there exists a simple satellite path $Q=p_j\to w$ avoiding
  $d_i(w)$, but the procedure does not set $\undp(w):=p_j$.
  By the inductive hypothesis, $\und(w)=p_j$.
  Hence, by Lemma~\ref{l:undom-express}, $Q$ first goes through the layer
  $L_j$, then through a single edge in $xy\in L_j\times L_i$,
  and then through the layer $L_i$.
  Out of such vertices $w$ and paths $Q$, choose such $w$ and $Q$
  so that the round when $xy$ is processed is earliest possible
  and, in case of ties, $Q$ has the minimal number of edges in $L_i\times L_i$
  that are not of the form $d_i(a)a$ (i.e., do not go
  from a parent to a child in $D_i$).
  Let $Q=Q_1(xy)Q_3$.
  Let all the edges not of the form $d_i(a)a$ in $Q_3$
  be $x_1y_1,\ldots,x_gy_g$ in order of their appearance on $Q_3$.
  We possibly have $g=0$.
  Set $y_0=y$.
  
  We first show that for all $k=1,\ldots,g$, $d_i(y_k)$ is a proper ancestor of $y_{k-1}$ in $D_i$.
  Suppose $d_i(y_k)$ is not a proper ancestor of $y_{k-1}$.
  Since by Lemma~\ref{lemma:parent-property} $d_i(y_k)$ is an ancestor of $x_k$,
  and all the edges in the subpath $y_{k-1}\to x_k$ go to children in
  the dominator tree, $d_i(y_k)\in D_i[y_{k-1}]$.
  Hence, the subpath $y_{k-1}\to y_k$ of $Q$ could be replaced
  with a $y_{k-1}\to y_k$ path that goes only to the children
  in the dominator tree, contradicting the minimality of $g$.
  Similarly we can show that $y_{k}\neq y_{k-1}$
  for all $k=1,\ldots,g$.

  It follows that $d_i(y_k)$ is an ancestor of $d_i(y_{k-1})$ and also
  that each subpath $y\to y_{k}$ goes only through vertices
  of $D_i[d_i(y_k)]\setminus \{d_i(y_k)\}$.
  In particular, for each $y_k$, the subpath $p_j\to y_k$
  avoids $d_i(y_k)$.
  Observe that we also necessarily have $y_g=w$,
  as otherwise, $w$ would be a proper descendant of $y_g$,
  and thus the subpath $y_g\to w$ would go through $d_i(w)$.

  Suppose that for some $k\in\{0,\ldots,g-1\}$ we have $\und(y_k)\neq p_j$.
  Then, since a satellite path $p_j\to y_k$ avoiding $d_i(y_k)$ exists,
  $p_j\prec \und(y_k)=p_{j'}$.
  Let $R$ be a $p_{j'}\to y_k$ satellite path avoiding $d_i(y_k)$.
  By Lemma~\ref{l:undom-express}, all the vertices of $L_i$
  that $R$ goes through lie in $D_i[d_i(y_k)]\setminus \{d_i(y_k)\}$.
  Consequently, $R$ avoids $d_i(w)$ since $d_i(w)=d_i(y_g)$ is an
  ancestor of $d_i(y_k)$.
  So, by replacing in $Q$ the $p_i\to y_k$ subpath with $R$
  we would obtain a $p_j'\to w$ satellite path avoiding $d_i(w)$,
  which would contradict $\und(w)=p_j$.
  
  By the fact that $xy$ is the earliest possible processed edge
  and the minimality of $g$, it follows that for all $y_k=y_0,\ldots,y_{g-1}$
  the algorithm sets $\undp(y_k)=p_j$.
  In particular, after the algorithm sets $\undp(y_{g-1}):=p_j$,
  for all remaining edges $x'y'\in L_i\times (L_i\setminus D_i[y_{g-1}])$,
  such that $\undp(y')=\perp$, $\undp(y'):=p_j$ is set.
  Note that by the fact that $d_i(y_g)$ is a proper ancestor
  of $y_{g-1}$ and $y_g\neq y_{g-1}$, it follows that
  $x_gy_g\in L_i\times (L_i\setminus D_i[y_{g-1}])$.
  If $x_gy_g\notin E(G)$ at this point, then $\undp(w=y_g)$ has
  already been set to something else than $\perp$ -- a contradiction.
  Otherwise, $x_gy_g$ will be processed at this point
  and therefore $\undp(w=y_g)$ will be set to $p_j$, a contradiction.
\end{proof}

\begin{corollary}
  After the algorithm finishes, we have $\undp(w)=\und(w)$ for all $w\in V\setminus V(P)$.
\end{corollary}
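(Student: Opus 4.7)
The plan is to derive the corollary essentially as a wrapper around Lemma~\ref{l:correct}, which already does the heavy lifting. For an arbitrary $w\in V\setminus V(P)$, I would first locate the unique index $i=\ly(w)$ such that $w\in L_i$, and reduce the claim to understanding the value of $\undp(w)$ at the end of phase $i$.

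The first step is to argue that $\undp(w)$ is never touched outside of phase $i$. Reading the algorithm, during any phase $k$ the procedure only processes edges in $(V\setminus L_k)\times L_k$ or $L_k\times L_k$, and each write assigns a value to $\undp(y)$ for some $y\in L_k$. So if $k\neq i=\ly(w)$, then $\undp(w)$ is not modified in phase $k$, and the value of $\undp(w)$ at the very end of the algorithm equals its value right after phase $i$ completes.

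The second step is to apply Lemma~\ref{l:correct} with $j=1$, since phase $i$ terminates precisely when all edges from $E(G)\cap((L_1\cup\ldots\cup L_{i-1})\times L_i)$ have been processed (edges in $L_i\times L_i$ are handled within the individual rounds). The lemma then yields that $\undp(w)=\und(w)$ whenever $p_1\preceq\und(w)$, and $\undp(w)=\perp$ whenever $\und(w)\prec p_1$. I would then split into two cases using the fact, established in the proof of Lemma~\ref{l:undom-express}, that $\und(w)$ is either $\perp$ or lies in some layer $L_k$ with $k<i$: in the former case $\und(w)=p_0\prec p_1$, so $\undp(w)=\perp=\und(w)$; in the latter $\und(w)=p_k$ with $k\geq 1$, so $p_1\preceq\und(w)$ and $\undp(w)=\und(w)$.

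I do not expect a serious obstacle here, since the corollary is really just Lemma~\ref{l:correct} specialized to $j=1$ and aggregated over all phases. The only point requiring a bit of care is the bookkeeping observation that phase $i$ is the only phase able to alter $\undp(w)$ for $w\in L_i$, so that the ``final'' value coincides with the value at the end of phase $i$; once that is in place, the extension of the ordering by $p_0=\perp\prec p_1$ makes both the ``$\und(w)\neq\perp$'' and ``$\und(w)=\perp$'' cases fall under the same invariant.
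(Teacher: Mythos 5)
Your proof is correct and follows the same route as the paper, which simply invokes Lemma~\ref{l:correct} with $j=1$ for each phase $i$; you just spell out the bookkeeping step (only phase $i=\ly(w)$ can write to $\undp(w)$) and the case split on whether $\und(w)=\perp$ or $\und(w)=p_k$ with $k\geq 1$, both of which are straightforward and correctly handled.
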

\begin{proof}
  We apply Lemma~\ref{l:correct} for all $i$ and $j=1$.
\end{proof}

Finally, we now analyze the time complexity of the procedure
that computes the vertices $\und(w)$ for all $w\in V\setminus V(P)$.
\begin{lemma}
  The algorithm from this section can be implemented to run in time $O\left(m \frac{\log m}{\log \log m}\right)$.
\end{lemma}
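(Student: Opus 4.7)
The plan is to implement the per-layer operations using the dynamic two-dimensional orthogonal range-reporting data structure of Chan and Tsakalidis~\cite{ChanT17}, which supports insertions, deletions, and output-sensitive reporting queries in $O(\log n/\log\log n)$ amortized time per operation (and per reported point). First I would preprocess each dominator tree $D_i$ with a preorder traversal so that each subtree $D_i[w]$ is represented by a contiguous interval $[\mathrm{in}_i(w),\mathrm{out}_i(w)]$; this is linear in total. Then, for every layer $L_i$, I would build a range-reporting structure $\mathcal{R}_i$ storing the currently alive edges of $E(G)\cap(L_i\times L_i)$ as points $(\mathrm{in}_i(u),\mathrm{in}_i(v))$, and pre-sort the ``outer'' edges $E(G)\cap((V\setminus L_i)\times L_i)$ by decreasing source-layer index (treating each $p_k\in V(P)$ as lying in ``layer $k$''). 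Loading the $\mathcal{R}_i$'s costs $O(m\log n/\log\log n)$ and the bucket sort is $O(m)$.

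Next, I would execute the rounds exactly as the algorithm prescribes, driving the outer loop of phase $i$ by iterating through the sorted outer-edge list and pushing extracted edges into the per-round queue $Q$. The single nontrivial subroutine is the one that locates, after a reassignment $\undp(w):=p_j$, the edges of $E(G)\cap(D_i[w]\times(L_i\setminus D_i[w]))$ that must be reported, deleted, and pushed onto $Q$. Using the preorder numbering, this set is exactly the disjoint union of the points of $\mathcal{R}_i$ lying in the two axis-aligned rectangles
\[
[\mathrm{in}_i(w),\mathrm{out}_i(w)]\times[1,\mathrm{in}_i(w)-1]\quad\text{and}\quad[\mathrm{in}_i(w),\mathrm{out}_i(w)]\times[\mathrm{out}_i(w)+1,|L_i\cup\{p_i\}|],
\]
each of which is answered and then emptied by a single call to~$\mathcal{R}_i$.

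For the running-time analysis I would argue globally, not round-by-round. Since $\undp(w)$ is ever modified at most once per vertex $w\in V\setminus V(P)$, across the whole execution at most $O(n)$ range-reporting queries are issued, contributing $O(n\log n/\log\log n)$ to the query base cost. Every point stored in any $\mathcal{R}_i$ is inserted exactly once during preprocessing and subsequently reported/deleted at most once, so the total insert/delete work sums to $O(m\log n/\log\log n)$. All remaining bookkeeping --- popping from $Q$, scanning the sorted outer-edge lists, consulting and updating $\undp$ --- is $O(1)$ per processed edge, hence $O(m)$ overall. Summing these contributions and using $n\le m$ gives the claimed $O(m\log m/\log\log m)$ bound. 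The main step I expect to require care is lining up the accounting so that the output-sensitive cost of each range query is charged against the subsequent deletions of the reported points, rather than against the rounds; otherwise one would naively overcount queries per round and lose the nearly-linear bound.
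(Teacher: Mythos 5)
Your proof is correct and follows essentially the same approach as the paper's: both map each within-layer edge to a preorder-indexed point, store the alive edges of each $G[L_i]$ in a Chan--Tsakalidis dynamic 2D range-reporting structure, implement the subtree cut as two axis-aligned rectangle queries, and charge the output-sensitive reporting cost against the one-time deletion of each reported edge. The only cosmetic differences are that you bound the number of range queries by $O(n)$ (one per vertex that ever has $\undp$ assigned) whereas the paper uses the looser but equally sufficient bound $O(m')$, and you quote a slightly pessimistic $O(\log n/\log\log n)$ bound for insertions/deletions (Chan--Tsakalidis give $O(\log^{2/3+o(1)}n)$, which is even smaller); neither affects the final bound.
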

\begin{proof}
The time for computing the dominator trees of all layers $L_i$ is  $O(m)$ in total. 
Moreover, we can compute the correct order with which to process the incoming edges of each layer $L_i$, in $O(n+m)$ in total
using radix-sort.
Observe that the insertions and extraction to/from the maintained queue $Q$, through all layers $L_i$, take $O(m)$ time overall
since each edge is inserted into $Q$ at most once.

Now we bound the total time spent on reporting and deleting the edges from $G$.  
  This can be done with the help of a dynamic two-dimensional range reporting data structure, as we next explain. We build such a data structure for each graph $G[L_i]$ separately -- recall that in the $i$-th phase we only report/remove the edges of $E(G)\cap (L_i\times L_i)=E(G[L_i])$.

  Let $n'=|L_i|$ and $m'=|E(G[L_i])|$.
  Let $\pord:L_i\cup \{p_i\}\to [1..n'+1]$ be some preorder of the dominator tree $D_i$.
  Let $\tsz(v)=|V(D_i[v])|$.
Clearly, both $\pord$ and $\tsz$ can be computed in linear time.
  Note that we have $u\in D_i[v]$ if and only if $\pord(u)\in [\pord(v),\pord(v)+\tsz(v)-1]$.

  We map each edge $xy\in E(G[L_i])$ to a point $(\pord(x),\pord(y))$ on the plane.
Let $A$ be the set of obtained points.
We store the points $A$ in a two-dimensional dynamic range reporting data
  structure of Chan and Tsakalidis~\cite{ChanT17}.\footnote{In our application, any \emph{decremental} two-dimensional range reporting data
  structure would be enough to obtain a nearly linear time algorithm. 
  In fact, what we need is a range \emph{extraction} data structure that removes all the reported points
  that remain in the query rectangle so that no point is reported twice.
  To the best of our knowledge, such a problem is not well-studied and
  this is why we use a seemingly much more general data structure of~\cite{ChanT17}.}
This data structure supports insertions and deletions
in $O(\log^{2/3+o(1)}{b})$ time, and allows reporting
all points in a query rectangle in $O\left(\frac{\log{b}}{\log\log{b}}+k\right)$ time,
where $k$ is the number of points reported and $b$ is the maximum number of points present in the point set at any time.
Hence, we can build the range reporting
data structure in $O(m'\log^{2/3+o(1)}{m'})=O\left(m'\frac{\log{m'}}{\log\log{m'}}\right)$ time, by inserting all points from $A$.

  Suppose we are required to remove and report the edges of $E(G[L_i])\cap (D_i[v]\times (L_i\setminus D_i[v]))$.
Then, it is easy to see that this can be done by querying the range reporting data
structure for the edges corresponding to points in the set
$$[\pord(u),\pord(u)+\tsz(u)-1]\times ([1,\pord(u)-1]\cup [\pord(u)+\tsz(u),n']),$$
which, clearly, consists of two orthogonal rectangles in the plane.
  This way, we get all the remaining edges $xy$ of $G[L_i]$ such that $x\in D[v]$
  and $y\notin D[v]$.
Subsequently, (the corresponding points of) all the found edges are
removed from the range reporting data structure.
  Since each edge of $G[L_i]$ is reported and removed
only once, and we issue $O(m')$ queries to the range reporting
data structure, the total time used to process
any sequence of deletions is $O\left(m'\frac{\log{m'}}{\log\log{m'}}\right)$.
Since the sum of $m'$ over all layers is $m$,
the total time to report all edges and removing them at all levels $L_i$ is $O\left(m\frac{\log{m}}{\log\log{m}}\right)$.
\end{proof}

\section{Proof of Theorem~\ref{t:ds2}}\label{s:ds2}

In this section we prove the following theorem.
\thmdstwo*
Let the subsequent vertices of $P$ be $p_1,\ldots,p_\ell$,
i.e., $P=p_1\ldots p_\ell$.
For simplicity, let us set $\prec:=\prec_P$.
Recall that the failed vertex $x$ satisfies $x\in V(P)$.

Define $G_1$ and $G_2$ to be the two subgraphs of $G$
lying weakly on the two sides of $P$.
More formally, the bounding cycle of the infinite face
(which, by our assumption, contains $p_1$ and $p_\ell$) of $G$ can be
expressed as $C_1C_2$, where $C_1$ is a
non-oriented $p_1\to p_\ell$ path, and $C_2$ is a non-oriented
$p_\ell\to p_1$ path.
Then we define $G_1$ to be the subgraph of $G$ weakly inside
the cycle $C_1P^R$ (where $P^R$ is the non-oriented path $P$ reversed),
and $G_2$ to be the subgraph of $G$ weakly inside the cycle
$C_2P$.
We have $G_1\cap G_2=P$.
Note that $P$ is a part of the infinite face's bounding cycle of each $G_i$.

Our strategy for finding a $u\to v$ path $Q$ in $G-x$, $x\in V(P)$ that goes
through $V(P)$ will be to consider a few cases based on how
$Q$ crosses $P$. In order to minimize the number of cases we will
define a number of auxiliary notions in Section~\ref{s:ds2aux}.
Next, in Section~\ref{s:ds2prep} we will show how these notions can be efficiently computed.
Subsequently in Section~\ref{s:ds2simplify}
we will show a few technical lemmas stating that we only
need to look for paths $Q$ of a very special structure.
Finally, in Section~\ref{s:ds2query} we describe the query procedure.

\subsection{Auxiliary notions}\label{s:ds2aux}

We first extend the notation $\vf,\vl,\vfs,\vls$ from Section~\ref{s:ds1}
to subpaths of $P$.
For any $H\subseteq G$ denote by $\vf_H(v,a,b)$ ($\vl_H(v,a,b)$) the earliest
(latest resp.) vertex of $P[a,b]$ that $v$ can reach (that can reach $v$, resp.) in $H$.
In fact, as we will see, we will only need to compute $\vf_H(v,a,b)$ or $\vl_H(v,a,b)$
for $v\in V(P[a,b])$.

Similarly, for $v\in V\setminus V(P)$ denote by $\vfs_H(v,a,b)$ ($\vls_H(v,a,b)$) the earliest
(latest resp.) vertex of $P[a,b]$ that $v$ can reach (that can reach $v$, resp.) \emph{by a satellite path} in $H$.
For $v\in V(P)$ we set $\vfs_H(v,a,b)=\vls_H(v,a,b)=v$ if $a\preceq v\preceq b$.
Note that in some cases we leave the values $\vfs_H(v,a,b)$ and $\vls_H(v,a,b)$ undefined.

For brevity we sometimes omit the endpoints of the subpath
if we care about some earliest/latest vertices of the whole path $P=P[p_1,p_\ell]$,
and write (as in Section~\ref{s:ds1}) e.g., $\vfs_H(v)$ instead of $\vfs_H(v,p_1,p_\ell)$
or $\vl_H(v)$ instead of $\vl_H(v,p_1,p_\ell)$.

We will also need the notions of earliest/latest jumps and detours that are new to this section.

\newcommand{\jump}{\alpha}
\begin{definition}
  Let $w\in V(P)$. Let $H$ be a digraph such that $V(P)\subseteq V(H)$.
  We call the earliest vertex of $P$ that $w$ can reach using a satellite path
  in $H$ the \emph{earliest jump} of $w$ in $H$ and denote
  this vertex by $\jump^-_H(w)$.
  Similarly, we call the latest vertex of $P$ that $w$ can reach using
  a satellite path in $H$ the \emph{latest jump} of $w$ in $H$
  and denote this vertex by $\jump^+_H(w)$.
\end{definition}

\begin{definition}
  Let $x=p_k$. We define a \emph{detour} of $x$ to be any directed satellite path
  $D=p_i\to p_j$, where $i<k<j$.
  Detour $D$ is called \emph{minimal} if there is no detour $D'=p_{i'}\to p_{j'}$
  of $x$ such that $i\leq i'\leq j'\leq j$ and $j'-i'<j-i$.
  A pair $(p_i,p_j)$ is called a \emph{minimal detour pair} of $x$ if
  there exists a minimal detour of $x$ of the form $p_i\to p_j$.
\end{definition}

The fact that $p_1$ and $p_\ell$ lie on a single face of $G$ implies
the following important property of minimal detours.

\begin{lemma}\label{l:two-detours}
  For any $x\in V(P)$, there exist at most two minimal detour pairs of $x$.
\end{lemma}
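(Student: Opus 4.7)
My plan is to show that each of the two sides of the path $P$ contributes at most one minimal detour pair of $x$, and to sum the counts to obtain the bound of two. Since $V(G_1) \cap V(G_2) = V(P)$, every satellite path has its internal vertices entirely in $V(G_1) \setminus V(P)$ or entirely in $V(G_2) \setminus V(P)$; hence every detour of $x = p_k$ lies entirely in $G_1$ or entirely in $G_2$. It therefore suffices to show that, for each $i \in \{1,2\}$, at most one minimal detour pair of $x$ is witnessed by a detour contained in $G_i$.

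Fix $i \in \{1,2\}$ and suppose for contradiction that two distinct minimal detour pairs $(p_{i_1}, p_{j_1})$ and $(p_{i_2}, p_{j_2})$ are realized by detours $D_1, D_2 \subseteq G_i$. Minimality forbids one of the intervals $[i_1, j_1]$, $[i_2, j_2]$ from being contained in the other, since a strictly nested interval would shrink the larger pair; in particular the pairs cannot share an endpoint. Combined with $i_s < k < j_s$ for $s \in \{1,2\}$, this forces, up to relabelling, the ordering $i_1 < i_2 < k < j_1 < j_2$.

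Next, I invoke the single-face hypothesis together with planarity. By the construction of $G_1, G_2$ as the subgraphs weakly on the two sides of $P$, the path $P$ is a contiguous arc on the bounding cycle of the unbounded face of $G_i$, with $p_1, \ldots, p_\ell$ appearing in order along that arc. Equivalently, $G_i$ is embedded in a closed topological disk $\Delta_i$ whose boundary contains $P$. The satellite detours $D_1$ and $D_2$ are curves in $\Delta_i$ touching $\partial \Delta_i$ only at their endpoints, and these endpoints alternate around $\partial \Delta_i$ in the cyclic order $p_{i_1}, p_{i_2}, p_{j_1}, p_{j_2}$. The Jordan curve theorem then forces $D_1$ and $D_2$ to meet at an interior point of $\Delta_i$; since in a plane embedding two edges intersect only at common vertices, there is a vertex $v \in V(D_1) \cap V(D_2)$ with $v \notin V(P)$.

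Finally, the directed walk $D_2[p_{i_2}, v] \cdot D_1[v, p_{j_1}]$ runs from $p_{i_2}$ to $p_{j_1}$ with all intermediate vertices off $P$; extracting a simple directed path from it produces a satellite detour of $x$ whose interval $[i_2, j_1]$ is strictly contained in both $[i_1, j_1]$ and $[i_2, j_2]$, contradicting the minimality of both pairs. The hardest part is the topological step: one must carefully justify both that the single-face hypothesis really lets us treat $G_i$ as sitting inside a disk with $P$ entirely on the boundary, and that the Jordan-curve intersection of $D_1$ and $D_2$ yields a shared graph vertex rather than a spurious curve crossing. The former follows from the explicit bounding-cycle description of $G_i$ given earlier in the section, while the latter is a standard consequence of the non-crossing property of plane embeddings of simple graphs.
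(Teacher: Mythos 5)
Your proof is correct and follows essentially the same approach as the paper's: restrict to one side $G_i$, observe the endpoints must interleave as $i_1 < i_2 < k < j_1 < j_2$, use planarity and the single-face property to force the two detours to share an internal vertex, and splice prefix of one with suffix of the other to produce a shorter detour contradicting minimality. The only (cosmetic) difference is that you argue directly that each side hosts at most one minimal detour pair, whereas the paper starts from three minimal detours and pigeonholes two onto the same side.
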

\begin{proof}
  Let $x=p_k$.
  Assume the contrary and
  suppose there are at least three minimal detour pairs and let us take
  some three minimal detours $D_1,D_2,D_3$ corresponding to these pairs.
  Some two of $D_1,D_2,D_3$, say $D_1,D_2$ are both contained
  in some of $G_1$ and $G_2$ -- say $G_1$.

  Suppose $D_1=p_i\to p_j$ and $D_2=p_{i'}\to p_{j'}$.
  Assume w.l.o.g. that $i\leq i'$.
  Then since $(p_i,p_j)$ and $(p_{i'},p_{j'})$ are distinct minimal detour
  pairs of $x$, we have $i<i'<k<j<j'$.

  Recall that all vertices of $P$ lie on the infinite face of $G_1$.
  Since $i<i'<j<j'$, any $p_i\to p_j$ path in $G_1$ crosses (i.e., has a common internal vertex with)
  any $p_{i'}\to p_{j'}$ path in $G_1$.
  Let $z\in (V(D_1)\cap V(D_2))\setminus V(P)$.
  So $D_1$ can be represented as $D_1'D_1''$, where $D_1''=z\to p_j$.
  Similarly, $D_2=D_2'D_2''$, where $D_2'=p_{i'}\to z$.
  Observe that $D_2'D_1''=p_{i'}\to p_j$ is a detour of $x$.
  But since $j-i'<j-i$, this contradicts the minimality of $D_1$.
\end{proof}

\subsection{Computing $\vf,\vl,\vfs,\vls$, jumps and detours}\label{s:ds2prep}


\begin{lemma}\label{l:reach-interval}
  After linear preprocessing of $G$, for any $v\in V$ and $a,b\in V(P)$, where $a\preceq b$,
  the vertices $\vfs_G(v,a,b)$ and $\vls_G(v,a,b)$ (if they exist) can be computed
  in $O(\log{n})$ time.
\end{lemma}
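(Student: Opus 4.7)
The plan is to reduce each query to $O(\log n)$ constant-time planar reachability queries in two augmented graphs built by cutting $G$ along $P$, and then apply the optimal planar reachability oracle of Holm, Rotenberg, and Thorup~\cite{HolmRT15}.

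For each side $i\in\{1,2\}$ I would first obtain $H_i$ from $G_i$ by deleting every edge outgoing from $V(P)$, turning every $p_j$ into a sink of the ``graph part''. Since a satellite path from $v\in V(G_i)\setminus V(P)$ to $p_j$ in $G_i$ uses only non-path internal vertices, all of its edges survive this deletion; conversely, any $v\to p_j$ path in $H_i$ cannot traverse another $p_k\in V(P)$ internally, since every such $p_k$ is a sink in $H_i$. Hence $v$ reaches $p_j$ in $H_i$ iff $v$ has a satellite path to $p_j$ in $G_i$. I then augment $H_i$ with a balanced binary tree $T^{(i)}$ on $\ell$ leaves $\ell_1,\ldots,\ell_\ell$, with every tree edge oriented from child to parent, plus an edge $p_j\to\ell_j$ for each $j$. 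Since $P$ is part of the bounding cycle of the infinite face of $G_i$, the vertices $p_1,\ldots,p_\ell$ appear in order on its outer boundary, so $T^{(i)}$ can be embedded inside that face with each $\ell_j$ placed next to $p_j$, keeping the whole augmented graph planar with $O(n)$ vertices and edges. By the sink property together with the orientation of tree edges, a vertex $v\notin V(P)$ reaches a node $u\in T^{(i)}$ iff $v$ has a satellite path in $G_i$ to some $p_j$ whose index lies in the leaf range of $u$. Analogous augmented graphs $H_i'$ are built on the reverse graphs $G_i^R$ to support $\vls$ queries. Preprocessing the oracle of~\cite{HolmRT15} on each of $H_1,H_2,H_1',H_2'$ uses $O(n)$ time and space in total.

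To answer $\vfs_G(v,a,b)$ with $v\notin V(P)$, I find the unique $i$ with $v\in V(G_i)\setminus V(P)$, decompose $P[a,b]$ into $O(\log n)$ canonical tree-node intervals of $T^{(i)}$, and scan them in left-to-right order to locate the first one whose root is reachable from $v$; I then binary search top-down inside the corresponding subtree to pin down the earliest reachable leaf $\ell_j$ and report $p_j$. Each step issues a single constant-time query to the reachability oracle, yielding $O(\log n)$ total. The case $v\in V(P)$ is handled directly by definition: return $v$ if $a\preceq v\preceq b$ and leave the value undefined otherwise. Queries for $\vls_G(v,a,b)$ are symmetric, using the oracle built on $H_i'$.

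The main delicate point is guaranteeing planarity of $H_i\cup T^{(i)}$: one must ensure that $p_1^{},\ldots,p_\ell^{}$ appear in the correct order on the infinite face of $H_i$ so that the binary tree can be embedded in that face with the ``leaf pendants'' $p_j\to\ell_j$ drawn without crossings. This ordering property follows from the hypothesis that the endpoints of $P$ lie on a single face of $G$, combined with the explicit definition of $G_1, G_2$ as the two sides of $P$; once planarity is secured, the reachability oracle of~\cite{HolmRT15} can be applied as a black box.
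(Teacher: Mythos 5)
Your proposal is correct and takes essentially the same approach as the paper: cut $G$ along $P$ into $G_1,G_2$, delete outgoing edges of $V(P)$ so that every remaining $v\to p_j$ path is satellite, embed a binary tree of auxiliary interval nodes (oriented child-to-parent) inside the outer face to preserve planarity, and answer queries with $O(\log n)$ constant-time calls to the Holm--Rotenberg--Thorup oracle. Your two-phase query (scan the $O(\log n)$ canonical intervals of $P[a,b]$ left-to-right for the first reachable root, then descend that subtree) is a slightly more transparent presentation of what the paper achieves with its recursive \texttt{first} procedure and its amortized accounting of non-leaf calls, but the underlying idea and complexity are identical.
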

\begin{proof}
  The case $v\in V(P)$ is trivial by the definition of $\vfs_G(v,a,b)$ and $\vls_G(v,a,b)$:
  these vertices are defined and equal to $v$ only for $a\preceq v\preceq b$.
  Hence, below we assume $v\in V\setminus V(P)$.
  
  First observe that any satellite path is entirely
  contained in either $G_1$ or $G_2$.
  Hence, $\vfs_G(v,a,b)$ is the earlier of $\vfs_{G_1}(v,a,b)$
  and $\vfs_{G_2}(v,a,b)$.
  In the following we focus on computing $\vfs_{G_1}(v,a,b)$.
  Satellite paths in $G_2$ can be handled identically.
  Computing latest vertices can be done symmetrically.

  Note that since $v\notin V(P)$, a satellite path $Q=v\to V(P[a,b])$ in $G_1$ does not
  use the outgoing edges of the vertices of $P$.
  Let us thus remove the outgoing edges of all $w\in V(P)$ from $G_1$
  and this way obtain~$G_1'$.
  Note that all the remaining paths in $G_1'$ are satellite now
  and all the satellite paths from $v$ in~$G_1$ are preserved in $G_1'$.

  For simplicity let us assume that $\ell$ is a power of 2. This is without loss of generality,
  since otherwise we could extend $P$ by adding less than $\ell$ vertices
  ``inside'' the last edge of $P$ so that the length
  of $P$ becomes a power of two.
  Recall that all the outgoing edges of $V(P)$ are removed in $G_1'$ anyway
  so this extension does not influence reachability
  or the answers to the considered queries.
  
  \newcommand{\iset}{\mathcal{I}}
  Let $\iset$ be the set of \emph{elementary} intervals defined as follows:
  for each $f\in \{0,\ldots,\log \ell\}$ the intervals $[(k-1) \cdot 2^{f}+1,k \cdot 2^f]$, for all $k\in[\frac{\ell}{2^f}]$, is included in $\iset$.
  Observe that $\iset$ has $\ell-1$ intervals
  and these intervals can be conveniently arranged into a full binary tree.

  We further extend $G_1'$ by adding an auxiliary vertex $p_{[e,f]}$
  for each $[e,f]\in \iset$ such that $e<f$.
  We also identify each vertex $p_i\in V(P)$ with $p_{[i,i]}$.
  For each $[e,f]\in \iset$ with $e<f$ we add to $G_1'$
  two edges $p_{[e,m]}p_{[e,f]}$ and $p_{[m+1,f]}p_{[e,f]}$, where
   $m=\left\lfloor\frac{e+f}{2}\right\rfloor$.
  
  Observe that for any $[e,f]\in \iset$, and $v\in V\setminus V(P)$
  a path from $v$ to any of $p_e,\ldots,p_f$ exists in $G_1'$
  if and only if a $v\to p_{[e,f}]$ path exists in $G_1'$.
  
  Since the vertices $p_1,\ldots,p_\ell$ lie
  on a single face of $G_1$ in this order,
  after the extension the graph~$G_1'$ remains planar (see Figure~\ref{f:tree}).
  We can thus build a reachability oracle of Holm et al.~\cite{HolmRT15} for $G_1'$
  in linear time so that we can answer reachability
  queries in $G_1'$ in constant time.

\begin{figure}[ht]
  \centering
  \begin{tikzpicture}[scale=0.7]

  \node[draw,circle,fill,inner sep=1.0pt] (p1) at (1,0) {};
  \node[draw,circle,fill,inner sep=1.0pt] (p2) at (2,0.5) {};
  \node[draw,circle,fill,inner sep=1.0pt] (p3) at (3,0.9) {};
  \node[draw,circle,fill,inner sep=1.0pt] (p4) at (4,1.21) {};
  \node[draw,circle,fill,inner sep=1.0pt] (p5) at (5,1.5) {};
  \node[draw,circle,fill,inner sep=1.0pt] (p7) at (7,2.0) {};
  \node[draw,circle,fill,inner sep=1.0pt] (p6) at (6,1.8) {};
  \node[draw,circle,fill,inner sep=1.0pt] (p8) at (8,2.1) {};
  \node[draw,circle,fill,inner sep=1.0pt] (p9) at (9,2.1) {};
  \node[draw,circle,fill,inner sep=1.0pt] (p10) at (10,2.0) {};
  \node[draw,circle,fill,inner sep=1.0pt] (p11) at (11,1.8) {};
  \node[draw,circle,fill,inner sep=1.0pt] (p12) at (12,1.5) {};
  \node[draw,circle,fill,inner sep=1.0pt] (p13) at (13,1.21) {};
  \node[draw,circle,fill,inner sep=1.0pt] (p15) at (15,0.5) {};
  \node[draw,circle,fill,inner sep=1.0pt] (p14) at (14,0.9) {};
  \node[draw,circle,fill,inner sep=1.0pt] (p16) at (16,0) {};

\fill[gray!15] (p1.center) -- (p2.center) -- (p3.center) -- (p4.center) --
              (p5.center) -- (p6.center) -- (p7.center) -- (p8.center) --
              (p9.center) -- (p10.center) -- (p11.center) -- (p12.center) --
              (p13.center) -- (p14.center) -- (p15.center) -- (p16.center) to[out=-110,in=-70]
              (p1.center);
  \node[draw,circle,fill,inner sep=1.0pt,label=180:{$p_1$}] (p1) at (1,0) {};
  \node[draw,circle,fill,inner sep=1.0pt] (p2) at (2,0.5) {};
  \node[draw,circle,fill,inner sep=1.0pt] (p3) at (3,0.9) {};
  \node[draw,circle,fill,inner sep=1.0pt] (p4) at (4,1.21) {};
  \node[draw,circle,fill,inner sep=1.0pt] (p5) at (5,1.5) {};
  \node[draw,circle,fill,inner sep=1.0pt] (p7) at (7,2.0) {};
  \node[draw,circle,fill,inner sep=1.0pt] (p6) at (6,1.8) {};
  \node[draw,circle,fill,inner sep=1.0pt] (p8) at (8,2.1) {};
  \node[draw,circle,fill,inner sep=1.0pt] (p9) at (9,2.1) {};
  \node[draw,circle,fill,inner sep=1.0pt] (p10) at (10,2.0) {};
  \node[draw,circle,fill,inner sep=1.0pt] (p11) at (11,1.8) {};
  \node[draw,circle,fill,inner sep=1.0pt] (p12) at (12,1.5) {};
  \node[draw,circle,fill,inner sep=1.0pt] (p13) at (13,1.21) {};
  \node[draw,circle,fill,inner sep=1.0pt] (p15) at (15,0.5) {};
  \node[draw,circle,fill,inner sep=1.0pt] (p14) at (14,0.9) {};
  \node[draw,circle,fill,inner sep=1.0pt,label=0:{$p_{\ell}$}] (p16) at (16,0) {};

  \node[draw,fill,scale=0.4,label=180:{$p_{[1,2]}$}] (q12) at (1,3) {};
  \node[draw,fill,scale=0.4,label=200:{$p_{[3,4]}$}] (q34) at (3,3.9) {};
  \node[draw,fill,scale=0.4,label=100:{$p_{[5,6]}$}] (q56) at (5,4.5) {};
  \node[draw,fill,scale=0.4,label=100:{$p_{[7,8]}$}] (q78) at (7.3,5) {};

  \node[draw,fill,scale=0.4,label=80:{$p_{[9,10]}$}] (q910) at (9.7,5) {};
  \node[draw,fill,scale=0.4,label=87:{$p_{[11,12]}$}] (q1112) at (12,4.5) {};
  \node[draw,fill,scale=0.4,label=0:{$p_{[13,14]}$}] (q1314) at (14,3.9) {};
  \node[draw,fill,scale=0.4,label=0:{$p_{[15,16]}$}] (q1516) at (16.5,3) {};

  \node[draw,fill,scale=0.4,label=180:{$p_{[1,4]}$}] (q14) at (2.4,6.5) {};
  \node[draw,fill,scale=0.4,label=180:{$p_{[5,8]}$}] (q58) at (6.5,7.5) {};
  \node[draw,fill,scale=0.4,label=0:{$p_{[9,12]}$}] (q912) at (10.5,7.5) {};
  \node[draw,fill,scale=0.4,label=0:{$p_{[13,16]}$}] (q1316) at (14.6,6.5) {};
  
  \node[draw,fill,scale=0.4,label=180:{$p_{[1,8]}$}] (q18) at (6,10) {};
  \node[draw,fill,scale=0.4,label=0:{$p_{[9,16]}$}] (q916) at (11,10) {};
  
  \node[draw,fill,scale=0.4,label=90:{$p_{[1,16]}$}] (q116) at (8.5,11.5) {};
  
  \draw[-latex] (p1) -- (q12);
  \draw[-latex] (p2) -- (q12);
  
  \draw[-latex] (p3) -- (q34);
  \draw[-latex] (p4) -- (q34);
  \draw[-latex] (p5) -- (q56);
  \draw[-latex] (p6) -- (q56);
  \draw[-latex] (p7) -- (q78);
  \draw[-latex] (p8) -- (q78);
  
  \draw[-latex] (p9) -- (q910);
  \draw[-latex] (p10) -- (q910);
  \draw[-latex] (p11) -- (q1112);
  \draw[-latex] (p12) -- (q1112);
  \draw[-latex] (p13) -- (q1314);
  \draw[-latex] (p14) -- (q1314);
  
  \draw[-latex] (p15) -- (q1516);
  \draw[-latex] (p16) -- (q1516);

  \draw[-latex] (q12) -- (q14);
  \draw[-latex] (q34) -- (q14);
  \draw[-latex] (q56) -- (q58);
  \draw[-latex] (q78) -- (q58);
  
  \draw[-latex] (q910) -- (q912);
  \draw[-latex] (q1112) -- (q912);
  \draw[-latex] (q1314) -- (q1316);
  \draw[-latex] (q1516) -- (q1316);
  
  \draw[-latex] (q14) -- (q18);
  \draw[-latex] (q58) -- (q18);
  \draw[-latex] (q912) -- (q916);
  \draw[-latex] (q1316) -- (q916);

  \draw[-latex] (q18) -- (q116);
  \draw[-latex] (q916) -- (q116);

  \draw[-latex,dotted] (p1) -- (p2);
  \draw[-latex,dotted] (p2) -- (p3);
  \draw[-latex,dotted] (p3) -- (p4);
  \draw[-latex,dotted] (p4) -- (p5);
  \draw[-latex,dotted] (p5) -- (p6);
  \draw[-latex,dotted] (p6) -- (p7);
  \draw[-latex,dotted] (p7) -- (p8);
  \draw[-latex,dotted] (p8) -- (p9);
  \draw[-latex,dotted] (p9) -- (p10);
  \draw[-latex,dotted] (p10) -- (p11);
  \draw[-latex,dotted] (p11) -- (p12);
  \draw[-latex,dotted] (p12) -- (p13);
  \draw[-latex,dotted] (p13) -- (p14);
  \draw[-latex,dotted] (p14) -- (p15);
  \draw[-latex,dotted] (p15) -- (p16);

  \draw[-latex] (3,-1) -- (p3);
  \draw[-latex] (5,-2) -- (p3);

  \draw[-latex] (9,-2) -- (p12);
  \draw[-latex] (11,-3) -- (p12);
  \draw[-latex] (15,-1) -- (p12);

  \node at (8.5,-1) {$G_1'$};
  \end{tikzpicture}
  \caption{Auxiliary vertices corresponding to elementary intervals.}
  \label{f:tree}
\end{figure}
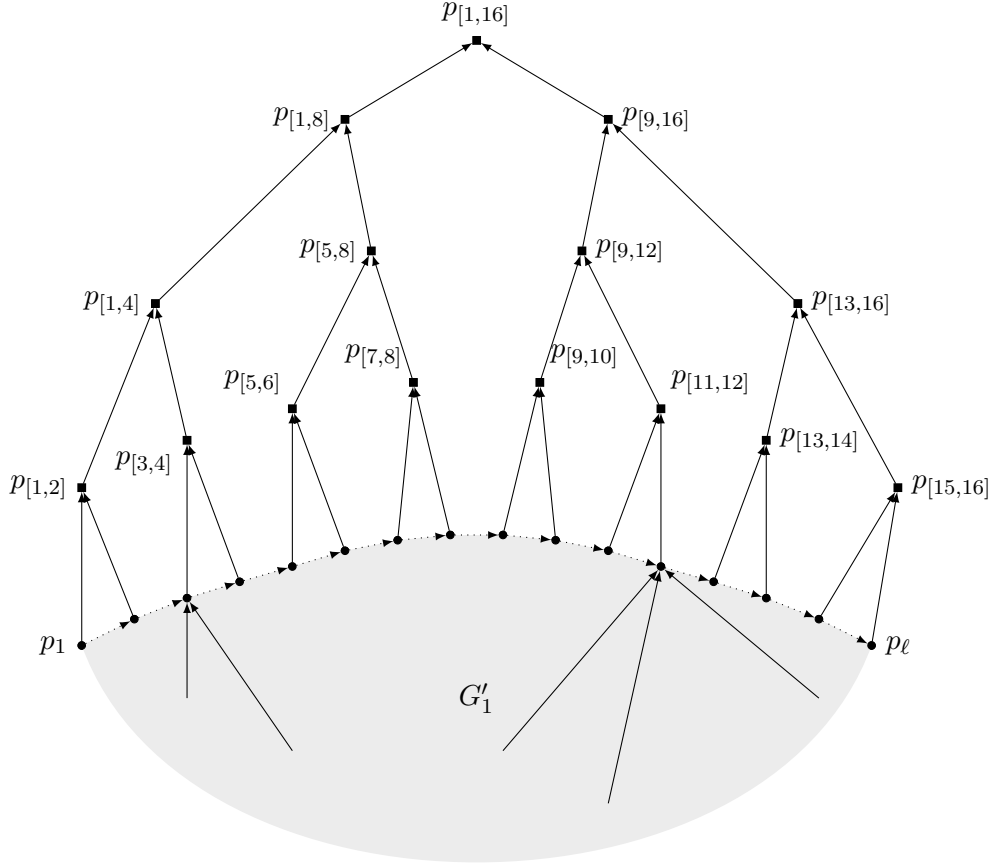
\newcommand{\nil}{\textbf{nil}}

  Note that we would be able to compute $\vfs_{G_1}(v,a,b)$ in
  $O(\log{(b-a+1)})$ time via binary search if only we could query in $O(1)$ whether a path
  from either of $p_x,\ldots,p_y$ to $v$  exists for an arbitrary interval $[x,y]$.
  However, in $O(1)$ time we can only handle such queries for $[x,y]\in \iset$.
  It is well-known that one can decompose an arbitrary $[x,y]$ into $O(\log{n})$ elementary
  intervals: this way, we could implement a single step of binary
  search in $O(\log{n})$ time and thus computing $\vfs_{G_1}(v,a,b)$
  would take $O(\log^2{n})$ time.
  
  In order to compute $\vfs_{G_1}(v,a,b)$ faster, we use a slightly more complicated binary-search-like recursive
  procedure as follows.
  The recursive procedure $\texttt{first}$ is passed a single parameter
  $[e,f]\in \iset$ and returns the earliest
  vertex from $p_{\max(a,e)},\ldots,p_{\min(b,f)}$ reachable from $v$
  in $G_1'$, or $\nil$ if such a vertex does not exits.
  This way, in order to find $\vfs_{G_1}(v,a,b)$ we compute $\texttt{first}([1,\ell])$.
  
  The procedure $\texttt{first}([e,f])$ is implemented as follows.
  \begin{enumerate}
    \item If $[a,b]\cap [e,f]=\emptyset$ then we return $\nil$.
    \item Otherwise, if $[e,f]\subseteq [a,b]$ and 
      there is no $v\to p_{[e,f]}$ path in $G_1'$
  (which can be decided in $O(1)$ time with a single reachability query on $G_1'$), we also return $\nil$.
    \item Otherwise if $e=f$ we return $p_e$.
    \item Finally, if none of the above cases apply,
  let $m=\left\lfloor\frac{e+f}{2}\right\rfloor$.
  If $\texttt{first}([e,m])=v$, we return~$v$.
  Otherwise (i.e., if $\texttt{first}([e,m])$ returned $\nil$), we compute
  and return $\texttt{first}([m+1,f])$.
  \end{enumerate}

  It is easy to see that the procedure is correct.
  Let us now analyze its running time.
  We say that a call to $\texttt{first}([e,f])$ is \emph{non-leaf}
  if it invokes the procedure recursively.
  Note that the query algorithm runs in time linear in the number of
  non-leaf calls $\texttt{first}([e,f])$.
  We now show that the number of non-leaf calls is $O(\log{n})$.

  First observe that for each non-leaf call we have
  $[e,f]\cap [a,b]\neq\emptyset$.
  Let us first count non-leaf calls such that $[e,f]\not\subseteq [a,b]$.
  Suppose for some $k$ there are at least
  3 such (pairwise disjoint) intervals $[e_i,f_i]$, for $i=1,2,3$ with $f_i-e_i+1=2^k$.
  Let $e_1<e_2<e_3$. Since $[a,b]$ intersects both $[e_1,f_1]$ and $[e_2,f_2]$
  but does not contain any of them $e_2\in [a,b]$.
  Similarly, since $[a,b]$ intersects both $[e_2,f_2]$ and $[e_3,f_3]$
  but does not contain any of them $f_2\in [a,b]$.
  Hence, $[e_2,f_2]\subseteq [a,b]$, a contradiction.
  So there are at most 2 such intervals per $k\in \{0,\ldots,\log_2\ell\}$,
  and conclude that there are $O(\log{n})$ non-leaf calls such that
  $[e,f]\not\subseteq [a,b]$.
  
  Now suppose $\texttt{first}([e,f])$ is a non-leaf call
  and $[e,f]\subseteq [a,b]$.
  Since this is a non-leaf call, there is a path from $v$ to
  some of $p_j$, where $j\in [e,f]$ in $G_1'$, and the call will
  return $p_j$.
  Consequently, the root call will also return $p_j$.
  We conclude that all such non-leaf calls return the same $p_j$.
  But there are only $O(\log{n})$ intervals $[e,f]$ such
  that $p_j\in [e,f]$.
\end{proof}

\begin{lemma}\label{l:scc-interval}
  After linear preprocessing, we can compute $\vf_{G-x}(v,a,b)$ ($\vl_{G-x}(v,a,b)$) for
  all $v,a,b,x\in V(P)$ such that $a\preceq v\preceq b\prec x$ or $x\prec a\preceq v\preceq b$
  in $O(\log{n})$ time.
\end{lemma}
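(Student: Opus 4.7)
The plan is to reduce the query to finding the earliest (or latest) vertex of $P[a,b]$ that is strongly connected to $v$ in $G-x$, and then to run a binary search on top of the $O(1)$-query strong-connectivity-under-failures oracle of Theorem~\ref{t:scc-fail-query}.

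The key observation I would establish first is that, since $v\in V(P[a,b])$ and $x\notin V(P[a,b])$, the entire subpath $P[a,b]$ is preserved in $G-x$. Consequently, for any $a\preceq p_i\preceq v\preceq p_j\preceq b$, the subpath $P[p_i,v]\subseteq P[a,b]$ witnesses reachability of $v$ from $p_i$ in $G-x$, and $P[v,p_j]\subseteq P[a,b]$ witnesses reachability of $p_j$ from $v$ in $G-x$. In particular, if $v$ reaches some $p_i\in V(P[a,v])$ in $G-x$, then $p_i$ and $v$ are automatically strongly connected in $G-x$ (the ``return trip'' is free along $P$). Since the vertices in $V(P[v,b])$ are all reachable from $v$ in $G-x$ for free, this yields
\[
  \vf_{G-x}(v,a,b)=\min\nolimits_\preceq\{p_i\in V(P[a,v]):p_i\text{ is strongly connected to }v\text{ in }G-x\},
\]
a set that is nonempty because it contains $v$ itself. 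A symmetric characterization holds for $\vl_{G-x}(v,a,b)$ as the latest vertex of $V(P[v,b])$ strongly connected to $v$ in $G-x$.

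Next I would invoke Lemma~\ref{l:path-scc} applied to $H=G-x$ and the path $P$: the set of vertices of $P$ strongly connected to $v$ in $G-x$ forms a contiguous subpath of $P$ containing $v$. Hence the predicate ``$p_i$ is strongly connected to $v$ in $G-x$'' is monotone along $P[a,v]$ in the sense that it holds on some suffix of $P[a,v]$ ending at $v$, which is precisely the structure required for binary search. During preprocessing I would build, in linear time, the strong-connectivity-under-failures oracle of Theorem~\ref{t:scc-fail-query} together with an index that maps vertices of $P$ to their positions so that midpoint computations are $O(1)$. A query $(v,a,b,x)$ then binary-searches over the index range $[\mathrm{idx}(a),\mathrm{idx}(v)]$: at each step, the midpoint vertex $p_m$ is tested in $O(1)$ time for strong connectivity with $v$ in $G-x$; if the test succeeds, the search recurses on the left half, otherwise on the right half. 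The earliest index at which the test succeeds is returned as $\vf_{G-x}(v,a,b)$, with $v$ itself returned if no strictly earlier vertex satisfies the predicate. This takes $O(\log n)$ time in total, and $\vl_{G-x}(v,a,b)$ is obtained by the same scheme over $[\mathrm{idx}(v),\mathrm{idx}(b)]$.

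The only non-routine point is the reduction to strong connectivity, which hinges on the ``free return along $P$'' argument; this argument is valid precisely because the assumption $b\prec x$ or $x\prec a$ guarantees that the failing vertex $x$ is not present on the portion of $P$ between $a$ and $b$. Once this reduction is in hand, everything else follows from Lemma~\ref{l:path-scc} and Theorem~\ref{t:scc-fail-query} by a textbook binary search, so no further obstacle remains.
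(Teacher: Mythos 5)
Your proof is correct and follows essentially the same route as the paper: it characterizes $\vf_{G-x}(v,a,b)$ as the earliest vertex of $P[a,v]$ strongly connected to $v$ in $G-x$ (via the ``free return trip'' along the preserved subpath $P[a,b]$), uses Lemma~\ref{l:path-scc} for the monotonicity needed by binary search, and queries the $O(1)$-time oracle of Theorem~\ref{t:scc-fail-query} at each step. The only slight imprecision is that Lemma~\ref{l:path-scc} should be invoked for the path $P[a,b]$ rather than all of $P$ (since $P$ itself need not be a path in $G-x$ when $x\in V(P)$), but the conclusion you draw is exactly the one that follows from the correct application and matches the paper.
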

\begin{proof}
We build the data structure of Theorem~\ref{t:scc-fail-query}
for graph $G$ in linear time.
  We only show how to compute $\vf_{G-x}(v,a,b)$, as $\vl_{G-x}(v,a,b)$ can be computed
completely analogously.

  Let $v'=\vf_{G-x}(v,a,b)$.
Observe that $v'\preceq v$ since a path $v\to v$ exists in $G-x$.
  Note that $P[a,b]$ is a path in $G-x$.
  Since a path $v'\to v$ exists in $G-x$ (since $P[v',v]$ is a subpath of $P[a,b]$),
  $v$ and $v'$ are in fact strongly connected in $G-x$.

  By Lemma~\ref{l:path-scc}, if $v'$ is strongly
  connected to $v$ in $G-x$, then all $y\in V(P)$, $v'\preceq y\preceq v$,
  are also strongly connected
  to $v$ in $G-x$.
  Consequently, we can find $v'$ by binary searching
  on the subpath $P[a,v]$ for the last vertex not strongly
  connected to $y$ in $G-x$.
  A single step of binary search is executed
  in $O(1)$ time by Theorem~\ref{t:scc-fail-query}.
\end{proof}

\newcommand{\vin}{\text{in}}
\newcommand{\vout}{\text{out}}

\begin{lemma}\label{l:compute-jumps}
  For a digraph $H$ such that $V(P)\subseteq V(H)$, the earliest and latest jumps of all $w\in V(P)$
  can be computed in linear time.
\end{lemma}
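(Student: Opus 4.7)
The plan is to compute the latest jumps in a single linear-time sweep; the earliest jumps will follow by running the same procedure on $H^R$ with $P$ traversed in reverse order. The key observation is that $\jump^+_H(p_i)=p_j$ iff $j$ is the largest index for which there is a path $p_i\to v_1\to\cdots\to v_k\to p_j$ in $H$ with every intermediate vertex $v_l\in V(H)\setminus V(P)$.

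I will process the indices $j=\ell,\ell-1,\dots,1$ in this order while maintaining an integer mark on each vertex of $V(H)\setminus V(P)$ (initially unmarked). In iteration $j$, I run a reverse-BFS in $H^R$ starting from $p_j$ that expands $p_j$ and every previously unmarked non-$V(P)$ vertex it visits (marking such a vertex with $j$ on first visit), but refuses to expand any other vertex of $V(P)$. Whenever an edge of $H^R$ leads from an expanded vertex to some $p_i\in V(P)\setminus\{p_j\}$ whose $\jump^+_H(p_i)$ is still undefined, I set $\jump^+_H(p_i):=p_j$.

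For correctness, fix $p_i$ and let $j^\star$ be the largest $j$ for which a satellite path $\pi=p_i\to v_1\to\cdots\to v_k\to p_{j^\star}$ in $H$ exists. The crux of the argument is that at the start of iteration $j^\star$ none of $v_1,\dots,v_k$ is yet marked: if some $v_l$ were marked with $j'>j^\star$, the BFS of iteration $j'$ would witness a walk $v_l\to\cdots\to p_{j'}$ in $H$ with only non-$V(P)$ intermediates, and splicing this onto the prefix $p_i\to v_1\to\cdots\to v_l$ of $\pi$ and then extracting a simple subpath would yield a satellite path $p_i\to p_{j'}$, contradicting the maximality of $j^\star$. Consequently, in iteration $j^\star$ the reversal of $\pi$ is fully traversable by the BFS, so $p_i$ is reached and $\jump^+_H(p_i)$ is set to $p_{j^\star}$ (it cannot have been set earlier by the same contradiction applied to the iteration that would have set it). Conversely, any value written by the algorithm corresponds to an explicit BFS witness that reverses to a satellite path in $H$, so the algorithm computes $\jump^+_H$ exactly. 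The treatment of $\jump^-_H$ is entirely symmetric.

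For the running time, each non-$V(P)$ vertex is expanded at most once across the entire execution (after being marked it is skipped), so each of its outgoing edges in $H^R$ is traversed at most once in total; each edge leaving a $p_j$ in $H^R$ is traversed only in iteration $j$. Combined with $O(n)$ initialization this gives $O(n+m)$ overall. The main obstacle is really the ``unmarked in time'' claim sketched above, which uses that every marking is backed by a concrete non-$V(P)$ walk; once this is in place, the remainder is a routine amortized BFS analysis.
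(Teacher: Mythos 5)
Your proof is correct and takes essentially the same approach as the paper: process the path vertices in one direction and run a graph search from each $p_j$ that expands only previously unvisited non-$V(P)$ vertices, so that every vertex is expanded at most once across the whole sweep and each answer is read off the phase in which the corresponding vertex is first reached. The paper enforces the constraint that satellite paths avoid $V(P)$ internally by splitting each $p_i$ into an in-copy $p_i^{\text{in}}$ and an out-copy $p_i^{\text{out}}$ before running plain visited-once searches, whereas you enforce it by refusing to expand any $V(P)$ vertex other than the current source; this is a cosmetic difference, and your correctness argument (that the witnessing path is still unmarked at the start of the critical iteration) is sound.
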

\begin{proof}
  We only show how to compute earliest jumps, as latest jump can be computed
  analogously.
  
  We first replace each $p_i\in V(P)$ in $H$ with two vertices $p_i^{\vin}$ and $p_i^{\vout}$.
  Next we change each original edge $p_iv$ of $H$ into $p_i^\vout v$,
  and subsequently turn each edge $up_i$ of $H$ into $up_i^\vin$.
  Observe that every $p_i^\vout$ has only outgoing
  edges, whereas every $p_i^\vin$ has only incoming edges.

  Observe that there is a 1-1 correspondence between $p_i^\vout\to p_j^\vin$ paths in $H$
  and $p_i\to p_j$ \emph{satellite} paths in $H$ before the transformation
  since no path in the current $H$ can go through either $p_l^\vin$ or $p_l^\vout$
  as an intermediate vertex.
  Therefore, the task of computing the earliest jumps can be reduced
  to computing, for each $i\in [1,\ell]$, the minimum $j$ such
  that $p_j^\vin$ is reachable from $p_i^\vout$ in the transformed $H$.
  This, in turn, can be done as follows, in an essentially the same way
  as we computed the layers in Section~\ref{s:satellite}.

  For each $i=1,\ldots,\ell$, we run a reverse graph search
  from $p_i^\vin$ in $H$ that only enters vertices that have not
  been visited so far.
  If $w\in V(H)$ gets visited in phase $i$, we set $\alpha_H^-(w)=i$.
  Observe that if a vertex $w$ is visited in phase $i$, this means
  that there exists a path $w\to p_i^\vin$ in $H$ and for any $j<i$
  there is no path $w\to p_j^\vin$ in $H$.
  This in particular applies to vertices $w$ of the form $p_l^\vout$.

  Finally, observe that the reverse graph search takes linear total time through all phases.
\end{proof}

\begin{lemma}
  The minimal detour pairs of all $x\in V(P)$ can be computed in linear time.
\end{lemma}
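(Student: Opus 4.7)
The plan is to compute the minimal detour pairs lying in $G_1$ and in $G_2$ separately and output the union; this suffices since every satellite path of $G$ lies entirely in one of $G_1, G_2$. A slight refinement of the argument from Lemma~\ref{l:two-detours} shows that on a single side $G_s$, for each fixed $x = p_k$ there is at most one minimal detour pair: two such pairs would either be nested on $P$ (so the outer one is not minimal) or interleave, and then the shared internal vertex guaranteed by planarity would splice them into a strictly shorter detour of $p_k$.

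Fix $s \in \{1,2\}$. I first invoke Lemma~\ref{l:compute-jumps} on $G_s$ and on $G_s^R$ to compute, in linear time for all $i \in \{1,\ldots,\ell\}$, the values $r(i) := \alpha^+_{G_s}(p_i)$ and $l(i) := \alpha^+_{G_s^R}(p_i)$, where $l(j)$ is the largest index admitting a satellite $p_{l(j)} \to p_j$ path in $G_s$. The same splicing trick shows that the two arc families $\mathcal{F}_r = \{(i, r(i)) : r(i) > i\}$ and $\mathcal{F}_l = \{(l(j), j) : l(j) < j\}$ are laminar: interleaving arcs of $\mathcal{F}_r$, spliced at their common internal vertex, would produce a satellite path from some $p_i$ ending strictly beyond $p_{r(i)}$. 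For each $k$ I define $i_0(k)$ to be the source of the innermost arc of $\mathcal{F}_r$ whose open interval strictly contains $k$, and $j_0(k)$ the target of the innermost arc of $\mathcal{F}_l$ whose open interval strictly contains~$k$; both can be tabulated for all $k$ in linear time by the standard stack-based sweep on a laminar family. The algorithm outputs $(p_{i_0(k)}, p_{j_0(k)})$ as the sole minimal detour pair of $p_k$ in $G_s$ when both are defined, and reports no detour otherwise.

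Correctness, which is the main technical obstacle, amounts to showing that any minimal detour pair $(p_{i^*}, p_{j^*})$ of $p_k$ in $G_s$ satisfies $i^* = i_0(k)$ and $j^* = j_0(k)$. Since the arc $(i^*, r(i^*))$ lies in $\mathcal{F}_r$ and strictly contains $k$, innermostness gives $i_0(k) \ge i^*$. If $i_0(k) > i^*$, consider the arcs $(i^*, j^*)$ and $(i_0(k), r(i_0(k)))$: either they interleave on $P$, in which case splicing their witnessing satellite paths at a common internal vertex produces a satellite $p_{i_0(k)} \to p_{j^*}$ path, or they are nested with $(i_0(k), r(i_0(k)))$ inside $(i^*, j^*)$, giving the satellite arc $(i_0(k), r(i_0(k)))$ directly; in either case one obtains a detour of $p_k$ strictly shorter than $(i^*, j^*)$ and contained in $[i^*, j^*]$, contradicting minimality. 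So $i^* = i_0(k)$, and the identity $j^* = j_0(k)$ follows by the symmetric argument in $\mathcal{F}_l$; existence of any detour of $p_k$ in $G_s$ is clearly equivalent to $\mathcal{F}_r$ having an arc covering $k$, so the ``no detour'' branch is also correct.
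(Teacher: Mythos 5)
Your high-level plan matches the paper's: reduce to each side $G_s$, argue at most one minimal detour pair per side, characterize its endpoints via jump values, and tabulate them by a linear sweep. The laminarity of $\mathcal{F}_r$ and the splicing argument forcing $i^*=i_0(k)$ are correct, and $i_0(k)$ does coincide with the paper's value $a$ (the latest index before $k$ whose forward jump passes $k$); the paper just phrases the sweep as a direct one-dimensional stack problem rather than as innermost-arc queries.

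There is, however, a genuine bug in your second family. You set $l(j) := \alpha^+_{G_s^R}(p_j)$, the \emph{latest} vertex of $P$ with a satellite path to $p_j$ in $G_s$, whereas the quantity needed (and the one the paper uses, $\alpha^-_{G_s^R}$) is the \emph{earliest} such vertex. With the latest-predecessor choice, the arc $(l(j),j)$ can fail to strictly contain $k$ even when $p_j$ is the right endpoint of a detour of $p_k$. Concretely, let $G_s$ contain internally disjoint satellite paths $p_3\to p_7$ and $p_6\to p_7$ with $\alpha^+_{G_s}(p_3)=p_7$, and take $k=5$: then $\mathcal{F}_r$ contains $(3,7)$, so $i_0(5)=3$, but $l(7)=6$, so no arc of $\mathcal{F}_l$ strictly contains $5$ and $j_0(5)$ is undefined; your algorithm reports ``no detour'' although $p_3\to p_7$ is a detour of $p_5$. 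This also contradicts your own closing claim that existence of a detour is equivalent to $\mathcal{F}_r$ having a covering arc, since the algorithm's no-detour branch triggers whenever \emph{either} $i_0$ or $j_0$ is undefined. The symmetric correctness argument for $j^*=j_0(k)$ breaks for the same reason: $(l(j^*),j^*)$ need not contain $k$ for a minimal detour pair $(p_{i^*},p_{j^*})$. Replacing $l(j)$ by $\alpha^-_{G_s^R}(p_j)$ repairs the proof: that family is laminar by the same crossing argument, its innermost arc containing $k$ has the smallest right endpoint, and that endpoint is exactly the vertex $b$ the paper identifies.
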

\begin{proof}
  From the proof of Lemma~\ref{l:two-detours} it follows that in
  fact there is at most one minimal detour pair for each $x\in V(P)$
  in each $G_i$, $i=1,2$.
  We thus compute this pair (if it exists) separately for $G_1$
  and $G_2$. In the following we focus on $G_1$ since $G_2$ can be handled identically.

  Let $x=p_k$.
  Suppose a minimal detour pair $(p_i,p_j)$ of $x$ in $G_1$ exists.
  Then the latest jump of $p_i$ in $G_1$ also exists and $x\prec p_j\preceq \jump^+_{G_1}(p_i)$.
  Let $a\in V(P)$ be the latest vertex $a\prec x$ such that
  $x\prec \jump^+_{G_1}(a)$.
  We conclude that $p_i\preceq a$.
  Similarly, the earliest (with respect to the original order
  of $P$) jump of $p_j$ in $\rev{G_1}$ also exists
  and $\jump^-_{\rev{G_1}}(p_j)\preceq p_i\prec x$.
  Let $b\in V(P)$ be the earliest vertex satisfying $x\prec b$ such
  that $\jump^-_{\rev{G_1}}(b)\prec x$.
  Observe that we have $b\preceq p_j$,
  and thus $p_i\preceq a\prec x\prec b\preceq p_j$.

  We now prove that in fact $p_i=a$ and $p_j=b$.
  First observe that 
  since there exist detours
  $D_b=\jump^-_{\rev{G_1}}(b)\to b$ and $D_a=a\to \jump^+_{G_1}(a)$ of $x$ in $G_1$,
  by the definition of $a$ we have $\jump^-_{\rev{G_1}}(b)\preceq a$
  and $b\preceq  \jump^+_{G_1}(a)$.
  So, $\jump^-_{\rev{G_1}}(b)\preceq a\prec x\prec b\preceq  \jump^+_{G_1}(a)$.
  But $\jump^-_{\rev{G_1}}(b),a,b,\jump^+_{G_1}(a)$ all lie on a single
  face of $G_1$ (in that order), so 
  $D_a$ and $D_b$ have a common vertex.
  Similarly as in the proof of Lemma~\ref{l:two-detours}, we can conclude
  that in fact there exists a $a\to b$ detour of $x$ in $G_1$.
  Hence, indeed $p_i=a$ and $p_j=b$.

  It remains to show how to compute the vertices $a,b$, as defined above,
  for all $x\in V(P)$.
  Let us focus on computing the values $a$; the values $b$ can be computed
  by proceeding symmetrically.
  We first compute the values $\jump^+_{G_1}(w)$
  for all $w\in V(P)$ in linear time using Lemma~\ref{l:compute-jumps}.
  
  Now, the problem can be rephrased in a more abstract
  way as follows:
  given an array $t[1..\ell]$, compute for each $i=1,\ldots,\ell$
  the value $h[i]=\max\{j:j<i\land t[j]>i\}$.
  We solve this problem by processing $t$ left-to-right
  and maintaining a certain subset $S$ of indices $j<i$ that surely
  contains all the values $h[i],\ldots,h[\ell]$ that are less than $i$.
  The indices of $S$ are stored in a stack sorted bottom-to-top.
  Additionally we maintain the invariant that for $S=\{i_1,\ldots,i_s\}$, where $i_1<\ldots<i_s$,
  we have $t[i_1]>\ldots>t[i_s]$.
  
  Initially, $S$ is empty. Suppose we process some $i$.
  By the invariants posed on $S$ we know that $h[i]\in S$.
  Let $i_s$ be the top element of the stack.
  If $t[i_s]>i$ then $h[i]=i_s$ since $i_s=\max{S}$.
  Otherwise $t[i_s]\leq i$.
  Hence, we also have $t[i_s]\leq j$ for all $j\geq i$,
  so $h[j]\neq i_s$.
  Consequently, we can remove $i_s$ from $S$ (by popping
  it from the top of the stack) without breaking
  the invariants posed on~$S$.
  We repeat the process until $S$ is empty
  or $t[i_s]>i$.
  If $S$ becomes empty, we set $h[i]=-\infty$,
  otherwise $h[i]=i_s$.
  At this point, $S$ contains values $h[j]$
  for all $j>i$ such that $h[j]<i$.

  In order to move to the next $i$, we need $S$
  to contain values $h[j]$ for all $j>i$ such that $h[j]\leq i$
  and the stack storing $S$ has to be sorted
  top to bottom according to the $t$-values.
  If $t[i_s]>t[i]$, we push $i$ to $S$ and finish.
  Suppose $t[i_s]\leq t[i]$.
  We show that $h[j]\neq t[i_s]$ for $j>i$ and thus
  we can safely remove $i_s$ from $S$.
  Assume the contrary, i.e., $h[j]=i_s$ for some $j>i$.
  But then $t[i]\geq t[h[j]]>j$ and $h[j]=\max\{j':j'<j\land t[j']>i\}\geq i>h[j]$,
  a contradiction.

  The running time of this algorithms is clearly linear
  in $\ell$ plus the number of stack operations.
  The total number of stack operations is linear in
  the number of indices pushed to to the stack, i.e., no more than $\ell$.
\end{proof}

\subsection{Simplifying paths}\label{s:ds2simplify}

In this section we devise a few lemmas that will
allow the query procedure 
to consider only paths of a very special form.

Let $R$ be any path. Let $\enter{R}$ be the first vertex of $R$ that additionally
lies on $P$.
Similarly, let $\leave{R}$ be the last vertex of $R$ that
additionally lies on $P$.
Let $\mnv{R}$ ($\mxv{R}$) be the earliest (latest) vertex of $V(R)\cap V(P)$ wrt. to $\preceq_P$.

\begin{lemma}\label{l:move-u-general}
  Let $a,b$ be some two vertices of $P$ such that
  $x\notin V(P[a,b])$.
  Let $Q=LR$ be any $u\to v$ path in $G-x$ satisfying $\enter{Q}\in V(P[a,b])$
  and such that $L$ is a satellite $u\to \enter{Q}$ path.
  Let $u'=\vf_{G-x}(\vfs_G(u,a,b),a,b)$.

  Then $Q$ does not go through the vertices of $P[a,b]$ earlier than $u'$.
  Moreover, 
there exists a $u\to v$ path $Q'=L'R$ in $G-x$ such that
      $\enter{Q'}\in V(P[a,b])$, $u'\in V(L')$ and $\leave{Q'}=\leave{Q}$.
\end{lemma}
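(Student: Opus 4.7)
Set $u^*:=\vfs_G(u,a,b)$, so that $u'=\vf_{G-x}(u^*,a,b)$. The plan is to first verify that the shortcut $u\to u^*\to u'\to \enter{Q}$ is entirely available in $G-x$, and then to use it both to lower-bound the $P[a,b]$-visits of $Q$ and to stitch together the witness walk $Q'$. \emph{Preliminary:} because $L$ is a satellite $u\to \enter{Q}$ path with $\enter{Q}\in V(P[a,b])$, the vertex $u^*$ is well-defined and satisfies $u^*\preceq_P \enter{Q}$. The satellite witness of $u\to u^*$ has only its endpoints on $V(P)$, and these differ from $x$ since $u^*\in V(P[a,b])$ and $x\notin V(P[a,b])$; hence this path lies in $G-x$. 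Similarly $P[u^*,\enter{Q}]\subseteq P[a,b]$ avoids $x$, as does the $u^*\to u'$ path guaranteed by the definition of $u'$.

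\emph{First claim.} Take any $w\in V(Q)\cap V(P[a,b])$. Concatenating the satellite $u\to u^*$ path, the forward subpath $P[u^*,\enter{Q}]$, and the $\enter{Q}\to w$ subwalk of $Q$ (which lies in $G-x$ since $Q$ does) yields a $u^*\to w$ walk in $G-x$. Since $w\in V(P[a,b])$ and $u'$ is by definition the earliest vertex of $P[a,b]$ reachable from $u^*$ in $G-x$, we obtain $u'\preceq_P w$.

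\emph{Second claim.} Form $L'$ by concatenating (i) a satellite $u\to u^*$ path, (ii) any $u^*\to u'$ path in $G-x$, and (iii) the forward $P$-subpath $P[u',\enter{Q}]$ (valid since $u'\preceq_P u^*\preceq_P \enter{Q}$ and all three lie in $P[a,b]$). Set $Q':=L'R$. Then $u'\in V(L')$ by construction; the first $V(P)$-vertex of $L'$ is $u^*\in V(P[a,b])$ because (i) is satellite, hence $\enter{Q'}=u^*$; and since $R$ is the suffix of $Q'$ and $L'$ ends on $V(P)$ at $\enter{Q}$ which is also the first vertex of $R$ on $V(P)$, the last $V(P)$-vertex of $Q'$ coincides with the last $V(P)$-vertex of $R$, yielding $\leave{Q'}=\leave{Q}$.

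\emph{Main obstacle.} The chief technical delicacy is the simplicity of $Q'$: as constructed it is only a walk that may revisit vertices (and potentially edges), so one must appeal to the paper's relaxed notion of a \emph{path} (a subgraph of orderable edges, with vertices allowed to repeat) and trim duplicated edges while keeping $u'$ on the trail and the prescribed entry/exit vertices intact. Since the downstream query procedure only consults the three listed structural properties together with $u\to v$ reachability in $G-x$, this tidying step does not affect the argument.
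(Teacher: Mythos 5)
Your proof is correct and mirrors the paper's argument: both construct $L'$ as the concatenation of a satellite $u\to\vfs_G(u,a,b)$ path, any $\vfs_G(u,a,b)\to u'$ path in $G-x$, and $P[u',\enter{Q}]$, and both establish the first claim by showing that $u^*:=\vfs_G(u,a,b)$ can reach every $w\in V(Q)\cap V(P[a,b])$ in $G-x$, contradicting the minimality of $u'$ whenever $w\prec_P u'$. One cosmetic slip: the three-piece concatenation you exhibit for the first claim is a $u\to w$ walk rather than a $u^*\to w$ walk, but only its $P[u^*,\enter{Q}]$-plus-$\enter{Q}\to w$ suffix is needed, so the conclusion is unaffected.
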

\begin{proof}
  First note that by $\enter{Q}\in V(P[a,b])$, $\vfs_G(u,a,b)$ is in fact defined.
  Hence $u'\in V(P[a,b])$ also exists and $u'\preceq\vfs_G(v,a,b)$.

  Suppose $Q$ goes through a vertex $u''$ of $P[a,b]$ earlier than $u'$.
  Then there exists an $\enter{Q}\to u''$ path in $G-x$.
  We have $\vfs_G(u,a,b)\preceq \enter{Q}$.
  Since $x\notin V(P[a,b])$, there exists a $\vfs_G(u,a,b)\to \enter{Q}$ path (e.g., $P[\vfs_G(u,a,b),\enter{Q}]$)
  in $G-x$.
  It follows that there exists a $\vfs_G(u,a,b)\to u''$ path in $G-x$,
  which contradicts the definition of $u'$.

  To finish the proof, note that it is sufficient to set $L'$
  to be a concatenation of a satellite $u\to \vfs_G(u,a,b)$ path,
  any $\vfs_G(u,a,b)\to u'$ path in $G-x$ and $P[u',\enter{Q}]$.
  Note that $L'$ is not necessarily simple.
\end{proof}
\begin{lemma}\label{l:move-u}
  Let $Q=LR$ be any $u\to v$ path in $G-x$ satisfying $\enter{Q}\prec x$
  and such that $L$ is a $u\to \enter{Q}$ satellite path.
  Let $u'=\vf_{G-x}(\vfs_G(u))$.
  Then $u'\preceq \mnv{Q}$.
  Moreover, 
there exists a $u\to v$ path $Q'=L'R$ in $G-x$ such that
      $\enter{Q'}\prec x$, $\mnv{L'}=\mnv{Q'}=u'$, and $\leave{Q'}=\leave{Q}$.
\end{lemma}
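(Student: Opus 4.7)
The plan is to mirror the structure of the argument for Lemma~\ref{l:move-u-general}, adapting it to the situation where the failing vertex $x$ lies on $P$ instead of outside a subpath $P[a,b]$. The key observation driving the proof is that a satellite path in $G$ is automatically contained in $G-x$ whenever $x\in V(P)$, so the ``optimal satellite entry point'' $\vfs_G(u)$ is unaffected by the failure of $x$.

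First I would establish well-definedness of the quantities involved. The hypothesis that $L$ is a satellite $u\to \enter{Q}$ path witnesses $\enter{Q}$ as reachable from $u$ by a satellite path in $G$, so $\vfs_G(u)$ exists and satisfies $\vfs_G(u)\preceq \enter{Q}\prec x$. Consequently, the subpath $P[\vfs_G(u),\enter{Q}]$ does not contain $x$, so in $G-x$ there is a $\vfs_G(u)\to \enter{Q}$ path and hence $u'=\vf_{G-x}(\vfs_G(u))$ is defined, with $u'\preceq \vfs_G(u)\preceq \enter{Q}\prec x$.

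Next I would prove the inequality $u'\preceq \mnv{Q}$. Since $L$ is a satellite $u\to \enter{Q}$ path, its only vertex on $P$ is $\enter{Q}$; combined with $\mnv{Q}\preceq \enter{Q}$, this forces $\mnv{Q}\in V(R)$, so $\mnv{Q}$ is reachable from $\enter{Q}$ in $G-x$ by following the prefix of $R$. Concatenating with the $\vfs_G(u)\to \enter{Q}$ path in $G-x$ from the previous step yields a $\vfs_G(u)\to \mnv{Q}$ walk in $G-x$, and the minimality in the definition of $u'$ gives $u'\preceq \mnv{Q}$.

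Finally, I would construct $L'$ as the concatenation of three pieces: (i) any satellite $u\to \vfs_G(u)$ path in $G$ (this lies in $G-x$ automatically, since $x\in V(P)$ cannot appear as an internal vertex of a satellite path); (ii) any $\vfs_G(u)\to u'$ path in $G-x$, guaranteed by the definition of $u'$; and (iii) the subpath $P[u',\enter{Q}]$, which avoids $x$ because $u'\preceq \enter{Q}\prec x$. Setting $Q'=L'R$ yields a $u\to v$ walk (not necessarily simple, exactly as in Lemma~\ref{l:move-u-general}) in $G-x$. The three claimed properties then follow by direct verification: $\enter{Q'}\preceq \vfs_G(u)\prec x$; $\mnv{L'}=u'$ because part (ii) introduces no vertex of $P$ earlier than $u'$ by minimality, parts (i) and (iii) contribute only vertices $\succeq u'$, and $u'$ itself appears at the junction of (ii) and (iii); $\mnv{Q'}=\min(\mnv{L'},\mnv{R})=u'$ since $\mnv{R}\succeq \mnv{Q}\succeq u'$; and $\leave{Q'}=\leave{R}=\leave{Q}$ because $R$ is traversed after $L'$ in $Q'$, so the last occurrence on $P$ in $Q'$'s traversal order lies within $R$. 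The only mild obstacle is bookkeeping to ensure that none of the inequalities become vacuous in the corner case $u\in V(P)$ (where $\vfs_G(u)=u$), which is handled uniformly by the conventions adopted in Section~\ref{s:ds2aux}.
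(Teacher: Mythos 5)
Your proof is correct and follows essentially the same construction as the paper's; the only difference is presentational. The paper proves Lemma~\ref{l:move-u} in one line by invoking Lemma~\ref{l:move-u-general} with $a=p_1$ and $b$ equal to the vertex of $P$ immediately preceding $x$ (so $x\notin V(P[a,b])$, $\enter{Q}\in V(P[a,b])$, and $\vf_{G-x}(\vfs_G(u,a,b),a,b)=\vf_{G-x}(\vfs_G(u))$ because $\vfs_G(u)\preceq\enter{Q}\prec x$). Your opening remark that Lemma~\ref{l:move-u-general} does not already cover the situation where $x$ lies on $P$ is a slight misreading---that lemma only requires $x\notin V(P[a,b])$, not $x\notin V(P)$---and as a result you inline a re-derivation of its proof (same three-piece decomposition of $L'$, same minimality argument for $u'\preceq\mnv{Q}$) rather than citing it; the substance is identical and correct.
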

\begin{proof}
  It is enough to apply Lemma~\ref{l:move-u-general} to $a=p_1$ and $b$ equal
  to the vertex preceding $x$ on $P$.
\end{proof}

\begin{lemma}\label{l:move-v}
  Let $Q=LR$ be any $u\to v$ path in $G-x$ satisfying $x\prec \leave{Q}$
  and such that $R$ is a $\leave{Q}\to v$ satellite path.
  Let $v'=\vl_{G-x}(\vls_G(v))$.
  Then $\mxv{Q}\preceq v'$.
  Moreover, 
there exists a $u\to v$ path $Q'=LR'$ in $G-x$ such that
      $x\prec\leave{Q'}$, $\mxv{R'}=\mxv{Q'}=v'$, $\enter{Q'}=\enter{Q}$.
\end{lemma}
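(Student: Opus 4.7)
The plan is to obtain Lemma~\ref{l:move-v} as the path-reversed counterpart of Lemma~\ref{l:move-u}, just as Lemma~\ref{l:move-u} was extracted from the more general Lemma~\ref{l:move-u-general}. To this end I would first formulate and verify a symmetric generalization that mirrors Lemma~\ref{l:move-u-general} but acts on the tail satellite segment instead of the head:

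\smallskip
\emph{Symmetric claim.} For $a,b\in V(P)$ with $a\preceq b$ and $x\notin V(P[a,b])$, and for any $u\to v$ path $Q=LR$ in $G-x$ with $\leave{Q}\in V(P[a,b])$ and $R$ a satellite $\leave{Q}\to v$ path, let $v'=\vl_{G-x}(\vls_G(v,a,b),a,b)$. Then $Q$ visits no vertex of $P[a,b]$ later than $v'$, and there exists a $u\to v$ path $Q'=LR'$ in $G-x$ with $\leave{Q'}\in V(P[a,b])$, $v'\in V(R')$, and $\enter{Q'}=\enter{Q}$.

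\smallskip
I would derive this claim by applying Lemma~\ref{l:move-u-general} to the reverse graph $\rev{G}$ with reversed path $\rev{P}$ and to the reversed walk $\rev{Q}=\rev{R}\,\rev{L}$, which is a $v\to u$ path in $\rev{G}-x$. Under this reversal, $\enter{\cdot}$ and $\leave{\cdot}$ swap roles, $\preceq_{\rev P}$ is the opposite of $\preceq_P$, $\vf^{\rev P}_{\rev H}=\vl^P_H$, and $\vfs^{\rev P}_{\rev H}=\vls^P_H$, so the conclusion of Lemma~\ref{l:move-u-general} in $\rev G$ translates verbatim into the statement above.

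To finish the proof of Lemma~\ref{l:move-v}, I would instantiate the symmetric claim with $a$ equal to the successor of $x$ on $P$ and $b=p_\ell$. The hypothesis $x\prec \leave{Q}$ together with $R$ being a satellite $\leave{Q}\to v$ path gives $\leave{Q}\in V(P[a,b])$, and $x\notin V(P[a,b])$ is automatic. The inequality $\mxv{Q}\preceq v'$ then follows because $x\prec\leave{Q}\preceq\mxv{Q}$ forces $\mxv{Q}\in V(P[a,b])$, and by the claim $Q$ cannot visit $P[a,b]$ beyond $v'$; the path $Q'=LR'$ produced by the claim has $\enter{Q'}=\enter{Q}$, $\leave{Q'}\in V(P[a,b])$ (hence $x\prec \leave{Q'}$), and $v'\in V(R')$, which, combined with $v'$ being the latest such vertex and $R'$ being contained in the tail portion from $\leave{Q'}$ onward, yields $\mxv{R'}=\mxv{Q'}=v'$.

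The only point requiring any care is the bookkeeping around edge reversal, namely that satellite paths, the \enter{}/\leave{} notation, and the $\vl,\vls,\vf,\vfs$ operators all behave coherently when passing from $G$ to $\rev G$ and from $P$ to $\rev P$; once this correspondence is spelled out, the lemma is an immediate specialization of the symmetric claim and presents no real obstacle.
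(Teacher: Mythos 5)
Your proposal is correct and matches the paper's approach: the paper's proof is a single sentence declaring the lemma ``symmetric to Lemma~\ref{l:move-u}'' (which is itself obtained by instantiating Lemma~\ref{l:move-u-general}), and your argument simply spells out that symmetry by formulating the reversed analogue of Lemma~\ref{l:move-u-general} and instantiating it with $a$ the successor of $x$ and $b=p_\ell$. The bookkeeping you flag about $\vls_G(v,a,b)=\vls_G(v)$ and $\vl_{G-x}(\vls_G(v),a,b)=\vl_{G-x}(\vls_G(v))$ indeed goes through, since $x\prec\leave{Q}$ forces both of these quantities strictly after $x$ on $P$.
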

\begin{proof}
  This lemma is symmetric to Lemma~\ref{l:move-u} and therefore can be proved
  analogously.
\end{proof}

\begin{lemma}\label{l:simplify}
  Suppose a directed path $Q=u\to v$ in $G-x$ such that
  $\mnv{Q}\prec x\prec \mxv{Q}$ can be represented as $Q_1RQ_2$, where
  $R$ is a $\mnv{Q}\to\mxv{Q}$ path.
  Then there is a $u\to v$ path $Q'$ in $G-x$ that can be represented
  as $Q_1P_1P_2P_3Q_2$, where $P_1$ and $P_3$ are (possibly empty) subpaths of $P$
  and $P_2$ is a minimal detour of $x$.
\end{lemma}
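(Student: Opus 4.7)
The plan is to examine how the subpath $R$ crosses the path $P$, to isolate a portion of $R$ that is itself a detour of $x$, and finally to replace that portion with a \emph{minimal} detour to obtain the canonical decomposition.

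First, I would set $p_a=\mnv{R}=\mnv{Q}$, $p_b=\mxv{R}=\mxv{Q}$ and $x=p_k$, so that by assumption $a<k<b$. Let $v_0=p_a,v_1,\ldots,v_s=p_b$ be the vertices of $V(R)\cap V(P)$ listed in the order in which they appear along $R$. By the definition of this list, for every $t\in\{0,\ldots,s-1\}$ the subpath $R_t$ of $R$ from $v_t$ to $v_{t+1}$ is a satellite path (no internal vertex of $R_t$ lies on $P$, because otherwise such a vertex would appear in the list between $v_t$ and $v_{t+1}$).

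Next, I would show that some $R_t$ is a detour of $x$. Since $R\subseteq G-x$, none of $v_0,\ldots,v_s$ equals $x$; hence each $v_t$ satisfies either $v_t\prec_P x$ or $x\prec_P v_t$. As $v_0\prec_P x\prec_P v_s$, there is an index $t$ for which $v_t\prec_P x\prec_P v_{t+1}$. Writing $v_t=p_{i'}$ and $v_{t+1}=p_{j'}$, the satellite path $R_t$ is a $p_{i'}\to p_{j'}$ detour of $x$ in $G-x$. By definition of minimality and finite descent, there exists a minimal detour $P_2=p_{i''}\to p_{j''}$ of $x$ with $i'\le i''<k<j''\le j'$.

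Finally, I would assemble
\[
  Q'\;=\;Q_1\cdot P[p_a,p_{i''}]\cdot P_2\cdot P[p_{j''},p_b]\cdot Q_2,
\]
setting $P_1:=P[p_a,p_{i''}]$ and $P_3:=P[p_{j''},p_b]$. Since $a\le i''<k$, the subpath $P_1$ lies in $P$ and avoids $x$; likewise $k<j''\le b$ implies $P_3$ avoids $x$. The minimal detour $P_2$ is by definition a satellite path whose endpoints differ from $x$, so $P_2\subseteq G-x$. The subpaths $Q_1,Q_2\subseteq Q\subseteq G-x$, and the endpoints match at $p_a$, $p_{i''}$, $p_{j''}$ and $p_b$, giving a $u\to v$ path in $G-x$ of the required form.

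The only slightly non-routine step is the intermediate-value-type argument guaranteeing a ``side switch'' index $t$ with $v_t\prec_P x\prec_P v_{t+1}$; everything else is bookkeeping on concatenations and a single appeal to the minimality definition to shrink the detour, so I do not anticipate any real obstacle.
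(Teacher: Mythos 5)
Your proof is correct and takes essentially the same approach as the paper: isolate a satellite subpath of $R$ that is itself a detour of $x$, shrink it to a minimal detour via finite descent, and splice with subpaths of $P$ from $\mnv{Q}$ and to $\mxv{Q}$. The paper locates the detour as the one starting at the \emph{last} vertex of $R$ before $x$ on $P$, while you take an arbitrary side-switching pair $(v_t,v_{t+1})$; this difference is immaterial.
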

\begin{proof}
  Let $R=R_1R_2$, where $R_2$ starts at $y$, $\mnv{Q}\preceq y\prec x$ and $R_2$ does
  not go through any other vertices of $P[\mnv{Q},x]$.
  Such an $y$ exists since $\mnv{Q}\prec x\prec \mxv{Q}$.
  Observe that $R_2$ starts with some detour $D=y\to z$ of $x$,
  where $z\in P[x,\mxv{Q}]$.
  Hence $R$ can be rewritten as $R_1DR_3$, where $R_3=z\to \mxv{Q}$
  is possibly an empty path.
  If $D$ is not minimal, let $D'=y'\to z'$ be a minimal
  detour of $x$ such that $y\preceq y'\prec x\prec z'\preceq z$.
  Note that since $\mnv{Q}\preceq y\preceq y' \prec x$ and $x\prec z'\preceq z\preceq\mxv{Q}$,
  $P[\mnv{Q},y']D'P[z',\mxv{Q}]$ is a $\mnv{Q}\to \mxv{Q}$ path in $G-x$.
\end{proof}

\subsection{The query algorithm}\label{s:ds2query}
Finally we are ready to describe our query algorithm.
Recall that we want to efficiently check whether
there exists a $u\to v$ path $Q$ in $G-x$ that
goes through $V(P)$, for $x\in V(P)$.
We consider cases based on the configuration of
$\enter{Q}$, $\leave{Q}$ and $x$ on~$P$.

\begin{enumerate}
  \item Suppose there exists such a path $Q$ that $\enter{Q}\prec x$
and $x\prec \leave{Q}$.
    Let $u'=\vf_{G-x}(\vfs_G(u))$ as in Lemma~\ref{l:move-u}.
    Since $\vfs_G(u)\prec x$, $u'$ can be computed
    in $O(\log{n})$ time by first using Lemma~\ref{l:reach-interval}
    to compute $\vfs_G(u)$ and then using Lemma~\ref{l:scc-interval}.

By Lemma~\ref{l:move-u}, there exists a $u\to v$ path $Q'=L'R$ in $G-x$ such that
$\enter{Q'}\prec x$, $\mnv{L'}=\mnv{Q'}=u'$ and $x\prec \leave{Q'}$.
    Let $v'=\vl_{G-x}(v,\vls_G(v))$.
    Similarly, since $x\prec \leave{Q'}\preceq \vls_G(v)$, $v'$ can be computed
    in $O(\log{n})$ time by Lemmas~\ref{l:reach-interval}~and~\ref{l:scc-interval}.
By Lemma~\ref{l:move-v}, there exists a $u\to v$ path $Q''=L'R'$ in $G-x$ such
that $\enter{Q''}\prec x$, $x\prec \leave{Q''}$, $\mxv{Q''}=v'$.
    Since $\mnv{L'}=u'$ and $Q''$ cannot go through a vertex of $P$
    earlier then $u'$ (by Lemma~\ref{l:move-u}), we also have $\mnv{Q''}=u'$.

As we have $u'=\mnv{Q''}\prec x\prec \mxv{Q''}=v'$,
we can apply Lemma~\ref{l:simplify}.
In order to find a $u\to v$ path in $G-x$ such that $\enter{Q}\prec x$
and $x\prec \leave{Q}$, we only need
to check whether there exits a path $P_1P_2P_3=u'\to v'$
such that $P_1$ and $P_3$ are subpaths
of $P$ and $P_2$ is a minimal detour of $x$.
By Lemma~\ref{l:two-detours}, there are only two
minimal detour pairs $(e,f)$ of $x$.
Assuming $P_2=e\to f$, $P_1$ exists
if and only if $u'\preceq e$.
Similarly $P_3$ exists if and only if $f\preceq v'$.
Hence, each of at most two minimal detour pairs
can be checked in constant time.

\item Suppose that $\enter{Q}\prec x$ and $\leave{Q}\prec x$.
  Similarly as in the previous case, by Lemma~\ref{l:move-u} (and putting $Q:=Q'$)
  we can assume that the sought path $Q$ goes through $u'=\vf_{G-x}(\vfs_G(u))$
  and $\mnv{Q}=u'$.
  As before, $u'\prec x$ can be reached from $u$ in $G-x$.
    Now, since $u'=\mnv{Q}\preceq \leave{Q}\prec x$, we can limit
    our attention to such paths $Q$ that
    the subpath from $u'$ to $\leave{Q}$
    of $Q$ is actually a subpath of $P$.
    Hence, in order to check if such $Q$ exists,
    we only need to check whether
    there is a satellite path $z\to v$, where $u'\preceq z\prec x$.
    In other words we can check
    whether $\vls_{G}(v,u',x')$, 
    where $x'$ is a predecessor of $x$ on $P$, exists.
    This can be done in $O(\log{n})$ time
    using Lemma~\ref{l:reach-interval}.

\item Suppose that $x\prec \enter{Q}$ and $x\prec \leave{Q}$.
  This case is symmetric to the preceding case:
  we analogously check whether $u$ can reach $v'$
    by a satellite path and a subpath of $P$.

\item Finally, suppose $x\prec \enter{Q}$ and $\leave{Q}\prec x$.
      Let $y$ be the vertex following $x$ on $P$.
      Let us now apply Lemma~\ref{l:move-u-general} for $a=y$ and $b=p_\ell$,
      and let $u'=\vf_{G-x}(\vfs_G(u,y,p_\ell),y,p_\ell)$ be as in Lemma~\ref{l:move-u-general}.
      $u'$ can be again computed in $O(\log{n})$ time
      by first using Lemma~\ref{l:reach-interval} to obtain
      $\vfs_G(u,y,p_\ell)$ and then applying Lemma~\ref{l:scc-interval}.
      We conclude that there exists a $u\to v$ path $Q'$ in $G-x$
      that goes through $u'$, does not go through
      vertices of $P$ between $x$ and $u'$, and satisfies
      $\leave{Q'}\prec x$.
      Note that since $Q'$ last departs from $V(P)$ at a vertex
      earlier than $x$, and simultaneously goes through $u'\in V(P[y,p_\ell])$, $Q'$
      can be written as $Q'=YRST$, where $R=u'\to s$, $S=s\to t$,
      $s\in V(P[y,p_\ell])$, $t\in V(P)$, $t\prec x$,
      and $S$ is a satellite path.

      Let $t'$ be the earliest vertex of $P$ such that $t'$ can
      be reached from $u'$ by a path of the form $R'S'$, where $R'$
      is a subpath of $P$ and $S'$ is a satellite path.
      Observe that we have $t'\preceq t$, since the path
      $R$ does not go through any vertex between $x$ and $u'$ on $P$
      and thus $t$ can be reached from $u'$ by a path $P[u',s]S$.
      Note also that there
      exists a $u\to v$ path $Q''$ avoiding
      $x$ such that $Q''=YR'S'T'$: if we set $T'=P[t',t]T$,
      clearly all $Y,R',S',T'$ avoid $x$.

      Furthermore, note that we could replace path $T'$ with
      any $t'\to v$ path $T''$ in $G-x$ satisfying $\leave{T''}\prec x$
      and would still obtain a $u\to v$ path in $G-x$.
      Note that $\enter{T''}=t'$, so clearly $\enter{T''}\prec x$.
      Checking if any such path $T''=t'\to v$ exists can be
      performed as in case 2.

      It remains to show how to compute the earliest possible
      vertex $t'$ given $u'$.
      Observe that we want to pick such $s\in P[u',p_\ell]$
      that can reach the earliest vertex of $P$ using
      a satellite path.
      To this end we need to find
      the earliest out of the earliest backwards jumps of all vertices of $P[u',p_\ell]$.
      This can be computed in $O(1)$ time,
      after preprocessing the earliest backwards jumps
      for all suffixes of $P$ in linear time.

\end{enumerate}
To sum up, the query procedure tries to find
a $u\to v$ path $Q$ in $G-x$ for each of the above
four configurations of $\enter{Q}$, $\leave{Q}$ and $x$.
Each of the cases is handled in $O(\log{n})$ time.

\section{2-Reachability Queries}\label{s:2reach}

To answer 2-reachability queries we reuse the recursive approach
and extend the data structure of Section~\ref{s:failures-overview}.
Suppose we want to know whether there exist two vertex-disjoint
paths from $u$ to $v$ in some graph $G$ with suppressed set $A$ and separator $S_{ab}$
that arises in the recursive decomposition.
Recall that $S_{ab}-A$ can be decomposed into $O(1)$ simple
directed paths $P_1,\ldots,P_k$ in $G-A$.

First suppose that there is no $P_i$ such that
some $u\to v$ path goes through $V(P_i)$ in $G-A$.
Then either no $u\to v$ path in $G-A$ exists at all, or
it is contained in at most one child subgraph $G[V_i]-A$ ($i\in \{1,2\}$), 
where $V_1,V_2$ are the subsets of $V(G)$ strictly on one
side of the separator.
In this case, in $O(1)$ time we reduce our problem to searching for two vertex-disjoint
$u\to v$ paths in a graph
$G'[V_i\cup\{r'\}]$ with suppressed set $(A\cap V_i)\cup\{r'\}$,
as described in Section~\ref{s:thorup}.

Otherwise, there exists a $u\to v$ path in $G-A$ that goes
through $V(S_{ab})$. We will handle this case without
further recursive calls (so, the time needed to reduce
the original problem to this case is clearly $O(\log{n})$).
Observe that this guarantees that, in fact, no $u\to v$ path
goes through a vertex of $A$ in $G$. 
This is because the suppressed set can only
contain (possibly contracted) vertices of the separators
in the ancestors of $G$ in the recursion tree,
and no $u\to v$ path could go through these separators.
Moreover, every $u\to v$ path in the input graph $G_0$ (i.e., $G_0$ denotes the ancestor
of $G$ in the recursion tree that is the root) is preserved in $G$.
Consequently, for any $x\in V(G)$ we can test whether
there exists a $u\to v$ path in $G-x$ by issuing the $(u,v,x)$
query to the 1-sensitivity oracle built for the input graph $G_0$.

In fact, we will focus on finding
some $x\in V(G)$ that lies on all $u\to v$ paths in $G$ (equivalently, in $G-A$).
In other words, we will be looking for a \emph{separating vertex} certifying that
$v$ is not 2-reachable from $u$.
There are a few cases to consider.

First, assume that we can find two paths $P_i,P_j$, $i\neq j$
such that there exists both a $u\to v$ path through $P_i$ and a $u\to v$ path through
$P_j$ in $G$.
If some vertex $x\in V(G)$ lies on all $u\to v$ paths in $G$,
then either $x\in V(P_i)\cap V(P_j)$ or $x\in V(G)\setminus V(P_i)$
or $x\in V(G)\setminus V(P_j)$.
We can check all $x\in V(P_i)\cap V(P_j)$ in $O(\log{n})$ time using only $O(1)$ queries to our 1-sensitivity reachability oracle of Theorem~\ref{t:mainresult}, since 
$|V(P_i)\cap V(P_j)|=O(1)$.
In Section~\ref{s:2reach-not-on-path} we show that
the two latter cases, i.e., finding a separating vertex outside a path through
which one can reach $v$ from $u$, can be handled separately in $O(\log{n})$ time
after additional linear preprocessing.

Finally, suppose there is a unique path $P_i$
such that there exists a $u\to v$ path through $V(P_i)$ in $G$.
As we mentioned above, if some vertex $x\in V(G)\setminus V(P_i)$ lies on all $u\to v$
paths in $G$, it can be found as described in Section~\ref{s:2reach-not-on-path}.
Suppose that $x\in V(P_i)$.
Note that by the uniqueness of $i$, all $u\to v$ paths in $G$ are
preserved in $G[\bar{V_i}]$, where
$\bar{V_i}=V(G)\setminus V(S_{ab})\cup V(P_i)$.
Similarly to Section~\ref{s:failures-overview},
we can assume that $u$ and $v$ lie in a single connected component
of $G[\bar{V_i}]$ and the endpoints of $P_i$
lie on a single face of that component.
In Section~\ref{s:2reach-on-path} we show
that these assumptions enable us to find a separating vertex
on a path through which one can reach $v$ from $u$
in $O(\log^{2+o(1)}{n})$ time after additional $O(n\log^{5+o(1)}{n})$ preprocessing
and using $O(n\log^{2+o(1)}{n})$ space (per node of the recursion tree).
The total preprocessing time and space consumption can be analyzed
analogously as in Section~\ref{s:thorup}.
The following theorem summarizes our data structure.
\thmtworeach*
\subsection{Looking for a separating vertex outside the path}\label{s:2reach-not-on-path}

Let $G$ be a plane digraph and 
let $p_1\ldots p_\ell=P\subseteq G$ be a directed path.
Our goal in this section is to support queries of the following
form: for any $u,v\in V(G)$ such that there exists a $u\to v$ path going
through $V(P)$ in $G$, find a vertex $x\in V(G)\setminus V(P)$
such that all $u\to v$ paths in $G$ go through $x$, or decide there is no such vertex.
We start by fixing notation similar to that from Section~\ref{s:ds1}

For $i=\ell,\ldots,1$
let $L_i$ be the subset of $V(G)\setminus V(P)$
reachable from $p_i$ by a satellite path (wrt.~$P$), minus $\bigcup_{j=i+1}^\ell L_j$.
Denote by $D_i$ the dominator tree of $G[p_i\cup L_i]$.

Analogously, for $i=1,\ldots,\ell$, let
$\rev{L}_i$ be the subset of $V(G)\setminus V(P)$
that can reach $p_i$ by a satellite path, minus $\bigcup_{j=1}^{i-1} \rev{L}_j$.
Denote by $\rev{D}_i$ the dominator tree of $\rev{G}[p_i\cup \rev{L}_i]$.

All the sets $L_i,\rev{L}_i$ and all dominator trees $D_i,\rev{D}_i$
can be computed in linear time similarly as in Section~\ref{s:ds1}.
Let the functions $\vf_G,\vl_G,\vfs_G,\vls_G$ be defined (wrt. $P$) as in Section~\ref{s:ds1}.
For brevity we use $\prec$ to denote the order $\prec_P$.

We start with the first key lemma that allows us to reduce the
case when the vertices $\vfs_G(u)$ and $\vfs_G(v)$ are not strongly connected
to querying for an existence of a $u\to v$ path in $G-x$ for only two possible vertices $x$ and using the 1-sensitivity data structure that we have developed.

\begin{lemma}\label{l:2reach-not-on-path}
  Let $u,v\in V(G)$. Suppose the following conditions are satisfied:
  \begin{enumerate}
    \item there exists a $u\to v$ path in $G$ that goes through $V(P)$,
    \item $\vfs_G(u)=p_i$ and $\vls_G(v)=p_j$ are not strongly connected in $G$,
    \item there exists $x\in V(G)\setminus V(P)\setminus \{u,v\}$
    such that all $u\to v$ paths in $G$ go through $x$.
  \end{enumerate}
  Then either all $u\to v$ paths in $G$ go through
  the parent of $u$ in $\rev{D}_i$, or all of them go through the parent of $v$ in $D_j$.
\end{lemma}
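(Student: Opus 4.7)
The plan has three stages. \emph{First}, I would deduce that $p_i \preceq p_j$ on $P$. If instead $p_i \succ p_j$, take a $u \to v$ path through $V(P)$ guaranteed by hypothesis (1), and let $p_a$ and $p_b$ be its first and last vertices on $V(P)$; then $a \geq i > j \geq b$, so concatenating $P[p_i, p_a]$, the $p_a \to p_b$ portion of that path, and $P[p_b, p_j]$ gives $p_i \to p_j$, while $P[p_j, p_i]$ gives $p_j \to p_i$, so $p_i, p_j$ would be strongly connected, contradicting hypothesis (2).

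\emph{Second}, using $p_i \preceq p_j$, every concatenation $R \cdot P[p_i,p_j] \cdot S$ of a satellite $u \to p_i$ path $R$ with a satellite $p_j \to v$ path $S$ is a valid $u \to v$ path in $G$, and both $R \subseteq G[\rev{L}_i \cup \{p_i\}]$ and $S \subseteq G[L_j \cup \{p_j\}]$. By hypothesis (3), $x$ lies on each such ``canonical'' path, and since $x \notin V(P)$, either $x \in V(R)$ for every satellite $R$, or $x \in V(S)$ for every satellite $S$. In the first case $x$ dominates $u$ in $\rev{G}[\rev{L}_i \cup \{p_i\}]$ and is therefore an ancestor of $u$ in $\rev{D}_i$; in the second, $x$ is analogously an ancestor of $v$ in $D_j$.

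\emph{Third}, I would treat the first case and show that the parent $u^*$ of $u$ in $\rev{D}_i$ separates $u$ from $v$ in $G$; the second case is symmetric and yields the parent $v^*$ of $v$ in $D_j$. If $x = u^*$ we are done, so assume $u^*$ is a proper descendant of $x$ in $\rev{D}_i$; then $u^*$ does not dominate $x$, and there is a satellite $x \to p_i$ path $S_x \subseteq G[\rev{L}_i \cup \{p_i\}]$ avoiding $u^*$. Suppose for contradiction that some $u \to v$ path $Q$ in $G$ avoids $u^*$; then $Q$ contains $x$. I would show that the $u \to x$ prefix $R'$ of $Q$ lies in $G[\rev{L}_i \cup \{p_i\}]$: every vertex $w$ on a satellite segment of $R'$ reaches $p_i$ by a satellite path (through $x$), so $\vfs_G(w) \preceq p_i$, which together with $\vfs_G(u) = p_i$ forces $w \in \rev{L}_i$. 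Concatenating $R' \cdot S_x$ then gives a $u \to p_i$ path in $G[\rev{L}_i \cup \{p_i\}]$ avoiding $u^*$, contradicting that $u^*$ is the immediate dominator of $u$ in this subgraph.

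The main obstacle will be the reduction in the third stage, namely showing $R' \subseteq G[\rev{L}_i \cup \{p_i\}]$: the satellite-segment part above is routine, but ruling out that $R'$ leaves the subgraph through $V(P)\setminus\{p_i\}$ requires a careful case analysis. The key tool there will be hypothesis (2): any such $V(P)$-excursion of $Q$ before reaching $x$ would, when composed with the $P$-path from $p_j$ back to the excursion's entry point and then with the canonical suffix, synthesize a $p_j \to p_i$ path, violating the non-strong-connectivity of $p_i$ and $p_j$.
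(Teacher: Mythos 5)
Your stages 1--2 are sound and essentially track the paper's proof: the paper derives $\vfs_G(u)\prec\vls_G(v)$ by a slightly different cycle construction involving $\vf_G(u)$ and $\vl_G(v)$, but your route through the first and last $P$-vertices of a single $u\to v$ path is equally valid, and you correctly identify the dichotomy ($x$ on all satellite $u\to p_i$ paths, or on all satellite $p_j\to v$ paths) and the resulting dominator-tree ancestry of $x$. Stage 3 aims at the right contradiction --- it is the mirror image of the case the paper spells out --- but the sketch you offer for the ``main obstacle'' does not actually close it.

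Concretely, suppose $R'$ meets $V(P)$ and let $p_a$ be the first such vertex. One can show $a>i$ (a satellite $u\to p_i$ prefix of $R'$ would be forced to contain $x$, yet $x$ is the terminus of $R'$ and $Q$ is simple). If $a\geq j$, then $p_a\to x$ (a suffix of $R'$), $x\to p_i$ (via $S_x$), and $P[p_i,p_a]$ form a cycle through both $p_i$ and $p_j$, contradicting hypothesis (2), as you intend. But in the remaining subcase $i<a<j$ there is no ``$P$-path from $p_j$ back to the excursion's entry point'': $P$ runs from $p_a$ forward to $p_j$, not backward, so the object you want to compose with does not exist. This is precisely where the paper's proof does the real work. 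It first notes that the satellite prefix $u\to p_a$ of $R'$ avoids $x$, so for \emph{every} satellite $p_j\to v$ path $T$, the concatenation $(u\to p_a)\cdot P[p_a,p_j]\cdot T$ is a $u\to v$ path and must therefore contain $x$, forcing $x\in V(T)$; only then can one travel $p_j\to x$ along $T$, append $x\to p_i$ along $S_x$, and close the cycle with $P[p_i,p_j]$. Your sketch never establishes $x\in V(T)$ and never invokes a satellite $p_j\to v$ path at all, so this key subcase remains open. A secondary issue: the ``routine'' claim that each vertex $w$ on a satellite segment of $R'$ reaches $p_i$ by a satellite path through $x$ holds only for $w$ on the \emph{final} satellite segment of $R'$; for earlier segments the $w\to x$ portion of $R'$ passes through $V(P)$ and is not satellite, so this too silently relies on having already ruled out any $P$-excursion.
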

\begin{proof}
  Since $u$ can reach $v$ through $V(P)$, $\vfs_G(u)=p_i$ and $\vls_G(v)=p_j$ indeed exist.
  Moreover, for the same reason some path $Q$ from $\vf_G(u)$ to $\vl_G(v)$ exists.
  Clearly, there exist some paths $Q_u=\vfs_G(u)\to \vf_G(u)$ and $Q_v=\vl_G(v)\to \vls_G(v)$
  in $G$.
  If we had $\vls_G(v)\preceq \vfs_G(u)$, then $Q_v\cdot P[\vls_G(v),\vfs_G(u)]\cdot Q_u\cdot Q$
  would form a cycle in $G$ and thus $\vls_G(v)$ and $\vfs_G(u)$ would be strongly connected,
  a contradiction.
  Hence, we obtain $\vfs_G(u)\prec \vls_G(v)$.

If there were satellite paths $R=u\to \vfs_G(u)$ and $T=\vls_G(v)\to v$
such that $x$ is a vertex of neither $R$ nor $T$, 
then $RP[\vfs_G(u),\vls_G(v)]T$ would be a $u\to v$ path
not going through $x$, which is impossible.
Assume wlog. that $x$ lies on all
$\vls_G(v)\to v$ satellite paths in~$G$
(the proof when $x$ lies on all $u\to \vfs_G(u)$ satellite paths is symmetric).

  Let $w$ be the parent of $v$ in $D_j$.
  Note that since $x\notin V(P)$ and $x$ lies on all $\vls_G(v)\to v$ satellite paths,
  we have $\vls_G(x)=\vls_G(v)$ and hence $x\in L_j$.
  Moreover, since $x\neq v$, $x$ is a non-root ancestor of $v$ in $D_j$.
  It follows that $w\neq \vls_G(v)$ and, by Lemma~\ref{lemma:paths-through-SAP}, any $x\to v$ satellite path goes through $w$.

We will now prove that all $u\to v$ paths in $G$ go through $w$.
  To this end, suppose the contrary, that there exists a $u\to v$
path $Y$ that does not go through $w$.
  First consider the case when $V(Y)\cap V(P)\neq \emptyset$.
$Y$ can be expressed as $Y'Z$, where $Z$ is a satellite path $p\to v$,
  for some $p\in V(P)$.
By the definition of $\vls_G(v)$, $p\preceq \vls_G(v)$.
  If $x\in V(Z)$, then $w\in V(Z)$, since any $x\to v$ satellite
  path goes through $w$, a contradiction.
  Hence $x\notin V(Z)$ and thus $p\neq\vls_G(v)$, which implies $p\prec \vls_G(v)$.

  But $Y$ has to go through $x$, so there exists a path $x\to p$ in $G$.
Since $x$ is reachable from $\vls_G(v)$ (as $x\in L_j$) and $\vls_G(v)$ is reachable from $p$,
we conclude that $p$ and $\vls_G(v)$ are strongly connected.
  If we had $p\preceq \vfs_G(u)$, then Lemma~\ref{l:path-scc} would
  imply that $\vfs_G(u)$ and $\vls_G(v)$ are strongly connected, a contradiction.
  Hence $\vfs_G(u)\prec p$.
  
  Take any satellite path $S=u\to \vfs_G(u)$.
  If $x\notin V(S)$, then $S\cdot P[\vfs_G(u),p]\cdot Z$ is a $u\to v$
  path not going through $x$, a contradiction.
  Otherwise, $x$ can reach $\vfs_G(u)$, $\vfs_G(u)$ can reach $\vls_G(v)$,
  and $\vls_G(v)$ can reach $x$.
  As a result, $\vfs_G(u)$ and $\vls_G(v)$ are strongly connected, a contradiction.

  Finally, suppose $V(Y)\cap V(P)=\emptyset$. Consider the $x\to v$ subpath $Z'$ of $Y$.
  Since $Z'$ is a satellite path, we showed that it has to go through $w$, a contradiction.
\end{proof}

We now turn to the case when $\vfs_G(u)$ and $\vls_G(v)$ are
strongly connected.
Recall that by Lemma~\ref{l:path-scc}, for any $p_j\in V(P)$, the
vertices of $P$ that are strongly connected to $p_j$ constitute a
subpath $P[p_i,p_k]$, where $i\leq j\leq k$.
Denote by $p_j^*$ the latest vertex of $P$ that is
strongly connected to $p_j$, i.e., $p_j^*:=p_k$.
Define $W_j=\bigcup_{l=i}^k \left(\{p_l\}\cup L_l\cup \rev{L}_l\right)$.
Let $D_j^*$ ($\rev{D_j^*}$) be the dominator tree of $G[W_j]$ ($\rev{G}[W_j]$, resp.)
with start vertex $p_j^*$.
Observe that the total number of different sets $W_j$ equals the
number of strongly connected components of $G$ that have a non-empty intersection with $V(P)$,
and we have $W_a=W_b$ if and only if $p_a$ and $p_b$ are strongly connected in $G$.
Each vertex $v\in V(G)$ belongs to at most two sets $W_j$:
if $v\in V(P)$, then it belongs to a unique set $W_j$, whereas
if $v\in V(G)\setminus V(P)$, then there is at most one $L_a$ containing
$v$ and at most one $\rev{L}_b$ with $v\in \rev{L}_b$.

Therefore, the total size of all distinct graphs $G[W_j]$ is linear
in the size of $G$. Consequently, we can compute all dominator
trees $D_j^*,\rev{D_j^*}$ in linear time.

\begin{lemma}\label{l:2reach-scc}
  Let $u,v\in V(G)$. Suppose the following conditions are satisfied:
  \begin{enumerate}
    \item there exists a $u\to v$ path in $G$ that goes through $V(P)$,
    \item $\vfs_G(u)=p_i$ and $\vls_G(v)=p_j$ are strongly connected in $G$.
  \end{enumerate}
  Then for any path $Q=u\to v$ such that $V(P)\cap V(Q)\neq\emptyset$,
  $Q\subseteq G$ if and only if $Q\subseteq G[W_i]$.
  Moreover, $u$ can reach $p_i^*$ in $G[W_i]$ and
  $v$ can be reached from $p_i^*$ in $G[W_i]$.
\end{lemma}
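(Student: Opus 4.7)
The plan is to identify the unique range of indices $[a,b]$ such that $\{p_a,\ldots,p_b\}$ is precisely the set of vertices of $P$ lying in the strongly connected component of $p_i$ in $G$; by Lemma~\ref{l:path-scc} and the assumption that $p_i$ and $p_j$ are strongly connected, both $i$ and $j$ belong to $[a,b]$ and $W_i=W_j=\bigcup_{l=a}^{b}(\{p_l\}\cup L_l\cup\rev{L}_l)$. The ``if'' direction of the biconditional is immediate since $G[W_i]\subseteq G$, so the work lies in showing that any $u\to v$ path $Q$ through $V(P)$ lies entirely in $G[W_i]$.

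First I will show that every vertex of $V(Q)\cap V(P)$ belongs to $\{p_a,\ldots,p_b\}$. Let $r$ be the first vertex of $Q$ on $V(P)$ and $t$ the last, so that the $u\to r$ prefix and the $t\to v$ suffix of $Q$ are satellite paths. By the definition of $\vfs_G(u)=p_i$ we obtain $p_i\preceq r$, and analogously $t\preceq p_j$. Composing $r\to t$ along $Q$ with $P[p_i,r]$ and $P[t,p_j]$ and using strong connectivity of $p_i,p_j$ produces cycles through $r$ and $t$ with $p_i$, so $r,t$ lie in the SCC of $p_i$ and hence $r,t\in\{p_a,\ldots,p_b\}$. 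Any further visit $p_c$ of $Q$ to $V(P)$ is sandwiched between $r$ and $t$ by the subpaths $r\to p_c$ and $p_c\to t$ of $Q$, which together with strong connectivity of $r$ and $t$ force $p_c$ into the same SCC, i.e.\ $c\in[a,b]$.

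Next I handle the off-path vertices. Each maximal satellite segment of $Q$ connects two vertices $p_{c_g},p_{c_{g+1}}\in\{p_a,\ldots,p_b\}$, and for an internal vertex $w$ of such a segment the suffix is a satellite witness $w\to p_{c_{g+1}}$, placing $w\in\rev{L}_{m'}$ for the smallest $m'$ with a satellite path $w\to p_{m'}$; symmetrically the prefix places $w\in L_{m''}$. If $m'<a$, concatenating the $u\to w$ portion of $Q$ with the witness $w\to p_{m'}$ (and shortcutting to a simple path) yields a satellite path $u\to p_{m'}$, contradicting $\vfs_G(u)=p_i$ since $m'<a\le i$. If $m''>b$, then $p_{m''}\to w\to p_{c_{g+1}}$ reaches $p_{c_{g+1}}$ via a satellite walk, and closing with $P[p_{c_{g+1}},p_{m''}]$ (feasible since $c_{g+1}\le b<m''$) shows $p_{m''}$ is strongly connected to $p_i$, contradicting $m''>b$. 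Hence $w\in L_{m''}\cup\rev{L}_{m'}\subseteq W_i$, completing the ``only if'' direction.

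For the final claim I will construct paths in $G[W_i]$ explicitly. A satellite witness $u\to p_i$ exists because $u\in\rev{L}_i$ (or $u=p_i$), and its intermediate vertices lie in $W_i$ by the same off-path argument; concatenating with $P[p_i,p_i^*]\subseteq G[W_i]$ yields $u\rightsquigarrow p_i^*$ in $G[W_i]$. For $p_i^*\rightsquigarrow v$, I will first note that the entire SCC $C$ of $p_i$ is contained in $W_i$: any $w\in C$ lies on a $p_i\to w$ path whose last visit to $V(P)$ is some $p_m$ with $m\in[a,b]$ (by the same cycle-closing argument), and the satellite suffix places $w$ in $L_{m''}\subseteq W_i$. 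Since $p_i^*=p_b$ and $p_j$ both lie in $C$, strong connectivity gives a $p_b\to p_j$ path inside $G[C]\subseteq G[W_i]$, which composed with a satellite $p_j\to v$ (in $G[W_i]$ by the same argument) finishes the proof. The main obstacle will be the off-path case, ruling out intermediate $w$ with $L$- or $\rev{L}$-index outside $[a,b]$, which is exactly where strong connectivity of $p_i$ and $p_j$ is essential.
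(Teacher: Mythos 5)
Your overall approach matches the paper's: reduce to showing that every $u\to v$ path $Q$ through $V(P)$ satisfies $V(Q)\subseteq W_i$, by first forcing $V(Q)\cap V(P)\subseteq\{p_a,\ldots,p_b\}$ and then placing each off-path vertex in an appropriate layer $L_{m''}$ or $\rev{L}_{m'}$ with $m',m''\in[a,b]$.

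There is, however, a genuine gap in the off-path argument. You declare that ``each maximal satellite segment of $Q$ connects two vertices $p_{c_g},p_{c_{g+1}}\in\{p_a,\ldots,p_b\}$'' and then apply two rule-out arguments uniformly, but neither applies to all segment types. When $u\notin V(P)$, the first segment is $u\to r$ and has no $p_{c_g}$ on the left; when $v\notin V(P)$, the last segment is $t\to v$ and has no $p_{c_{g+1}}$ on the right. Your rule-out for $m'<a$ concatenates ``the $u\to w$ portion of $Q$'' with the satellite witness $w\to p_{m'}$ and asserts the result is a satellite path, but for $w$ on a \emph{middle} segment, the $u\to w$ prefix of $Q$ already visits $p_{c_1},\ldots,p_{c_g}\in V(P)$, so shortcutting cannot make it satellite; this step is simply false as stated. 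The correct argument there is instead the one you use in the other direction: $p_{c_g}\to w\to p_{m'}$ is a satellite walk, and since $m'<a\le c_g$ one can close the cycle with $P[p_{m'},p_{c_g}]$, contradicting $m'<a$. Symmetrically, your rule-out for $m''>b$ closes with ``$P[p_{c_{g+1}},p_{m''}]$,'' but for $w$ on the final $t\to v$ segment there is no $p_{c_{g+1}}\in V(P)$; there you must instead argue that $p_{m''}\to w\to v$ is a satellite walk, so $\vls_G(v)\succeq p_{m''}\succ p_j$, contradicting $\vls_G(v)=p_j$. So the two mechanisms you invoke ($\vfs_G/\vls_G$-type and strong-connectivity-type) are both needed, but each must be attached to the right segment type; as written, the proof fails on middle segments in one direction and on the final segment in the other. (Your ``moreover'' construction is fine, and is a reasonable explicit alternative to the paper's trick of reapplying the main claim with $v:=p_i^*$ and $u:=p_i^*$.)
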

\begin{proof}
  We prove that no $u\to v$ path $Q$ going through $V(P)$ in $G$ goes through
  the vertices of $V(G)\setminus W_i$.
  This will imply that in fact any such $Q$ exists in $G$
  if and only if it exists in $G[W_i]$.

  We start by noting that $V(Q)\subseteq V(P)\cup \bigcup_{i=1}^\ell (L_i\cup \rev{L}_i)$.
  Otherwise $Q$ would go through a vertex $w$ that lies neither on $P$ nor in
  any of $L_i$ or $\rev{L}_i$.
  Since $Q$ also goes through some $p\in V(P)$, $w$ either can reach $p$,
  or is reachable from $p$, a contradiction.

  Observe that $Q$ cannot have
  a satellite subpath $z\to v$, where $z\in L_k$
  and $p_i^*\prec p_k$.
  If this was the case, $p_k$ could reach $v$ by a satellite path
  and $\vls_G(v)\preceq p_i^*\prec p_k$,
  which would contradict the definition of $\vls_G(v) = p_i$.
  Similarly we can prove that $Q$ does not contain
  any $u\to z$ satellite subpath such that
  $z\in \rev{L_l}$, where $p_l\prec p_i^*$ and
  $p_l$ is not strongly connected to $p_i^*$.

  Suppose that for some $p_k\in V(P)$, $p_i^*\prec p_k$, we have $p_k\in V(Q)$.
  Then the path $Q$ can be expressed as $Q=Q'T$,
  where $T=p_t\to v$ is a (possibly $0$-edge, if $v\in V(P)$) satellite path.
  If $v\in V(P)$, then $\vls_G(v)=v$ and $v\preceq p_i^*\prec p_k$.
  But $p_k$ can reach $v$, so $p_i^*$ and $p_k$ are strongly connected,
  which contradicts the definition of $p_i^*$.
  So we have $v\notin V(P)$.
  Let $T'$ be the $p_k\to p_t$ suffix subpath of $Q'$.
  If $p_k\neq p_t$, then $p_t\preceq \vls_G(v)\preceq p_i^*\prec p_k$, 
  so $p_t$ can reach $p_k$ and thus $p_k,p_i^*,\vls_G(v),p_t$
  are all strongly connected, which contradicts the definition of $p_i^*$.
  Therefore, $p_k=p_t$.
  But this means that $p_k$ can reach $v$ by a satellite
  path and $\vls_G(v)\preceq p_i^*\prec p_k$ which in turn contradicts
  the definition of $\vls_G(v)$.
  Symmetrically we prove that $Q$ cannot go through a vertex
  $p_l\in V(P)$ that appears on $P$ earlier than the earliest
  vertex of $P$ strongly connected to $p_i^*$.
  We conclude that each $p\in V(Q)\cap V(P)$ is strongly connected
  to $p_i^*$.

  Finally, suppose that $Q$ goes through a vertex $z\in L_k$, where $p_i^*\prec p_k$.
  We have already proven that $Q$ cannot contain a $z\to v$
  satellite subpath.
  So the $z\to v$ subpath of $Q$ has to go through a vertex $p\in V(P)$.
  But we have already proven that $p$ has to be strongly connected
  to $p_i^*$. Hence $p\preceq p_i^*$.
  So there exist paths $p_k\to z$, $z\to p$, $p\to p_i^*$, and $p_i^*\to p_k$.
  Thus, $p_i^*$ is strongly connected to $p_k$, which contradicts
  the definition of $p_i^*$.
  Similarly we prove that $Q$ cannot go through a vertex $z\in \rev{L_k}$ where
  $p_k\prec p_i^*$ and $p_k$ is not strongly connected to $p_i^*$.
  We conclude $Q\subseteq G[W_i]$.

  To see that $u$ can reach $p_i^*$ ($p_i^*$ can reach $v$) in $G[W_i]$ it is enough to note
  that $\vls_G(p_i^*)=\vfs_G(p_i^*)=p_i^*$ and apply
  what we have proved for $v:=p_i^*$ ($u:=p_i^*$ respectively).
\end{proof}

To make use of Lemma~\ref{l:2reach-scc}, we will need one more result.

\begin{lemma}\label{l:2reach-source}
  Let $G$ be a digraph and let $u,v,s$ be pairwise distinct vertices of $G$.
  Let $H\subseteq G$ be such that $H$ contains (preserves) all $u\to s$ paths in $G$
  and all $s\to v$ paths in $G$.
  Let $D$ ($\rev{D}$) be a dominator tree
  with start vertex $s$ in $H$ ($\rev{H}$, resp.).

  Let $d_u$ be the parent of $u$ in $\rev{D}$ and let
  $d_v$ be the parent of $v$ in $D$.
  Then $v$ is 2-reachable from $u$ in $G$ if and only if
  $u$ can reach $v$ in each of the graphs $G-d_u$, $G-d_v$ and $G-s$.
\end{lemma}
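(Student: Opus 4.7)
The plan is to prove the two directions of the equivalence separately. For the forward direction I would apply Menger's theorem directly: if $v$ is $2$-reachable from $u$ in $G$, then removing any single vertex $x\notin\{u,v\}$ leaves $v$ reachable from $u$, and I instantiate this at $x\in\{s,d_u,d_v\}$. Here I rely on $d_u,d_v\notin\{u,v\}$: the properties $d_u\neq u$ and $d_v\neq v$ follow because each is a proper parent in a dominator tree, while $d_u\neq v$ and $d_v\neq u$ are guaranteed by the setup in which the lemma is applied (otherwise the right-hand side of the equivalence would be automatically false).

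For the backward direction I would argue the contrapositive, starting from Menger: if $v$ is not $2$-reachable from $u$ in $G$, there exists a separator $x\in V(G)\setminus\{u,v\}$ lying on every $u\to v$ path of $G$, and I aim to show that one of $d_u,d_v,s$ is itself such a separator. The first step is a splitting lemma: if $x\neq s$, then either every $u\to s$ path of $G$ passes through $x$, or every $s\to v$ path of $G$ passes through $x$. I would prove this by contradiction --- a $u\to s$ path $\pi_1$ and an $s\to v$ path $\pi_2$ both avoiding $x$ could be concatenated (using $x\neq s$) into a $u\to v$ walk avoiding $x$, from which a simple $u\to v$ path avoiding $x$ can be extracted, contradicting that $x$ is a separator. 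The case $x=s$ is immediate, since then $u$ cannot reach $v$ in $G-s$.

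The decisive step --- and in my view the main obstacle --- is the \emph{dominator upgrade}: converting the abstract separator $x$ into the immediate dominator $d_u$ (and symmetrically $d_v$ in the other case). Because $H$ contains every $u\to s$ path of $G$, the vertex $x$ dominates $u$ in $\rev{H}$ from $s$, so $x$ is an ancestor of $u$ in $\rev{D}$; either $x=d_u$ (and we are done), or $x$ is a strict ancestor of $d_u$. In the latter case I would exploit the tree structure: since $d_u$ is not an ancestor of $x$ in $\rev{D}$, it fails to dominate $x$ from $s$ in $\rev{H}$, so there is an $s\to x$ path in $\rev{H}$ (equivalently, an $x\to s$ path $Q$ in $H\subseteq G$) that avoids $d_u$. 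If for contradiction some $u\to v$ path $P$ in $G$ avoided $d_u$, then its prefix $P[u,x]$ avoids $d_u$, and concatenating $P[u,x]$ with $Q$ and extracting a simple $u\to s$ path would yield a $u\to s$ path in $G$ avoiding $d_u$. But $d_u$ dominates $u$ in $\rev{H}$ from $s$, and $H$ preserves every $u\to s$ path of $G$ by hypothesis, so every $u\to s$ path of $G$ must pass through $d_u$, a contradiction. Hence $d_u$ lies on every $u\to v$ path of $G$, so $u$ cannot reach $v$ in $G-d_u$. The symmetric case, where every $s\to v$ path of $G$ passes through $x$, proceeds identically, now using $D$, $d_v$, and the preservation of $s\to v$ paths.
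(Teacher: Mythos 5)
Your proof is correct and follows essentially the same strategy as the paper: handle $x=s$ separately, show that otherwise $x$ must dominate $u$ in $\rev{H}$ or $v$ in $H$ (you derive this by concatenating a $u\to s$ and an $s\to v$ path in $G$, the paper by noting that otherwise $H-x$ would contain both halves), and then upgrade from the abstract separator $x$ to the immediate dominator $d_u$ (resp.\ $d_v$) using an $x\to s$ path in $H$ avoiding $d_u$ together with the hypothesis that $H$ preserves all $u\to s$ paths of $G$. The only cosmetic difference is that the paper first proves the intermediate claim that $d_u$ lies on every $u\to x$ path in $G$ (via Lemma~\ref{lemma:paths-through-SAP}) and then composes it with $x$ being a $u$--$v$ separator, whereas you short-circuit this by directly contradicting the existence of a $u\to v$ path avoiding $d_u$.
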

\begin{proof}
  The ``$\impliedby$'' direction is trivial so we only need
  to prove that existence of a path $u\to v$ in each of the graphs $G-d_u$, $G-d_v$ and $G-s$
  implies 2-reachability in $G$.
  Suppose there exists some $x\in V(G)\setminus\{u,v\}$ lying on all $u\to v$
  paths in $G$.
  If $x=s$, then there is no $u\to v$ path in $G-s$.

  Otherwise, if $x$ is neither an ancestor of $u$ in $\rev{D}$
  nor the ancestor of $v$ in $D$, then
  there exist a path $u\to s$ in $H-x$ and there exists
  a path $s\to v$ in $H-x$.
  Since $H-x\subseteq G-x$, $u$ can reach $v$ in $G-x$,
  a contradiction.

  Therefore, $x$ either lies on all $u\to s$ paths in $H$,
  or it lies on all $s\to v$ paths in $H$.
  Wlog. assume the former case, as the proof in the latter
  is symmetric.
  Since $H$ preserves all $u\to s$ paths of $G$,
  $x$ lies on all $u\to s$ paths in $G$.
  If $x=d_u$, we are done.
  Otherwise, $x$ is a proper ancestor of $d_u$ in $\rev{D}$,
  so, by Lemma~\ref{lemma:paths-through-SAP},
  $d_u$ lies on all $u\to x$ paths in $H$.
  Observe that in fact $d_u$ lies on all $u\to x$ paths in $G$.
  If this was not the case, there would be a $u\to s$ path
  in $H$ avoiding $d_u$, a contradiction.
  Therefore, since $d_u$ lies on all $u\to x$ paths in $G$,
  and $x$ lies on all $u\to v$ paths in $G$,
  $d_u$ also lies on all $u\to v$ paths in $G$.
  Equivalently, $u$ cannot reach $v$ in $G-d_u$.
\end{proof}

Now, Lemmas~\ref{l:2reach-scc}~and~\ref{l:2reach-source} allow us
to handle 2-reachability queries between pairs of vertices $u,v$
such that $\vfs_G(u)$ is strongly connected to $\vls_G(v)$.
Indeed, if $\vfs_G(u)=p_i$ and $\vls_G(v)=p_j$, then, by Lemma~\ref{l:2reach-scc}, $G[W_i]$
preserves all $u\to p_i^*$ and $p_j^*\to v$ paths in $G$.
Hence, by Lemma~\ref{l:2reach-source}, it is sufficient to 
look for a $u\to v$ path in $G-x$ using the 1-sensitivity data structure
for $w\in \{p_i^*,d_u,d_v\}$, where
$d_u$ is the parent of $u$ in $\rev{D_i^*}$, and $d_v$ is the parent
of $v$ in $D_j^*$.

\subsection{Looking for a separating vertex on the path}\label{s:2reach-on-path}

Let $G$ be a plane digraph and 
let $p_1\ldots p_\ell=P\subseteq G$ be a directed path
whose endpoints lie on a single face of $G$.
Our goal is to support queries of the following
form: for any $u,v\in V(G)$ such that there exists a $u\to v$ path going
through $V(P)$ in $G$, find a vertex $x\in V(P)$
such that all $u\to v$ paths in $G$ go through $x$, or decide there is none.

Observe that in the previous section, the query algorithm
using Lemmas~\ref{l:2reach-scc}~and~\ref{l:2reach-source}
for the case when $\vfs_G(u)$ and $\vls_G(v)$ are strongly connected
actually did not require the separating vertex $x$ that we were looking
for to lie outside of $P$.
We will leverage this fact so that we can additionally
assume that $\vfs_G(u)$ and $\vls_G(v)$ are not strongly connected.

We can focus on the case when $u$ cannot reach $v$ in $G-V(P)$,
since otherwise clearly no failure of a vertex of $P$ can make
$v$ unreachable from $u$.

Similarly to the proof of Lemma~\ref{l:2reach-not-on-path},
from the fact that $\vfs_G(u)$ and $\vls_G(v)$ are
not strongly connected we can conclude that $\vfs_G(u)\prec \vls_G(v)$.
Define $l$ to be the latest vertex of $P$ such that
$l\preceq \vls_G(v)$ and there exists a satellite path $u\to l$.
Note that $l$ exists and $\vfs_G(u)\preceq l$ by $\vfs_G(u)\prec\vls_G(v)$.
Similarly, let us define $f$ to be the earliest vertex of $P$
such that $\vfs_G(u)\preceq f$ and there exists a $f\to v$ satellite path.
Analogously, $f$ exists and $f\preceq \vls_G(v)$.
Note that by using the data structure of Lemma~\ref{l:reach-interval}, the vertices $f,l$
can be computed in $O(\log{n})$ time after linear preprocessing.

\begin{lemma}\label{l:2reach-aux1}
  Suppose for some $x\in V(P)$ there is no $u\to v$ path in $G-x$.
  Then $l\preceq x\preceq f$.
\end{lemma}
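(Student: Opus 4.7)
\medskip

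\noindent\textbf{Proof plan for Lemma~\ref{l:2reach-aux1}.}
I will prove the contrapositive: if $x\in V(P)$ satisfies $x\prec l$ or $f\prec x$, then there exists a $u\to v$ path in $G-x$. Throughout, I use that in the larger 2-reachability query we look for separating vertices $x\notin\{u,v\}$, so we may assume $x\neq u,v$. I will also rely on the definitions of $f$ and $l$, the fact that $\vfs_G(u)\preceq f\preceq\vls_G(v)$ and $\vfs_G(u)\preceq l\preceq\vls_G(v)$, and the key property of a satellite path $Q$ from $a$ to $b$: $V(Q)\cap V(P)\subseteq\{a,b\}$.

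\medskip

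\noindent\emph{Case $x\prec l$.} The plan is to exhibit a concrete $u\to v$ walk in $G-x$ consisting of three pieces: (i) a satellite path $S_1=u\to l$, which exists by the definition of $l$; (ii) the subpath $P[l,\vls_G(v)]$ of $P$, which is a valid directed path because $l\preceq\vls_G(v)$; (iii) a satellite path $S_2=\vls_G(v)\to v$, which exists by the definition of $\vls_G(v)$ (and $v\in V(P)$ is handled by taking $S_2$ to be the trivial zero-edge path). I then argue that $x$ appears on none of these three pieces. For $S_1$ and $S_2$ this follows because $x\in V(P)$, hence $x$ could only appear at an endpoint of a satellite path; but $x\neq u$, $x\neq v$, and moreover $x\neq l$ and $x\neq\vls_G(v)$ because $x\prec l\preceq\vls_G(v)$. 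For $P[l,\vls_G(v)]$, the constraint $x\prec l$ means $x$ lies strictly earlier on $P$ than every vertex of this subpath.

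\medskip

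\noindent\emph{Case $f\prec x$.} Symmetrically, I build a $u\to v$ path in $G-x$ as the concatenation of: (i) a satellite path $S_1'=u\to\vfs_G(u)$ (which exists by definition of $\vfs_G(u)$); (ii) the subpath $P[\vfs_G(u),f]$ of $P$, which exists since $\vfs_G(u)\preceq f$; (iii) a satellite path $S_2'=f\to v$ guaranteed by the definition of $f$. Again $x$ avoids $S_1'$ and $S_2'$ because $x\in V(P)$ and $x\notin\{u,v,\vfs_G(u),f\}$ (the last two because $\vfs_G(u)\preceq f\prec x$), and $x$ avoids $P[\vfs_G(u),f]$ because $f\prec x$ on $P$.

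\medskip

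\noindent\textbf{Obstacles.} The proof is essentially a direct unpacking of the definitions of $f$ and $l$, so there is no substantial technical obstacle. The only subtlety is making sure $x$ does not secretly coincide with an internal vertex of one of the constructed pieces; this is why I explicitly invoke the assumption $x\notin\{u,v\}$ and use the strict $\prec$ in the hypotheses $x\prec l$ (resp.\ $f\prec x$) together with $l\preceq\vls_G(v)$ (resp.\ $\vfs_G(u)\preceq f$) to rule out all remaining endpoint coincidences. One minor care is required if $u\in V(P)$ or $v\in V(P)$, in which case the corresponding satellite path degenerates to a single vertex, but the argument above still goes through because then $\vfs_G(u)=u$ or $\vls_G(v)=v$ and the remaining inequalities keep $x$ off the piece.
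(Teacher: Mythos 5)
Your proof is correct and takes essentially the same approach as the paper: in the case $x\prec l$, concatenate a satellite path $u\to l$, the subpath $P[l,\vls_G(v)]$, and a satellite path $\vls_G(v)\to v$ to obtain a $u\to v$ walk in $G-x$ (and argue symmetrically for $f\prec x$). The paper's proof is terser — it does not spell out the endpoint bookkeeping you include, namely that $x\in V(P)$ can only hit a satellite piece at its endpoints and that $x\neq u,v$ together with the strict ordering rules those out — but the construction is identical.
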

\begin{proof}
  We prove that if $x\prec l$, then there is a $u\to v$ path in $G-x$.
  The proof if $f\prec x$ is analogous.
  Take any satellite path $R=u\to l$ and any $\vls_G(v)\to v$ satellite path $T$.
  Then, since $x\prec l\preceq \vls_G(v)$, $R\cdot P[l,\vls_G(v)]\cdot T$
  is a $u\to v$ path in $G-x$, a contradiction.
\end{proof}

By Lemma~\ref{l:2reach-aux1}, if $f\prec l$, no vertex of $P$
lies on all $u\to v$ paths in $G$.
Hence, in the following let us assume $l\preceq f$.
Moreover, since in $O(\log{n})$ time we can check whether a single vertex
lies on all $u\to v$ paths in $G$, without loss of generality
we can assume that none of the vertices $f,l$
lies on all $u\to v$ paths in $G$.
Consequently, we assume $l\prec x\prec f$.

Let $x\in V(P(l,f))$ and consider any path $Q_x=u\to v$ in $G-x$.
Clearly, $Q_x$ can be expressed as $RST$, where $R$ is a $u\to \enter{Q_x}$
satellite path and $T$ is a $\leave{Q_x}\to v$ satellite path.

We now prove a few structural lemmas.

\begin{lemma}\label{l:2reach-enter-leave}
  We have:
  \begin{itemize}
    \item either $\vfs_G(u)\preceq \enter{Q_x}\preceq l$ or $\vls_G(v)\prec \enter{Q_x}$.
    \item either $\leave{Q_x}\prec \vfs_G(u)$ or $f\preceq \leave{Q_x}\preceq \vls_G(v)$.
  \end{itemize}
\end{lemma}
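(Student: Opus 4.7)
My plan is to observe that both bullets follow essentially from the extremal definitions of $l$ and $f$, together with the fact that $\enter{Q_x}$ is the endpoint of a satellite $u \to \enter{Q_x}$ path and $\leave{Q_x}$ is the endpoint of a satellite $\leave{Q_x} \to v$ path (since $R$ and $T$ are satellite paths by the way $Q_x$ was decomposed). This gives two ``trivial'' inequalities that essentially solve half of each bullet, and the other halves come from the maximality/minimality of $l$ and $f$.

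For the first bullet, I would first note that since $R$ is a satellite $u \to \enter{Q_x}$ path with $\enter{Q_x} \in V(P)$, the vertex $\enter{Q_x}$ is reachable from $u$ by a satellite path, so $\vfs_G(u) \preceq \enter{Q_x}$ holds unconditionally. It then suffices to rule out the range $l \prec \enter{Q_x} \preceq \vls_G(v)$. But in that range, $\enter{Q_x}$ itself satisfies both conditions in the definition of $l$ (a vertex of $P$ that is $\preceq \vls_G(v)$ and admits a satellite path from $u$), contradicting the choice of $l$ as the latest such vertex. Hence either $\enter{Q_x} \preceq l$, in which case $\vfs_G(u) \preceq \enter{Q_x} \preceq l$, or $\vls_G(v) \prec \enter{Q_x}$.

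The second bullet is symmetric. Since $T$ is a satellite $\leave{Q_x} \to v$ path with $\leave{Q_x} \in V(P)$, we get $\leave{Q_x} \preceq \vls_G(v)$ for free. So I only need to rule out $\vfs_G(u) \preceq \leave{Q_x} \prec f$. In that range, $\leave{Q_x}$ satisfies both defining conditions of $f$ (a vertex of $P$ that is $\succeq \vfs_G(u)$ and admits a satellite path to $v$), contradicting the minimality of $f$. Hence either $\leave{Q_x} \prec \vfs_G(u)$ or $f \preceq \leave{Q_x} \preceq \vls_G(v)$.

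No real obstacle is expected here: the lemma is just a bookkeeping statement that records what the extremal definitions of $l$ and $f$ force the entry/exit vertices of any surviving $u \to v$ path to look like. The only thing to double check is that the decomposition $Q_x = RST$ with $R, T$ satellite is actually available, which is immediate from the definitions of $\enter{Q_x}$ and $\leave{Q_x}$ as the first and last vertices of $Q_x$ on $P$.
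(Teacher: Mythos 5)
Your proof is correct and takes essentially the same approach as the paper: both use the definitions of $\vfs_G(u)$ and $\vls_G(v)$ to get the unconditional inequalities and then invoke the extremality of $l$ and $f$ to rule out the middle ranges. You simply spell out the second bullet explicitly where the paper appeals to symmetry.
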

\begin{proof}
  We only prove the former item, as the proof of the latter is completely analogous.
  $\enter{Q_x}\prec \vfs_G(u)$ is impossible by the definition of $\vfs_G(u)$.
  On the other hand, $l\prec \enter{Q_x}\preceq \vls_G(v)$ would contradict
  the definition of $l$.
\end{proof}
\begin{lemma}\label{l:2reach-impossible}
  $\vls_G(v)\prec \enter{Q_x}$ and $\leave{Q_x}\prec\vfs_G(u)$ cannot hold simultaneously.
\end{lemma}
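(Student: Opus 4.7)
The plan is to derive a contradiction with the standing assumption of this section that $\vfs_G(u)$ and $\vls_G(v)$ are not strongly connected in $G$. Suppose, toward contradiction, that both $\vls_G(v)\prec \enter{Q_x}$ and $\leave{Q_x}\prec\vfs_G(u)$ hold.

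First I would combine these hypotheses with the inequalities we already know. The definition of $l$ gives $\vfs_G(u)\preceq l$ (since $\vfs_G(u)$ is the endpoint of a satellite path from $u$ and $\vfs_G(u)\prec \vls_G(v)$), and symmetrically $f\preceq \vls_G(v)$. Combined with the current-case assumption $l\prec f$, this yields the chain
\[
\leave{Q_x}\;\prec\;\vfs_G(u)\;\preceq\; l\;\prec\; f\;\preceq\; \vls_G(v)\;\prec\;\enter{Q_x}.
\]

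Next I would exhibit a directed cycle in $G$ through $\vfs_G(u)$ and $\vls_G(v)$. In one direction, $\vfs_G(u)\preceq \vls_G(v)$ gives the subpath $P[\vfs_G(u),\vls_G(v)]\subseteq G$. In the other direction, I build a $\vls_G(v)\to\vfs_G(u)$ path by concatenating three pieces: the subpath $P[\vls_G(v),\enter{Q_x}]$ (valid since $\vls_G(v)\prec\enter{Q_x}$), the middle piece $S=\enter{Q_x}\to\leave{Q_x}$ of $Q_x$ (which lies in $G-x\subseteq G$), and the subpath $P[\leave{Q_x},\vfs_G(u)]$ (valid since $\leave{Q_x}\prec\vfs_G(u)$). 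Concatenating these gives a $\vls_G(v)\to \vfs_G(u)$ path in $G$, so $\vfs_G(u)$ and $\vls_G(v)$ are strongly connected in $G$.

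This contradicts the operating assumption of Section~\ref{s:2reach-on-path} that $\vfs_G(u)$ and $\vls_G(v)$ are not strongly connected, completing the proof. There is no real obstacle here; the argument is purely a bookkeeping exercise once one notices that the ordering forced by the two hypotheses together with the definitions of $f$ and $l$ make the middle piece $S$ close a cycle on $P$.
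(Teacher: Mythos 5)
Your proof is correct and follows essentially the same route as the paper: both arguments exhibit a directed cycle through $\vfs_G(u)$ and $\vls_G(v)$ built from subpaths of $P$ together with the middle piece $S$ of $Q_x$, contradicting the standing assumption that $\vfs_G(u)$ and $\vls_G(v)$ are not strongly connected. The detour through the inequalities $\vfs_G(u)\preceq l\prec f\preceq \vls_G(v)$ is harmless but unnecessary; the paper uses only $\vfs_G(u)\prec \vls_G(v)$, which is all that is needed.
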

\begin{proof}
  Suppose $\vls_G(v)\prec \enter{Q_x}$ and $\leave{Q_x}\prec\vfs_G(u)$ at the same time. Then since $\vfs_G(u)\prec \vls_G(v)$, there
  is a $\leave{Q_x}\to\vfs_G(u)\to\vls_G(v)\to\enter{Q_x}$ path in $G$.
  But by the definition of $\enter{Q_x}$ and $\leave{Q_x}$, there is also
  a $\enter{Q_x}\to \leave{Q_x}$ path in $G$.
  Hence, $\vfs_G(u)$ and $\vls_G(v)$ are strongly connected in $G$, a contradiction.
\end{proof}

\begin{lemma}\label{l:2reach-g}
  Suppose $\vfs_G(u)\preceq \enter{Q_x}\preceq l$ and $\leave{Q_x}\prec \vfs_G(u)$.
  Then there exists a $u\to v$ path $Q_x'$ in $G-x$ such that
  $\enter{Q_x'}=\vfs_G(u)$ and $\leave{Q_x'}=g$, where $g$ is the latest
  vertex $g\prec \vfs_G(u)$ such that there exists a $g\to v$ satellite path in $G$.
\end{lemma}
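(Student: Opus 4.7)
The plan is to construct $Q_x'$ by splicing new satellite segments onto the extremal ends of $Q_x$, so as to shift $\enter{Q_x}$ down to $\vfs_G(u)$ and $\leave{Q_x}$ up to $g$. First I would verify that $g$ is well-defined: the hypothesis $\leave{Q_x} \prec \vfs_G(u)$, combined with the fact that the trailing piece $T$ of $Q_x = RST$ is a satellite $\leave{Q_x} \to v$ path, shows that $\leave{Q_x}$ itself is eligible in the definition of $g$. Hence $g$ exists and $\leave{Q_x} \preceq g \prec \vfs_G(u)$.

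Next, I would pick a satellite path $R_u : u \to \vfs_G(u)$ (which exists by the definition of $\vfs_G(u)$) and a satellite path $T_g : g \to v$ (which exists by the definition of $g$), and assemble the $u \to v$ walk
\[
W \;=\; R_u \,\cdot\, P[\vfs_G(u), \enter{Q_x}] \,\cdot\, S \,\cdot\, P[\leave{Q_x}, g] \,\cdot\, T_g.
\]
The crux of the proof is then to check, segment by segment, that $x \notin V(W)$. The satellite pieces $R_u$ and $T_g$ avoid $x$ since $x \in V(P)$; the middle piece $S$ is a subpath of $Q_x \subseteq G - x$; and the two auxiliary $P$-subpaths both lie strictly before $x$ on $P$, because the recursively assumed chain
\[
\leave{Q_x} \;\preceq\; g \;\prec\; \vfs_G(u) \;\preceq\; \enter{Q_x} \;\preceq\; l \;\prec\; x
\]
(using $\vfs_G(u) \preceq l$ from the definition of $l$, plus the standing case assumption $l \prec x$) shows that both $P[\vfs_G(u),\enter{Q_x}]$ and $P[\leave{Q_x},g]$ live strictly to the left of $x$ in $\prec_P$. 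Therefore $W \subseteq G-x$.

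Finally, I would extract from $W$ the desired $u \to v$ path $Q_x'$ and verify $\enter{Q_x'} = \vfs_G(u)$ and $\leave{Q_x'} = g$. Since $R_u$ is satellite, the first vertex of $W$ lying on $V(P)$ is $\vfs_G(u)$, so any simple $u \to v$ sub-path extracted from $W$ in the usual ``erase the loops'' manner still has $\vfs_G(u)$ as its first $P$-vertex; symmetrically, since $T_g$ is satellite and occupies the terminal segment of $W$, the last $P$-vertex of the extraction is $g$. The main obstacle in this proof is not conceptual but rather the bookkeeping of the ordering relations among $\leave{Q_x}, g, \vfs_G(u), \enter{Q_x}, l$ and $x$ needed to ensure every constructed segment avoids $x$; once the chain displayed above is in hand, the rest is mechanical.
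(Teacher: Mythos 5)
Your proof is correct and follows essentially the same construction as the paper: splice a satellite $u\to\vfs_G(u)$ path, the subpath $P[\vfs_G(u),\enter{Q_x}]$, the middle segment $S$ of $Q_x$, the subpath $P[\leave{Q_x},g]$, and a satellite $g\to v$ path, then observe that every piece avoids $x$ because $x$ lies on $P$ strictly after $l$. The only cosmetic difference is that you add a brief discussion of loop-erasure to obtain a simple path, which the paper does not bother with since it permits non-simple paths (cf.\ the remark in the proof of Lemma~\ref{l:move-u-general}).
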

\begin{proof}
  First note that $g$ indeed exists since $\leave{Q_x}\prec \vfs_G(u)$.
  Let $R'$ be any $u\to \vfs_G(u)$ satellite path and let $T'$ be any
  $g\to v$ satellite path in $G$.
  Then $R'\cdot P[\vfs_G(u),\enter{Q_x}]\cdot S\cdot P[\leave{Q_x},g]\cdot T'$
  forms a desired $u\to v$ path $Q_x'$ in $G-x$.
\end{proof}

The proof of the following lemma is symmetric to the proof of the above Lemma,
so we omit it.
\begin{lemma}\label{l:2reach-h}
  Suppose $f\preceq \leave{Q_x}\preceq \vls_G(v)$ and $\vls_G(v)\prec \enter{Q_x}$.
  Then there exists a $u\to v$ path $Q_x'$ in $G-x$ such that
  $\leave{Q_x'}=\vls_G(v)$ and $\enter{Q_x'}=h$, where $h$ is the earliest
  vertex $\vls_G(v)\prec h$ such that there exists a $u\to h$ satellite path in $G$.
\end{lemma}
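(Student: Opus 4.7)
The plan is to mirror the construction from the proof of Lemma~\ref{l:2reach-g}, exploiting the symmetry between the roles of $u$ and $v$ (roughly, swapping ``prefix'' for ``suffix'' and reversing inequalities of $\prec$). Recall we are in the setting where $l \prec x \prec f$, where $l$ is the latest vertex $\preceq \vls_G(v)$ reachable from $u$ by a satellite path and $f$ is the earliest vertex $\succeq \vfs_G(u)$ from which $v$ is reachable by a satellite path, and $Q_x = R\,S\,T$ is a $u\to v$ path in $G-x$ with $R$ a $u\to \enter{Q_x}$ satellite path and $T$ a $\leave{Q_x}\to v$ satellite path.

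First, I would verify that the vertex $h$ is well-defined under the hypotheses: since $\vls_G(v) \prec \enter{Q_x}$, the prefix subpath $R$ witnesses that $u$ reaches $\enter{Q_x}$ by a satellite path, so the set of vertices $p \succ \vls_G(v)$ reachable from $u$ by satellite paths is non-empty, and $h$ is its earliest element. In particular $h \preceq \enter{Q_x}$.

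Next, I would construct the desired path explicitly. Let $R'$ be any $u \to h$ satellite path (which exists by the definition of $h$) and let $T'$ be any $\vls_G(v) \to v$ satellite path in $G$ (which exists by definition of $\vls_G(v)$). Define
\[
  Q_x' \;:=\; R' \cdot P[h,\enter{Q_x}] \cdot S \cdot P[\leave{Q_x},\vls_G(v)] \cdot T'.
\]
By construction, $\enter{Q_x'} = h$ and $\leave{Q_x'} = \vls_G(v)$, so the stated witnesses are realized.

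The only routine obligation left, and the only thing that is not purely symbolic, is to check that $Q_x'$ avoids $x$. I would argue this piece by piece: $R'$ and $T'$ are satellite paths, so they lie outside $V(P) \supseteq \{x\}$ entirely; $S$ is a subpath of $Q_x$, which lies in $G-x$; for $P[h,\enter{Q_x}]$, the hypothesis $x \prec f \preceq \vls_G(v) \prec h$ gives $x \prec h \preceq \enter{Q_x}$, so this subpath avoids $x$; for $P[\leave{Q_x},\vls_G(v)]$, the hypothesis $x \prec f \preceq \leave{Q_x} \preceq \vls_G(v)$ gives $x \prec \leave{Q_x}$ and the subpath again avoids $x$. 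There is no real obstacle here: the whole argument is a direct mirror of Lemma~\ref{l:2reach-g}, and the only subtlety worth stating explicitly is the chain of $\prec$-inequalities ensuring that the two inserted subpaths of $P$ lie strictly to the right of $x$, which is exactly what the assumption $l \prec x \prec f$ combined with $\vls_G(v) \prec h$ supplies.
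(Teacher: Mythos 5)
Your proof is correct and is exactly the symmetric argument that the paper invokes but does not write out (the paper says only that the proof is ``symmetric to the proof of Lemma~\ref{l:2reach-g}, so we omit it''). One small imprecision worth flagging: you say the satellite paths $R'$ and $T'$ ``lie outside $V(P)\supseteq\{x\}$ entirely,'' but satellite paths \emph{can} have endpoints on $P$; what you actually need, and what does hold, is that their only $P$-vertices are the endpoints $h$ and $\vls_G(v)$, neither of which is $x$ given $l\prec x\prec f\preceq\vls_G(v)\prec h$.
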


\begin{lemma}\label{l:2reach-detour}
  Suppose $\vfs_G(u)\preceq \enter{Q_x}\preceq l$ and $f\preceq \leave{Q_x}\preceq \vls_G(v)$.
  Let $u'= \vf_{G-x}(\vfs_G(u))$, $v'=\vl_{G-x}(\vls_G(v))$.
  Then:
  \begin{enumerate}
    \item $u'\preceq \mnv{Q_x}$ and $\mxv{Q_x}\preceq v'$.
    \item There exists a $u\to v$ path in $Q_x'=Q_1P_1P_2P_3Q_2$ in $G-x$, where
      $Q_1$ is a $u\to u'$ path in $G-x$,
      $Q_2$ is a $v'\to v$ path in $G-x$,
      $P_1$ and $P_3$ are possibly empty subpaths of $P$,
      and $P_2$ is a minimal detour of $x$.
    \item If $x$ is not strongly connected to $\vfs_G(u)$, then $u'=\vf_G(\vfs_G(u))$.
      Similarly,  if $x$ is not strongly connected to $\vls_G(v)$, then $v'=\vl_G(\vls_G(v))$. \end{enumerate}
\end{lemma}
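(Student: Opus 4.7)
The plan is to establish the three items using machinery from Section~\ref{s:ds2simplify}, in particular Lemmas~\ref{l:move-u-general}~and~\ref{l:simplify}, together with a short strong-connectivity argument for item~3.

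For item~1, I would apply Lemma~\ref{l:move-u-general} with $a := p_1$ and $b$ equal to the predecessor of $x$ on $P$, taking the $L$ of that lemma to be $R$ and the $R$ of that lemma to be $ST$. The hypotheses hold since $x \notin V(P[a,b])$ and $\enter{Q_x} \preceq l \prec x$ places $\enter{Q_x}$ in $V(P[a,b])$. Because $\vfs_G(u) \preceq l \prec x$ and $\vf_{G-x}(\vfs_G(u)) \preceq \vfs_G(u) \prec x$, both values coincide with their $P[a,b]$-restricted analogs, so the $u'$ produced by the lemma matches the $u'$ in the statement. The lemma then asserts that $Q_x$ visits no vertex of $P[a,b]$ earlier than $u'$; since $Q_x$ avoids $x$, every vertex of $V(Q_x) \cap V(P)$ that is $\preceq x$ lies in $V(P[a,b])$, yielding $u' \preceq \mnv{Q_x}$. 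The inequality $\mxv{Q_x} \preceq v'$ follows by the symmetric argument on the reverse graph.

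For item~2, the second half of Lemma~\ref{l:move-u-general} produces a $u \to v$ path $Q_x''$ in $G-x$ of the form $L' \cdot ST$ with $u' \in V(L')$ and $\leave{Q_x''} = \leave{Q_x}$. Applying the reverse-graph counterpart of the same lemma to $Q_x''$ (viewed as a prefix $L' \cdot S$ followed by the satellite suffix $T$) yields a path $Q_x'''$ that also contains $v'$, with $\enter{Q_x'''} = \enter{Q_x''}$. By item~1 applied to $Q_x'''$ together with $u', v' \in V(Q_x''')$, we get $\mnv{Q_x'''} = u'$ and $\mxv{Q_x'''} = v'$, hence $\mnv{Q_x'''} \prec x \prec \mxv{Q_x'''}$. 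Lemma~\ref{l:simplify} then rewrites $Q_x'''$ as $Q_1 \cdot P_1 \cdot P_2 \cdot P_3 \cdot Q_2$, where $Q_1$ is the prefix of $Q_x'''$ from $u$ to $u'$ in $G-x$, $Q_2$ is the suffix from $v'$ to $v$ in $G-x$, $P_1, P_3$ are subpaths of $P$, and $P_2$ is a minimal detour of $x$ — exactly the claimed decomposition.

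For item~3, one direction is immediate: $\vf_G(\vfs_G(u)) \preceq \vf_{G-x}(\vfs_G(u)) = u'$. The main obstacle is the converse. Assuming $x$ is not strongly connected to $\vfs_G(u)$ in $G$, set $p_k := \vf_G(\vfs_G(u))$. By Lemma~\ref{l:path-scc} the SCC of $\vfs_G(u)$ meets $V(P)$ in a contiguous subpath $P[p_k, p_m]$; since $\vfs_G(u) \prec x$ and $x$ lies outside this subpath, necessarily $p_m \prec x$. The key step is to exhibit a $\vfs_G(u) \to p_k$ path in $G-x$: if every such path in $G$ passed through $x$, then $x$ would be reachable from $\vfs_G(u)$ in $G$ and $p_k$ would be reachable from $x$ in $G$; combining the latter with the $P$-subpath $p_k \to \vfs_G(u)$ yields $x \to \vfs_G(u)$, and the $P$-subpath $\vfs_G(u) \to x$ yields $\vfs_G(u) \to x$, making $x$ strongly connected to $\vfs_G(u)$ and contradicting the hypothesis. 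Hence $u' \preceq p_k$ and equality holds. The statement about $v'$ is proved symmetrically.
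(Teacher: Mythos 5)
Your proof is correct and follows essentially the same route as the paper: item 1 via the ``move'' lemmas (Lemmas~\ref{l:move-u-general}/\ref{l:move-u}/\ref{l:move-v}), item 2 by constructing a path through $u'$ and $v'$ and applying Lemma~\ref{l:simplify}, and item 3 by observing that any $\vfs_G(u)\to\vf_G(\vfs_G(u))$ path passing through $x$ would force $x$ to be strongly connected to $\vfs_G(u)$. Your item 2 reaches the canonical path by two successive applications of Lemma~\ref{l:move-u-general} rather than the paper's direct concatenation $Q_1\cdot P[u',\enter{Q_x}]\cdot S\cdot P[\leave{Q_x},v']\cdot Q_2$, but the resulting path and the appeal to Lemma~\ref{l:simplify} are the same.
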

\begin{proof}
  Clearly, $u'\preceq \vfs_G(u)$ and $\vls_G(v)\preceq v'$.
  Since
  ${\vfs_G(u)\preceq \enter{Q_x}\preceq l}$ and $f\preceq \leave{Q_x}\preceq \vls_G(v)$,
  we have   $u'\preceq \vfs_G(u)\prec x\prec \vls_G(v)\preceq v'.$

  Item (1) follows by applying Lemmas~\ref{l:move-u}~and~\ref{l:move-v} to $Q_x$.

  Let $Q_1$ be the $u\to u'$ path and let $Q_2$ be any
  $v'\to v$ in $G-x$.
  Consider the path $Q_x''=Q_1\cdot P[u',\enter{Q_x}]\cdot S\cdot P[\leave{Q_x},v']\cdot Q_2$
  in $G-x$.
  The path $Q_x'=Q_1P_1P_2P_3Q_2$
  as required by item (2) can be obtained
  by applying Lemma~\ref{l:simplify} to $Q_x''$.

  Finally to prove the first part of item (3), suppose that $\vf_G(\vfs_G(u))\prec u'$.
  Then every $\vfs_G(u)\to \vf_G(\vfs_G(u))$ path in $G$ goes through $x$.
  Since $\vf_G(\vfs_G(u))\preceq \vfs_G(u)\prec x$, there is also a $\vf_G(\vfs_G(u))\to \vfs_G(u)$ path in $G$.
  Hence, $\vfs_G(u)$ and $x$ are strongly connected in $G$.
  The proof that $\vl_G(\vls_G(v))=v'$ if $x$ is not strongly connected to $\vls_G(v)$ is analogous.
\end{proof}

Having proven the above properties, we move to describing
the query procedure that finds some $x\in V(P)$ that
lies on all $u\to v$ paths, if one exists.
Suppose there exists some $x\in V(P(l,f))$ lying on all $u\to v$ paths in $G$.
Let $g$ be defined as in Lemma~\ref{l:2reach-g}, and $h$ be defined as in Lemma~\ref{l:2reach-h}.
Such an $x$, in order to be a separating vertex for the pair $u,v$, has to satisfy the following conditions.
\begin{enumerate}[label={(\arabic*)}]
\item  By Lemma~\ref{l:2reach-g}, either $g\prec \vfs_G(u)$ does not exist
or there is no path $\vfs_G(u)\to g$ in $G-x$.
    Observe that since $g$ (if exists) satisfies $g\prec \vfs_G(u)\prec x$,
    then there is no $\vfs_G(u)\to g$ path in $G-x$ if and only
    if $g$ is not strongly connected to $\vfs_G(u)$ in $G-x$.
\item Similarly, by Lemma~\ref{l:2reach-h}, either $h$, $\vls_G(v)\prec h$ does not exist
or $\vls_G(v)$ is not strongly connected to $h$ in $G-x$.
\item By Lemma~\ref{l:2reach-detour}, for each minimal detour pair $(a_x,b_x)$ of $x$
  we should have either $a_x\prec \vf_{G-x}(\vfs_G(u))$ or $\vl_{G-x}(\vls_G(v))\prec b_x$.
\end{enumerate}
By Lemmas~\ref{l:2reach-enter-leave}~and~\ref{l:2reach-impossible}, if all conditions (1)-(3)
are satisfied, then we rule out all possible configurations of $\enter{Q_x}$ and $\leave{Q_x}$ with
respect to $x$ on $P$.
In other words, if all 
of these conditions hold simultaneously, $x$ indeed lies on all $u\to v$ paths in~$G$.
Recall that $\vfs_G(u)$ and $\vls_G(v)$ are not strongly connected,
so $x$ cannot be simultaneously strongly connected to both $\vfs_G(u)$ and $\vls_G(v)$.
By Lemma~\ref{l:path-scc}, the subpath $P(l,f)$ can be split into three possibly
empty subpaths $P(l,f)=P_u,P_0,P_v$ (in this order), such that $x\in V(P(l,f))$ is strongly connected
to $\vfs_G(u)$ in $G$ iff $x\in V(P_u)$, to $\vls_G(v)$
in $G$ iff $x\in V(P_v)$, and to neither $\vfs_G(u)$ nor $\vls_G(v)$ iff $x\in V(P_0)$.

Suppose first that $x\in V(P_u)$, i.e, $x$ is strongly connected
to $\vfs_G(u)$, but not to $\vls_G(v)$ (in $G$).
The case when $x\in V(P_v)$ is symmetric.
Since $x$ is not strongly connected to $\vls_G(v)$,
$h$ and $\vls_G(v)$ are strongly connected
in $G-x$ if and only if they are strongly connected in $G$.
Hence, condition (2) does not depend on $x$ and implies that either $h$ does not exist
or it has to be strongly connected to $\vls_G(v)$ in~$G$.
Luckily, strong-connectivity in $G$ can be tested in $O(1)$ time.

By condition (3) we obtain that no minimal detour pair $(a_x,b_x)$ of $x$
can satisfy \linebreak $\vf_{G-x}(\vfs_G(u))\preceq a_x\preceq b_x\preceq \vl_G(\vls_G(v))$.
This is because $x$ not strongly connected to $\vls_G(v)$, 
implies, by Lemma~\ref{l:2reach-detour}, that $\vl_{G-x}(\vls_G(v))=\vl_G(\vls_G(v))$.

Let $(a_x^1,b_x^1),(a_x^2,b_x^2)$ be the two minimal detour
pairs of $x$ (if there is only one, put $(a_x^2,b_x^2):=(a_x^1,b_x^1)$; if there are none, set $(a_x^1,b_x^1):=(a_x^2,b_x^2):=(x,x)$).
Denote by $\bar{a}_x^i$ the earliest vertex of $P$ such
that $a_x^i\prec \bar{a}_x^i\prec x$ and $a_x^i$ is not strongly connected
to $\bar{a}_x^i$ in $G-x$.
If no such $\bar{a}_x^i$ exists, put $\bar{a}_x^i=x$.
For a fixed $x$, each $\bar{a}_x^i$ can be found in $O(\log{n})$ time by 
combining binary-search with the strong-connectivity
under failures data structure as we did in Lemma~\ref{l:reach-interval}.
Therefore, the vertices $\bar{a}_x^i$ for all $x\in V(P)$ and $i$
can be computed in $O(n\log{n})$ time.

Wlog. assume that $a_x^1\preceq a_x^2$ and $b_x^1\preceq b_x^2$.
Observe that $\bar{a}_x^1\preceq \bar{a}_x^2$. \emph{Not} satisfying the condition
$$\vf_{G-x}(\vfs_G(u))\preceq a_x^i\preceq b_x^i\preceq \vl_G(\vls_G(v))$$
can be restated as $\bar{a}_x^i\preceq \vfs_G(u)\lor \vl_G(\vls_G(v))\prec b_x^i$.
So, we obtain that $$\forall_{i\in\{1,2\}}(\bar{a}_x^i\preceq \vfs_G(u)\lor \vl_G(\vls_G(v))\prec b_x^i)$$
has to be satisfied if $x\in V(P_u)$ lies on all $u\to v$ paths in $G$. 
By applying elementary transformations and using
the inequalities $\bar{a}_x^1\preceq \bar{a}_x^2$ and $b_x^1\preceq b_x^2$, the above formula can be rewritten as $F_x^1\lor F_x^2\lor F_x^3$, where
\begin{align*}
  F_x^1 &=\bar{a}_x^2\preceq \vfs_G(u)),\\
  F_x^2 &= \vl_G(\vls_G(v))\prec b_x^1,\\
  F_x^3 &= \bar{a}_x^1\preceq\vfs_G(u)\land \vl_G(\vls_G(v))\prec b_x^2.
\end{align*}

By condition (1) we obtain that if $x\in V(P_u)$ lies on all $u\to v$ paths in $G$, either $g$ does not exist
or $x$ has to be such that $g$ and $\vfs_G(u)$ are not strongly connected in $G-x$.
To characterize all such vertices $x$ in an algorithmically useful
way, we need the following structural lemma.
\begin{lemma}\label{l:2reach-separate-paths}
  Let $G$ be a digraph.
  There exists two trees $T_1,T_2$ with the following property:
  for any vertices $u,v$ strongly connected in $G$,
  there exists at most four paths $Q^{u,v}_1,\ldots Q^{u,v}_k$,
  such that each $Q^{u,v}_i$ is a path in either $T_1$ or $T_2$,
  and all $x\in V(G)$ such that $u$ and $v$ are not strongly connected
  in $G-x$ are precisely the vertices of $Q^{u,v}_1\cup\ldots\cup Q^{u,v}_k$.
  
  The trees $T_1,T_2$ can be computed in linear time.
  Given $u,v$, the endpoints of paths
  $Q^{u,v}_1,\ldots Q^{u,v}_k$ can be computed in constant time.
\end{lemma}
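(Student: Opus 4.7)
The plan is to build $T_1$ and $T_2$ from per-SCC dominator trees of $G$ and $G^R$, following the framework of~\cite{GeorgiadisIP17} for single-failure strong connectivity. First, I would decompose $G$ into strongly connected components in $O(n+m)$ time. For each SCC $C$, fix a canonical source $s_C\in C$ and compute in linear time the dominator tree $D_C$ of $G[C]$ rooted at $s_C$ and the dominator tree $\bar D_C$ of $G[C]^R$ rooted at $s_C$. Glue all the $D_C$ under a common dummy root to form $T_1$, and the $\bar D_C$ similarly to form $T_2$; augment both with the linear-time, $O(1)$-time LCA structure of~\cite{BenderF04}.

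Given a strongly connected pair $(u,v)$, both lie in one SCC $C$ with source $s$, and any separator $x$ also lies in $C$. The separator set decomposes into \emph{forward separators} (lying on every $u\to v$ path) and \emph{backward separators} (every $v\to u$ path), and I plan to represent each direction by at most one ancestor--descendant path in $T_1$ and one in $T_2$, for a total of at most four. For the forward case, an exchange argument shows that a forward separator $x$ must be either an ancestor of $v$ in $D_C$ (so $x$ dominates $v$ from $s$) or an ancestor of $u$ in $\bar D_C$ (so $x$ dominates $u$ from $s$ in $G^R$): otherwise one could concatenate an $s\to v$ path avoiding $x$ with a $u\to s$ path avoiding $x$ into a $u\to v$ path avoiding $x$, a contradiction. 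Combining this containment with Lemma~\ref{lemma:paths-through-SAP} and the fact from~\cite{GeorgiadisIP17,2VCB} that the non-trivial dominators of $D_C$ and $\bar D_C$ are exactly the strong articulation points of $C$, I would argue that the forward separators occupy a contiguous sub-path of $v$'s ancestor chain in $D_C$ together with a contiguous sub-path of $u$'s ancestor chain in $\bar D_C$, with endpoints determined by $\mathrm{LCA}_{D_C}(u,v)$, $\mathrm{LCA}_{\bar D_C}(u,v)$ and the immediate dominators of $u$ and $v$. Backward separators are handled symmetrically, giving the remaining two paths, and all endpoints are then computable in $O(1)$ time from the LCA data structure.

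The main obstacle is pinpointing the exact endpoints of these contiguous sub-paths. As the simple example of adjacent $u,v$ in a small SCC shows, an ancestor of $v$ in $D_C$ which is also an ancestor of $u$ need not lie on every $u\to v$ path (there may be $u\to v$ paths avoiding $s$ altogether); conversely, some such ancestors \emph{do} separate $u$ from $v$. The correct cutoff on each chain is governed by how far reachability from $u$ propagates within $D_C[x]$, which, using Lemma~\ref{lemma:paths-ancestor-descendant-dom} together with the interaction of the LCAs in $D_C$ and $\bar D_C$, can be read off from the two LCAs and the immediate dominators alone. Verifying this cleanly --- in particular ruling out several disjoint fragments on the ancestor chain and showing that a single contiguous sub-path in each tree suffices --- is where essentially all the technical work of the proof lies; once that is in place, the linear-time preprocessing and constant-time endpoint computation are straightforward.
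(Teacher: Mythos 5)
Your high-level set-up matches the paper's: per-SCC dominator trees $D_C$, $\bar D_C$ from a common source $s_C$, glued under a dummy root to form $T_1,T_2$, with NCA preprocessing. But the heart of the lemma is determining the \emph{endpoints} of the $O(1)$ ancestor-chain sub-paths in $O(1)$ time, and there you propose to read them off ``from the two LCAs and the immediate dominators alone'' in $D_C$ and $\bar D_C$. This cannot work: the dominator trees $D_C$, $\bar D_C$ by themselves do not determine the separator set $X_{uv}$. Take $V=\{s,a,u,v\}$. In graph $A$ the edges are $s\to a$, $a\to u$, $a\to v$, $u\to v$, $v\to u$, $u\to s$; in graph $A'$ replace $u\to v$ by $u\to a$, i.e.\ edges $s\to a$, $a\to u$, $a\to v$, $u\to a$, $v\to u$, $u\to s$. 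Both are strongly connected, and both have the \emph{same} dominator tree $D$ (path $s$--$a$ with children $u,v$ of $a$) and the \emph{same} reverse dominator tree $D^R$ (path $s$--$u$ with children $a,v$ of $u$), hence identical LCAs and immediate dominators for $(u,v)$. Yet in $A$ no vertex separates $u$ from $v$, while in $A'$ the vertex $a$ lies on every $u\to v$ path, so $X_{uv}=\{a\}$. Any procedure driven solely by $D$, $D^R$, their LCAs and $\mathit{idom}$ values must answer both instances identically, so the claimed $O(1)$ endpoint computation is unobtainable from that data.

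What is missing is the \emph{loop-nesting trees} $H$ and $\rev H$. The paper (relying on~\cite[Lemma~8.1]{GeorgiadisIP15}, not~\cite{GeorgiadisIP17}) characterizes $X_{uv}$ as the union of the proper ancestors of $u$ or $v$ in $D$ that are not proper ancestors of $w=\mathrm{nca}_H(u,v)$, together with the analogous set in $\rev D$ using $\rev w=\mathrm{nca}_{\rev H}(u,v)$; the loop-nesting NCAs $w,\rev w$ supply exactly the cutoffs you are trying to synthesize from $D$ and $\bar D$ alone, and they are where graphs $A$ and $A'$ above actually differ. With $H,\rev H$ and constant-time NCA on all four trees, the four path endpoints are immediate, and the set does naturally split into two paths in $D$ (the $u$-chain and the $v$-chain down to $w$) and two in $\rev D$ --- a slightly different decomposition from your ``forward/backward separator'' split, which, as you note, does not obviously produce contiguous tree paths. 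Your exchange argument showing that a forward separator must dominate $v$ from $s$ or post-dominate $u$ from $s$ is fine as far as it goes, but it only gives containment in an ancestor chain, not the cutoff.
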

\begin{proof}
  The lemma is essentially proved in~\cite[Lemma~8.1]{GeorgiadisIP15}. 
  There, it is shown that if $G$ is strongly connected, the following holds.
  Let $H$ and $\rev{H}$ are some loop-nesting trees of $G$ and $\rev{G}$ respectively.
  Let $D$ and $\rev{D}$ are dominator trees with the same source vertex $s$
  in $G$ and $\rev{G}$.
  Let $w$ ($\rev{w}$) be the nearest common ancestor of $u$ and $v$ in $H$ ($\rev{H}$, resp.).
  Then, the set $X_{uv}$ of vertices $x\in V(G)\setminus\{u,v\}$ such that $u$ and $v$ are not strongly connected
  in $G-x$ is equal to the set
  of proper ancestors of $u$ or $v$ in $D$ which are not proper ancestors of
  $w$ in $D$, plus the set of proper ancestors of $u$ and $v$ in $\rev{D}$ which are not proper
  ancestors of $\rev{w}$ in $\rev{D}$.
  Observe that $X_{uv}$ can be represented as a union
  of two paths in $D$ and two paths in $\rev{D}$.
  The endpoints of these paths can be computed in $O(1)$ time
  after preprocessing $D,\rev{D},H,\rev{H}$ in linear time so that
  nearest common ancestor queries on these trees are supported in constant time.

  The required tree $T_1$ ($T_2$) can be formed by arbitrarily connecting
  the dominator trees $D$ ($\rev{D}$) of individual strongly connected components of $G$.
\end{proof}
Hence in order for $x$ to separate $g$ and $\vfs_G(u)$ in $G$, $x$ has to lie
on one of at most $4$ paths $Q_1,\ldots,Q_k$ obtained
from Lemma~\ref{l:2reach-separate-paths} for $u:=g$ and $v:=\vfs_G(u)$.

To summarize, provided that $h$ either does not exist
or is not strongly connected to $\vls_G(v)$, $x\in V(P_u)$ lies on
all $u\to v$ paths in $G$ if and only if the following hold at the same time:
\begin{enumerate}[label={(\roman*)}]
  \item $F_x^1\lor F_x^2\lor F_x^3$,
  \item $x\in V(Q_1\cup \ldots\cup Q_k)$,
\end{enumerate}
whereas if additionally $g$ does not exist or is not strongly connected
to $\vfs_G(u)$, then it is enough that $x$ satisfies $F_x^1\lor F_x^2\lor F_x^3$.
Let us focus on the most involved case when $g$ is strongly connected
to $\vfs_G(u)$, as otherwise $x$ has less constraints to satisfy
and dummy constraints can be added to treat the problem analogously.
Note that (i) and (ii) hold simultaneously for $x\in V(P_u)$
if and only if $F_x^i\land (x\in V(Q_j))$ holds for some
of $O(1)$ pairs $i,j$.

Let us describe how to find some $x$ (if it exists) satisfying $F_x^3\land (x\in V(Q_j))$ only,
since this is clearly the most involved case.
We use the following data-structural lemma.

\begin{lemma}\label{l:2reach-tree-range}
  Let $T$ be an $n$-vertex tree and suppose each $v\in V(T)$ has
  assigned a $3$-dimensional label $(c^v_1,c^v_2,c^v_3)\in [n]^3$.
  Then, in $O(n\log^{5+o(1)}{n})$-time one can construct an
  $O(n\log^{2+o(1)}{n})$-space
  data structure 
  answering the following queries in $O(\log^{2+o(1)}{n})$ time:
  given $u,v\in V(T)$ and a $3$-dimensional rectangle
  $[a_1,b_1]\times [a_2,b_2]\times [a_3,b_3]$, find some vertex $w\in V(T)$
  lying on the $u\to v$ path in $T$ satisfying
  $a_i\leq c^w_i\leq b_i$ for $i=1,2,3$.
\end{lemma}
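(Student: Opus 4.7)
The plan is to reduce the query to $O(\log n)$ four-dimensional orthogonal range reporting queries via heavy-path decomposition. First, I would compute in linear time a heavy-path decomposition of $T$ in the sense of Sleator and Tarjan, partitioning $V(T)$ into maximal heavy paths such that any root-to-leaf path intersects $O(\log n)$ of them. I would then linearize the vertices by assigning each $v \in V(T)$ a position $\pi(v) \in [n]$ so that every heavy path occupies a contiguous interval of $\pi$-values and vertices within a heavy path appear in their natural order. With standard $O(n)$-time preprocessing for lowest common ancestors, the path between any $u,v \in V(T)$ can be split at $\mathrm{lca}(u,v)$ into two root-directed pieces, and each piece decomposes into $O(\log n)$ contiguous subpaths of heavy paths; by construction each such subpath corresponds to a $\pi$-interval $[l_k, r_k]$.

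Next, I would associate to each vertex $v$ the four-dimensional point $p(v) = (\pi(v), c^v_1, c^v_2, c^v_3) \in [n]^4$ and store the set $\{p(v) : v \in V(T)\}$ in a 4-dimensional orthogonal range reporting data structure of Karpinski and Nekrich~\cite{KarpinskiN09}, tuned so that a single witness inside a query box can be produced in $O(\log^{1+o(1)} n)$ time, using $O(n\log^{2+o(1)}n)$ space and $O(n\log^{5+o(1)}n)$ preprocessing. Given a query $(u,v,[a_1,b_1]\times[a_2,b_2]\times[a_3,b_3])$, I would produce the $O(\log n)$ intervals $[l_1,r_1],\ldots,[l_t,r_t]$ covering the $u \to v$ tree path and, for each, issue a 4D emptiness-with-witness query on the box $[l_k,r_k]\times[a_1,b_1]\times[a_2,b_2]\times[a_3,b_3]$, returning the first witness found (or declaring that none exists if every query is empty).

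Correctness is immediate: a vertex $w$ lies on the $u\to v$ path in $T$ iff $\pi(w)$ lies in one of the $O(\log n)$ heavy-path $\pi$-intervals, and the remaining three coordinates of $p(w)$ fall inside the given label box iff $p(w)$ lies in the corresponding 4-box. The time and space bounds then read directly from the underlying 4D structure: each of the $O(\log n)$ queries costs $O(\log^{1+o(1)}n)$, yielding a total query time of $O(\log^{2+o(1)}n)$, while space and preprocessing are inherited verbatim.

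The main obstacle I foresee is sourcing a 4D orthogonal range reporting data structure that simultaneously achieves the promised space, preprocessing, and \emph{single-witness} reporting time. The $O(\log^{1+o(1)}n)$ query bound for 4D rectangle emptiness (rather than full reporting) is the decisive ingredient that ensures the $O(\log n)$ heavy-path overhead lands inside the advertised $O(\log^{2+o(1)}n)$ query bound; this is the point where care is needed, since most textbook presentations of 4D range trees charge an extra $\log n$ or $\log/\log\log$ factor in both query and space. Verifying the $O(n\log^{5+o(1)}n)$ construction cost, given that \cite{KarpinskiN09} emphasizes space and query rather than preprocessing, will probably proceed either by bootstrapping on the recursive structure of range trees with $\log^{o(1)}n$ fan-out plus fractional cascading, or by a direct construction that builds the data structure level by level in time proportional to its space times $\polylog n$.
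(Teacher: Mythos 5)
Your proposal matches the paper's proof: both use heavy-path decomposition to linearize the tree into a fourth coordinate and reduce the query to $O(\log n)$ single-witness 4D orthogonal range reporting queries on the Karpinski--Nekrich structure~\cite{KarpinskiN09}. The bounds you hesitate over are exactly what the paper quotes from~\cite{KarpinskiN09}: $O(n\log^5 n \log\log n)$ preprocessing, $O(n\log^2 n (\log\log n)^3)$ space, and $O(\log n(\log\log n)^3 + k\log\log n)$ reporting time, which for $k=1$ yields the stated $O(\log^{2+o(1)}n)$ query time.
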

\begin{proof}
  We reduce our problem to a 4-dimensional orthogonal range reporting
  problem. It is known that in $O(n\log^5{n}\cdot \log\log{n})$ time one can construct
  an $O(n\log^{2}{n}\cdot (\log\log{n})^3)$-space
  data structure that can report some $k$ points (if they exist) in a $4$-dimensional
  rectangle in $O(\log{n}(\log\log{n})^3+k\log\log{n})$ time~\cite{KarpinskiN09}.
  We will use this data structure for $k=1$, i.e., for reporting at most one point.

  In our problem, we are given a tree and 3-dimensional labels.
  With so-called heavy-path decomposition~\cite{SleatorT83}, one can partition
  a tree $T$ into a set of vertex-disjoint paths $\mathcal{P}$ so that each 
  path in $T$ can be expressed, in $O(\log{n})$ time, as a concatenation of $O(\log{n})$
  subpaths of paths in $\mathcal{P}$.
  Equivalently, by concatenating $\mathcal{P}$
  into a single path $\mathcal{P}^*$, and ordering the vertices of $T$
  according to their position on $\mathcal{P}^*$,
  each path in $T$ can be expressed as a union of $O(\log{n})$
  intervals of the obtained order.
  We make this order our fourth dimension.
  This way, we can reduce our original query to
  $O(\log{n})$ 4-dimensional orthogonal range reporting queries (with pairwise-disjoint fourth dimension ranges),
  each of which can be processed in $O(\log{n}(\log\log{n})^3)$ time, since
  we only want to find one vertex or decide there is none.
  Hence $O(\log^2{n}(\log\log{n})^3)=O(\log^{2+o(1)}{n})$ query time.
\end{proof}

Now, to find $x\in V(P_u)=V(P[r_1,r_2])$ satisfying $F_x^2\land (x\in Q_j)$, suppose
$Q_j$ is a path in one of the two trees $T$ from Lemma~\ref{l:2reach-separate-paths}.
Let $\alpha(p_i):=i$ for all $p_i\in V(P)$.
We assign each vertex $p\in V(T)\cap V(P)$ the label $(\alpha(p),\alpha(\bar{a}_{p_i}^1),\alpha(b_{p_i}^2))$,
whereas for each vertex $v\in V(T)\setminus V(P)$ we use label $(0,0,0)$.
Given this, observe that we can find $x$ by searching for
a vertex on the path $Q_j$ of $T$ with
a label in the rectangle $[\alpha(r_1),\alpha(r_2)]\times [0,\alpha(\vfs_G(u))]\times [\alpha(\vl_G(\vls_G(v))),n]$.
By Lemma~\ref{l:2reach-tree-range}, such a search can be performed
in $O(\log^{2+o(1)}{n})$ time using a $O(n\log^{2+o(1)}{n})$-space
data structure that can be constructed in $O(n\log^{5+o(1)}{n})$ time.

Finally suppose that $x\in V(P_0)$.
This case is in fact easier than the previous one,
which can be explained as follows.
When $x$ is strongly connected to neither $\vfs_G(u)$ nor $\vls_G(v)$,
conditions (1) and (2) boil down to first checking
whether the vertices $g,h$ do not exist at all
or are at least are not strongly connected to $\vfs_G(u)$ and $\vls_G(v)$
respectively.
Then, we only need to check whether there exists
$x\in V(P_0)$ for which $F_x^1\lor F_x^2\lor F_x^3$ holds.
As already discussed, this can be achieved with a three-dimensional
orthogonal range reporting query.

\section{Proof of Theorem~\ref{thm:find-maxID-in-SCC}}\label{s:max-scc-failure}

Before we proceed with the proof, we first review the notion of loop nesting forest and a useful characterization from \cite{GeorgiadisIP17} of the SCCs of a graph under vertex failures.
\subsection{Loop nesting trees}
Let $G=(V,E)$ be a directed graph.
A \emph{loop nesting forest} represents a hierarchy of strongly connected subgraphs of $G$~\cite{st:t}, and is defined with respect to a dfs tree $T$ of $G$ as follows.
For any vertex $u$, the \emph{loop} of $u$, denoted by $\mathit{loop}(u)$, is the set of all descendants $x$ of $u$ in $T$ such that there is a path from $x$ to $u$ in $G$ containing only descendants of $u$ in $T$. Vertex $u$ is the \emph{head} of $\mathit{loop}(u)$. Any two vertices in $\mathit{loop}(u)$ reach each other. Therefore, $\mathit{loop}(u)$ induces a strongly connected subgraph of $G$; it is the unique maximal set of descendants of $u$ in $T$ that does so.
The $\mathit{loop}(u)$ sets form a laminar family of subsets of $V$:
for any two vertices $u$ and $v$, $\mathit{loop}(u)$ and $\mathit{loop}(v)$ are either disjoint or nested (i.e., one contains the other).
The above property allows us to define the \emph{loop nesting forest} $H$ of $G$, with respect to $T$, as the forest in which the parent of any vertex $v$, denoted by $h(v)$, is the nearest proper ancestor $u$ of $v$ in $T$ such that $v \in \mathit{loop}(u)$ if there is such a vertex $u$, and $\textbf{nil}$ otherwise.
Then $\emph{loop}(u)$ is the set of all descendants of vertex $u$ in $H$, which we will also denote as $H[u]$ (the subtree of $H$ rooted at vertex $u$).
Since $T$ is a dfs tree, every cycle contains a back edge \cite{dfs:t}.  More generally, every cycle $C$ contains a vertex $u$ that is a common ancestor of all other vertices $v$ of $T$ in the cycle \cite{dfs:t}, which means that any $v \in C$ is in $\mathit{loop}(u)$. Hence, every cycle of $G$ is contained in a loop.
A loop nesting forest can be computed in linear time~\cite{dominators:bgkrtw,st:t}.
In a strongly connected graph, each vertex is contained in a loop, so $H$ is a tree.
Since here we are going to apply the loop nesting forest on a strongly connected graph, we will refer to $H$ as the \emph{loop nesting tree} of $G$.

\subsection{SCCs in digraphs under failures}

\begin{theorem}[\cite{GeorgiadisIP17}]
	\label{cor:scc}
	Let $G$ be a strongly connected graph, $u$ be a vertex such that $G-u$ is not strongly connected, 
	and let $s$ be an arbitrary vertex in~$G$. Moreover, let $D$ (resp., $D^R$) to be the dominator tree of $G$ (resp., $G^R$), and $H$ (resp., $H^R$) be the loop nesting tree of $G$ 
	(resp., $G^R$), all rooted at $s$.
Let $C$ be a strongly connected component of $G-u$. Then one of the following cases holds:
\begin{itemize}
  \item[(a)] If $u\neq s$ is a non-leaf vertex in $D$ but a leaf in $D^R$ then either $C \subseteq D[u] \setminus \{u\}$ or $C = V \setminus D[u]$.
  \item[(b)] If $u\neq s$ is a leaf in $D$ but a non-leaf vertex in $D^R$ then either $C \subseteq D^R[u]\setminus \{u\}$ or $C = V \setminus D^R[u]$.
	\item[(c)] If $u\neq s$ is a non-leaf vertex in both $D$ and $D^R$ then either $C \subseteq D[u] \setminus D^R[u]$, or $C \subseteq D^R[u] \setminus D[u]$, or $C \subseteq D[u] \cap D^R[u]$, or $C = V \setminus \big( D[u] \cup D^R[u] \big)$.
	\item[(d)] If $u = s$ then $C \subseteq D[u] \setminus u$.
\end{itemize}
  Moreover, if  $C \subseteq D[u] \setminus \{u\}$ (resp., $C \subseteq D^R[u] \setminus \{u\}$) then $C=H[w]$ (resp., $C=H^R[w]$) where $w$ is a vertex in $D[u]\setminus \{u\}$ (resp., $D^R[u] \setminus \{u\}$) such that $h(w) \not \in D[u] \setminus \{u\}$ (resp.,  $h^R(w) \not \in D^R[u] \setminus \{u\}$).
\end{theorem}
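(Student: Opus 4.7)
The plan is to deduce the case analysis from two applications of Lemma~\ref{lemma:paths-through-SAP}, one to $G$ (with dominator tree $D$) and one to $G^R$ (with dominator tree $D^R$), and then pin down the inner SCCs using the standard connection between loops of a loop nesting tree and strong components of subgraphs induced by dominator subtrees.

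First I would establish two ``barrier'' properties that follow immediately from Lemma~\ref{lemma:paths-through-SAP}. If $u \neq s$ is a non-leaf in $D$, then for every $v \in D[u]\setminus\{u\}$ and every $w \notin D[u]$, every simple $w \to v$ path in $G$ passes through $u$; equivalently, in $G-u$ no vertex of $V \setminus D[u]$ can reach any vertex of $D[u]\setminus\{u\}$. Symmetrically, if $u$ is a non-leaf in $D^R$, then in $G - u$ no vertex of $D^R[u]\setminus\{u\}$ can reach any vertex of $V \setminus D^R[u]$. Consequently, every SCC $C$ of $G - u$ must satisfy $C \subseteq D[u]\setminus\{u\}$ or $C \subseteq V \setminus D[u]$, and likewise $C \subseteq D^R[u]\setminus\{u\}$ or $C \subseteq V \setminus D^R[u]$. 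Intersecting these two dichotomies yields the four alternatives of case (c). Cases (a) and (b) are obtained by collapsing one dichotomy when $u$ is a leaf in $D^R$ (resp.\ $D$), and case (d) follows because $D[s] = V$.

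Second I would argue that, in cases (a) and (b), the ``outer'' region is indeed a single SCC of $G - u$, not merely an upper bound for one. Consider case (a): $u$ is a leaf in $D^R$, i.e.\ no vertex $y \neq u$ has all of its $y \to s$ paths through $u$, so every $y \in V \setminus \{u\}$ has a $y \to s$ path in $G - u$. Restricting to $y \in V \setminus D[u]$, $u$ also does not dominate $y$ from $s$ in $G$, so there is an $s \to y$ path in $G - u$. Composing these witnesses shows that all such $y$ are mutually strongly connected through $s$ in $G - u$, whence $V \setminus D[u]$ is itself an SCC. Case (b) is symmetric, and in case (d) there is nothing to prove since $V \setminus D[s] = \emptyset$.

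Third, for the ``Moreover'' part, suppose $C \subseteq D[u] \setminus \{u\}$. By the first barrier property, $C$ coincides with an SCC of the induced subgraph $G[D[u]\setminus\{u\}]$. I would then invoke the classical fact that, if the loop nesting tree $H$ is built from a DFS tree $T$ of $G$ whose restriction to $D[u]$ remains a valid DFS tree of $G[D[u]]$, then the SCCs of $G[D[u]\setminus\{u\}]$ are exactly the sets $H[w]$ for those $w \in D[u]\setminus\{u\}$ that are maximal in the $H$-ancestor order inside $D[u]\setminus\{u\}$, i.e.\ whose $H$-parent $h(w)$ lies outside $D[u]\setminus\{u\}$ (see~\cite{dominators:bgkrtw,st:t}). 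The $H^R$ claim when $C \subseteq D^R[u]\setminus\{u\}$ is symmetric.

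The main obstacle I anticipate is precisely this last step: the clean description of inner SCCs as loops of $H$ requires that cutting at $u$ does not destroy loop structure below $u$, which in turn hinges on the DFS used to build $H$ being compatible with $D$ in the sense that every $D$-subtree is also a $T$-subtree. I would therefore begin the formal proof by fixing such a DFS (obtainable e.g.\ via the ``$dj$-graph'' or post-order refinement ideas underlying~\cite{dominators:bgkrtw}), so that the restriction of $T$ to $D[u]$ is a DFS tree of $G[D[u]]$ and loops are preserved verbatim; once this alignment is in place, the case analysis and the loop characterization above combine into a short proof of the theorem.
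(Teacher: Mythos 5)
The statement you're proving is cited in the paper from~\cite{GeorgiadisIP17} and \emph{not} proved there, so there is no in-paper proof to compare against; the only internal evidence of the argument is the sketch of claims (i)--(ii) the paper gives immediately after the statement, which uses Lemma~\ref{lemma:paths-through-SAP} applied to the DFS tree $T$ underlying $H$. Your case analysis via two ``barrier'' dichotomies from Lemma~\ref{lemma:paths-through-SAP} (in $G$ and in $G^R$), and the argument that the outer region in cases (a)/(b) is actually a single SCC because every vertex there reaches and is reached by $s$ in $G-u$, matches the structure of the cited proof and is correct. You do leave implicit the same equality for the fourth alternative of case (c), $C = V\setminus\bigl(D[u]\cup D^R[u]\bigr)$, but the argument is identical (every $y$ outside both subtrees is strongly connected to $s$ in $G-u$), so this is a cosmetic omission.

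The genuine issue is in the ``Moreover'' step. You correctly reduce to identifying the SCCs of $G\bigl[D[u]\setminus\{u\}\bigr]$, but then you reach for a ``classical fact'' that those SCCs are exactly the $H[w]$ with $h(w)$ outside $D[u]\setminus\{u\}$, and you hedge it on finding a DFS such that $T$ restricted to $D[u]$ is a DFS tree of $G[D[u]]$. This is the wrong worry, and in a self-defeating way: such compatibility is \emph{not} needed, and the theorem is stated for an arbitrary loop nesting tree of $G$ rooted at $s$, so you are not free to pick a special DFS. What you actually need is the much weaker observation that for \emph{any} DFS tree $T$ of $G$ rooted at $s$ and any $u$, the set $D[u]$ is a connected $T$-subtree rooted at $u$ --- because every $s\to v$ path in $G$, in particular the $T$-path, passes through $u$ whenever $v\in D[u]$, and any $T$-ancestor of $v$ below $u$ would otherwise give an $s\to v$ path avoiding $u$. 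This gives $u\notin T[w]$ for every $w\in D[u]\setminus\{u\}$, hence $H[w]=loop(w)\subseteq T[w]$ is strongly connected in $G-u$. Combined with the standard fact that every strongly connected subgraph is contained in $loop(w)$ for $w$ its earliest-visited vertex, and a short maximality argument using your barrier property to show $H[w]\subseteq C$ and to rule out $h(w)\in D[u]\setminus\{u\}$, the ``Moreover'' part closes without any assumption on the DFS. As written, your plan both introduces an unnecessary obstacle and, if taken literally (fixing a compatible DFS), would prove something strictly weaker than the cited statement.
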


\subsection{Decremental range minimum queries}

In the \emph{range minimum query} (RMQ) problem we are given an array $A$ of size $n$ with values in $\mathbb{R}$, and after bounded preprocessing time, for query indices $i,j$ we want to answer the minimum value in  $\{A[i],\dots,A[j]\}$. In the dynamic version of the problem one wants to support an intermixed sequence of queries and updates that alter the values on the array $A$. In the case where the updates can only increase (resp., decrease) the values of $A$ the algorithm is called incremental (resp., decremental). We use the following result from \cite{Wilkinson2014Amortized}.

\begin{theorem}[\cite{Wilkinson2014Amortized}]
\label{thm:RMQ}
There exists a data structure for decremental RMQ that requires $O(n)$ space, $O((\log n \log \log n)^{2/3})$ update time, and $O((\log n \log \log n)^{1/3})$ query time. 
\end{theorem}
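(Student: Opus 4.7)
The plan is to generalize the SCC-under-vertex-failure structure theorem of Georgiadis et al.\ (Theorem~\ref{cor:scc}) so that, alongside identifying each SCC of $G-x$, we can return its maximum label in constant time. First I would compute the SCCs of $G$ in $O(m)$ time; if the query vertices $x$ and $v$ lie in different SCCs of $G$, then the SCC of $v$ in $G-x$ equals its SCC in $G$, whose max label is precomputed in linear total time. Otherwise both lie in some SCC $C$, and I work entirely in $G[C]$. For each such $C$ I pick an arbitrary root $s_C\in C$ and build the dominator trees $D,D^R$ of $G[C]$ and $G[C]^R$, and the loop nesting trees $H,H^R$, all rooted at $s_C$; summed across SCCs this is $O(m)$.

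By Theorem~\ref{cor:scc}, the SCC $C_x(v)$ of $v$ in $G[C]-x$ is determined by the relative positions of $v,x$ in $D$ and $D^R$ and falls into one of two qualitative types: \emph{subtree type}, where $C_x(v)=H[w]$ or $H^R[w]$ and $w$ is explicitly the topmost ancestor of $v$ in $H$ (resp.\ $H^R$) that is still a proper descendant of $x$ in $D$ (resp.\ $D^R$); or \emph{big type}, where $C_x(v)$ is one of the components $C\setminus D[x]$, $C\setminus D^R[x]$, or $C\setminus(D[x]\cup D^R[x])$ containing $s_C$. For the subtree type I precompute $\max_{u\in H[w]}f(u)$ and $\max_{u\in H^R[w]}f(u)$ for every $w$ by a linear-time bottom-up DP on $H,H^R$, and identify the correct $w$ at query time in $O(1)$ via level-ancestor / NCA structures on $D,H$ (resp.\ $D^R,H^R$). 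The cases $C\setminus D[x]$ and $C\setminus D^R[x]$ reduce to range-max over the complement of a single Euler-tour interval, handled by linear-time static 1D RMQs on the preorders of $D$ and $D^R$.

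The main technical challenge is the remaining big case: precompute, for every $x\in C$, the max label over $C\setminus(D[x]\cup D^R[x])$. Embedding each $w\in C$ as a point $(\pi_D(w),\pi_{D^R}(w))$, the set $D[x]\cup D^R[x]$ becomes the union of a vertical strip in $\pi_D$ and a horizontal strip in $\pi_{D^R}$ — an axis-aligned cross — whose complement in the bounding box decomposes into four corner rectangles of dominance type (NW, NE, SW, SE). So the precomputation reduces to an offline 2D range-max problem with $n$ points and $4n$ two-sided dominance queries.

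I would resolve this offline problem by one sweep per quadrant, each powered by the decremental 1D RMQ structure of Theorem~\ref{thm:RMQ} (negating labels to turn max into min). Concretely for the NW corner, sort insertions and queries by ascending $\pi_D$ and process them in that order, maintaining an array $B$ indexed by $\pi_{D^R}$ whose slots begin at $+\infty$ and are lowered to $-f(w)$ at position $\pi_{D^R}(w)$ exactly when point $w$ is inserted — a legal decremental update, since each slot is touched at most once. An NW dominance-max query with $\pi_D\le a$ and $\pi_{D^R}\ge b$ then becomes a range-min query on $B[b..n]$ fired at the sweep moment just after all points with $\pi_D\le a$ are inserted; the other three corners are handled symmetrically with the opposite sort orders. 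Over all four sweeps we perform $O(n)$ decremental updates and $O(n)$ range queries, costing $O(n(\log n\log\log n)^{2/3})$ by Theorem~\ref{thm:RMQ}; combined with the $O(m)$ for SCCs, dominator trees, loop nesting trees, and subtree DPs, this matches the claimed $O(m+n(\log n\log\log n)^{2/3})$ preprocessing, with $O(1)$ query time. The main obstacle I anticipate is book-keeping: ensuring that each of the four corner sub-queries for a given $x$ is scheduled at the correct sweep time relative to the insertion events, and correctly combining the four corner answers into the maximum over the entire anti-cross region — both addressed by a careful joint sort of events and queries along the sweep coordinate.
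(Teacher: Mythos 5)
Your proposal does not prove the statement it was asked to prove. Theorem~\ref{thm:RMQ} is the decremental range-minimum-query data structure itself: an array of size $n$ supporting value-decrease updates in $O((\log n\log\log n)^{2/3})$ time and range-min queries in $O((\log n\log\log n)^{1/3})$ time within $O(n)$ space. This is an external result of Wilkinson that the paper imports by citation and uses as a black box. Your writeup never engages with the internal construction of such a structure (the word-RAM techniques, bucketing/amortization, and the specific update/query trade-off that yield these exponents); instead, at its crucial step it says the offline dominance-max sweeps are ``powered by the decremental 1D RMQ structure of Theorem~\ref{thm:RMQ}.'' Using the theorem as an ingredient is circular as a proof of that theorem, so the actual target statement is left entirely unproved.

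What you have in fact sketched is a proof of Theorem~\ref{thm:find-maxID-in-SCC}, the max-label-in-an-SCC-under-a-vertex-failure result, and there your argument tracks the paper's Section on that theorem quite closely: reduce to strongly connected $G$, apply Theorem~\ref{cor:scc}, handle the SCCs contained in $D[x]$ or $D^R[x]$ via loop-nesting trees (you take subtree maxima at the topmost suitable ancestor, the paper equivalently uses edge weights $h(w)w$ plus path-minimum and NCA queries), and handle the SCC of the root $s$ by mapping vertices to points $(\pord(u),\pordr(u))$ and reducing to offline two-sided dominance range-max queries answered by sweeps over a decrement-only array. That is fine as far as it goes, but it answers a different question. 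To prove Theorem~\ref{thm:RMQ} you would need to reproduce (or reprove) Wilkinson's construction of the decremental RMQ structure itself, which neither your proposal nor this paper attempts.
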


\subsection{Proof of Theorem~\ref{thm:find-maxID-in-SCC}}
Let us recall the theorem we are going to prove.
\thmfindmax*

Observe that instead of computing the maximum-labeled vertex, we can instead look of the minimum label inside the SCC of $v$ in $G-x$,
as the maximum can be computed using the algorithm for minimum by assigning labels $f'(v) = -f(v)$.
In the following we are interested in vertices with minimum label $f(v)$.
	
To solve our problem, we will leverage Theorem \ref{cor:scc}.
Without loss of generality assume that $G$ is strongly connected, as otherwise, we can proceed with each SCC separately
since two vertices $u,v$ that are not strongly connected in $G$ are not strongly connected
in $G-x$ either.
Each query regarding the strongly connected component of $v$ is answered using the data structure of the SCC of $G$ containing $v$.
We choose an arbitrary start vertex $s$. 
We first handle the case $x=s$ in linear time by computing the strongly connected components of $G-s$
and preprocessing answers to all possible
queries.

Suppose $x\neq s$.
By Theorem \ref{cor:scc}, any strongly connected component $C$ of $G-x$ satisfies $V(S)\subseteq D[x]\cup D^R[x]$,
or is equal to the SCC of $s$ in $G-x$.
We consider those two cases separately.
	
	\paragraph{SCCs of $s$ in $G-x$.}
	We first deal with the case when the SCC $C$ containing $v$ in $G-x$ also contains $s$.
	Our goal is to precompute all answers for all possible failed vertices $x$.
	By Theorem \ref{cor:scc}, the SCC of $s$ contains precisely all vertices in $V\setminus \{D[x]\cup D^R[x]\}$.
	Hence, the minimum value in the SCC of $s$ in $G-x$ is the minimum $f(w)$ among all vertices $w$ such that $w\notin D[x]$ and $w\notin D^R[x]$.
	Let $\pord:V\to [1..n]$ be some preorder of the dominator tree $D$, where $n=|V|$.
  Let $\tsz(v)=|V(D[v])|$.
	Similarly let  $\pordr:V\to [1..n]$ be some preorder of the dominator tree $D^R$.
  Let $\tszr(v)=|V(D^R[v])|$.
	Clearly, all $\pord, \pordr, \tsz, \tszr$  can be computed in linear time.
	
	Notice that we have $u\notin D[x]$ iff $\pord(u)\notin [\pord(x),\pord(x)+\tsz(x)-1]$. Similarly, $u\notin D^R[x]$ iff $\pordr(x)\notin [\pordr(x),\pordr(x)+\tszr(x)-1]$.
	
	We map each vertex $u\in V$ to a point $(\pord(u),\pordr(u))$ on the two dimensional grid $[n]\times[n]$.
	Let $A$ be the set of obtained points; clearly $|A|=n$.
	Based on our previous discussion it is clear that  $u\notin \{D[x]\cup D^R[x]\}$ if an only if $$(\pord(x),\pordr(x)) \in \left( [1,\pord(x)-1] \cup [\pord(x)+\tsz(x),n]\right) \times \left([1,\pordr(x)-1] \cup [\pordr(x)+\tszr(x),n] \right).$$
  Hence, computing the minimum value $f(u)$ among all vertices $u\notin \{D[x]\cup D^R[x]\}$ can be reduced to finding
  the minimum-labeled point in the following four two-dimensional rectangles: 
	\begin{itemize}
		\item $[1,\pord(x)-1] \times [1,\pordr(x)-1]$.
		\item $[1,\pord(x)-1]\times[\pordr(x)+\tszr(x),n]$.
		\item $[\pord(x)+\tsz(x),n]\times[1,\pordr(x)-1]$.
		\item $[\pord(x)+\tsz(x),n]\times[\pordr(x)+\tszr(x),n]$.
	\end{itemize} 
  Observe that all the above rectangles are actually 2-sided, in
  the sense that each side has at most one endpoint
  that is not equal to $1$ or $n$, i.e., not equal to the
  minimum/maximum possible coordinate.

	In order to precompute the minimum value in the SCC of $s$ for all possible failures $x$,
  we can simply find the minimum-labeled points in $4(n-1)$ 2-sided rectangles.
  We will only show how to compute minimum-labeled points
  in the rectangles of the form $[1,\pord(x)-1] \times [1,\pordr(x)-1]$.
  The remaining cases can be handled analogously.
	Notice that in our case, we do not need compute
  minimum-labeled points in rectangles in an
  online manner, as we already know	the predefined set of $n-1$ queries to execute. 
	Hence, we can equivalently express our task as the following
  offline problem.
  Assume we have a set $R$ of red points (corresponding to our original points)
  and a set of $B$ blue points (corresponding to the rectangles) on a two dimensional grid $[n]\times[n]$,
  and for each blue point $b=(b_x,b_y)\in B$ we want to identify the red point 
  $r=(r_x,r_y)\in R$ such that $r_x\leq b_x$ and $r_y\leq b_y$
  with the minimum label $f(r)$.
	
	We solve this problem as follows. We first sort the points $B\cup R$ lexicographically so that red points appear earlier than blue points with the same coordinates; this can be done in $O(|B\cup R|+n)$ time by radix-sort. 
  We initialize an array $M[1..n]$, and fill $M$ with values $\infty$ (or a sufficiently large finite value).
  An instance of the decremental RMQ algorithm of Theorem~\ref{thm:RMQ} is applied on~$M$. 
	We process all points in $B\cup R$ in lexicorgaphical order.
  We will maintain the invariant that $M[y]$ is equal to the minimum
  label among all processed red points with second coordinate equal to $y$.
	
  Suppose we process the current point $p=(p_x,p_y)$.
  If $p$ is red and $f(p)<M[p_y]$ then we update $M[p_y]:=f(p)$; otherwise, if $f(p)\geq M[p_y]$, we do nothing.
	If the current point $p=(p_x,p_y)$ is blue, then we ask the RMQ data structure for the minimum value of $M$ in the range $[1,p_y]$.
	Notice that at the time the query is made, we have already processed precisely the points in $R$ whose coordinates
  are lexicographically no more
  than the coordinates of the blue point $p$.
  Moreover, $M[i]$ contains the smallest label of a red point with second coordinate $i$ that we have already processed, and hence, the minimum value in $\{M[1],\cdots, M[p_y]\}$ is the minimum value among the points $p'$ that we have processed and which have their second coordinate no larger than $p_y$.
	We conclude that the range minimum query on $M$ will correctly return the minimum label among all red
  points in the range $[1,1]\times [p_x,p_y]$. 
	
  Since we make $O(n)$ updates to $M$ (and thus to the RMQ data structure) and perform $O(n)$ queries to
  the RMQ data structure, the algorithm runs in
  $O(n (\log n \log \log n)^{2/3})$ time.

	\paragraph{SCCs in $D[x]\cup D^R[x]$.}
	We preprocess the loop-nesting-tree $H$ and we assign a weight to each edge $h(w)w$, to be equal to the minimum label $f(v)$ where  $v\in H[w]$. This can be easily done in linear time via a simple bottom-up visit of $H$. By Theorem \ref{cor:scc}, for each vertex $v$ that belongs to an SCC $C \subseteq D[x]$ there exists a $w$, such that $v\in H[w]$, for which it holds that $h(w)\notin D[x]$ and $H[w] = C$. Our goal is to prove that the weight of $h(w)w$ is the minimum weight in $C$ (the SCC of $v$ in $G-x$), and also the minimum weight on the path $nca_H(x,v)$. This will allow us to retrieve the weight of $h(w)w$ efficiently.
	
	Next, we prove the following two claims for a failing vertex $x$.
	\begin{itemize}
		\item [i)] If $h(w)\notin D[x]$, the nearest common ancestor $nca_{H}(w,x)$ of $w$ and $x$ in $H$ is $h(w)$.
		\item[ii)] The minimum label in the SCC of $v$ in $G-x$ equals to the minimum weight on the path from $v$ to $nca_H(x,v)$ in $H$. 
	\end{itemize}
	
	We first prove i).
	Consider the DFS tree $T$ that generated the loop-nesting-tree $H$.
	By the definition of $H$, $h(w)$ is the nearest ancestor of $w$ in $T$ such that $w$ and $h(w)$ are strongly connected in $G[T[h(w)]]$, that is, in the subgraph induced by the descendants of $h(w)$ in $T$.
	By Lemma \ref{lemma:paths-through-SAP}, since $h(w) \notin D[x]$,  all paths from $h(w)$ to $w$ in $G$ contain $x$, and in particular the path from $h(w)$ to $w$ in $T$.
	That means, $x$ lies on the path from $h(w)$ to $w$ in $T$.
	The existence of the path on $T$ from $h(w)$ to $x$ to $w$ implies that $x$ is strongly connected with $h(w)$ and $w$ in $G[T[h(w)]]$, and hence $h(w)$ is an ancestor of $x$ in $H$, by the definition of the loop nesting tree. Since $w\not = x$, the nearest ancestor of $x$ and $w$ is $h(w)$.
	
	Now we prove ii). 
  By Theorem~\ref{cor:scc}, the SCC of $v$ in $G-x$ is a subtree of $H$ rooted at a vertex $w$ such that $h(w)\notin \{D[x] \setminus \{x\}\}$. 
	Hence, the minimum label in the SCC of $v$ equals to the weight of the edge $h(w)w$, by claim i). 
	Moreover, the weight of each edge $e$ is larger than the label of the edges of $H$ incident to all descendants of both endpoints of edge $e$.
	
	Given claims i) and ii), in the case where $v$ is in $D[x]$, the minimum label in the SCC of $v$ is the minimum weight on the path between $v$ and $nca_{H}(x,v)$ in $H$. 
	We can identify the minimum weight on the path of a static tree in constant time, after $O(|E(H)|)=O(n)$ time preprocessing~\cite{DemaineLW14}.
  The nearest common ancestor queries can be answered in constant time after linear preprocessing as well~\cite{BenderF00}.
	We can do this analogously in the case where $v \in D^R[x]$, by preprocessing $D^R$ and $H^R$.
	
  Overall, we showed that we can answer the queries of the statement constant time, after\linebreak
  $O(m+n (\log n \log \log n)^{2/3})$ time preprocessing.

\newpage 
\bibliographystyle{plainurl}

{\small \bibliography{references}}

\end{document}